\tikzset{
	arn/.style = {circle, white, draw=black, fill=black, inner sep = 1.5},
	arn_l/.style = {circle, white, draw=black, fill=black, inner sep = 2.2},
	photon/.style={draw=black, very thick, dashed},
	electron/.style={draw=black, very thick, line width=0.08cm},
	tr/.style={buffer gate US,thick,draw,fill=gray!60,rotate=90,	anchor=east,minimum width=2.25cm},
	br/.style={buffer gate US,thick,draw,fill=gray!60,rotate=90,	anchor=east,minimum width=4.5cm},
	brr/.style={buffer gate US,draw,fill=gray!60,rotate=90,	anchor=east,minimum width=4.5cm, opacity = 0.6},
	trr/.style={buffer gate US,thick,draw,fill=gray!60,rotate=90,	anchor=east,minimum width=2.25cm, opacity = 0.6},
	trrr/.style={buffer gate US,draw,fill=white!60,rotate=90,	anchor=east,minimum width=2.25cm, opacity = 0.5}
}
\def\BState{\State\hskip-\ALG@thistlm}
\newtheorem{theorem}{Theorem}
\newtheorem{lemma}{Lemma}
\newtheorem{corollary}{Corollary}
\newtheorem{claim}{Claim}
\newenvironment{proofof}[1]{\noindent{\bf Proof of #1:}}{$\qed$\par}
\theoremstyle{definition}
\newtheorem{definition}{Definition}
\DeclareMathOperator{\poly}{poly}
\newcommand{\supp}{\mathrm{supp}~}
\newcommand{\tfull}{T^{\mathrm{full}}}
\newcommand{\tree}{\mathrm{Tree}}
\bmdefine{\aaa}{a}
\bmdefine{\jj}{j}
\bmdefine{\rr}{r}
\bmdefine{\lv}{l}
\bmdefine{\sv}{s}
\bmdefine{\tv}{t}
\bmdefine{\ff}{f}
\bmdefine{\gg}{g}
\bmdefine{\hh}{h}
\bmdefine{\tt}{t}
\bmdefine{\qq}{q}
\bmdefine{\vv}{v}
\bmdefine{\ww}{w}
\bmdefine{\phib}{\phi}
\newcommand{\matA}{\mathbf{A}}
\newcommand{\shiftset}{\mathcal{A}}
\newcommand{\dom}{\mathrm{Dom}}
\bmdefine{\alphav}{\alpha}
\bmdefine{\betav}{\beta}
\bmdefine{\bv}{B}
\bmdefine{\bb}{b}
\newcommand{\bnextv}{\bm{B}^\mathrm{next}}
\newcommand{\bnext}{B^\mathrm{next}}
\newcommand{\bnextvt}[1]{\bm{B}^{\mathrm{next}, (#1)}}
\newcommand{\bprevv}{\bm{B}^\mathrm{prev}}
\newcommand{\bprev}{B^\mathrm{prev}}
\newcommand{\bprevvt}[1]{\bm{B}^{\mathrm{prev}, (#1)}}
\newcommand{\bbasev}{\bm{B}^\mathrm{base}}
\newcommand{\bbase}{B^\mathrm{base}}
\newcommand{\bbasevt}[1]{\bm{B}^{\mathrm{base}, (#1)}}
\newcommand{\bvec}{\bm{B}}
\newcommand{\wh}{\widehat}
\newcommand{\wt}{\widetilde}
\newcommand{\EE}{\mathcal{E}}
\newcommand{\bxi}{\boldsymbol{\xi}}
\newcommand{\E}{\mathbb{E}}
\newcommand{\unif}{\mathrm{Unif}}
\newcommand{\argmin}{\text{argmin}}
\DeclareMathOperator{\subtree}{\mathrm{FrequencyCone}}
\newcommand{\C}{{\mathbb C}}
	\gdef\xxxmark{%
		\expandafter\ifx\csname @mpargs\endcsname\relax 
		\expandafter\ifx\csname @captype\endcsname\relax 
		\marginpar{xxx}
		\else
		xxx 
		\fi
		\else
		xxx 
		\fi}
	\gdef\xxx{\@ifnextchar[\xxx@lab\xxx@nolab}
	\long\gdef\xxx@lab[#1]#2{{\bf [\xxxmark #2 ---{\sc #1}]}}
	\long\gdef\xxx@nolab#1{{\bf [\xxxmark #1]}}
\newcommand{\e}{\epsilon}
\renewcommand{\Pr}{\mathrm{Pr}}
\begin{document}

\title{Dimension-independent Sparse Fourier Transform}

\author{Michael Kapralov\\EPFL\\ {michael.kapralov@epfl.ch} \and Ameya Velingker\thanks{This work was completed while the author was a research scientist in the School of Computer and Communication Sciences, EPFL.}\\Google Research\\ {ameyav@google.com} \and Amir Zandieh\\EPFL\\ {amir.zandieh@epfl.ch}}

\maketitle

\begin{abstract}
The Discrete Fourier Transform (DFT) is a fundamental computational primitive, and the fastest known algorithm for computing the DFT is the FFT (Fast Fourier Transform) algorithm. One remarkable feature of FFT is the fact that its runtime depends only on the size $N$ of the input vector, but {\em not on the dimensionality of the input domain}: FFT runs in time $O(N\log N)$ irrespective of whether the DFT in question is on $\mathbb{Z}_N$ or $\mathbb{Z}_n^d$ for some $d>1$, where $N=n^d$.

The state of the art for Sparse FFT, i.e. the problem of computing the DFT of a signal that has at most $k$ nonzeros in Fourier domain, is very different: all current techniques for sublinear time computation of Sparse FFT incur an exponential dependence on the dimension $d$ in the runtime. In this paper we give the first algorithm that computes the DFT of a $k$-sparse signal in time $\poly(k, \log N)$ {\em in any dimension $d$}, avoiding the curse of dimensionality inherent in all previously known techniques.  Our main tool  is a new class of filters that we refer to as {\em adaptive aliasing filters}: these filters allow isolating frequencies of a $k$-Fourier sparse signal using $O(k)$ samples in time domain and $O(k\log N)$ runtime per frequency, in any dimension $d$.

We also investigate natural average case models of the input signal:  {\bf (1)} worst case support in Fourier domain with randomized coefficients and {\bf (2)} random locations in Fourier domain with worst case coefficients. Our techniques lead to an $\widetilde O(k^2)$ time algorithm for the former and an $\widetilde O(k)$ time algorithm for the latter.

\end{abstract}
\setcounter{page}{0}
\newpage
\setcounter{page}{1}

\section{Introduction}

The Discrete Fourier Transform (DFT) is one of the most widely used computational primitives in modern computing, with numerous applications in data analysis, signal processing, and machine learning. The fastest algorithm for computing the DFT is the Fast Fourier Transform (FFT) algorithm of Cooley and Tukey, which has been recognized as one of the 10 most important algorithms of the 20th century~\cite{citeulike:6838680}. The FFT algorithm is very efficient: it computes the Discrete Fourier Transform of a length $N$ complex-valued signal in time $O(N\log N)$. This applies to vectors in any dimension: FFT works in $O(N\log N)$ time irrespective of whether the DFT is on the line, on a $\sqrt{N}\times \sqrt{N}$ grid, or is in fact the Hadamard transform on $\{0, 1\}^d$, with $d=\log_2 N$.

In any applications of the Discrete Fourier Transform, the input signal $x\in \C^N$ often satisfies {\em sparsity} or {\em approximate sparsity} constraints: the Fourier transform $\wh{x}$ of $x$ has a small number of coefficients $k$ or is close to a signal with a small number of coefficients (e.g., this phenomenon is the motivation for compression schemes such as JPEG and MPEG). This has motivated a rich line of work on the {\em Sparse FFT} problem: given access to a signal $x\in \C^N$ in time domain that is sparse in Fourier domain, compute the $k$ nonzero coefficients in {\em sublinear} (i.e., $o(N)$) time. 

Very efficient algorithms for the Sparse FFT problem have been developed in the literature~\cite{GL,KM,Man,GGIMS,AGS,GMS,Iw,Ak,HIKP,HIKP2,LWC,BCGLS,HAKI,pawar2013computing,heidersparse, IKP, IK14a,K16,PZ15,ChenKPS16,Kapralov17}. The state-of-the-art approach, due to \cite{HIKP2}, yields  an $O(k\log N)$ runtime algorithm for the following exact $k$-sparse Fourier transform problem: given access to an input signal of length $N$ whose Fourier transform has at most $k$ nonzeros, output the nonzero coefficients and their values. This highly efficient algorithm comes with a caveat, however: the runtime of $O(k\log N)$ only holds for the Fourier transform on the line, namely, $\mathbb{Z}_N$. The algorithm naturally extends to higher dimensions, namely, $\mathbb{Z}_n^d$, where $N=n^d$, but with an exponential loss in runtime; the runtime becomes $O(k\log^d N)$ as opposed to $O(k\log N)$. Interestingly, the other extreme of $d=\log_2 N$, i.e., the Hadamard transform, has been known to admit an $O(k\log N)$ algorithm since the seminal work of Goldreich and Levin~\cite{GL}. However, all intermediate values of $d$ exhibit a {\em curse of dimensionality}. This  is in sharp contrast with FFT itself, which runs in time $O(N\log N)$, where $N=n^d$ is the length of the input signal, in {\em any dimension $d$}. The focus of our work is to design sublinear time algorithms for Sparse FFT that avoid this curse of dimensionality. Our main point of attention is the Sparse FFT problem: 
\begin{equation}\label{eq:exact-sfft}
\begin{split}
\text{\bf Input:}& \text{~~~access to $x:[n]^d\to \C$,}\\
&\text{~~~integer $k\geq 1$ such that $|\supp \wh{x}|\leq k$}\\
\text{\bf Output:}& \text{~~~nonzero elements of $\wh{x}$ and their coefficients}
\end{split}
\end{equation}
Our main result is the first sublinear algorithm for exact Sparse FFT~\eqref{eq:exact-sfft}, as stated in the following theorem.
\begin{theorem}[Main result, informal version of Theorem~\ref{thm:sfft-worstcase} in Section~\ref{subsec:recovadapt}] \label{thm:main}
	For any integer $n$ that is a power of two and any positive integer $d$, there exists a deterministic algorithm that, given access to a signal $x\in \C^{n^d}$ with $\|{\wh{x}}\|_0 \leq k$, recovers $\wh{x}$ in time $\poly(k, \log N)$.
\end{theorem}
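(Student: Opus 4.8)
\medskip
\noindent\textbf{Proof plan.}
The plan is to recover $\wh x$ in two phases: (i) identify the support $S=\supp\wh x\subseteq[n]^d$, with $|S|\le k$, and then (ii) compute the coefficients $\{\wh x(f):f\in S\}$. The organizing structure I would use is a dyadic tree on the frequency domain. Since $n$ is a power of two, $N=n^d$ is a power of two; fix an order on the $\log_2 N$ bits that specify a frequency $\xi\in[n]^d$, and let the node at depth $\ell$ be a \emph{frequency cone}, namely the set of all $\xi$ that agree on the first $\ell$ bits (equivalently, a coset of a subgroup $H_\ell$ of $[n]^d$ with $|[n]^d/H_\ell|=2^\ell$). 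The root is all of $[n]^d$, a leaf is a single frequency, and the tree has depth $\log_2 N$. The one combinatorial fact I need is that $S$ meets at most $k$ cones at any fixed level, since distinct cones at a level are disjoint and $|S|\le k$; hence the subtree spanned by the root-to-leaf paths through $S$ has only $O(k\log N)$ nodes, and the algorithm I have in mind walks exactly this subtree, from the root down.

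For Phase (i), I would proceed level by level, maintaining at level $\ell$ the list $L_\ell$ of cones that meet $S$, with $|L_\ell|\le k$. The cones of $L_\ell$ have at most $2k$ children, and $S$ lies in their union, so it suffices to decide for each child cone $C$ whether $C\cap S\ne\emptyset$. This is precisely the task for which the paper's main tool --- the \emph{adaptive aliasing filter} --- is designed: given the $O(k)$ candidate cones at the current level, one constructs in $O(k\log N)$ time a filter $G$ supported on only $O(k)$ points of the time domain, with $O(k)$ \emph{independent of the dimension $d$}, whose Fourier transform vanishes on every candidate cone other than $C$. Since $\supp\wh x$ is contained in the union of the candidate cones, $\widehat{G*x}=\wh G\cdot\wh x$ is then supported inside $C\cap S$, hence nonzero exactly when $C\cap S\ne\emptyset$. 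To detect this robustly I would evaluate $G*x$ on a fixed set of $\poly(k,\log N)$ time points chosen so that no nonzero $k$-Fourier-sparse signal can vanish on all of them --- equivalently, so that the submatrix of the DFT with these rows and any $\le k$ columns has full column rank; such ``universal'' sets exist and are explicitly constructible in every dimension --- which rules out a silent cancellation among the up to $k$ terms of $C\cap S$. Keeping the children whose restricted signal is nonzero produces $L_{\ell+1}$, again of size at most $k$, and after $\log_2 N$ levels $L_{\log_2 N}$ is exactly $S$.

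For Phase (ii), with $S=\{f_1,\dots,f_m\}$ ($m\le k$) now known, I would apply the adaptive aliasing machinery once more: for each $f_i$ build a filter $G_i$ supported on $O(k)$ explicitly located time points with $\wh{G_i}(f_i)=1$ and $\wh{G_i}(f_j)=0$ for $j\ne i$. Such time points exist because the $m\times N$ submatrix of the DFT with rows indexed by $S$ has rank $m$, hence possesses $m$ independent columns, and these can be located inside the $\poly(k,\log N)$-size universal set from Phase (i) by ordinary Gaussian elimination. Then $\wh x(f_i)$ equals, up to the known normalization, $(G_i*x)(0)=\sum_t G_i(t)\,x(-t)$, which costs $O(k)$ samples, and all $m$ coefficients fall out of a single $\poly(k)\times\poly(k)$ linear solve.

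Putting it together: Phase (i) touches $O(k\log N)$ cones and Phase (ii) touches $m\le k$ frequencies, each at a cost of $\poly(k,\log N)$ samples and arithmetic, so the total runtime is $\poly(k,\log N)$; the procedure is deterministic because every ingredient --- the cone tree, the adaptive aliasing filters (each obtained by solving a small linear system over entries of the DFT), and the universal time-point sets --- is constructed explicitly, with no randomness. The technical heart, and the step I expect to be the main obstacle, is establishing the adaptive aliasing guarantee in a \emph{dimension-free} way. All previous Sparse FFT algorithms hash frequencies with filters whose time-domain support grows like $\log^d N$, because a geometry-oblivious bucketing of $[n]^d$ genuinely needs that many samples; the escape has to exploit that any two of the $O(k)$ cones one must separate already disagree in \emph{some} coordinate, so a filter tailored to the combinatorial (trie) structure of the cones can isolate a target cone with $O(k)$ taps regardless of $d$. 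Making this precise --- choosing the taps so that the relevant structured submatrices of the $d$-dimensional DFT are invertible, and finding them without ever scanning all $N$ frequencies --- is where the real work lies, and everything above is layered on top of it.
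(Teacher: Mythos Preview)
Your high-level architecture --- a dyadic tree on frequencies, isolate-and-test to walk it, then solve a small linear system --- matches the paper's. The gap is in your central claim about the filter: you assert that for \emph{each} of the $O(k)$ candidate cones $C$ at a given level one can build, in $O(k\log N)$ time, a filter $G$ with $|\supp G|=O(k)$ that is $1$ on $C$ and $0$ on the remaining cones. The paper's adaptive aliasing filter (Lemma~\ref{lem:filter-isolate}) does \emph{not} deliver this. What it gives is a $(v,T)$-isolating filter with $|\supp G|=2^{w_T(v)}$, where $w_T(v)$ is the number of branching ancestors of $v$; Kraft's inequality (Claim~\ref{kraftsum}) guarantees $w_T(v)\le\log_2 k$ only for the \emph{minimum-weight} leaf, and for other leaves $w_T(v)$ can be as large as $k-1$. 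Concretely, take $S=\{0,1,2,4,\dots,2^{k-2}\}$ in one dimension: at level $k-1$ the cone containing $0$ has a branching ancestor at every level $1,\dots,k-1$, so the filter isolating it has support $2^{k-1}$. Your level-by-level sweep would hit such a cone and pay exponentially in $k$. More generally, any filter that is constant on cones at level $\ell$ must be supported in the annihilator subgroup of size $2^\ell$, and there is no known way to locate $O(k)$ points in that annihilator, in $\poly(k,\log N)$ time, whose associated $k\times k$ character matrix is invertible for \emph{every} target cone; your ``solve a small linear system'' idea would require scanning the annihilator, which can have size up to $N$.

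This is exactly why the paper's algorithm is \emph{not} level-by-level. It maintains an unbalanced tree and at every step expands only the single leaf of minimum weight (line~5 of Algorithm~\ref{alg:fullsparsefft}), so that each filter used has support at most $2k$; once a leaf reaches depth $\log_2 N$ it is estimated and peeled off, which shrinks the tree and keeps Kraft's bound applicable. The price is $\Theta(k\log N)$ rounds rather than $\log N$, and Section~\ref{hard-instance} proves this is essentially unavoidable: on the low-Hamming-weight instances any pruning process that only removes leaves of weight $\le\log_2 k+O(1)$ needs $k^{1-o(1)}$ rounds. So the level-by-level plan cannot be rescued by a cleverer filter --- the obstruction is combinatorial, not just a matter of ``making the construction precise.'' Your Phase~(ii), by contrast, is fine once $S$ is known (indeed it is close in spirit to the paper's \textsc{Estimate}, Theorem~\ref{thm:est-main}); the problem is entirely in Phase~(i).
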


We note that this is the first sublinear time Sparse FFT algorithm that avoids an exponential dependence on the dimension $d$. One should note that the runtime still depends on $d$, since $\log_2 N=d\log_2 n$ is lower bounded by $d$, but this dependence is polynomial as opposed to exponential. 

\subsection{Significance of our results and related work}

\paragraph{Significance of our results.} The state of the art in high dimensional Sparse Fourier Transforms presents an interesting conundrum: algorithms with runtime $O(k\log N)$ are known for $d=1$ (Discrete Fourier Transform on the line, see~\cite{HIKP2}) and $d=\log_2 N$ (the Hadamard transform, see~\cite{GL}), but for all intermediate values of $d$ the runtime scales exponentially in $d$. Given that FFT itself is dimension-insensitive, this strongly suggests that exciting new algorithmic techniques can be developed for the high-dimensional version of the problem. Our paper designs the first approach to high dimensional Sparse FFT that does not suffer from the curse of dimensionality, and naturally leads to several exciting open problems that we hope will spur further progress in this area. 

In addition, we note that rather high-dimensional versions of the Fourier transform arise in applications (e.g., 2D, 3D and 4D-NMR in medical imaging), and designing practical Sparse FFT algorithms for  this regime is an important problem. We hope that new techniques for dimension-independent Sparse FFT will lead to progress in this direction as well.

\paragraph{Sample complexity of high-dimensional Sparse FFT.} We note that, besides runtime, another very important parameter of a Sparse FFT algorithm is {\em sample complexity}, i.e., the number of samples that an algorithm needs to access in time domain in order to compute the top few coefficients of the Fourier transform. The sample complexity of Sparse FFT, unlike runtime, does not suffer from a curse of dimensionality. Indeed, there exist several algorithms with $\widetilde O(N)$ runtime that can recover the top $k$ coefficients of $\wh{x}$ using only $k\poly(\log N)$  accesses in time domain, irrespective of the dimensionality of the problem. This can be achieved, for example, using either results on the restricted isometry property (RIP)~\cite{CTao, RV, Bourgain2014, CGV, haviv2017restricted}, or using the filtering approach developed in the Sparse FFT literature, with $\widetilde O(N)$ decoding time. Thus, the challenge is to achieve sublinear {\em runtime} without an exponential dependence on the dimension.

\bigskip

We now outline existing approaches to Sparse FFT and explain why they fail to scale well in high dimensions:

\paragraph{State-of-the-art approaches to Sparse FFT and their lack of scalability in high dimensions.} The main idea behind many recently developed algorithms for the Sparse FFT problem is the ``hashing'' approach inherited from sparse recovery with arbitrary linear measurements. Given access to a signal $x:[n]^d\to \C$, one designs linear measurements of $x$ that allow one to ``hash'' the nonzero positions of $\wh{x}$ into a number of ``buckets.'' The number of buckets $B=b^d$ is chosen to be a constant factor larger than the sparsity $k$ to ensure that a large constant fraction of the nonzero positions of $\wh{x}$ are isolated in their buckets. Every isolated element can be recovered and subtracted from $x$ for future iterations of the same hashing scheme, thereby ensuring convergence. The idea of hashing is implemented via filtering: one designs a filter $G:[n]^d\to \C$ such that $\wh{G}$ approximates a ``bucket,'' i.e., $\wh{G}$ is close to $1$ on an $\ell_\infty$ ball of side length $\approx (N/B)^{1/d}=n/b$ in dimension $d$. The content of the $\jj$-th `bucket', for $\jj\in [\bv]$, is then
\begin{equation}\label{eq:hashing}
\widehat{(x_{\cdot -a}\cdot G)}_{\jj\cdot n/b}=\sum_{\ff\in [n]^d} \wh{x}_\ff e^{2\pi \ff^Ta/n}\cdot \wh{G}_{\jj\cdot n/b-\ff}.
\end{equation}
Since $\wh{G}$ is essentially $1$ on the $\ell_\infty$ ball around the center $\jj\cdot n/b$ of the `bucket' and essentially zero outside,~\eqref{eq:hashing} gives the algorithm time domain access to the restriction of $\wh{x}$ to the ``bucket,'' i.e., the essential support of $\wh{G}$, where $a\in [n]^d$ is the location in time domain at which the signal is being accessed. A pseudorandom permutation of the frequency space ensures that such a bucket is likely to contain just a single element of the support, which enables the algorithm to recover at least a constant fraction of elements in a single round and perform iterative recovery. Furthermore, if the (essential) support of $G$ in time domain is small, one obtains an efficient algorithm.

The difficulty that arises in using~\eqref{eq:hashing} in high dimensions is the fact that it is not known how to ensure that $\wh{G}$ is close to $1$ in an appropriately defined ``bucket'' while simultaneously ensuring that $|\supp G|$ is small. For example, the filters constructed in~\cite{HIKP2} ensure that $\wh{G}$ is polynomially close to $1$ in Fourier domain, but this comes at the expense of $|\supp G|$ being larger than $k$ (the ideal support size) by a factor of $\Theta(\log n)$, and this effect is even more pronounced in higher dimensions, resulting in a $\log^d n$ loss in runtime. The other extreme would be to choose $G$ to be equal to $1$ on an $\ell_\infty$ ball with $k$ points around the origin, but in that case, its Fourier transform $\wh{G}$ is the sinc function, which is only a constant factor approximation to the indicator of the corresponding $\ell_\infty$ box in Fourier domain (i.e., the ideal ``bucket''). In dimension $d$, the approximation degrades to $c^d$ for some constant $c\in (0, 1)$, leading to exponential loss in runtime. Indeed, suppose that all elements of $\wh{x}$ have roughly the same value. Then for a given element $\ff\in \supp \wh{x}$, the expected contribution of other elements to the noise in the ``bucket'' that $\ff$ is hashed to is $||\wh{x}||_2^2/B$, but the contribution of $\wh{x}_\ff$ to its own bucket is (most of the time) only $c^d$ of its value, and, hence, only an exponentially small fraction of coefficients can be recovered in a given round of hashing. \footnote{In addition, the discussion above assumes the presence of an approximate pairwise hashing lemma for high dimensions that does not lose an exponential factor in the dimension (it is known that such a lemma holds with at most about a factor of $2^d$ loss~\cite{IK14a}, but no dimension-independent version is available in the literature).}

\paragraph{Related work.} In~\cite{CheraghchiI17}, the authors presented a deterministic Sparse Fourier transform algorithm for the Hadamard transform, i.e., $d=\log_2 N$, that runs in nearly linear time in the sparsity parameter $k$, but it is not known how this extends to lower dimensions. In~\cite{Iwen10, Iw-arxiv} the author gives a $\widetilde O(k^2)$ time deterministic algorithm for the Sparse Fourier Transform, but the algorithm only applies to a related but distinctly easier problem. Specifically, the problem considers a continuous function on $[0, 2\pi)$ whose Fourier transform is bandlimited and sparse. The presented algorithm requires sampling the signal at arbitrary locations in $[0, 2\pi)$. A natural approach is to emulate sampling off-grid (i.e., at arbitrary points in $[0, 2\pi)$) given discrete samples that we have access to, which is achieved in~\cite{merhi2017new} giving an $\widetilde O(k^2)$ time deterministic algorithm for one dimensional sparse FFT. But this is a challenging task in multi-dimensional setting for several reasons. First, we are operating under the sparsity assumption alone, and no powerful general interpolation techniques that work under the sparsity assumption alone are available, to the best of our knowledge. Furthermore, even if the function were bandlimited, a natural approach to interpolation would involve some form of Taylor expansion or semi-equispaced Fourier Transform, however, both approaches incur a $\log^d N$ loss in dimension $d$. Indeed, similar exponential dependence on the dimensionality of the problem manifests itself in Fast Multipole Methods~\cite{greengard1987fast, beatson1997short} and the Sparse FFT algorithms mentioned above. Finally, one should also note that whereas the problem of computing the Fourier transform on a $p\times q$ grid with $p$ mutually prime with $q$ is equivalent to a one-dimensional Fourier transform on $\mathbb{Z}_{pq}$, the standard case of side lengths that are powers of two (for which we have the most efficient FFT algorithms) does not admit such a reduction. Furthermore, such a reduction appears to be quite challenging in high dimensions for reasons outlined above, and even more so for highly oscillatory functions that Sparse FFT algorithms need to handle.

\section{Overview of our results and techniques}\label{sec:overview}

Prior works on Sparse FFT have primarily focused on efficiently implementing hashing-based ideas developed in the extensive literature on sparse recovery using general linear measurements (e.g.,~\cite{GHIKPS}), which meets with several difficulties. In particular, the presence of multiplicative subgroups in $\mathbb{Z}_n^d$ has been a hurdle in analyzing Sparse FFT algorithms: while aliasing filters have optimal performance from the point of view of the uncertainty principle, their applications have been limited due to the fact that frequencies that belong to the same subgroup get hashed together if such filters are used, making it impossible to reason about isolation of individual frequencies. At the same time, FFT itself owes much of its efficiency to the very same multiplicative subgroups of $\mathbb{Z}_n^d$, and a natural question is whether one can design a Sparse FFT algorithm that operates on similar principles. This is precisely the approach that we take.

\paragraph{Adaptive aliasing filters.} The main technical innovation that allows us to avoid exponential dependence on the dimension and obtain Theorem~\ref{thm:main} is a new family of filters for isolating a subset of frequencies in Fourier domain in a sparse signal $\wh{x}$ using few samples in time domain. We refer to the family of filters as {\em adaptive aliasing filters.}  

\begin{definition}[$(\ff, S)$-isolating filter, informal version of Definition~\ref{IsolatingFilter-highdim}, see Section~\ref{sec:filters}] \label{def:introisol}
	Suppose $n$ is a power of two integer and $S \subseteq [n]^d$ for a positive integer $d$.  Then, for any frequency $\ff\in S$, a filter $G: [n]^d \to \C$ is called \emph{$(\ff,S)$-isolating} if $\wh{G}_\ff=1$ and $\wh{G}_{\ff'}=0$ for every $\ff'\in S\setminus \{\ff\}$. 
\end{definition}

We explain the intuition behind the construction of the filter in Section~\ref{subsec:recovadapt} below and provide the details later in Section~\ref{sec:filters}.

The reason why an $(\ff, S)$-isolating filter $G$ is useful lies in the fact that for every signal $x \in \C^{n^d}$ with $\supp\wh{x} \subseteq S$ we have, for all $\tt\in [n]^d$
\begin{align*}
\sum_{\jj\in[n]^d} x_\jj G_{\tt-\jj} &= (x*G)_{\tt}= \frac{1}{N}\sum_{\jj \in [n]^d} \wh x_\jj \cdot \wh G_\jj \cdot e^{2\pi i \frac{\jj^T\tt}{n}}= \frac{1}{N} \wh{x}_\ff e^{2\pi i \frac{\ff^T \tt}{n}} 
\end{align*}
Thus, the filter $G$ enables access to the time domain representation of the restriction of $\wh{x}$ to $\ff$ in time proportional to $|\supp G|$, at any point $\tt$. Of course, this is only useful if the support of $G$ is small. The main technical lemma of our paper shows that for every support set $S\subseteq \wh{x}$, there exists an $\ff\in S$ that can be isolated efficiently:
\begin{lemma}[Informal version of Corollary~\ref{cor:isofilter} in Section~\ref{sec:filters}] \label{lm:main-tech}
	For every power of two $n\geq 1$, positive integer $d$, and set $S \subseteq [n]^d$, there exists an $\ff \in S$ and an $(\ff,S)$-isolating filter $G$ such that $|\supp G| \leq |S|$. 
\end{lemma}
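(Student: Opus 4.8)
The plan has two layers: a soft argument that already settles existence, and the explicit ``adaptive aliasing'' construction that is what Corollary~\ref{cor:isofilter} really delivers. For the soft argument, observe that the characters $\ff\mapsto e^{2\pi i\jj^T\ff/n}$, $\jj\in[n]^d$, span $\C^{[n]^d}$, hence their restrictions to $S$ span $\C^S$; choosing $|S|$ of them forming a basis, say indexed by $T\subseteq[n]^d$ with $|T|=|S|$ and $M:=\big(e^{2\pi i\jj^T\ff/n}\big)_{\ff\in S,\jj\in T}$ invertible, the filter $G$ supported on $-T$ with coefficient vector $M^{-1}e_\ff$ satisfies $\wh G_\ff=1$, $\wh G_{\ff'}=0$ for $\ff'\in S\setminus\{\ff\}$, and $|\supp G|\le|T|=|S|$, for every $\ff\in S$. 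This proves the statement, but gives no handle on $T$; the rest of the plan produces $\ff$, $T$, and $G$ explicitly and recursively.

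For the explicit version I would induct on $|S|$. The base case $|S|=1$ is trivial (take $G$ supported at one point with the value making $\wh G\equiv1$). For $|S|\ge2$, write each coordinate of a frequency in $b=\log_2 n$ bits, let $i_0$ be the least coordinate on which $S$ is nonconstant, and let $j_0$ be the least $j$ for which $\{\ff_{i_0}\bmod 2^{j+1}:\ff\in S\}$ has more than one element. Then all $\ff\in S$ share a common residue $r=\ff_{i_0}\bmod 2^{j_0}$, while $\ff_{i_0}\bmod 2^{j_0+1}$ takes both values $r$ and $r+2^{j_0}$, splitting $S$ into two nonempty parts $S_0\sqcup S_1$; let $S'$ be the smaller one, so $|S'|\le\lfloor|S|/2\rfloor$. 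The point of this particular split is that each $\{\ff:\ff_{i_0}\equiv c\pmod{2^{j_0+1}}\}$ is a coset of a subgroup, so it is selected by a two-tap aliasing filter: with $\sv\in[n]^d$ equal to $2^{b-1-j_0}$ in coordinate $i_0$ and $0$ elsewhere, a short computation gives $e^{2\pi i\sv^T\ff/n}=e^{\pi i(\ff_{i_0}\bmod 2^{j_0+1})/2^{j_0}}$, which equals a fixed root of unity $\omega$ on one part and $-\omega$ on the other. Hence $\ff\mapsto\tfrac12\big(1\pm\omega^{-1}e^{2\pi i\sv^T\ff/n}\big)$, with the sign chosen so that it is $1$ on $S'$ and $0$ on $S\setminus S'$, equals $\wh{G_{\mathrm{sel}}}$ for a filter $G_{\mathrm{sel}}$ supported on just two points (note $\sv\ne0$ since $0\le j_0\le b-1$).

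I would then invoke the induction hypothesis on $S'$ to get $\ff\in S'$ and an $(\ff,S')$-isolating $G'$ with $|\supp G'|\le|S'|$, and set $G:=G'\ast G_{\mathrm{sel}}$. By the convolution theorem $\wh G=\wh{G'}\cdot\wh{G_{\mathrm{sel}}}$: on $S'$ this equals $\wh{G'}$ (since $\wh{G_{\mathrm{sel}}}\equiv1$ there), which is $1$ at $\ff$ and $0$ on $S'\setminus\{\ff\}$, and on $S\setminus S'$ it is $0$ (since $\wh{G_{\mathrm{sel}}}$ vanishes there), so $G$ is $(\ff,S)$-isolating. For the support bound, $\supp G\subseteq\supp G'+\supp G_{\mathrm{sel}}$, hence $|\supp G|\le 2|\supp G'|\le 2|S'|\le|S|$, closing the induction.

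The one delicate point — and the reason the construction must be ``adaptive'' — is the support bound. A naive filter that ``resolves one more bit'' per level is a convolution of $\Theta(bd)$ two-tap filters and can have support $2^{\Theta(bd)}$, exactly the curse-of-dimensionality blowup; the factor $2$ per level has to be absorbed. Two choices do that: the splitting coordinate/bit $(i_0,j_0)$ depends on $S$ (so that both parts are nonempty), and the recursion always descends into the \emph{smaller} part, so each two-tap filter is paid for by halving the surviving frequency set. The ``least significant distinguishing bit'' ordering is forced for the same reason: only the congruence classes modulo $2^{j_0+1}$ are cosets of a subgroup of $[n]^d$ — which is what makes the selector a genuine cheap aliasing filter — whereas splitting on a high-order bit of $\ff_{i_0}$ is not a coset and admits no such filter.
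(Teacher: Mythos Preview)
Your explicit construction is correct and is essentially the paper's construction in recursive disguise: the paper builds the \emph{splitting tree} $T=\tree(S,n^d)$, constructs for each leaf $v$ a filter that is a product of two-tap aliasing filters (one per branching ancestor of $v$), giving $|\supp G|=2^{w_T(v)}$, and then invokes Kraft's inequality (Claim~\ref{kraftsum}) to exhibit a leaf with $w_T(v)\le\log_2|S|$. Your recursion walks the same tree and uses the same two-tap filters; the only real difference is how you pick the leaf. Instead of Kraft, you greedily descend into the smaller half at every branching, which directly gives $|\supp G|\le 2|S'|\le|S|$---a slightly more elementary route to the same bound (and indeed to the same leaf one would find by repeatedly taking the lighter subtree). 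The paper's tree/weight formalism is not just packaging, though: it is reused heavily in the full Sparse FFT algorithm, where one must isolate \emph{partial} frequency cones rather than single leaves, and where the notion of leaf weight governs the per-iteration cost. Your soft linear-algebra argument is a nice bonus the paper does not include; it gives existence for \emph{every} $\ff\in S$ (not just one), at the price of losing the explicit, computable structure of $G$.
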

The proof of the lemma is given in Section~\ref{sec:filters}.

\paragraph{Accessing the residual signal.} Lemma~\ref{lm:main-tech} suggests a natural approach to the estimation problem with Fourier measurements in high dimensions: iteratively construct an $(\ff, S)$-isolating filter $G$, estimate $\ff$, remove $\ff$ from $S$, and proceed. The hope is that we can essentially assume that we are given access to ${\mathcal F}^{-1}(\wh{x}_{S\setminus \{\ff\}})$ once we have estimated $\ff$. In general, if we have been able to estimate the values of $\wh{x}_\ff$ for all $\ff\in C$ with some $C\subseteq S$, then we would like to obtain access to 
$$
x_\tt-\sum_{\ff\in C} \wh{x}_{\ff} \cdot e^{2\pi i \ff^T\tt}.
$$

Note that we would need $x_\tt$ for $\tt$ in the support of $G$ at the next iteration, and this support is generally a rather complicated set of size $\Omega(k)$, from which we need to subtract the inverse Fourier transform of the signal estimated so far. This problem is the non-uniform Fourier transform problem, and no subquadratic methods for subtraction are known even in dimension $d=1$ when the set in time domain that we want to compute the inverse Fourier transform on is arbitrary. Even if the target set is an $\ell_\infty$-box, the best known algorithms for this problem run in time $\Omega(k\log^{d} (1/\e))$, where $\e>0$ is the precision parameter of the computation---this reduces to quadratic time even when $d=\Omega(\log k/\log\log k)$ and inverse polynomial in $k$ precision is desired. Thus, subtracting from time domain would result in at least cubic runtime in $k$. Instead, we subtract the influence of the residual in frequency domain, which requires $O(k)$ evaluations of $\wh{G}$ (as we show, $\wh{G}$ can be evaluated at a cost of just $O(\log N)$). Note that it is crucial here that we peel off one coefficient at at time. Any improvements to this process, if they were to achieve $k^{2-\Omega(1)}$ runtime overall, would likely also imply improvements in the computation of approximate \emph{non-uniform} Fourier transform: given a $k$-sparse signal $\wh{x}$ and a set $T\subseteq [n]^d$ with $|T|\leq k$, output $y:[n]^d\to \C$ such that $||(x-y)_T||_2^2\leq \e ||x||_2^2$. However, it seems plausible that quadratic runtime in $k$ is essentially optimal for the non-uniform Fourier transform problem: specifically, that  under natural complexity theoretic assumptions there exists no algorithm for the $\e$-approximate non-uniform Fourier transform problem with runtime $k^{2-\Omega(1)}$ when $d=\Omega(\log k)$ and $\e<1/k^C$ for sufficiently large constant $C$. We note that current techniques do not provide a subquadratic algorithm even for simple sets $T$ such as the $\ell_\infty$ box with $k$ points in dimension $d=\Omega(\log k/\log\log k)$ (due to the $k\log^d (1/\e)$ dependence mentioned above; a similar exponential dependence on the dimension is present in Fast Multipole Methods~\cite{1987JCoPh..73..325G,beatson-greengard}). For an arbitrary set $T$ no subquadratic algorithm is known even when $d=1$.

\paragraph{Putting it together: estimation with Fourier measurements}
Combining the aforementioned ideas, we are able to develop a deterministic algorithm for the \emph{estimation problem with Fourier measurements} in high dimensions:
\begin{equation}\label{eq:exact-est}
\begin{split}
\text{\bf Input:}& \text{~~~access to $x:[n]^d\to \C$,}\\
&\text{~~~subset $S\subseteq [n]^d$ such that $\supp \wh{x}\subseteq S$}\\
\text{\bf Output:}& \text{~~~$\wh{x}_S$}
\end{split}
\end{equation}
For the estimation problem~\eqref{eq:exact-est} we obtain the following result.
\begin{theorem}[Estimation guarantee, informal version of Theorem~\ref{thm:est-main} in Section~\ref{sec:worstcase1d}]\label{thm:intro-est-main}
	Suppose $n$ is a power of two integer, $d$ is a positive integer, and $S\subseteq [n]^d$. Then, for any signal $x \in \C^{n^d}$ with $\supp{\wh x} \subseteq S$, the procedure \textsc{Estimate}$(x,S,n,d)$ (see Algorithm~\ref{alg:high-dim-Est-k2}) recovers $ \wh{x}$. Moreover, the sample complexity of this procedure is $O(|S|^2)$ and its runtime is $O(|S|^2 \cdot \log N)$. Furthermore, the procedure \textsc{Estimate} is deterministic.
\end{theorem}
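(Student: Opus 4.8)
The plan is to recover the coefficients $\{\wh{x}_\ff : \ff\in S\}$ one at a time by a deterministic peeling procedure driven by Lemma~\ref{lm:main-tech}. I would maintain a set $R\subseteq S$ of frequencies not yet estimated, initially $R=S$, together with the values $\wh{x}_\ff$ already computed for $\ff\in S\setminus R$. While $R\neq\emptyset$, apply Corollary~\ref{cor:isofilter} to the current set $R$ to obtain a frequency $\ff^\star\in R$ and an $(\ff^\star,R)$-isolating filter $G$ with $|\supp G|\le |R|\le |S|$, evaluate the single convolution value at the origin,
\[
(x*G)_0 \;=\; \sum_{\jj\in \supp G} x_{-\jj}\,G_{\jj},
\]
using $|\supp G|\le |S|$ accesses to $x$, and recall from the identity in the excerpt that
\[
(x*G)_0 \;=\; \frac1N\sum_{\ff\in S}\wh{x}_\ff\,\wh{G}_\ff \;=\; \frac1N\,\wh{x}_{\ff^\star} \;+\; \frac1N\sum_{\ff\in S\setminus R}\wh{x}_\ff\,\wh{G}_\ff,
\]
since $\supp\wh{x}\subseteq S$, $\wh{G}_{\ff^\star}=1$, and $\wh{G}_\ff=0$ for $\ff\in R\setminus\{\ff^\star\}$. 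Every term of the last sum is known: the $\wh{x}_\ff$ with $\ff\in S\setminus R$ from earlier rounds, and $\wh{G}_\ff$ evaluated in $O(\log N)$ time via the pointwise evaluation procedure of Section~\ref{sec:filters}. Hence one solves for $\wh{x}_{\ff^\star}$ exactly, moves $\ff^\star$ from $R$ to $S\setminus R$, and repeats.

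For correctness, observe that after $|S|$ rounds $R$ is empty and $\wh{x}_\ff$ has been computed for every $\ff\in S\supseteq\supp\wh{x}$, which determines $\wh{x}$. Each round is well defined precisely because Lemma~\ref{lm:main-tech} guarantees, for every nonempty $R\subseteq[n]^d$, the existence of a frequency in $R$ admitting an $(\cdot,R)$-isolating filter of support at most $|R|$. The procedure is deterministic: the filter construction of Section~\ref{sec:filters} is deterministic and all remaining operations are exact arithmetic, with no thresholds or random choices.

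For the complexity bounds, there are $|S|$ rounds, and in each round constructing the isolating filter costs $O(|S|\log N)$ by the guarantee of Section~\ref{sec:filters}, forming $(x*G)_0$ costs $O(|\supp G|)=O(|S|)$ arithmetic operations on $O(|S|)$ samples of $x$, and forming the correction term $\sum_{\ff\in S\setminus R}\wh{x}_\ff\wh{G}_\ff$ costs $|S\setminus R|\le|S|$ evaluations of $\wh{G}$, each $O(\log N)$, totaling $O(|S|\log N)$. Summing over rounds gives runtime $O(|S|^2\log N)$ and sample complexity $O(|S|^2)$, as claimed.

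The main obstacle, and the reason a naive peeling loop fails, is the step of accessing the residual: one would like time-domain values of $x-\sum_{\ff\in S\setminus R}\wh{x}_\ff e^{2\pi i\ff^T(\cdot)/n}$ on the support of the next filter, but that support has size $\Omega(|S|)$ and is combinatorially complicated, so computing the residual there is a non-uniform inverse Fourier transform, for which no subquadratic algorithm is known. The resolution is to never subtract in time domain: always sample the original $x$ through the filter and cancel the already-recovered frequencies in Fourier domain, exploiting that $\wh{G}$ is cheap to evaluate pointwise. This keeps each round at $O(|S|\log N)$ and, together with the fact that exactly one frequency is peeled per round (so the isolation property of Lemma~\ref{lm:main-tech} is exactly what is needed), yields the stated bounds.
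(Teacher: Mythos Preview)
Your proposal is correct and follows essentially the same approach as the paper: peel one frequency per round using the isolating filter guaranteed by Corollary~\ref{cor:isofilter}, sample $x$ only on $\supp G$, and cancel already-recovered frequencies in Fourier domain via pointwise evaluation of $\wh G$ rather than subtracting in time domain. The paper's Algorithm~\ref{alg:high-dim-Est-k2} implements exactly this, maintaining the splitting tree $T$ and removing one leaf per iteration (which is equivalent to your maintenance of $R$), and its proof of Theorem~\ref{thm:est-main} is the same induction with the same per-round cost accounting.
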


In the rest of this section, we give an overview of our techniques. Throughout the section, we present our results for the one-dimensional setting, as this makes notation simpler. All our results translate to the high-dimensional setting without any loss---see Section~\ref{sec:filters-d} for details.

\subsection{Recovery via adaptive aliasing filters} \label{subsec:recovadapt}
Our main theorem is the following, which presents an algorithm for problem \eqref{eq:exact-sfft} for worst-case signals.

\begin{restatable}[Sparse FFT for worst-case signals]{theorem}{sfftworstcase} \label{thm:sfft-worstcase}
	For any power of two integer $n$ and any positive integer $d$ and any signal $x\in \C^{n^d}$ with $\|{\wh{x}}\|_0 = k$, the procedure \textsc{SparseFFT}$(x, n, d, k)$ in Algorithm~\ref{alg:fullsparsefft} recovers $ {\wh{x}}$. Moreover, the sample complexity of this procedure is $O(k^3 \log^2k \log^2 N)$ and its runtime is $O(k^3 \log^2k \log^2 N)$.
\end{restatable}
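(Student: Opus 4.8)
The plan is to implement \textsc{SparseFFT} in two phases: a \emph{support identification} phase that outputs the set $S=\supp\wh x$ exactly, followed by a single call to \textsc{Estimate}$(x,S,n,d)$ to read off the coefficients. The second phase is exactly Theorem~\ref{thm:intro-est-main}: with $S=\supp\wh x$ of size $k$ it returns all of $\wh x$ using $O(k^2)$ samples and $O(k^2\log N)$ time, which is dominated by the first phase. So the whole task reduces to identifying $\supp\wh x$ within the claimed budget.

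For support identification I would run a breadth-first search over the tree of \emph{frequency cones}. Identify $[n]^d$ with $\{0,1\}^{\log_2 N}$ through the binary expansions of the coordinates (this is where $n$ being a power of two is used), so that each length-$\ell$ prefix $p$ determines a cone $\subtree(p)$, a coset of a subgroup of $[n]^d$ of size $N/2^\ell$. For $\ell=0,1,\dots,\log_2 N$ maintain the set $P_\ell$ of \emph{active} prefixes, i.e.\ those whose cone meets $\supp\wh x$; since $|\supp\wh x|=k$ we always have $|P_\ell|\le k$, and $P_{\log_2 N}$ is exactly $S$. Starting from $P_0=\{\varnothing\}$, refine $P_\ell$ to $P_{\ell+1}$ as follows: the candidate children form at most $2|P_\ell|\le 2k$ length-$(\ell+1)$ prefixes whose cones are pairwise disjoint and together cover $\supp\wh x$; for each candidate $p'$ decide, by a \emph{non-emptiness test}, whether $\subtree(p')\cap\supp\wh x\neq\varnothing$, and keep exactly the ones that pass.

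The non-emptiness test is where adaptive aliasing filters enter. Using the frequency-cone isolating filters of Section~\ref{sec:filters} (the extension of Lemma~\ref{lm:main-tech}/Corollary~\ref{cor:isofilter} from a single frequency to a cone), build a filter $G$ with $\wh G\equiv 1$ on $\subtree(p')$, $\wh G\equiv 0$ on the other candidate cones, and $|\supp G|=O(k)$: everything outside the $O(k)$ candidate cones is a ``don't care'', which is exactly what keeps $|\supp G|$ at $O(k)$ rather than $\Theta(N/2^{\ell+1})$. Then $(x*G)_\tt=\frac1N\sum_{\ff\in\subtree(p')\cap\supp\wh x}\wh x_\ff\,e^{2\pi i\,\ff^T\tt/n}$ holds exactly, with no residual to subtract. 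It remains to test whether this signal --- which has at most $k$ nonzero Fourier coefficients, all lying in a fixed coset --- is identically zero; I would evaluate it at an $O(k)$-size set of time points chosen compatibly with the coset and invoke a Vandermonde argument: a nonzero signal with at most $k$ Fourier coefficients cannot vanish on such a set. Each evaluation uses $O(|\supp G|)=O(k)$ samples and $O(k\log N)$ time, since one entry of $G$ costs $O(\log N)$.

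Correctness follows by induction on $\ell$: the test is exact (no noise, exact arithmetic), so $P_\ell$ equals the true active set at every level, hence $P_{\log_2 N}=\supp\wh x$, and Theorem~\ref{thm:intro-est-main} then recovers $\wh x$; the whole procedure is deterministic. For the running time, the total number of non-emptiness tests is $\sum_\ell 2|P_\ell|=O(k\log N)$, each costing $O(k)$ evaluations at $O(k\log N)$ time plus $\widetilde O(k\log N)$ to build the filter, for $\widetilde O(k^3\log^2 N)$ time and $\widetilde O(k^3\log N)$ samples; carrying the logarithmic factors hidden in the filter construction and in the size of the test set gives the stated $O(k^3\log^2 k\log^2 N)$ bounds, and the $O(k^2\log N)$ cost of the estimation phase changes nothing. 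I expect the crux to be the non-emptiness test: showing that a nonzero signal with at most $k$ Fourier coefficients, restricted to a coset of a subgroup of $[n]^d$, cannot vanish on a $\poly(k)$-size set of time-domain points chosen compatibly with the (power-of-two) subgroup structure, all while relying on the $O(k)$-support cone-isolating filter and its $O(\log N)$-per-entry evaluation, which are established in Section~\ref{sec:filters}.
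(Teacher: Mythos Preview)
Your breadth-first exploration has a real gap in the filter-support claim. Lemma~\ref{lem:filter-isolate} gives $|\supp G|=2^{w_{T'}(v)}$, and Corollary~\ref{cor:isofilter} (via Kraft) only guarantees that \emph{some} leaf of $T'$ has weight $\le\log_2(2k)$, not that all of them do. In BFS this bites: take $\supp\wh x=\{0,1,2,4,\dots,2^{k-2}\}$; the splitting tree is a caterpillar whose spine leaf at level $\ell$ has weight $\ell$, so isolating either of its children at level $\ell+1$ from the other candidate cones requires a filter of support $2^{\ell+1}$. At $\ell=k-1$ this is already $2^k$, and summing over all candidates and levels the work is exponential in $k$, not $\widetilde O(k^3)$. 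The filters of Section~\ref{sec:filters} simply do not give $O(k)$ support for every candidate cone.

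Algorithm~\ref{alg:fullsparsefft} sidesteps this by never doing BFS: at each step it extends only the \emph{minimum-weight} leaf (weight $\le\log_2 k$ by Claim~\ref{kraftsum}, hence support $\le 2k$). The price is that once a leaf reaches level $d\log_2 n$ it must be estimated and removed immediately---otherwise the tree would accumulate more than $k$ leaves and Kraft would no longer bound the minimum weight by $\log_2 k$. That in turn forces maintaining a running $\wh\chi$ and subtracting its contribution inside every \textsc{ZeroTest}, since after removals the current cones cover only $\supp(\wh x-\wh\chi)$, not $\supp\wh x$. Your clean ``identify, then estimate'' split does not survive this fix. Secondarily, your Vandermonde zero test with $O(k)$ points works in one dimension but not in $[n]^d$ with $n$ a power of two (singular $k\times k$ Fourier submatrices exist); the paper uses an RIP set of size $O(k\log^2 k\log N)$ instead (Theorem~\ref{RIP-thrm}), and that is where the $\log^2 k$ factor in the stated bound actually originates.
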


The major difference between estimation and recovery (i.e., problem \eqref{eq:exact-est} vs. \eqref{eq:exact-sfft}) is the fact in the latter problem, the set $S$ of frequencies is unknown to us: the algorithm is only given access to $x$ and an upper bound on the sparsity of $\wh{x}$. Since our $(f, S)$-separating filter is adaptive, i.e., depends on $S$, this appears to present a challenge. However, we circumvent this challenge by constructing a sequence of successive approximations to the set $S$. In dimension $1$, these approximations amount to reducing $S$ modulo $2^j$ for all $j=1,\ldots, \log_2 n$, and adaptively probing to learn which of the residue classes are nonzero. As before, our approach extends seamlessly to high dimensions by simply concatenating the $d$ coordinates into a single vector. Note that this is in sharp contrast to all previously known approaches, which are more efficient in low dimensions, but incur an exponential loss overall. We would like to note that at a high level one can view our filtering approach as a way to prune the FFT computation graph in a way that suffices for recovery of a $k$-Fourier sparse vector.

We outline the main ideas in one-dimensional setting here to simplify the presentation (see Section~\ref{sec:filters-d} for the high-dimensional version of the argument). Let $N=n$ be the length of the signal and $d=1$ be the dimension for $n$ a power of two. We define $\tfull_n$ to be a full binary tree of height $\log_2 n$ and define a labelling scheme on the vertices as follows.
\begin{restatable}{definition}{deftfull}
	\label{def:t-full}
	Suppose $n$ is a power of two. Let $\tfull_n$ be a full binary tree of height $\log_2 n$, where for every $j \in \{0, 1, \dots, \log_2 n\}$, the nodes at level $j$ (i.e., at distance $j$ from the root) are labeled with integers in $[2^j]$. For a node $v\in \tfull_n$, we let $f_v$ be its label. The label of the root is $f_{root} =0$. The labelling of $T_n^{full}$ satisfies the condition that for every $j \in [\log_2 n]$ and every $v$ at level $j$, the right and left children of $v$ have labels $f_v$ and $f_v+2^j$, respectively. Note that the root of $\tfull_n$ is at level 0, while the leaves are at level $\log_2 n$. 
\end{restatable}
The tree captures the computation graph of FFT algorithm, where leaves correspond to frequencies in $[n]$ (given by the label), and for any $j \in \{0, 1, \dots, \log_2 n\}$, the nodes at level $j$ (i.e., at distance $j$ from the root) correspond to congruence classes of frequencies modulo $2^j$, as specified by the labelling (see Figure~\ref{subfig:tfull}).

\begin{figure*}[t!]
	\centering
	\begin{subfigure}[t]{0.45\textwidth}
		\centering
		\scalebox{.6}{
			\begin{tikzpicture}[level/.style={sibling distance=60mm/#1,level distance = 2cm}]
			\node [arn] (z){}
			child {node [arn] (a){}
				child {node [arn] (b){}
					child {node [arn] (c){}}
					child {node [arn] (d){}}
				}
				child {node [arn] (e){}
					child {node [arn] (f) {}}
					child {node [arn] (g){}}
				}
			}
			child { node [arn] (h){}
				child {node [arn] (i){}
					child {node [arn] (j){}
					}
					child {node [arn] (k) {}}
				}
				child {node [arn] (l){}
					child {node [arn] (m){}}
					child {node [arn] (n){} }
				}
			};
			
			\node []	at (a)	[label=left:1]	{};
			\node []	at (b)	[label=left:11]	{};
			\node []	at (c)	[label=below:111] {};
			\node []	at (d)	[label=left:011] {};
			\node []	at (e)	[label=left:01] {};
			\node []	at (f)	[label=below:101] {}; 
			\node []	at (g)	[label=left:001] {}; 
			\node []	at (h)	[label=left:0] {}; 
			\node []	at (i)	[label=left:10] {}; 
			\node []	at (j)	[label=below:110] {}; 
			\node []	at (k)	[label=left:010] {}; 		
			\node []	at (l)	[label=left:00] {}; 
			\node []	at (m)	[label=below:100] {}; 
			\node []	at (n)	[label=left:000] {};

			\node [] at (-6,0) [label=right:\LARGE${T_8^{full}}$]	{};
			\draw[draw=black,very  thick, ->] (-3.9,-0.1) -- (-1.5,-0.6);
			
			\end{tikzpicture}
		}
		\par
		
		\caption{$T_8^{full}$ with binary labelling.}
		\label{subfig:tfull}
	\end{subfigure}
	~ 
	\begin{subfigure}[t]{0.45\textwidth}
		\centering
		\scalebox{.6}{
			\begin{tikzpicture}[level/.style={sibling distance=60mm/#1,level distance = 2cm}]
			\node [arn] (z){}
			child {node [arn] {}edge from parent [electron]
				child {node [arn] {}edge from parent [draw=black,very thin]
					child {node [arn] {}edge from parent [draw=black,very thin]}
					child {node [arn] {}edge from parent [draw=black,very thin]}
				}
				child {node [arn] {}edge from parent [electron]
					child {node [arn_l] (a) {}edge from parent [electron]}
					child {node [arn] {}edge from parent [draw=black,very thin]}
				}
			}
			child { node [arn] {} edge from parent [electron]
				child {node [arn] {}edge from parent [electron]
					child {node [arn] {}edge from parent [draw=black,very thin]
					}
					child {node [arn_l] (b) {}edge from parent [electron]}
				}
				child {node [arn] {}edge from parent [electron]
					child {node [arn_l] (c){}edge from parent [electron]}
					child {node [arn] {} edge from parent [draw=black,very thin]}
				}
			};
			
			\node []	at (a)	[label=below:101]	{};
			\node []	at (b)	[label=below:010]	{};
			\node []	at (c)	[label=below:100]	{};
			
			\node [] at (-6.5,0) [label=right:\LARGE
			{Splitting tree T}]	{};
			\draw[draw=black,very  thick, ->] (-3.5,-0.5) -- (-1.8,-1);
			
			\end{tikzpicture}
		}
		\par
		
		\caption{A splitting tree of depth $3$ with three leaves. Set of leaves $S = \{ 2,4,5 \}$.}
		\label{subfig:splittree}
	\end{subfigure}
	\caption{An example of $T_n^{full}$ and a splitting tree with $n=8$ and binary labelling.} \label{Tfull-3sparse}
\end{figure*}

Note that the full FFT algorithm starts from the root of $\tfull_n$ and computes the congruence classes of the Fourier transform of signal $x$ at each level of this tree iteratively. Because it can reuse the computations from each level for computing the next levels, the total runtime of FFT is $O(n\log_2 n)$.

In order to speed up the computation for sparse signals, we introduce the notion of a {\em splitting tree}, which is nothing but the subtree of $\tfull_n$ that contains the nonzero locations of $\wh{x}$ together with paths connecting them to the root. Given a set $S\subseteq [n]$ (the support of $\wh{x}$ in Fourier domain), we define the \emph{splitting tree} of the set $S$ as follows:

\begin{restatable}[Splitting tree]{definition}{defsplit} \label{def:splittree}
	Let $n$ be a power of two. For every $S \subseteq [n]$, the \emph{splitting tree} $T=\tree(S, n)$ of a set $S$ is a binary tree that is the subtree of $\tfull_n$ that contains, for every $j\in [\log_2 n]$, all nodes $v \in \tfull_n$ at level $j$ such that $\{ f \in S : f \equiv f_v \pmod{2^j} \} \neq \emptyset $. 
\end{restatable}

An illustration of such a tree is given in Figure~\ref{subfig:splittree}. In order to recover the identities of the elements in $S$, our algorithm performs an exploration of this tree. At every point, the algorithm constructs a filter $G$ that isolates {\em frequencies in a given subtree} and tests whether that subtree contains a nonzero signal. In order to make this work, we need a construction of filters that isolates the entire subtree as opposed to only one element, as Definition~\ref{def:introisol} does. Fortunately, the actual $(f, S)$-isolating filters $G$ constructed in Lemma~\ref{lm:main-tech} satisfy precisely this property. The stronger isolation properties are captured by the following definition:

\begin{restatable}[Frequency cone of a leaf of $T$]{definition}{deffreqcone}\label{def:iso-t}
	For every power of two $n$, subtree $T$ of $\tfull_n$, and vertex $v\in T$ which is at level $l_T(v)$ from the root, define the {\em frequency cone of $v$ with respect to $T$} as
	$$
	\subtree_T(v):=\left\{ f \in [n]: f \equiv f_v \pmod{2^{l_T(v)}} \right\}.
	$$
\end{restatable}
Intuitively, the frequency cone of a node $v$ in $T$ captures all potential nonzeros of $\wh{x}$ that belong to the subtree of $v$ in $T$ (see Figure~\ref{fig:conjclass}). Our adaptive filter construction lets us obtain time domain access to the corresponding part of the frequency space:

\begin{restatable}[$(v, T)$-isolating filter]{definition}{defvtisol} \label{def:v-t-isolating}
	For every integer $n$, subtree $T$ of $\tfull_n$, and leaf $v$ of $T$, a filter $G\in \C^n$ is called \emph{$(v,T)$-isolating} if the following conditions hold:
	\begin{itemize}
		\item For all $f\in \subtree_T(v)$, we have $\wh{G}_{f}=1$.
		\item For every $f'\in \bigcup_{\substack{u \neq v \\ u: \text{~leaf of~}T}} \subtree_T(u)$, we have $\wh{G}_{f'}=0$.
	\end{itemize}
\end{restatable}

Note that for all signals $x \in \C^n$ with $\supp\wh{x} \subseteq \bigcup_{u: \text{~leaf of~}T} \subtree_T(u)$ and $t\in[n]$,
\[ \sum_{j\in[n]} x_j G_{t-j} = \frac{1}{n} \sum_{f \in \subtree_T(v)}\wh{x}_f e^{2\pi i \frac{f t}{n}}. \]

\begin{figure}
	\centering
	\scalebox{.6}{
		\begin{tikzpicture}[level/.style={sibling distance=60mm/#1,level distance = 2cm}]
		\node [arn] (x){}
		child {node (b)[arn_l] {}edge from parent [electron]
		}
		child{node [arn] {}edge from parent [electron]
			child {node [arn] {}
				child {node [arn] {}edge from parent [very thin]}
				child {node [arn] {}}
			}
			child { node (a) [arn_l] {}
			}
		};
		
		\node [tr]	at (a.south)	[label]	{};
		\node [br]	at (b.south)	[label]	{};
		
		\node []	at (a.north)	[label=\LARGE{$u$}]	{};
		\node []	at (b.north)	[label=\LARGE{$v$}]	{};	
		
		\node [] at (-9.6,0.3) [label=right:\LARGE
		{Partially recovered splitting tree}]	{};
		\draw[draw=black,very  thick, ->] (-4,-0) -- (-2,-1);		
		
		\node [] at (-10.6,-2.1) [label=right:\LARGE
		{Frequency cone of $v$}]	{};
		\draw[draw=black,very  thick, ->] (-7,-2.6) -- (-3.8,-3.5);
		
		\node [] at (6.5,-3.5) [label=right:\LARGE
		{Frequency cone of $u$}]	{};
		\draw[draw=black,very  thick, ->] (8.0,-4.0) -- (5,-4.8);
		\end{tikzpicture}
	}
	\par
	\caption{A partially recovered splitting tree (shown in bold). Frequency cones of $u$ and $v$ correspond to the subtrees rooted at nodes $u$ and $v$, respectively, which have not been discovered yet.}
	\label{fig:conjclass}
\end{figure}

\paragraph{Iterative tree exploration process leading to an algorithm with $\widetilde{O} (k^3)$ runtime.} Now that we have defined the framework for our algorithm, we need to specify the order in which the algorithm will be accessing the leaves of the tree in order to minimize runtime. This is governed by the cost of constructing and using a $(v, T)$-isolating filter for various nodes $v$ in $T$. To quantify cost, we introduce the notion of a \emph{weight} of a leaf in the tree.
\begin{restatable}[Weight of a leaf]{definition}{defwt} \label{def:weight}
	Suppose $n$ is a power of two. Let $T$ be a subtree of $\tfull_n$. Then for any leaf $v\in T$, we define its \emph{weight} $w_T(v)$ \emph{with respect to} $T$ to be the number of ancestors of $v$ in tree $T$ with two children.
\end{restatable}
It turns out that the techniques from Lemma~\ref{lm:main-tech} also yield the following.
\begin{lemma}[Informal version of Lemma~\ref{lem:filter-isolate} in Section~\ref{sec:filters}] \label{lem-informal:filter-isolate}
	Suppose $n$ is a power of two. Let $T$ be a subtree of $\tfull_n$. Then for any leaf $v\in T$, there exists a $(v, T)$-isolating filter $G$ with $|\text{supp}~ G|\leq 2^{w_T(v)}$ such that $G$ and $\wh{G}$ can be evaluated at  $O(\log N)$ cost per point.
\end{lemma}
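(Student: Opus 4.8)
\textbf{Proof plan for Lemma~\ref{lem-informal:filter-isolate}.}

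The plan is to construct the filter $G$ recursively by walking down the splitting tree $T$ from the root to the leaf $v$, maintaining at each step a partial filter supported on a set of size at most $2^{(\text{number of branching ancestors seen so far})}$. The key object is, for each node $u$ on the root-to-$v$ path at level $j = l_T(u)$, a filter $G^{(u)} \in \C^n$ whose Fourier transform $\wh{G^{(u)}}$ is supported on (indeed equal to $1$ on) the residue class $\{f : f \equiv f_u \pmod{2^j}\}$ restricted appropriately, and $0$ on the frequency cones of all leaves of $T$ lying outside the subtree rooted at $u$. At the root, $G^{(\text{root})} = \delta_0$ (the indicator of $0$ in time domain), whose Fourier transform is identically $1$; this trivially isolates the root's cone (all of $[n]$) and has support size $1 = 2^0$.

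The inductive step is the crux. Suppose $u$ is on the path, $u'$ is its child on the path, and we have $G^{(u)}$ with the stated properties and $|\supp G^{(u)}| \le 2^{w}$ where $w$ counts branching ancestors of $u$. If $u$ has only one child in $T$ (namely $u'$), then the cone of $u'$ equals the cone of $u$ intersected with the relevant residue class, but since no nonzero frequency of any leaf lies in the ``missing'' child's cone, $G^{(u')} := G^{(u)}$ already works, and the support does not grow; the branching count also does not increase, so the bound is maintained. If $u$ \emph{does} branch, say $u'$ is the child with label $f_{u'} = f_u$ or $f_u + 2^j$, we need to kill the contribution of the sibling subtree. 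Here I would multiply $G^{(u)}$ pointwise in time domain by a rank-one ``modulating aliasing filter'': concretely, a filter whose Fourier transform is the indicator (up to the usual normalization) of the coset $\{f \equiv f_{u'} \pmod{2^{j+1}}\}$ within $\{f \equiv f_u \pmod{2^j}\}$. Such an indicator can be written as $\tfrac12(1 \pm e^{-2\pi i \cdot 2^j t / n})$ evaluated in time domain — i.e., convolving $G^{(u)}$ in Fourier domain with a two-point filter, equivalently multiplying $G^{(u)}$ in time domain by the two-tap sequence $t \mapsto \tfrac12(1 \pm \omega_n^{2^j t})$. This at most doubles $|\supp G^{(u)}|$, giving $|\supp G^{(u')}| \le 2^{w+1}$, matching the new branching count. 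One checks the two defining properties of a $(v,T)$-isolating filter survive: on the cone of $u'$ the new Fourier transform is still $1$ (the two-point indicator is $1$ there), and on cones of leaves outside $u'$'s subtree it is $0$ — either because $G^{(u)}$ was already $0$ there, or because such a leaf lies under $u$'s other child and the two-point indicator vanishes on that coset. Iterating down to $v$ yields $G = G^{(v)}$ with $|\supp G| \le 2^{w_T(v)}$.

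For the evaluation-cost claim: $G$ is, by the above, a product of at most $\log_2 n$ two-tap factors (one per branching ancestor, with the non-branching levels contributing the identity factor), so $G_t$ at a single point $t$ is a product of $O(\log N)$ explicitly-given complex numbers, computable in $O(\log N)$ time; and $\wh{G}_f$, being the indicator of an explicit intersection of cosets determined by the root-to-$v$ path, can be tested/evaluated in $O(\log N)$ time by checking the $\le \log_2 n$ congruence conditions (equivalently, reading off the relevant bits of $f$). The support of $G$ itself is the set of $t$ for which none of the two-tap factors vanish; these form an explicit arithmetic-combinatorial set (a coset-like structure) that can be enumerated in time $O(2^{w_T(v)} \log N)$.

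\textbf{Anticipated main obstacle.} The delicate point is the \emph{simultaneous} maintenance of all three invariants — (i) $\wh{G}$ exactly $1$ on the current node's cone, (ii) $\wh{G}$ exactly $0$ on every already-excluded leaf cone, and (iii) the support bound tracking the branching count — across the non-branching levels, where intuitively ``nothing happens'' but one must verify that reusing $G^{(u)}$ for $G^{(u')}$ does not accidentally leave $\wh{G}$ nonzero on some leaf cone that becomes ``newly outside'' only because we have descended a level. This requires the observation that a leaf cone outside $u'$'s subtree but inside $u$'s subtree can only occur when $u$ branches (otherwise $u'$'s subtree is all of $u$'s subtree), so the non-branching case genuinely introduces no new exclusion obligations. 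Making this bookkeeping airtight — ideally by phrasing the whole construction as an induction on the path from $v$ up to the root with a clean invariant about which cosets $\wh{G}$ is supported on — is where the real work lies; the algebra of the two-tap aliasing filters is routine once the invariant is right.
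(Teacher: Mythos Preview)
Your overall plan is exactly the paper's: walk the root-to-$v$ path, do nothing at a non-branching ancestor, and at each branching ancestor apply a ``two-point'' operation that kills the sibling coset while at most doubling the time-domain support. Your invariant and your handling of non-branching levels (no new exclusion obligations arise there) are both correct.

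The branching step, however, has the domains swapped. You propose to \emph{multiply $G^{(u)}$ in time} by $t\mapsto\tfrac12(1\pm\omega_n^{2^jt})$ and assert this doubles $|\supp G^{(u)}|$; but pointwise multiplication in time can only shrink the support, and the equivalent Fourier-domain \emph{convolution} with $\tfrac12(\delta_0+\delta_{\pm 2^j})$ does not kill the sibling coset (on both child cosets the inductive $\wh{G^{(u)}}$ is already $1$, so the average stays $1$). The correct operation is the transpose: multiply $\wh{G^{(u)}}$ in the \emph{Fourier} domain by $\xi\mapsto\tfrac12\bigl(1+e^{2\pi i(\xi-f_v)/2^{j+1}}\bigr)$, which is $1$ on $\{\xi\equiv f_{u'}\pmod{2^{j+1}}\}$, $0$ on the sibling coset, and complex-valued elsewhere (so it is \emph{not} the coset indicator). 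In time this is convolution with the two-point kernel $\tfrac12\bigl(\delta_0+e^{-2\pi i f_v/2^{j+1}}\delta_{-n/2^{j+1}}\bigr)$, and that convolution is what doubles the time support. A corollary is that the final $\wh G$ is not $\{0,1\}$-valued outside the leaf cones, so evaluating $\wh G_f$ at an arbitrary $f$ (needed later when subtracting previously recovered coefficients) must use the product formula over branching levels rather than a congruence test.
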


\begin{figure}
	
	\centering
	\scalebox{.6}{
		\begin{tikzpicture}[level/.style={sibling distance=60mm/#1,level distance = 2cm}]

		\node [arn] at (-2,0) [label=above:\LARGE{Splitting tree $T$}]{};
		\node [brr]	at (-2,0)	[label]	{};

		\node [trrr] at (-1.05,-2) [label]	{};
		
		\node [arn] (x) at (-1.05,-2){}
		child [level distance = 1cm,sibling distance=10mm] {node{} edge from parent [draw = white]
		}
		child [level distance = 1cm,sibling distance=10mm] {node [arn] {}
			child [level distance = 1cm,sibling distance=10mm] {node [arn] {}
				child [level distance = 1cm,sibling distance=10mm] {node{} edge from parent [draw = white]
				}
				child [level distance = 1cm,sibling distance=10mm] {node [arn] [label=left:\LARGE{$v_0$}]{}
				}
			}
			child [level distance = 1cm,sibling distance=10mm] {node{} edge from parent [draw = white]
			}
		};
		
		\draw[ dashed,very thick] (-1.05,-2) -- (-2,-4);
		\draw[ dashed,very thick] (-0.1,-4) -- (-2,-4);
		\draw[ dashed,very thick] (-1.05,-2) -- (-0.1,-4);
		
		\node [trr]	at (-0.55,-5)	[label]	{};
		
		\draw[draw=black, very thick, <->] (-4.2,0) -- (-4.2,-4);
		\node []	at (-4.1,-2)	[label=left:\LARGE $h$]	{};
		
		\draw[draw=black,  very thick, <->] (-0.5,0) -- (-0.5,-2);
		\node []	at (-0.6,-1)	[label=right:\LARGE $h_0$]	{};

		\draw[ draw=black,  very thick, <->] (-5.5,0) -- (-5.5,-7);
		\node []	at (-5.5,-3.5)	[label=left:\LARGE $\log_2n$]	{};
		
		\node [arn]	at (-0.2,-7)	[label=below:\LARGE $v$]	{};

		\draw[ draw=black,  very thick, ->] (3,-4) -- (13,-4);
		\foreach \x in { 0, 0.1, 0.2, 0.3 , 0.4, 0.5, 0.6, 0.7 } \foreach \y in {-4, -1, 2 } \draw[draw=black,very thick] (8+\x+\y,-3.75) -- (8+\x +\y,-4.25);

		\draw[ draw=black,  thick, <->] (4,-3.5) -- (4.7,-3.5);
		\node []	at (4.8,-3.5)	[label=above:\LARGE $2^{h-h_0}$]	{};
		
		\draw[ draw=black,  thick, <->] (7.3,-3.5) -- (10.35,-3.5);
		\node []	at (8.9,-3.5)	[label=above:\LARGE $n/2^{h_0}$]	{};
		
		\node [] at (3,-1.5) [label=right:\LARGE {Support of $(v,T)$-isolating filter $G$}]	{};	
		\end{tikzpicture}}
	\par
	\caption{An instance of a $(v,T)$-isolating filter G, where the weight of leaf $v$ is $h$ and hence the filter $G$ satisfies $|\supp G| = 2^{w_T(v)} = 2^h$.} \label{fig:filter-time}
\end{figure}

Before describing the algorithm we give an example illustrating  filter support in time domain. Consider a complete binary tree $T$ of height $h \ll \log_2n$. Suppose that $v_0$ is some vertex at level $h_0 < h$ of this tree. Now we take the subtree rooted at $v_0$ and add an appendage of length $\log_2n - h$ to $v_0$. The appendage is a path of $\log_2n - h$ nodes each of which has a single child. This doesn't change the weight of any of the leafs of the original tree because every node on the appendage has exactly one child. One can see an example of such tree in Fig.~\ref{fig:filter-time}. Suppose that the leaf $v$ is a leaf of the subtree rooted at $v_0$, which is moved far from the root by the appendage. In order to isolate $v$ from the elements that are not in the subtree of $v_0$ we need a filter which is $(n/2^{h_0})$-periodic in time domain and in order to isolate from the rest of the elements in subtree of $v_0$ the filter needs to sample the signal at a fine grid of length $2^{h-h_0}$. Note that the support of a $(v,T)$-isolating filter $G$ is $\supp{G} = \left\{ i + (n/2^{h_0}) \cdot j ; \, j \in [2^{h_0}], i \in [2^{h-h_0}] \right\}$. In Fig.~\ref{fig:filter-time} we exhibit a $(v,T)$-isolating filter $G$ which is constructed based on Lemma~\ref{lem:filter-isolate}, where $v$ and $T$ correspond to this instance of splitting tree.

Given Lemma \ref{lem-informal:filter-isolate}, our algorithm is natural. We find the vertex $v^*=\text{argmin}_{v\in T} w_T(v)$, which, by Kraft's inequality, satisfies $w_T(v^*)\leq \log_2 k$. We then define an auxiliary tree $T'$ by appending a left $a$ and a right child $b$ to $v$. Then for each of the children $a, b$, we, in turn, construct a filter $G$ that isolates them from the rest of $T$ (i.e., from the frequency cones of other nodes in $T$) and check whether the corresponding restricted signals are nonzero. The latter is unfortunately a nontrivial task, since the sparsity of these signals can be as high as $k$, and detecting whether a $k$-sparse signal is nonzero requires $\Omega(k)$ samples. However, a fixed set of $k\log^3 N$ locations that satisfies the restricted isometry property (RIP) can be selected, and accessing the signal on those values suffices to test whether it is nonzero. The overall runtime becomes $\widetilde O(k^3)$: the isolating filter has support at most $2k$, while the number of samples needed to test whether the two subtrees of $v$ are nonempty is $\widetilde O(k)$, so peeling off $\leq k$ elements takes $\widetilde O(k^3)$ time overall. This results in Theorem~\ref{thm:sfft-worstcase} (the algorithm is presented as Algorithm~\ref{alg:fullsparsefft}).

\paragraph{$\widetilde O(k^2)$ runtime under random phase assumption.} We note that the runtime can be easily reduced to $\widetilde O(k^2)$ if assumptions are made on the signal that ensure that its energy is evenly spread across time domain, making $\widetilde O(1)$ samples sufficient to detect whether the signal is zero or not. This occurs, for instance, if a signal's frequencies satisfy distributional assumptions (e.g., the values have random phases). We present such a result in Section~\ref{sec:randomsign}. It seems that even under this assumption on the values of the signal, since the support of the signal in Fourier domain is worst case, reducing the runtime below $k^2$ likely requires a major advance in techniques for non-uniform Fourier transform computation.

More formally, we introduce the notion of a \emph{worst-case signal with random phase} as follows:
\begin{restatable}[Worst-case signal with random phase]{definition}{randsign} \label{random-sign}
	For any positive integer $d$ and power of two $n$, we define $x$ to be a \emph{worst-case signal with random phase} having values $\{\beta_\ff\}_{\ff\in [n]^d}$ if
	\[
	\wh{x}_{\ff} = \beta_\ff e^{2\pi i \theta} \quad \text{for uniformly random $\theta\in [0,2\pi)$},
	\]
	independently for every $\ff \in [n]^d$. Furthermore, if $k$ of the values $\{\beta_\ff\}_{\ff\in [n]^d}$ are nonzero, then $x$ is said to be a \emph{worst-case $k$-sparse signal with random phase} and is guaranteed to have sparsity $k$.
\end{restatable}
Note that ``worst-case'' in the above definition signifies the fact that the \emph{support} of the signal is arbitrary (having no distributional assumptions), subject to a potential sparsity constraint. We then present the following theorem:
\begin{restatable}[Sparse FFT for worst-case signals with random support]{theorem}{sfftrandphase} \label{thm:sfft-rand-signs}
	For any power of two integer $n$, positive integer $d$, and worst-case $k$-sparse signal with random phase $x\in \C^{n^d}$, the procedure \textsc{SparseFFT-RandomPhase}$(x, n, d, k)$ in Algorithm~\ref{alg:sparsefftphase} recovers $ {\wh{x}}$ with probability $1-\frac{1}{N^2}$. Moreover, the sample complexity of this procedure is $O(k^2 \log^4 N)$ and its runtime is $O(k^2 \log^4 N)$.
\end{restatable}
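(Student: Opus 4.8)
The plan is to follow the same tree-exploration strategy that underlies Theorem~\ref{thm:sfft-worstcase}, but to exploit the random-phase assumption to replace the expensive RIP-based nonzero test with a cheap one. First I would recall that the recovery algorithm of Theorem~\ref{thm:sfft-worstcase} proceeds by maintaining a partially recovered splitting tree $T$, repeatedly picking a minimum-weight leaf $v^*$ (which by Kraft's inequality has weight at most $\log_2 k$), attaching its two children $a,b$ in $\tfull_n$, and using Lemma~\ref{lem-informal:filter-isolate} to build a $(c,T')$-isolating filter $G_c$ for $c\in\{a,b\}$ so that convolving $x$ with $G_c$ gives time-domain access to $\mathcal F^{-1}(\wh x_{\subtree_{T'}(c)})$. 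The only randomized ingredient needed is a subroutine that, given $O(\log N)$-cost access to the time-domain samples of a signal $y$ whose Fourier support lies in a known frequency cone and whose nonzero Fourier coefficients have i.i.d.\ uniformly random phases, decides whether $y\equiv 0$. Once the children that carry nonzero energy are identified, we recurse; a leaf of the final tree corresponds to a singleton frequency cone, at which point the single coefficient is read off directly (its value is $n$ times any convolution sample, up to the known phase factor $e^{2\pi i f^Tt/n}$). The residual is handled exactly as before, by subtracting in Fourier domain via $O(k)$ evaluations of $\wh{G}$, each costing $O(\log N)$.

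The key step is the cheap nonzero test. Here I would argue that if $y$ has $m\le k$ nonzero Fourier coefficients $\beta_{f_1}e^{2\pi i\theta_1},\dots,\beta_{f_m}e^{2\pi i\theta_m}$ with the $\theta_j$ independent and uniform, then for a fixed time-domain point $t$, $y_t=\frac1n\sum_j \beta_{f_j}e^{2\pi i\theta_j}e^{2\pi i f_j^Tt/n}$ is a random variable whose expected squared magnitude is $\frac1{n^2}\sum_j\beta_{f_j}^2>0$ whenever $y\not\equiv0$. Thus a single random sample already distinguishes the zero signal from a nonzero one with constant probability; sampling $O(\log N)$ fixed points and taking the one of largest magnitude, then comparing against a threshold, drives the failure probability for each individual test down to $1/\poly(N)$. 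A union bound over the at most $O(k\log N)$ tests performed across the whole execution (at most $k$ peeling steps, each examining $O(\log n)$ levels with $O(1)$ children per level) keeps the total failure probability below $1/N^2$. The sample points can be fixed in advance (they do not depend on the adaptive choices), so the overall sample complexity is the number of points touched by the isolating filters, $O(k)$ per filter times $O(k)$ filters times the $O(\log N)$ levels and the $O(\log^2 N)$ or so overhead hidden in the filter evaluation and nonzero-test bookkeeping, giving the claimed $O(k^2\log^4 N)$ bound, and the runtime matches since every sample is processed with $O(\log N)$ work.

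The main obstacle I anticipate is making the nonzero test provably correct \emph{against an adaptive adversary combined with fixed randomness}: the signals $y$ fed into the test at late peeling stages are residuals that depend on earlier (randomized) recovery decisions, so one cannot naively treat the $\theta_j$ as "fresh" randomness at each test. The clean way around this is to observe that the phases $\theta_j$ are properties of the \emph{original} signal $\wh x$, fixed once at the start; the residual at any stage is a deterministic function of $\wh x$ and the (correct) partial recovery, and conditioned on all prior tests succeeding, each residual cone's coefficients are exactly the corresponding $\beta_{f_j}e^{2\pi i\theta_j}$ with those same phases — still jointly uniform, but now we must be careful that the \emph{choice of which cone to test} is correlated with the phases. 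To handle this I would prove a statement of the form: for \emph{every} fixed subset $U$ of frequencies and \emph{every} fixed set of $O(\log N)$ query points, $\Pr[\text{test on }\wh x_U\text{ errs}]\le 1/\poly(N)$, and then union-bound over the polynomially many subsets $U$ that can ever arise as frequency cones in $\tfull_n$ (there are at most $O(nd)=O(\log N\cdot n)$... more carefully, the number of nodes of $\tfull_{n^d}$ is $O(N)$, giving an $O(N\cdot k)$-term union bound, still absorbed into $1/N^2$ by choosing the per-test error $1/N^c$ for suitable $c$). The remaining details — the exact threshold, a concentration/anticoncentration bound for $\big|\sum_j \beta_j e^{2\pi i\theta_j}z_j\big|$ with $|z_j|=1$ (a standard sum-of-random-phases argument, e.g.\ via the second moment plus Paley--Zygmund, or a Chernoff bound after taking the max over $O(\log N)$ samples), and the verification that the filter supports and evaluation costs aggregate to $O(k^2\log^4 N)$ — are routine given the machinery already developed for Theorem~\ref{thm:sfft-worstcase}.
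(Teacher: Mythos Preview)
Your proposal is essentially the paper's approach: the same tree exploration, the same Kraft bound on filter support, the same Fourier-domain residual subtraction. Two technical points where the paper's execution differs and is cleaner:

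(i) The nonzero test. The paper's Lemma~\ref{lem:random-sign} averages $|y_{t_i}|^2$ over $s=O(\log^3 N)$ i.i.d.\ uniform sample points and proves concentration in two steps: Bernstein over the random phases to bound $\|y\|_\infty$, then Chernoff--Hoeffding over the random sample points. Your max-over-$O(\log N)$-points with Paley--Zygmund amplification is shaky as written, because the values $y_{t_1},\dots,y_{t_s}$ all share the same phase randomness and are not independent, so taking the max does not directly drive the error to $1/\poly(N)$; you need the second source of randomness (the sample points) to get concentration, which is exactly what the paper uses.

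(ii) Adaptivity. Your concern is valid, but the resolution is simpler than a union bound over all $O(N)$ frequency cones. In a \emph{correct} execution the sequence of cones ever tested is determined solely by $\supp\wh x$: the outcome of each \textsc{ZeroTest} is just whether the cone meets the residual support, which (by the paper's inductive invariants (1)--(3)) equals whether it meets $\supp\wh x$. Since the support is worst-case (not random), the $O(k\log N)$ tests are fixed once the support is fixed, and a direct union bound over them---each failing with probability at most $1/N^4$ by Lemma~\ref{lem:random-sign} over the joint phase-and-$\mathbf{\Delta}$ randomness---suffices. The actual execution agrees with this ideal execution until the first failure, so the same bound controls the actual algorithm.
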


\paragraph{Impossibility of reducing the number of iterations (rounds of adaptivity): signals with low Hamming weight support.} We note that our algorithm differs from all prior works in that it uses many rounds of adaptivity. Indeed, the samples that our algorithm takes are guided by values of the signal that have been read in previously queried locations, which is in contrast to most prior Sparse Fourier Transform algorithms. Two notable exceptions in recent literature include the adaptive block Sparse FFT algorithms of \cite{CKSZ17} and \cite{ChenKPS16}.

Our algorithm uses $k$ rounds of adaptivity, peeling off one element at a time. It would be desirable to reduce the number of rounds of adaptivity by perhaps peeling off many elements in one batch as opposed to one at a time. For example, if the locations of the nonzeros of $\wh{x}$ are uniformly random in $[n]^d$, then the splitting tree of $x$ is likely to be rather balanced (see, e.g.~Fig.~\ref{fig:split-rand} for an illustration), so perhaps one can find a filter $G$ that has small support and can be efficiently used to isolate many coefficients at once?  Indeed, this intuition turns out to be correct for signals with uniformly random supports---we show in Section~\ref{subsec:batchpeel} below (with details presented in Section~\ref{sec:avgcase}) that this idea yields a $\widetilde O(k)$ time algorithm. However, rather surprisingly, adversarial instances exist that force the peeling process to use $k^{1-o(1)}$ rounds of adaptivity in the worst case, making our analysis essentially tight. We now present this adversarial instance.

\begin{restatable}[Hamming ball]{definition}{lowhamming} \label{low-Hamming}
	For any power of two integer $n$ any integer $0\leq c\leq \log_2 n$, we define $H_c^n$ to be the \emph{closed Hamming ball} of radius $c$ centered at 0:
	\[ H_c^n= \{ f\in [n] : w(f) \le c \}, \]
	where $w(f)$ is the Hamming weight of the binary representation of $f$, i.e., $w(f)$ is the number of ones in the binary representation of $f$.
\end{restatable}
Note that $|H_c^n| = \sum_{j=0}^c \binom{\log_2 n}{j}$.

\begin{restatable}[Class of signals with low Hamming support]{definition}{lowhammingsupport} \label{low-Hamming-support}
	For any power of two integer $n$ and any integer $c$, Let $\mathcal{X}^n_c$ denote the class of signals in $\C^n$ with support $H^n_c$ as in Definition~\ref{low-Hamming},
	$$\mathcal{X}^n_c = \{ x \in \C^n : \supp{ x } \subseteq H^n_c \}$$
\end{restatable}
Note that for any $x \in \mathcal{X}^n_c$ we have that $\| x \|_0 = \sum_{i=0}^c \binom{\log_2n}{i}$, so for any $c \le (\frac{1}{2} - \epsilon)\log_2n$, the signals that are contained in class $\mathcal{X}^n_c$ are $\Theta\Big( \binom{\log_2n}{c} \Big)$-sparse.

\begin{restatable}[Low Hamming weight binary trees]{definition}{lowhammingweighttrees}
	Suppose $n$ is a power of two integer. Then, we define a \emph{low Hamming weight binary tree $T_c^n$} inductively for $c=0,1,\dots, \log_2 n$:
	\begin{enumerate}
		\item $T_0^n$ is defined to be the unique tree of depth $\log_2 n$ that has a single leaf node and satisfies the property that each non-leaf node has a single right child only. Thus, $T_0^n$ has $\log_2 n + 1$ nodes.
		
		\item For any $c > 0$, $T_c^n$ is constructed as follows: Take $T_0^n$ and label the nodes in order from the root to the leaf as $0,1,\dots, \log_2 n$. Then, for each node $0\leq j < \log_2 n$, take a copy of $T_{c-1}^{n/2^{j+1}}$ and let its root be the left child of node $j$. The resulting tree defines $T_c^n$.
		
	\end{enumerate}
	Note that all the leaves of $T^n_c$ are at level $\log_2n$.
\end{restatable}
It is not hard to see that $T^n_c$ is in fact the splitting tree for the set $H^n_c$ and, hence, the number of its leaves is $\sum_{i=0}^c \binom{\log_2n}{i}$.  An illustration of the tree $T^n_c$ for $c=2$ and $n=32$ is shown in Figure~\ref{fig:hamming-weight-2}.

\begin{figure}
	\centering
	\scalebox{.7}{
		\begin{tikzpicture}[level/.style={sibling distance=50mm/#1, level distance = 2cm}]
		
		label distance=3mm,
		every label/.style={blue},

		\node [arn] (z){}
		child [sibling distance=130mm] {node [arn] (a) {}
			child {node [arn_l] (b) {}}
			child { node [arn] (c){} 
				child {node [arn_l] (d){}}
				child { node [arn] (e){} 
					child {node [arn_l] (f){}}
					child { node [arn] (g){} 
						child {node [arn_l] (h){}}
						child { node [arn_l] (i){}
						}
					}
				}
			}
		}
		child { node [arn] (j){} 
			child [sibling distance=97mm] {node [arn] (k){}
				child {node [arn_l] (l){}}
				child { node [arn] (m){} 
					child {node [arn_l] (n){}}
					child { node [arn] (o){} 
						child {node [arn_l] (p){}}
						child { node [arn_l] (q){} 
						}
					}
				}
			}
			child { node [arn] (r){} 
				child [sibling distance=65mm]{node [arn] (s){}
					child {node [arn_l] (t){}}
					child { node [arn] (u){} 
						child {node [arn_l] (v){}}
						child { node [arn_l] (w){} 
						}
					}
				}
				child { node [arn] (x){} 
					child [sibling distance=30mm]{node [arn] (xx){}
						child {node [arn_l] (y){}}
						child { node [arn_l] (zz){} 
						}
					}
					child { node [arn] (aa){} 
						child {node [arn_l] (bb){}}
						child { node [arn_l] (cc){} 
						}
					}
				}
			}
		};
		
		\node []	at (z.north)	[label=left:]	{};
		\node []	at (a.north)	[label=left:]	{};
		\node []	at (b.north)	[label=left:11]	{};
		\node []	at (c.north)	[label=left:]	{};
		\node []	at (d.north)	[label=left:101]	{};
		\node []	at (e.north)	[label=left:]	{};
		\node []	at (f.north)	[label=left:1001]	{};
		\node []	at (g.north)	[label=left:]	{};
		\node []	at (h.north)	[label=left:10001]	{};
		\node []	at (i.north)	[label=left:1]	{};
		\node []	at (j.north)	[label=right:]	{};
		\node []	at (k.north)	[label=left:]	{};
		\node []	at (l.north)	[label=left:110]	{};
		\node []	at (m.north)	[label=left:]	{};
		\node []	at (n.north)	[label=left:1010]	{};
		\node []	at (o.north)	[label=left:]	{};
		\node []	at (p.north)	[label=left:10010]	{};
		\node []	at (q.north)	[label=left:10]	{};
		\node []	at (r.north)	[label=left:]	{};
		\node []	at (s.north)	[label=left:]	{};
		\node []	at (t.north)	[label=below:1100]	{};
		\node []	at (u.north)	[label=right:]	{};
		\node []	at (v.north)	[label=below:10100]	{};
		\node []	at (w.north)	[label=below:100]	{};
		\node []	at (x.north)	[label=left:]	{};
		\node []	at (xx.north)	[label=right:]	{};
		\node []	at (y.north)	[label=below:11000]	{};
		\node []	at (zz.north)	[label=below:1000]	{};
		\node []	at (aa.north)	[label=left:]	{};
		\node []	at (bb.north)	[label=below:10000]	{};
		\node []	at (cc.north)	[label=below:0]	{};
		
		\node [] at (-9.4,0) [label=right:\Large{Splitting tree with}]	{};
		\node [] at (-9.4,-0.6) [label=right:\Large{Hamming weight 2}]	{};
		\draw[draw=black,very  thick, ->] (-5,0) -- (-1.8,-0.3);
		\end{tikzpicture} 
	}
	\par
	\caption{Illustration of $T_2^n$, the splitting tree corresponding to a family of signals with Hamming weight $2$. For simplicity, we truncate terminal rightward paths from leaves to the bottom level of the tree from the picture.  The corresponding support set of this tree is $S = \{ 0, 1, 2, 3, 4, 5, 6, 8, 9, 10, 12, 16, 17, 18, 20, 24\}$. Note that all the elements of $S$ have binary representations with Hamming weight at most 2.}\label{fig:hamming-weight-2}
\end{figure}

We prove the following theorem in Section~\ref{hard-instance} (see Theorem~\ref{lem:process}):

\begin{theorem}[Informal version of Theorem~\ref{lem:process}]
	A peeling process with threshold $\tau\leq \log_2 k+O(1)$ (i.e. any threshold that allows isolation of an element at cost bounded by $O(k)$) must take $k^{1-o(1)}$ iterations to terminate. 
\end{theorem}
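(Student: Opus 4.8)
\medskip

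\noindent\textbf{The plan.} The goal is to exhibit an adversarial instance on which any peeling process that only isolates leaves of weight at most $\tau = \log_2 k + O(1)$ must take $k^{1-o(1)}$ iterations. The natural candidate, given all the setup in the excerpt, is the low Hamming weight family: take $n$ a power of two, $c$ roughly $\frac{1}{2}\log_2 \log_2 n$ (so that the sparsity $k = |H_c^n| = \sum_{i=0}^c \binom{\log_2 n}{i} = \Theta\!\big(\binom{\log_2 n}{c}\big)$ is a prescribed value), and let the support be $H_c^n$ with, say, all-ones coefficients, so that the splitting tree is exactly $T_c^n$. The key structural claim to establish is that in $T_c^n$ \emph{every leaf has weight close to $\log_2 k$} — more precisely, that the number of leaves whose weight is at most $\log_2 k - \omega(1)$ is a vanishing fraction of $k$. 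If true, then any peeling process restricted to removing leaves of weight $\le \tau = \log_2 k + O(1)$ can only remove $O(1)$ leaves (up to lower-order terms) before the remaining splitting tree changes; but crucially we must argue that removing a leaf and re-deriving the splitting tree of the residual support does not create many new \emph{light} leaves. This is the crux: I would prove that $H_c^n$ minus any small set of points still has a splitting tree in which almost all leaves are heavy, by a self-reducibility / monotonicity argument on the recursive structure of $T_c^n$ given in the definition (each node $j$ on the rightward spine hangs a copy of $T_{c-1}^{n/2^{j+1}}$).

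\medskip

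\noindent\textbf{Key steps, in order.} First, I would pin down the weight distribution on the leaves of $T_c^n$. Using Definition of weight (number of two-child ancestors) and the inductive construction, a leaf of $T_c^n$ corresponds to a choice of which "left turns" it takes; a leaf of Hamming weight exactly $i \le c$ sits in $\binom{\log_2 n}{i}$ positions and its weight in the tree is governed by how many branching ancestors lie above it. I expect to show $w_{T_c^n}(v) = \log_2 k - O(\log\log k)$ for all but an $o(1)$ fraction of leaves — essentially because the branching happens at a near-uniform rate down the tree, so by a concentration (Chernoff/anti-concentration) argument the "depth in the splitting-tree sense" concentrates. Second, I would set $\tau = \log_2 k + O(1)$ and observe that at the start, the set of leaves with $w_T(v)\le\tau$ that the process is \emph{allowed} to peel has size $k^{1-\Omega(1)}$ (or even just: is a small fraction). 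Third — the heart of the argument — I would show \emph{robustness under peeling}: if $S'$ is obtained from $H_c^n$ by deleting any $m$ elements, then $\tree(S',n)$ still has at most (say) $2m + k^{1-\Omega(1)}$ leaves of weight $\le \tau$. Intuitively, deleting a leaf can collapse a branching node into a non-branching one only along the unique root-to-leaf path of that leaf, so each deletion decreases the weight of other leaves by at most $1$ and can "newly expose" only $O(\text{path length})=O(\log n)$ light leaves; summing over $m$ deletions and comparing with $k$ gives the bound as long as $m \le k^{1-\Omega(1)}$. Fourth, chaining these together: starting from $k$ elements, each iteration removes one leaf (of weight $\le\tau$), and the invariant "$\le m_t$ elements removed so far $\Rightarrow$ $\le O(m_t \log n) + k^{1-\Omega(1)}$ peelable leaves available" forces $m_t$ to grow by at most the number currently available; solving the resulting recurrence (it is essentially $m_{t+1} \le m_t + 1$ but the process cannot finish until $m_t = k$) shows $k$ iterations are unavoidable, and more carefully that even $k^{1-o(1)}$ iterations must elapse before a constant fraction of the support is removed.

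\medskip

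\noindent\textbf{Main obstacle.} The delicate point is the third step — controlling how the set of \emph{light} (peelable) leaves evolves as the peeling process deletes elements. Deleting one frequency from the support is not a local operation on the splitting tree: a two-child node on the deleted leaf's root path may become a one-child node, which decreases the weight of every leaf below it by $1$, potentially pushing a batch of previously-heavy leaves down to weight $\le \tau$ all at once. I would need to bound the \emph{total} such effect — show that across the whole run, the number of leaves that ever become light is $O(k^{1-\Omega(1)}) + O(m\log n)$ — using the rigidity of $T_c^n$: because the tree is so close to balanced with branching spread evenly, collapsing a single spine node at depth $j$ only affects the $T_{c'}^{n/2^{j}}$-subtree hanging off it, whose leaf count is a controlled fraction of $k$, and a careful accounting (amortized over which subtrees get hit) keeps the running total subpolynomial in $k$ until $\Omega(k)$ deletions have occurred. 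Making this amortization precise — essentially a potential-function argument tracking $\sum_{v\text{ leaf}} 2^{-w_T(v)}$ (a Kraft-type quantity) and showing it cannot be driven up fast enough — is where the real work lies; everything else is concentration bounds on binomial-type quantities and bookkeeping.
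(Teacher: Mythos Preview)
Your proposal has a genuine gap, and in fact two related ones.

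First, you misread what an ``iteration'' of the pruning process is. Each round removes the \emph{entire set} $\{v:\text{leaf of }T,\ w_T(v)\le\tau\}$ at once, not a single leaf; the question is how many such batch-removal rounds are needed. So your step~4 (``$m_{t+1}\le m_t+1$, hence $k$ iterations'') is a non-argument: with one leaf per round the bound is trivial. What you must show is that every round exposes only $k^{o(1)}$ peelable leaves.

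Second, and more seriously, the robustness bound you propose in step~3 --- after $m$ deletions at most $O(m\log n)+k^{1-\Omega(1)}$ light leaves --- is far too weak even if true. Writing $L_r$ for the number of light leaves at round $r$ and $m_r=\sum_{i<r}L_i$, your invariant gives $L_r\le C m_r + k^{o(1)}$, hence $m_{r+1}\le (C{+}1)m_r+k^{o(1)}$, which solves to $m_r\le (C{+}1)^r k^{o(1)}$: you reach $m_r=k$ after only $r=O(\log k)$ rounds. To get $k^{1-o(1)}$ rounds you would need $L_r\le k^{o(1)}$ \emph{uniformly in $r$}, which is a much stronger statement than anything your deletion-by-deletion accounting (``one collapsed branching node per deletion, affecting one sibling subtree'') can deliver. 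Note also that your proposed Kraft-type potential $\sum_{v}2^{-w_T(v)}$ is identically $1$ throughout the process, so it carries no information.

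The paper sidesteps all of this with two short ideas you are missing. The first is a \emph{monotonicity lemma}: if $T\subseteq T'$ then $\mathcal P(T,\tau,n)$ finishes no later than $\mathcal P(T',\tau,n)$. The second is a clean \emph{recursion on $c$}: in $T_c^n$, as long as none of the top $\tau$ left-hanging copies of $T_{c-1}$ has fully collapsed, the process restricted to the copy at spine-depth $j$ behaves exactly like $\mathcal P(T_{c-1}^{n/2^{j+1}},\tau-j-1,\cdot)$ (the $j{+}1$ two-child spine ancestors eat into the threshold). This lower-bounds the length of a ``first phase''; then monotonicity lets you discard those top $\tau$ subtrees entirely and recurse on a copy of $T_c^{n/2^\tau}$. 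Unwinding gives $D(n,c,\tau)\ge \log_2^c n/(c!\,\tau^c)$, which for constant $c$ and $\tau=\log_2 k+O(1)$ is $k/(\log k)^{O(1)}=k^{1-o(1)}$. The recursion plus monotonicity is exactly what replaces the delicate round-by-round light-leaf accounting you were attempting.
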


To add to the result above, we note that the lower bound on the number of rounds of adaptivity is not the only cause for quadratic runtime in our algorithm. The other cause is the necessity to update the residual signal as more and more elements are recovered, i.e. perform non-uniform Fourier transform computations. Since no subquadratic approach to this problem are known in high dimensions, it seems plausible that a $k^{2-\Omega(1)}$ runtime algorithm for high-dimensional FFT would also shed light on the complexity of this intriguing problem.

\subsection{Runtime $\widetilde O(k)$ for random supports through a batched peeling process} \label{subsec:batchpeel}

To complement our lower bound of $k^{1-o(1)}$ rounds of adaptive pruning for \emph{worst-case} signals using our adaptive aliasing filters, we show that if the support of the signal is uniformly random, adaptive aliasing filters can be used to achieve an algorithm with $\widetilde O(k)$ runtime. A beautiful $\widetilde O(k)$ runtime and optimal $O(k)$ sample complexity algorithm for this model was given in~\cite{GHIKPS}. The algorithm was stated for $d=2$ but readily extends to high dimensions. Unfortunately, it comes with a major restriction, namely, the sparsity $k$ must be $o(N^{1/d})$. Our approach is different and extends to all $k\leq N$.

We now introduce the notion of a Fourier sparse \emph{random support signal}:
\begin{restatable}[Random support signal]{definition}{defrandsupp}\label{def:randsupport}
	For any positive integer $d$, power of two $n$, and arbitrary $\beta: [n]^d \to \C$, we define $x: [n]^d \to \C$ to be a \emph{random support signal} of Fourier sparsity $k$ (with values given by $\beta$) if $\wh x$ is the signal defined by
	\[
	\wh x_\ff = \begin{cases}
	\beta_\ff &\qquad\text{with probability $k/n^d$}\\
	0 &\qquad\text{with probability $1- k/n^d$}
	\end{cases},
	\]
	where the $x_\ff$ are independently chosen for the various $\ff\in[n]^d$.
	
	In other words, we assume a Bernoulli model for $\supp \wh x$, while the values at the frequencies that are chosen to be in the support are arbitrary.
\end{restatable}

Our algorithmic result for such signals is stated below.
\begin{restatable}[Sparse FFT algorithm for random support signals]{theorem}{sfftalg} \label{thrm:sparse-fft}
	Suppose $d$ is a positive integer and $n$ and $k$ are powers of two. For any signal $x \in \C^{n^d}$ such that $x$ is a random support signal of Fourier sparsity $k$, the procedure \textsc{SparseFFT} $(x, n, d, k)$  (see Algorithm~\ref{alg:FFThighD-k1/r}) returns $ \wh x$ with probability $9/10$. Moreover, the runtime and sample complexity of this procedure are $\widetilde{O}(k)$.
\end{restatable}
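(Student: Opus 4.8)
The plan is to implement the batched peeling process alluded to in Section~\ref{subsec:batchpeel}, exploiting the fact that for a Bernoulli$(k/n^d)$ support the splitting tree is, with high probability, almost balanced. First I would record the structural fact that drives everything: if $S\subseteq[n]^d$ is a random support of expected size $k$, then in the splitting tree $T=\tree(S,n)$ the number of leaves below a node at level $j$ is concentrated around $|S|\cdot 2^{-j}$ for $j$ up to roughly $\log_2 k$, and in particular at level $L:=\log_2 k - \Theta(\log\log N)$ essentially every node of $T$ at level $L$ has $\widetilde\Theta(1)$ leaves in its subtree (a standard Chernoff/union-bound argument over the $2^L$ nodes, using that the events "$f\in S$" are independent). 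This means the tree splits into $\widetilde O(k)$ "top" edges plus $\widetilde O(k)$ disjoint subtrees, each of $\text{polylog}(N)$ size, hanging off level $L$.

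Next I would use the isolating filters of Lemma~\ref{lem-informal:filter-isolate} (equivalently Lemma~\ref{lem:filter-isolate}) to peel in batches. For a node $u$ at level $L$, the frequency cone $\subtree_T(u)$ is a single residue class mod $2^L$ containing at most $\widetilde O(1)$ active frequencies; applying \textsc{Estimate} (Theorem~\ref{thm:intro-est-main}) restricted to that cone recovers all of them at cost $\widetilde O(|{\rm cone}|^2\log N)=\widetilde O(\log N)$ per cone, hence $\widetilde O(k)$ total over all $2^L$ cones. To isolate a single level-$L$ cone $u$ from the rest of the signal, I would build a $(u,T')$-isolating filter where $T'$ is $T$ truncated at level $L$ with $u$ promoted to a leaf; since $T'$ truncated this way has weight $w_{T'}(u)=O(L)=O(\log k)$ only in the worst case but, crucially, on the balanced random tree the relevant weights along level-$L$ nodes sum to $\widetilde O(k)$ (each of the $\widetilde O(k)$ branching nodes in the top part of $T$ is an ancestor of $\widetilde O(1)$ level-$L$ nodes), the total filter-support cost across all cones is $\widetilde O(k)$, and each $\wh G$ evaluation costs $O(\log N)$. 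The one subtlety is that the algorithm does not know $T$ in advance: as in Section~\ref{subsec:recovadapt} I would discover the tree top-down, at each level probing each surviving node's two children with an RIP-based zero-test; because we only need to go down $L=\log_2 k - O(\log\log N)$ levels before cones become polylog-sized, and the random tree has $\widetilde O(k)$ nodes in this top part, discovery costs $\widetilde O(k)\cdot\widetilde O(1)=\widetilde O(k)$.

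Then I would assemble the runtime bound: tree discovery $\widetilde O(k)$; isolating and estimating each of the $\widetilde O(k)$ bottom cones $\widetilde O(1)$ each; and — the item needing care — maintaining the residual. Here the batched structure helps decisively: once a whole level-$L$ cone is recovered, subtracting its contribution to any future sample affects only frequencies in one residue class mod $2^L$, and the cones are disjoint, so the total "subtract-in-Fourier-domain" work is $\widetilde O(k)$ rather than $\widetilde O(k^2)$; I would verify that the $O(\log N)$-cost evaluation of $\wh G$ composes correctly with this bookkeeping. Finally I would push the failure probability below $1/10$ by a union bound over the $\widetilde O(k)$ concentration events for the tree shape and the $\widetilde O(k)$ RIP zero-tests, each of which fails with probability $N^{-\Omega(1)}$, and note the reduction to the $d=1$ picture via coordinate concatenation exactly as in Section~\ref{sec:filters-d}.

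\medskip

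\noindent\textbf{Main obstacle.} The crux is the structural claim that the random splitting tree has only $\widetilde O(k)$ nodes in its top $L$ levels \emph{and} that the $(v,T)$-isolating filter supports accumulate to $\widetilde O(k)$ rather than $k\cdot\text{polylog}$ per element (which would give $\widetilde O(k^2)$). Making the weight accounting tight — showing that on the random tree the branching nodes are spread out so that $\sum_{u \text{ at level }L} 2^{w_{T'}(u)}=\widetilde O(k)$ — is where the random-support hypothesis is truly used and is the step I expect to require the most care; everything downstream (estimation per cone, zero-testing, residual maintenance) is then a routine application of the tools already established in the excerpt.
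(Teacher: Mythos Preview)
Your approach is genuinely different from the paper's, and the step you flag as the ``main obstacle'' is in fact a real gap: the claim $\sum_{u\text{ at level }L} 2^{w_{T'}(u)}=\widetilde O(k)$ is false. For a random support of size $k$, the tree $T'$ truncated at level $L\approx\log_2 k$ is essentially a \emph{complete} binary tree of depth $L$; every one of its $2^L$ leaves has weight exactly $L$, so $\sum_u 2^{w_{T'}(u)}=2^L\cdot 2^L\approx k^2$, not $\widetilde O(k)$. Your heuristic that ``each branching node is an ancestor of $\widetilde O(1)$ level-$L$ nodes'' is backwards: a branching node at depth $j$ is an ancestor of $2^{L-j}$ level-$L$ nodes, and summing gives $\sum_j 2^j\cdot 2^{L-j}=L\cdot 2^L$, which bounds $\sum_u w_{T'}(u)$ but not $\sum_u 2^{w_{T'}(u)}$. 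Consequently, isolating the cones one at a time with $(u,T')$-filters costs $\widetilde\Theta(k^2)$ samples, and your scheme as written reproduces the quadratic bound rather than improving it.

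The paper sidesteps this entirely: it never isolates individual buckets with tree-based filters. Instead it downsamples the signal by a factor of $n/\bbase$ in each coordinate and takes a single FFT of size $|\bbasev|$, which produces \emph{all} $|\bbasev|$ bucket values simultaneously at cost $O(|\bbasev|\log|\bbasev|)$ (this is the $\textsc{Hashing}$ primitive). Starting from $|\bbasev|\approx k$, one iteration isolates only a constant fraction of elements; the key new idea is to refine from a $\bprevv$-bucketing to a finer $\bnextv$-bucketing ($|\bnextv|=\Gamma|\bprevv|$) by solving, for each nonempty coarse bucket, a small well-conditioned linear system built from a few extra shifts --- the RIP guarantee keeps these systems $\widetilde O(1)$-sized because the Refinement Lemma bounds the number of colliding fine buckets per coarse bucket by $O(\log N)$. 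Iterating $O(\log k)$ times while shrinking $|\bbasev|$ by $\Gamma$ each round yields an implicit $\approx k^2$-bucketing (hence perfect hashing w.h.p.) at total cost $\widetilde O(k)$. The essential difference from your plan is that the shared work across buckets happens at the level of the downsampling FFT, not at the level of per-leaf isolating filters.
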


\begin{figure}
	\centering
	\scalebox{.7}{
		\begin{tikzpicture}[level/.style={sibling distance=60mm/#1}]
		\node [arn] (z){}
		child { node [arn] (a){}
			child {node [arn] (b) {}
				child {node [arn] (c) {}
					child {node [arn_l] (f) {}}
					child { node [arn_l] (g) {}}
				}
				child {node [arn_l] (h) {}}
			}
			child {node [arn] (i) {}
				child {node [arn_l] (newa) {}}
				child { node [arn_l] (newb) {}}
			}
		}
		child { node [arn] (j) {} 
			child {node [arn] (k) {}
				child {node [arn_l] (l) {}}
				child {node [arn] (m) {}
					child {node [arn] (n) {}
						child {node [arn_l] (o) {}}
						child {node [arn_l] (p) {}}
					}
					child {node [arn_l] (q) {}}
				}
			}
			child {node [arn] (r) {}
				child {node [arn_l] (s) {}}
				child {node [arn] (t) {}
					child {node [arn] (u) {}
						child {node [arn_l] (v) {}}
						child {node [arn_l] (w) {}}
					}
					child {node [arn_l] (x) {}
					}
				}
			}
		};

		\node []	at (z.north)	[label=left:]	{};
		\node []	at (a.north)	[label=left:]	{};
		\node []	at (b.north)	[label=left:]	{};
		\node []	at (c.north)	[label=left:]	{};
		\node []	at (d.north)	[label=left:]	{};
		\node []	at (f.north)	[label=below:1111]	{};
		\node []	at (g.north)	[label=below:0111]	{};
		\node []	at (h.north)	[label=below:011]	{};
		\node []	at (newa.north)	[label=below:101]	{};
		\node []	at (newb.north)	[label=below:001]	{};
		\node []	at (j.north)	[label=right:]	{};
		\node []	at (k.north)	[label=left:]	{};
		\node []	at (l.north)	[label=below:110]	{};
		\node []	at (m.north)	[label=left:]	{};
		\node []	at (n.north)	[label=left:]	{};
		\node []	at (o.north)	[label=left:11010]	{};
		\node []	at (p.north)	[label=below:01010]	{};
		\node []	at (q.north)	[label=left:0010]	{};
		\node []	at (r.north)	[label=left:]	{};
		\node []	at (s.north)	[label=below:100]	{};
		\node []	at (t.north)	[label=right:]	{};
		\node []	at (u.north)	[label=left:]	{};
		\node []	at (v.north)	[label=below:11000]	{};
		\node []	at (w.north)	[label=below:01000]	{};
		\node []	at (x.north)	[label=below:0000]	{};
		
		\node [] at (-8,1) [label=right:\Large{Splitting tree $T$ for a signal}]	{};
		\node [] at (-8,0.35) [label=right:{\Large{with random support}}]	{};		
		\draw[draw=black,very  thick, ->] (-4,0) -- (-1.5,-0.5);
		\end{tikzpicture}
	}
	\par
	\caption{An example of a splitting tree for a signal with uniformly random support (the nodes are labelled in binary). For simplicity, we truncate terminal rightward paths from leaves to the bottom level of the tree from the picture. The corresponding support set of this tree is $S=\{ 0,1,2,3,4,5,6,7,8,10,15,24,26 \}$}.
	\label{fig:split-rand}
\end{figure}

The algorithm is motivated by the idea of speeding up our algorithm for worst-case signals (Algorithm~\ref{alg:fullsparsefft}, also see Theorem~\ref{thm:sfft-worstcase}) by reducing the number of iterations of the process from $\Theta(k)$ down to $O(\log k)$. Such a reduction (which we show to be impossible for worst-case signals in Section~\ref{hard-instance}) requires the ability to peel off many elements of the residual in a single phase of the algorithm, which turns out to be possible if the support of $\wh{x}$ is chosen uniformly at random as in Definition~\ref{def:randsupport}. Indeed, if one considers the splitting tree $T$ of a signal with uniformly random support (see Fig.~\ref{fig:split-rand} for an illustration), one sees that 

\begin{description}
	\item [{\bf (a)}] a large constant fraction of nodes $v\in T$ satisfy $w_T(v)\leq \log_2 k+O(1)$;
	\item [{\bf (b)}] the adaptive aliasing filters $G$ constructed for such nodes will have significantly overlapping support in time domain.
\end{description}

\begin{figure}
	\centering
	\scalebox{.8}{
		\begin{tikzpicture}[level/.style={sibling distance=60mm/#1,level distance = 2cm}]
		\node [arn] (z){}
		child {node [arn] {}
			child {node [arn] {}
				child [level distance=1cm,sibling distance=5mm]{node [] {$\vdots$}}
				child [draw=white,level distance=1.5cm]{node [] {}
					child [draw=black]{node [arn_l] {}}
					child [draw=black, sibling distance=15mm]{node [arn] {}
						child[level distance = 1.5cm,sibling distance=10mm]{node[arn] {}edge from parent [photon]
							child [sibling distance=10mm]{node [arn_l] {}edge from parent [photon]}			
							child [level distance=1cm, draw=white]{}
						}
						child[level distance = 1.5cm,sibling distance=10mm]{node[arn] {}edge from parent [photon] 
							child [sibling distance=7mm]{node [arn_l] {}edge from parent [photon]}			
							child [sibling distance=10mm]{node [arn_l] {}edge from parent [photon]}
						}
					}
				}
				child [level distance=1cm,sibling distance=5mm]{node [] {$\vdots$}}
			}
			child {node [arn] {}
				child [level distance=1cm,sibling distance=5mm]{node [] {$\vdots$}}
				child [draw=white,level distance=1.5cm]{node [] {}
					child [draw=black]{node [arn_l] {}}
					child [draw=black]{node [arn] {}
						child[level distance = 1.5cm,sibling distance=8mm]{node[arn] {}edge from parent [photon] 				
							child [sibling distance=10mm] {node [arn_l] {}edge from parent [photon]}
							child [sibling distance=11mm] {node [arn_l] {}edge from parent [photon]}
						}
						child[level distance = 1.5cm,sibling distance=10mm]{node[arn] {}edge from parent [photon]				
							child [draw=white,level distance = 1cm] {}
							child [sibling distance=10mm] {node [arn_l] {}edge from parent [photon]}
						}
					}
				}
				child [level distance=1cm,sibling distance=5mm]{node [] {$\vdots$}}
			}
		}
		child { node [arn] {} 
			child {node [arn] {}
				child [level distance=1cm,sibling distance=5mm]{node [] {$\vdots$}}
				child [draw=white,level distance=1.5cm]{node [] {}
					child [draw=black]{node [arn_l] {}
					}
					child [draw=black,sibling distance=15mm]{node [arn] {}
						child[level distance = 1.5cm,sibling distance=10mm]{node[arn] {}edge from parent [photon] 				
							child [sibling distance=10mm] {node [arn_l] {}edge from parent [photon]}
							child [sibling distance=5mm, level distance=1cm, draw=white] {}
						}
						child[level distance = 1.5cm,sibling distance=10mm]{node[arn] {}edge from parent [photon] 				
							child [sibling distance=5mm, level distance=1cm, draw=white] {}
							child [sibling distance=10mm] {node [arn_l] {}edge from parent [photon]}
						}
					}
				}
				child [level distance=1cm,sibling distance=5mm]{node [] {$\vdots$}}
			}
			child {node [arn_l] {}
			}
		};

		\draw[draw=black,thick] (-6,-4.2) rectangle ++(12,0.5);
		\foreach \x in {-3 , 0 , 3} \draw[draw=black,thick] (\x,-4.2) -- (\x,-3.7);
		\node []	at (6,-3.9)	[label=right:\large $\bbase$]	{};
		
		\draw[draw=black,thick] (-6,-7.75) rectangle ++(12,0.5);
		\foreach \x in {-4.5, -3 , -1.5, 0 , 1.5, 3, 4.5} \draw[draw=black,thick] (\x,-7.75) -- (\x,-7.25);
		\node []	at (6,-7.5)	[label=right:\large $\bprev$]	{};
		
		\draw[draw=black,thick] (-6,-11.25) rectangle ++(12,0.5);
		\foreach \x in { -5.25, -4.5, -3.75, -3 , -2.25, -1.5, -0.75, 0 , 0.75, 1.5, 2.25, 3, 3.75, 4.5, 5.25 } \draw[draw=black,thick] (\x,-11.25) -- (\x,-10.75);
		\node []	at (6,-11)	[label=right:\large $\bnext$]	{};
		
		\draw[draw=black,very  thick, ->] (6.5,-8.0) -- (6.5,-10.5);

		\node [] at (-8,0) [label=right:\Large{Splitting tree of a}]	{};
		\node [] at (-8,-0.6) [label=right:\Large{signal with random support}]	{};
		\draw[draw=black,very  thick, ->] (-3.5,0) -- (-1.1,-0.5);
		\end{tikzpicture}
	}
	\par
	
	\caption{The illustration of the transition from $\bprev$-bucketing to $\bnext$-bucketing on the  splitting tree of a signal with random support.}\label{fig:random-loc}
\end{figure}

We provide the intuition for this for the one-dimensional setting ($d=1$) to simplify notation (changes required in higher dimensions are minor). In this setting, property {\bf (b)} above is simply a manifestation of the fact that since the support is uniformly random, any given congruence classes modulo $B'=Ck$ for a large enough constant $C>1$ is likely to contain only a single element of the support of $\wh{x}$. Our adaptive aliasing filters provide a way to only partition frequency space along a carefully selected subset of bits in $[\log_2 N]$, but due to the randomness assumption, one can isolate most of the elements by simply partitioning using the bottom $\log_2 k+O(1)$ bits. This essentially corresponds to hashing $\wh{x}$ into $B=Ck$ buckets at computational cost $O(B'\log B')=O(k\log k)$. While this scheme is efficient, it unfortunately only recovers a constant fraction of coefficients. One solution would be to hash into $B=Ck^2$ buckets (i.e., consider congruence classes modulo $Ck^2$), which would result in perfect hashing with good constant probability, allowing us to recover the entire signal in a single round. However, this hashing scheme would result in a runtime of $\Omega(k^2\log k)$ and is, hence, not satisfactory. On the other hand, hashing into $Ck^2$ buckets is clearly wasteful, as most buckets would be empty. Our main algorithmic contribution is a way of ``implicitly'' hashing into $Ck^2$ buckets, i.e., getting access to the nonempty buckets, at an improved cost of $\widetilde O(k)$.

Our algorithm uses an iterative approach, and the main underlying observation is very simple. Suppose that we are given the ability to ``implicitly'' hash into $B$ buckets for some $B$, namely, get access to the nonempty buckets. If $B$ is at least $\text{min}(Ck^2, N)$, we know that there are no collisions with high probability and we are done. If not, then we show that, given access to nonempty buckets in the $B$-hashing (i.e. a hashing into $B$ buckets), we can get access to the nonempty buckets of a $(\Gamma B)$-hashing for some appropriately chosen constant $\Gamma > 1$ at a polylogarithmic cost in the size of each nonempty bucket of the $B$-bucketing by essentially computing the Fourier transform of the signal restricted to nonempty buckets in the $B$-bucketing. We then proceed iteratively in this manner, starting with $B=Ck$, for which we can perform the hashing explicitly. Since the number of nonzero frequencies remaining in the residual after $t$ iterations of this process decays geometrically in $t$, we can also afford to use a smaller number of buckets $B'$ in the hashing that we construct explicitly, ensuring that the runtime is dominated by the first iteration.

Ultimately, the algorithm takes the following form. At every iteration, we explicitly compute  a hashing into $\bbase\leq Ck$ buckets explicitly. Then, using a list of nonempty buckets in a $\bprev$-hashing from the previous iteration, we extend this list to a list of nonempty buckets in a $\bnext$-bucketing at polylogarithmic cost per bucket (by solving a well-conditioned linear system, see Algorithm~\ref{alg:hashing2}), where $\bnext=\Gamma\cdot \bprev$ for some large enough constant $\Gamma>1$. Meanwhile, we reduce $\bbase$ by a factor of $\Gamma$, thus maintaining the invariant $\bbase\cdot \bnext \approx k^2$ at all times (note that this is satisfied at the start, when $\bbase =\bprev\approx k$, and $\bbase\cdot\bnext$ remains invariant at each iteration). Therefore, after a logarithmic number of iterations, we have effectively emulated hashing into $\approx k^2$ but at a total cost of roughly one hashing computation into $\approx k$ buckets (see Figure~\ref{fig:random-loc} for an illustration).

\paragraph{Organization.}
In Section~\ref{sec:prelim}, we introduce basic definitions and notation that will be used throughout the paper.  
Section~\ref{sec:filters} introduces our main technical tool of adaptive aliasing filter, which are used in the various algorithms found in this paper. Section~\ref{sec:worstcase1d} shows how to use the adaptive aliasing filters to solve the problem of estimation for Fourier measurements for worst-case signals, i.e., problem \eqref{eq:exact-est}, thereby proving Theorem~\ref{thm:intro-est-main}. Section~\ref{hard-instance} then shows that the inherent tree pruning process used to subtract off recovered frequencies and access residual signals in the estimation algorithm is essentially optimal.

Section~\ref{sec:randomsign} proves our main theorem, Theorem~\ref{thm:sfft-worstcase}, for problem \eqref{eq:exact-sfft} on worst-case signals. Additionally, it shows how to improve on the runtime under the assumption that the signal is a worst-case signal with random phase, thereby proving Theorem~\ref{thm:sfft-rand-signs}.

Finally, Section~\ref{sec:avgcase} discusses how to obtain an algorithm for problem \eqref{eq:exact-sfft} on random support signals and proves Theorem~\ref{thrm:sparse-fft}.

\newpage

\section{Preliminaries and notation} \label{sec:prelim}
In this section, we introduce some notation and basic definitions that we will use in the paper.

For any positive integer $n$, we use the notation $[n]$ to denote the set of integer numbers $\{0, 1, \dots, n-1 \}$. We are interested in computing the Fourier transform of discrete signals of size $N$ in dimension $d$, where $N=n^d$ for some $n \ge 2$. Such a signal will be a function $[n]^d \to \C$. However, we will often identify $[n]^d \to \C$ with $\C^{n^d}$ for convenience (and often use the two interchangably depending on the context). This correspondence is formally defined later in Definition~\ref{def:flattening}. We first need the notion of an inner product.
\begin{definition}[Inner product]
	Let $\tv$ and $\ff$ be two vectors in dimension $d$. We denote the inner product of $\tv$ and $\ff$  by $\ff^T \tv = \sum_{q=1}^d f_qt_q$.
\end{definition}
Let us define the \emph{Fourier transform} of a signal.
\begin{definition}[Fourier transform] \label{def:ft}
	For any positive integers $d$ and $n$, the \emph{Fourier transform} of a signal $x \in \C^{n^d}$ is denoted by $\wh x$, where for any $\ff \in [n]^d$, we define
	$\wh x_{\ff} = \sum_{\tt \in [n]^d} x_{\tt} e^{-2\pi i \frac{\ff^T \tt}{n}}$.
\end{definition}
Note that in the case of $n=2$, the Fourier transform reduces to the Hadamard transform of size $N=2^d$.

\begin{claim}[Parseval's theorem]\label{parseval}
	For any positive integers $n$ and $d$, any signal $x\in  \C^{n^d}$ satisfies
	$\|\wh x\|_2^2 = n^d \cdot \| x\|_2^2$.
\end{claim}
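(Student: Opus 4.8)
The plan is to prove Parseval's theorem (Claim~\ref{parseval}) by a direct computation, expanding $\|\wh x\|_2^2 = \sum_{\ff\in[n]^d} \wh x_\ff \overline{\wh x_\ff}$ using Definition~\ref{def:ft} and then swapping the order of summation so that the sum over the frequency variable $\ff$ collapses via the orthogonality of characters. First I would write
\[
\|\wh x\|_2^2 = \sum_{\ff\in[n]^d}\left(\sum_{\tt\in[n]^d} x_\tt e^{-2\pi i \frac{\ff^T\tt}{n}}\right)\overline{\left(\sum_{\sv\in[n]^d} x_\sv e^{-2\pi i \frac{\ff^T\sv}{n}}\right)} = \sum_{\tt,\sv\in[n]^d} x_\tt \overline{x_\sv} \sum_{\ff\in[n]^d} e^{-2\pi i \frac{\ff^T(\tt-\sv)}{n}}.
\]
The key step is then to evaluate the inner sum $\sum_{\ff\in[n]^d} e^{-2\pi i \ff^T(\tt-\sv)/n}$. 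Since $\ff^T(\tt-\sv) = \sum_{q=1}^d f_q(t_q-s_q)$, this sum factors as a product over coordinates $\prod_{q=1}^d \sum_{f_q\in[n]} e^{-2\pi i f_q(t_q-s_q)/n}$, and each one-dimensional geometric sum equals $n$ if $t_q\equiv s_q\pmod n$ and $0$ otherwise. Because $t_q,s_q\in[n]=\{0,\dots,n-1\}$, the congruence holds iff $t_q=s_q$, so the whole inner sum equals $n^d$ when $\tt=\sv$ and $0$ otherwise.

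Plugging this back in, only the diagonal terms $\tt=\sv$ survive, giving $\|\wh x\|_2^2 = n^d \sum_{\tt\in[n]^d} x_\tt\overline{x_\tt} = n^d\|x\|_2^2$, as claimed. I would state the one-dimensional root-of-unity identity ($\sum_{j\in[n]} \omega^{jm} = n$ if $n\mid m$, else $0$, where $\omega = e^{-2\pi i/n}$) as a small sub-fact, proved by the standard finite geometric series argument ($\omega^m\neq 1$ when $n\nmid m$, so the sum telescopes to $(\omega^{mn}-1)/(\omega^m-1)=0$).

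There is no real obstacle here; this is a routine verification. The only point requiring a modicum of care is the justification that the coordinatewise congruences $t_q\equiv s_q\pmod n$ force actual equality $\tt=\sv$, which uses that the index set is exactly the standard residue system $[n]^d$ rather than an arbitrary set of representatives — but this is immediate from the definition of $[n]$ given in the preliminaries. Swapping the two finite sums is unconditionally valid since everything is a finite sum of complex numbers, so no convergence considerations arise.
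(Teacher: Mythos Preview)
Your proof is correct and is the standard direct computation via orthogonality of characters. The paper does not actually supply a proof of Claim~\ref{parseval}; it is stated in the preliminaries as a well-known fact and used without justification, so there is nothing to compare against beyond noting that your argument is exactly the textbook verification one would expect.
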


\begin{definition}[Unit impulse]
	For any positive integers $n$ and $d$, the \emph{unit impulse function} $\delta \in \C^{n^d}$ is defined as the function given by $\delta(\tv) = 1$ for $\tv = 0$ and $\delta(\tv) = 0$ for $\tv\neq 0$.
\end{definition}

\begin{claim}\label{claim:delta-fft}
	For any positive integers $d$, $n$, and any $\aaa \in [n]^d$, the inverse Fourier transform of $\wh x: [n]^d\to \C$ given by $\wh x_\ff = e^{2\pi i \frac{\aaa^T \ff}{n}}$ is
	$x_\tv = \delta(\tv+\aaa)$.
\end{claim}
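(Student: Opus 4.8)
The plan is to compute the inverse discrete Fourier transform of $\wh x$ directly and show that it collapses to a single nonzero entry, all index arithmetic being understood coordinatewise modulo $n$. Recall that the inversion formula corresponding to Definition~\ref{def:ft} is $x_\tv = \frac{1}{n^d}\sum_{\ff\in[n]^d}\wh x_\ff\, e^{2\pi i \ff^T\tv/n}$. Substituting $\wh x_\ff = e^{2\pi i \aaa^T\ff/n}$ gives
\[
x_\tv = \frac{1}{n^d}\sum_{\ff\in[n]^d} e^{2\pi i\,\ff^T(\tv+\aaa)/n}.
\]

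Next I would factor this sum over the $d$ coordinates, writing it as $\frac{1}{n^d}\prod_{q=1}^d\bigl(\sum_{f_q=0}^{n-1} e^{2\pi i f_q (t_q+a_q)/n}\bigr)$. Each one-dimensional inner sum is a finite geometric series in $\omega_q := e^{2\pi i (t_q+a_q)/n}$: if $t_q+a_q\equiv 0\pmod n$, then $\omega_q=1$ and the sum equals $n$; otherwise $\omega_q\ne 1$ while $\omega_q^n=1$, so the sum is $(\omega_q^n-1)/(\omega_q-1)=0$. This is exactly the orthogonality of the additive characters of $\mathbb{Z}_n$. Consequently the product equals $n^d$ when $t_q\equiv -a_q\pmod n$ for every $q$ — that is, precisely when $\tv+\aaa\equiv 0$ coordinatewise — and $0$ otherwise, so $x_\tv = \delta(\tv+\aaa)$, as claimed.

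There is no substantive obstacle; the only point requiring a little care is the convention that $\tv+\aaa$ is reduced modulo $n$ in each coordinate (so that $-\aaa$ is identified with its representative in $[n]^d$), which is what makes $\delta(\tv+\aaa)$ well-defined. As an even shorter alternative, one could instead apply Definition~\ref{def:ft} directly to $\delta(\cdot+\aaa)$, obtaining $\sum_{\tt\in[n]^d}\delta(\tt+\aaa)\,e^{-2\pi i\,\ff^T\tt/n} = e^{2\pi i\,\ff^T\aaa/n} = \wh x_\ff$, and then invoke uniqueness of the inverse Fourier transform.
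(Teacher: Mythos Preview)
Your proof is correct. The paper states this claim without proof (it appears in the preliminaries as a basic fact), so there is no approach to compare against; your direct computation via orthogonality of characters, as well as the alternative of applying Definition~\ref{def:ft} to $\delta(\cdot+\aaa)$ and invoking invertibility, are both standard and complete.
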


\begin{claim}[Convolution theorem]\label{conv-thrm}
	Suppose $d$ and $n$ are positive integers. Then, for any signals $x , y \in \C^{n^d}$,
	$ \wh{(x * y )} = \wh x \cdot \wh y$,
	where $x * y$ is the convolution of $x$ and $y$ which itself is a signal in $\C^{n^d}$ defined as,
	$(x * y )_{\tv} = \sum_{\bm{\tau} \in [n]^d} x_{\bm{\tau}} y_{\tv - \bm\tau}$ for all $\tv \in [n]^d$.
\end{claim}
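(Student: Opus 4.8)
The plan is to establish the identity by a direct computation: expand the definition of $\wh{(x*y)}$, interchange the two finite summations, and then perform a change of summation variable that uses the fact that indices are taken modulo $n$ coordinatewise.

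First I would fix a frequency $\ff\in[n]^d$ and write, using Definition~\ref{def:ft} and the definition of convolution,
$$\wh{(x*y)}_\ff \;=\; \sum_{\tv\in[n]^d}(x*y)_{\tv}\,e^{-2\pi i\frac{\ff^T\tv}{n}} \;=\; \sum_{\tv\in[n]^d}\sum_{\bm\tau\in[n]^d} x_{\bm\tau}\,y_{\tv-\bm\tau}\,e^{-2\pi i\frac{\ff^T\tv}{n}},$$
where $\tv-\bm\tau$ is reduced modulo $n$ in each coordinate. Since both sums range over the finite set $[n]^d$, I may swap them and pull out the factor $x_{\bm\tau}$, obtaining $\wh{(x*y)}_\ff=\sum_{\bm\tau}x_{\bm\tau}\big(\sum_{\tv}y_{\tv-\bm\tau}e^{-2\pi i\ff^T\tv/n}\big)$.

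Next I would substitute $\bm u=\tv-\bm\tau\pmod n$ in the inner sum. For fixed $\bm\tau$ the map $\tv\mapsto\tv-\bm\tau\pmod n$ is a bijection of $[n]^d$, so the inner sum becomes a sum over $\bm u\in[n]^d$ of $y_{\bm u}\,e^{-2\pi i\ff^T\tv/n}$. The only point requiring a line of care is that $e^{-2\pi i\ff^T\tv/n}$ must be re-expressed in terms of $\bm u$: although $\tv$ and $\bm u+\bm\tau$ agree only modulo $n$ coordinatewise, $\ff$ is an integer vector, so changing any coordinate of the argument by a multiple of $n$ changes the exponent by an integer multiple of $2\pi i$ and leaves the exponential unchanged; hence $e^{-2\pi i\ff^T\tv/n}=e^{-2\pi i\ff^T\bm u/n}\,e^{-2\pi i\ff^T\bm\tau/n}$. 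Therefore the inner sum equals $e^{-2\pi i\ff^T\bm\tau/n}\,\wh y_\ff$.

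Substituting back gives $\wh{(x*y)}_\ff=\big(\sum_{\bm\tau}x_{\bm\tau}e^{-2\pi i\ff^T\bm\tau/n}\big)\wh y_\ff=\wh x_\ff\cdot\wh y_\ff$, and since $\ff$ was arbitrary this yields $\wh{(x*y)}=\wh x\cdot\wh y$, as claimed. There is no genuine obstacle in this argument; the only thing one has to be careful about is the modular index arithmetic — namely that $\bm u=\tv-\bm\tau$ defines a bijection of $[n]^d$ and that $e^{-2\pi i\ff^T\tv/n}$ is a well-defined function of $\tv$ modulo $n$ — both of which hold because $\ff$ and $\bm\tau$ have integer entries.
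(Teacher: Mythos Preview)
Your proof is correct and is the standard direct computation. The paper itself does not supply a proof of this claim---it is stated as a preliminary fact without argument---so there is nothing to compare against; your derivation (expand, swap finite sums, substitute $\bm u=\tv-\bm\tau\pmod n$, and use that the exponential is well-defined on $[n]^d$) is exactly what one would expect.
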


We will require the notion of a \emph{tensor product} of signals. Given $d$ signals $G_1, G_2, \dots, G_d: [n] \to \C$, the tensor product constructs a signal in $\C^{n^d}$ that is defined as follows.
\begin{definition}[Tensor multiplication] \label{def:tensor}
	Suppose $d$ and $n$ are positive integers. Given functions $G_1 , G_2, \dots, G_d: [n]\to \C$, we define the \emph{tensor product} $(G_1 \times G_2 \times \cdots \times G_d) : [n]^d \to \C$ as 
	$ \left( G_1 \times G_2 \times \cdots \times G_d \right){(\jj)} = G_1({j_1}) \cdot G_2({j_2}) \cdots G_d({j_d})$ for all $\jj = (j_1, j_2, \dots, j_d) \in [n]^d$.
	
\end{definition}
Note that the tensor product is essentially a generalization of the usual outer product on two vectors to $d$ vectors.
\begin{claim}[Fourier transform of a tensor product]\label{claim:tensor}
	For any integers $n$ and $d$ and $G_1 , G_2, \dots, G_d \in \C^n$, let $G: [n]^d \to \C$ denote the tensor product $G = G_1 \times G_2 \times \cdots \times G_d$. Then, the $d$-dimensional Fourier transform $\wh{G}$ of $G$ is the tensor product of $\wh G_1, \wh G_2, \cdots, \wh G_d $, i.e.,
	$\wh G = \wh G_1 \times \wh G_2 \times \cdots \times \wh G_d$.
\end{claim}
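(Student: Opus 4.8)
The plan is to verify the identity coordinate by coordinate in Fourier domain: fix an arbitrary frequency $\ff = (f_1, f_2, \dots, f_d) \in [n]^d$ and show that $\wh{G}_\ff$ equals $\left(\wh{G_1} \times \cdots \times \wh{G_d}\right)(\ff) = \prod_{q=1}^d \wh{(G_q)}_{f_q}$.

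First I would expand $\wh{G}_\ff$ using Definition~\ref{def:ft} of the $d$-dimensional Fourier transform, writing $\wh{G}_\ff = \sum_{\tt \in [n]^d} G_\tt\, e^{-2\pi i \ff^T\tt/n}$. Then I substitute the definition of the tensor product (Definition~\ref{def:tensor}), $G_\tt = \prod_{q=1}^d G_q(t_q)$ for $\tt = (t_1,\dots,t_d)$, and use the fact that $\ff^T \tt = \sum_{q=1}^d f_q t_q$ so that the exponential factors as $e^{-2\pi i \ff^T\tt/n} = \prod_{q=1}^d e^{-2\pi i f_q t_q/n}$. This turns the summand into a product $\prod_{q=1}^d \left( G_q(t_q)\, e^{-2\pi i f_q t_q/n} \right)$ of terms each depending on a single coordinate $t_q$.

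The one step that needs a (routine) justification is the interchange of the sum over the product set $[n]^d$ with the product over $q$: since the summand factors as $\prod_q h_q(t_q)$ with $h_q(t_q) := G_q(t_q) e^{-2\pi i f_q t_q/n}$, the distributive law gives $\sum_{\tt \in [n]^d} \prod_{q=1}^d h_q(t_q) = \prod_{q=1}^d \left( \sum_{t_q \in [n]} h_q(t_q) \right)$. One can phrase this as a finite Fubini/distributivity statement, or prove it by a short induction on $d$ (the $d=1$ case being trivial and the inductive step isolating the sum over $t_d$). Finally, I recognize each factor $\sum_{t_q \in [n]} G_q(t_q) e^{-2\pi i f_q t_q/n}$ as precisely $\wh{(G_q)}_{f_q}$ by the one-dimensional case of Definition~\ref{def:ft}, so $\wh{G}_\ff = \prod_{q=1}^d \wh{(G_q)}_{f_q}$, which is the $\ff$-th entry of $\wh{G_1} \times \cdots \times \wh{G_d}$. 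Since $\ff$ was arbitrary, the two signals in $\C^{n^d}$ coincide.

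I do not expect any real obstacle here; the whole argument is an unwinding of definitions together with the elementary factorization of a sum of products over a product index set. The only thing to be careful about is bookkeeping of the multi-indices $\tt$ and $\ff$ and making sure the single-coordinate exponentials are grouped correctly before applying distributivity.
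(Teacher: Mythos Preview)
Your proposal is correct and follows essentially the same approach as the paper's own proof: fix an arbitrary frequency $\ff$, expand $\wh G_\ff$ via Definition~\ref{def:ft}, substitute the tensor-product form of $G_\tt$, factor the exponential using $\ff^T\tt = \sum_q f_q t_q$, apply distributivity to interchange the sum over $[n]^d$ with the product over coordinates, and recognize each resulting factor as a one-dimensional Fourier coefficient. The paper's argument is line-for-line the same computation, just without the explicit remark about distributivity/induction that you (rightly) flag as routine.
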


\begin{definition}
	For any positive $d$, $n$, and $k$, a signal $x: [n]^d \to \C$ is called \emph{Fourier $k$-sparse} if $\|\wh x\|_0 = k$.
\end{definition}

\begin{definition}[The Restricted isometry property] \label{def:rip}
	We say that a matrix $A \in \C^{q \times n}$ satisfies the \emph{restricted isometry property (RIP)} of \emph{order} $k$ if for every $k$-sparse vector $x \in \C^n$, i.e., $\|x\|_0 \le k$, it holds that
	$  \frac{1}{2} \|x\|^2_2 \le \|Ax\|_2^2 \le \frac{3}{2} \|x\|^2_2$.
\end{definition}

We will use the following theorem from \cite{haviv2017restricted}.
\begin{theorem} \emph{(The Restricted Isometry Property \cite[Theorem 3.7]{haviv2017restricted})}\label{RIP-thrm}
	For sufficiently large $N$ and $k$, and a unitary matrix $M \in \C^{N\times N}$ satisfying $\|M\|_\infty = O\left( \frac{1}{\sqrt{N}} \right)$, the following holds. For some $q = O\left( k \log^2 k \log N \right)$ let $A \in \C^{q \times N}$ be a matrix whose $q$ rows are chosen uniformly and independently from the rows of $M$, multiplied by $\sqrt{\frac{N}{q}}$. Then, with probability $1 - \frac{1}{N^{10}}$, the matrix A satisfies the restricted
	isometry property of order $k$, as per Definition \ref{def:rip}.
\end{theorem}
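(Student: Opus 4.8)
Theorem~\ref{RIP-thrm} is the statement that a random row--subsample of a ``flat'' unitary matrix (i.e.\ one with $\|M\|_\infty=O(1/\sqrt N)$) is a good RIP matrix, and the plan is to reproduce the Rudelson--Vershynin style chaining argument behind it, in the sharpened form of \cite{haviv2017restricted} (see also \cite{RV}). Write the rows of $A$ as $\sqrt{N/q}\cdot Y_1,\dots,Y_q$ with $Y_1,\dots,Y_q$ i.i.d.\ uniform among the rows of $M$, so that $\E\|Ax\|_2^2=\|x\|_2^2$ for every $x\in\C^N$, and set $\delta_k(A):=\sup\{\,|\|Ax\|_2^2-1|:\|x\|_0\le k,\ \|x\|_2=1\,\}$, so the conclusion is exactly $\delta_k(A)\le 1/2$ as in Definition~\ref{def:rip}. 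It therefore suffices to show $\E[\delta_k(A)]\le 1/4$ together with an appropriate tail bound. The first step is the standard symmetrization $\E[\delta_k(A)]\le \tfrac{2N}{q}\,\E\sup_{x}\big|\sum_{i=1}^q \e_i|\langle Y_i,x\rangle|^2\big|$ (supremum over $k$-sparse unit $x$, with $\e_i$ Rademacher), and then, conditioning on the $Y_i$, I would bound this Rademacher process by Dudley's entropy integral for the metric $\rho(x,y)^2:=\sum_i\big(|\langle Y_i,x\rangle|^2-|\langle Y_i,y\rangle|^2\big)^2$.

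The technical core is to estimate $\rho$ and the covering numbers it induces. Since $|\langle Y_i,x\rangle|\le\sqrt k\,\|Y_i\|_\infty=O(\sqrt{k/N})$ for every $k$-sparse unit vector $x$ --- this is exactly where flatness of $M$ enters --- one gets $\rho(x,y)\lesssim\sqrt{k/N}\,\|\Gamma(x-y)\|_2$, where $\Gamma\in\C^{q\times N}$ has rows $Y_i$; note $\|\Gamma z\|_2^2=\tfrac{q}{N}\|Az\|_2^2$, so $\sup_{\|z\|_0\le 2k}\|\Gamma z\|_2^2/\|z\|_2^2\le\tfrac{q}{N}(1+\delta_{2k}(A))$. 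Into Dudley's integral over the $k$-sparse unit vectors I would then feed two complementary covering estimates: at small scales the volumetric / union-over-supports bound $\log\mathcal N(\cdot,\|\cdot\|_2,u)\lesssim k\log(N/k)+k\log(1/u)$, and at large scales the Maurey empirical-method bound $\log\mathcal N(\sqrt k\,B_1^N,\|\Gamma\,\cdot\|_2,u)\lesssim \tfrac{k}{u^2}\cdot\tfrac{q}{N}\cdot\log N$ (using that each column of $\Gamma$ has $\ell_2$-norm $O(\sqrt{q/N})$ and that every $k$-sparse unit vector lies in $\sqrt k\,B_1^N$). Evaluating the integral and multiplying by $\tfrac{2N}{q}$ produces a self-referential inequality of the shape $\E[\delta_k(A)]\lesssim \sqrt{\tfrac{k\log^2 k\,\log N}{q}}\cdot\sqrt{1+\E[\delta_{2k}(A)]}$; running the same estimate at sparsity $2k$ (which costs only constants in $q$) lets one solve it, so that $q\ge C\,k\log^2 k\log N$ for a large absolute constant $C$ forces $\E[\delta_k(A)]\le 1/4$.

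To upgrade the expectation bound to probability $1-N^{-10}$, the plan is to apply a concentration inequality for the supremum of this non-negative empirical process --- a bounded-difference (McDiarmid) or Talagrand-type bound, using that each summand $\tfrac{N}{q}|\langle Y_i,x\rangle|^2$ is $O(k/q)$ --- with deviation level $\asymp\sqrt{k/q}\cdot\log N$, which is affordable precisely because $q=\Omega(k\log^2 k\log N)$ and drives the failure probability below $N^{-10}$. The step I expect to be the real obstacle is the sharpness of the entropy estimate: a one-shot Dudley bound only yields $q=O(k\log^4 N)$, and pushing it down to $O(k\log^2 k\log N)$ requires the finer argument of \cite{haviv2017restricted}, which decomposes $x$ according to the dyadic magnitude of its coordinates and applies the chaining inequality iteratively across these levels rather than all at once. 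Since that refinement is delicate and is not needed anywhere else in the paper, I would import it as a black box --- which is exactly why the statement appears here as a citation.
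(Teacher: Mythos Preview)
The paper does not prove Theorem~\ref{RIP-thrm} at all: it is quoted verbatim from \cite{haviv2017restricted} and used as a black box, exactly as you yourself conclude in your final sentence. Your sketch of the underlying Rudelson--Vershynin chaining argument, together with your identification of the Haviv--Regev dyadic-level refinement as the step that sharpens $q$ from $O(k\log^4 N)$ to $O(k\log^2 k\log N)$, is an accurate high-level account of how that external result is proved, but none of it appears in the present paper.
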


\section{Adaptive aliasing filters}\label{sec:filters}
In this section, we introduce a new class of filters that forms the basis of our algorithm for estimation of worst case Fourier sparse signals. For simplicity, we begin by introducing the filters in the one-dimensional setting and then show how they naturally extend to the multidimensional setting (using tensoring). Throughout the section, we assume that the input is a signal $x \in \C^n$ with $\supp \wh{x}=S$ for some $S\subseteq [n]$.

\subsection{One-dimensional Fourier transform}\label{sec:filters-1d}
We restate the following definition for $\tfull_n$ and corresponding labels of vertices:
\deftfull*

Next, we recall the definition of the \emph{splitting tree} of a set.
\defsplit*

The splitting tree $T=\tree(S, n)$ can be constructed easily in $O(|S| \log n)$ time, given $S$. We provide simple pseudocode in Algorithm~\ref{alg:tree-construction}.

For every node $v \in T$, the \emph{level} of $v$, denoted by $l_T(v)$, is the distance from $v$ to the root. The following basic claim will be useful and follows immediately from the definition of $T=\tree(S, n)$:
\begin{claim}\label{cl:t}
	For every integer power of two $n$, if $T$ is a subtree of $\tfull$, then for every node $v\in T$, the labels of nodes that belong to the subtree $T_v$ of $T$ rooted at $v$ are 
	congruent to $f_v$ modulo $2^{l_T(v)}$. Furthermore, every node $u\in T$ at level $l_T(v)$ or higher which satisfies $f_u\equiv f_v \pmod{2^{l_T(v)}}$ belongs to $T_v$. 
\end{claim}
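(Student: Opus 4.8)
The plan is to prove Claim~\ref{cl:t} directly from Definition~\ref{def:splittree} of the splitting tree together with Definition~\ref{def:t-full} of the labelling on $\tfull_n$, by induction on the level $l_T(v)$. Before anything else, I would record the elementary fact about the labelling of $\tfull_n$: for a node $v$ at level $j$ with label $f_v$, the two children at level $j+1$ have labels $f_v$ and $f_v + 2^j$, and both are congruent to $f_v$ modulo $2^j$. Iterating this observation along any root-to-node path shows that whenever $u$ is a descendant of $v$ and $v$ sits at level $l_T(v)$, the labels satisfy $f_u \equiv f_v \pmod{2^{l_T(v)}}$ — this is purely a statement about $\tfull_n$ and its labelling, and passes to the subtree $T$ unchanged. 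That handles the first assertion.

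For the second assertion, I would argue by contrapositive / by induction along the path from $v$ to $u$. Suppose $u \in T$ lies at level $l_T(u) \ge l_T(v)$ and $f_u \equiv f_v \pmod{2^{l_T(v)}}$; I want to show $u \in T_v$. Consider the ancestor $w$ of $u$ in $T$ at level exactly $l_T(v)$ (it exists and is unique since $T$ is a subtree of $\tfull_n$ rooted at the root, so every node at level $\ge l_T(v)$ has a unique ancestor at level $l_T(v)$). By the first part applied to $w$ and $u$, we get $f_u \equiv f_w \pmod{2^{l_T(v)}}$, hence $f_w \equiv f_v \pmod{2^{l_T(v)}}$. Now both $v$ and $w$ are nodes of $\tfull_n$ at the same level $l_T(v)$, and the labelling of level-$j$ nodes of $\tfull_n$ is a bijection onto $[2^j]$ — distinct level-$j$ nodes get distinct labels in $\{0,1,\dots,2^j-1\}$, which I would verify by a quick induction using the child-label rule. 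Therefore $f_w = f_v$ forces $w = v$, so $u$ is a descendant of $v$ in $T$, i.e. $u \in T_v$, as desired.

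The two ingredients I should make sure are actually available: (i) every node of $T$ at level $\ge l_T(v)$ has a unique ancestor at level $l_T(v)$ — true because $T$ is a subtree of $\tfull_n$ containing the root, so the unique root-to-$u$ path in $\tfull_n$ lies in $T$ and passes through exactly one level-$l_T(v)$ node; and (ii) the level-$j$ labels of $\tfull_n$ form an injection into $[2^j]$. For (ii) I would give a one-line induction: the root has label $0 \in [2^0]$; if the claim holds at level $j$ then at level $j+1$ the children of a node with label $f \in [2^j]$ receive $f$ and $f + 2^j$, which lie in $[2^{j+1}]$, and as $f$ ranges over the $2^j$ distinct level-$j$ labels these $2^{j+1}$ values $\{f, f+2^j\}$ are all distinct (the residue mod $2^j$ determines $f$, and the high bit distinguishes the two children).

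I do not anticipate a genuine obstacle here — the claim is essentially a restatement of how the FFT computation tree encodes congruence classes, and the only thing requiring care is bookkeeping about which level we are working at and the injectivity of the per-level labelling (which is where the ``power of two'' hypothesis and the specific child-label rule of Definition~\ref{def:t-full} get used). If I had to name the most delicate point, it is making sure the ancestor $w$ of $u$ at level $l_T(v)$ genuinely lies in $T$ (not just in $\tfull_n$); this follows since $T$ is defined as a subtree of $\tfull_n$ obtained by keeping nodes together with their root-paths, so any node of $T$ drags its whole ancestor chain into $T$.
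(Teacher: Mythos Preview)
Your proof is correct. The paper does not actually prove this claim --- it just asserts that it ``follows immediately from the definition of $T=\tree(S,n)$'' --- and your argument is precisely the natural unpacking of that assertion: the first part is the observation that the child-labelling rule preserves the residue mod $2^{l_T(v)}$ along descendant paths, and the second part is the injectivity of the level-$j$ labelling on $[2^j]$ combined with passing to the level-$l_T(v)$ ancestor. The one point you flag as delicate (that the level-$l_T(v)$ ancestor of $u$ lies in $T$) is exactly right to flag; it holds because every subtree $T$ of $\tfull_n$ considered in the paper --- splitting trees and the trees maintained by the algorithms --- contains the root of $\tfull_n$, so root-to-node paths in $\tfull_n$ are also paths in $T$.
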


\defwt*

\begin{definition}[$(f, S)$-isolating filter]\label{def:f-s-isolating}
	For every power of two $n$, set $S \subseteq [n]$, and $f\in S$, a filter $G\in \C^n$ is called \emph{$(f,S)$-isolating} if $\wh{G}_f=1$, and $\wh{G}_{f'}=0$ for all $f'\in S\setminus \{f\}$. 
	
	In particular, if $G$ is $(f, S)$-isolating, then for every signal $x \in \C^n$ with $\supp\wh{x} \subseteq S$, we have
	\begin{align*}
	\sum_{j\in[n]} x_j G_{t-j} &= (x*G)_{t}\\
	&= \frac{1}{n}\sum_{f \in [n]} \wh x_f \cdot \wh G_f \cdot e^{2\pi i \frac{ft}{n}}\\
	&= \frac{1}{n} \wh{x}_f e^{2\pi i \frac{f t}{n}} 
	\end{align*}
	for all $t\in[n]$, by convolution theorem, see Claim \ref{conv-thrm}.
\end{definition}

While the definitions above suffice to state our estimation primitive, our Sparse FFT algorithm requires a filter $G$ that satisfies a more refined property due to the fact that throughout the execution of the algorithm, the identity of $\supp \wh{x}$ is only partially known.  We encode this knowledge as a subtree $T$ of $\tfull_n$ whose leaves are not necessarily at level $\log_2 n$. Hence, every leaf $v\in T$ corresponds to a set of frequencies in the support of $\wh{x}$ whose full identities have not been discovered yet. This is captured by the following definition:

\deffreqcone*

Note that under this definition, the frequency cone of a vertex $v$ of $T$ corresponds to the subtree rooted at $v$ when $T$ is embedded inside $\tfull_n$ (see Figure~\ref{fig:conjclass}).
\defvtisol*
Note that in particular, for all signals $x \in \C^n$ with $\supp\wh{x} \subseteq \bigcup_{u: \text{~leaf of~}T} \subtree_T(u)$ and $t\in[n]$,
\[ \sum_{j\in[n]} x_j G_{t-j} = \frac{1}{n} \sum_{f \in \subtree_T(v)}\wh{x}_f e^{2\pi i \frac{f t}{n}}. \]

\begin{algorithm}
	\caption{Filter construction in time and Fourier domain}
	\begin{algorithmic}[1]
		
		\Procedure{FilterPreProcess}{$T, v, n$}
		
		\State $r\gets$ root of $T$, $l\gets l_T(v)$, $f\gets f_v$
		\State $v_0,v_1,\ldots, v_l\gets $ path from $r$ to $v$ in $T$, where $v_0 = r$ and $v_l = v$
		\State $\mathbf{ g} \gets \{0\}^{\log_2n}$
		\For{\texttt{$j = 1$ to $l$ }}
		\If{$v_{j-1}$ has two children in $T$}
		\State $g_j \gets e^{-2\pi i \frac{f}{2^j}}$
		
		\EndIf
		
		\EndFor
		\State \textbf{return} $ \bf g$
		\EndProcedure

		\Procedure{FilterTime}{$\mathbf{ g}, n$}
		
		\State $G(t) \gets \delta(t) \text{ for all }t\in[n]$

		\For{\texttt{$l = 1$ to $\log_2n$ }}

		\If{\texttt{$g_l \neq 0$}}
		\State $G(t) \gets \frac{ G(t)}{2} + g_l \cdot  \frac{G(t+n/2^l) }{2} \text{ for all }t\in[n]$

		\EndIf
		
		\EndFor
		\State \textbf{return} $ G$
		\EndProcedure
		
		\Procedure{FilterFrequency}{$\mathbf{g}, n, \xi$}
		
		\State $\wh{G}_{\xi} \gets 1$

		\For{\texttt{$l=1$ to $\log_2n$}}
		
		\If{\texttt{$g_l \neq 0$}}
		\State $\wh{G}_\xi \gets \wh{G}_\xi \cdot {(1 + g_l \cdot e^{2\pi i \frac{\xi}{2^l}} )}\big/ {2}$
		
		\EndIf

		\EndFor
		\State \textbf{return} $ \wh{G}_\xi$
		\EndProcedure
	\end{algorithmic}
	\label{alg:filter}
	
\end{algorithm}

\begin{lemma}[Filter properties]\label{lem:filter-isolate}
	
	For every power of two $n$, subtree $T$ of $\tfull_n$, and leaf $v \in T$, the procedure \textsc{FilterPreProcess}$(T,v,n)$ outputs a static data structure $\mathbf{g} \in \C^{\log_2n}$ in time $O(\log_2n)$ such that, given $\mathbf{g}$, the following conditions hold:
	
	\begin{enumerate}
		\item The primitive \textsc{FilterTime}$(\mathbf{g}, n)$ outputs a filter $G$ such that $| \supp{G} | = 2^{w_T(v)}$ and $G$ is a $(v, T)$-isolating filter. Moreover, the procedure runs in time $O(2^{w_T(v)} + \log_2n)$. \label{tilter-time}
		
		\item For every $\xi \in [n]$, the primitive \textsc{FilterFrequency}$(\mathbf{ g},n,\xi)$ computes the Fourier transform of $G$ at frequency $\xi$, namely, $\wh G(\xi)$, in time $O(\log_2n)$. \label{filter-freq}
	\end{enumerate}

	
\end{lemma}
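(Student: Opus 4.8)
The plan is to analyze the three procedures in Algorithm~\ref{alg:filter} directly and show that together they implement a product of ``basic'' aliasing filters, one for each node along the root-to-$v$ path that has two children. First I would set up the notation: let $v_0 = r, v_1, \dots, v_l = v$ be the path, let $J = \{ j \in [l] : v_{j-1} \text{ has two children} \}$, and observe that by Definition~\ref{def:weight} we have $|J| = w_T(v)$. For each $j \in J$, \textsc{FilterPreProcess} records $g_j = e^{-2\pi i f/2^j} = e^{-2\pi i f_v / 2^j}$; since by Claim~\ref{cl:t} every frequency in $\subtree_T(v)$ is $\equiv f_v \pmod{2^{l}}$, and $j \le l$, we also have $g_j = e^{-2\pi i f/2^j}$ for every such $f$. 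The running time of \textsc{FilterPreProcess} is $O(\log_2 n)$ since it does $O(l) = O(\log_2 n)$ work filling in $\mathbf{g}$.

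\textbf{The frequency-domain claim.} I would prove part~\ref{filter-freq} first, since it is cleaner and also clarifies the right invariant for part~\ref{tilter-time}. Define, for each $j \in J$, the single-bit filter $H^{(j)} \in \C^n$ with $\wh{H^{(j)}}_\xi = \tfrac12\bigl(1 + g_j e^{2\pi i \xi / 2^j}\bigr)$; note $H^{(j)}$ is supported on $\{0, n/2^j\}$ (this follows from Claim~\ref{claim:delta-fft}: $e^{2\pi i \xi/2^j} = e^{2\pi i (\xi \cdot n/2^j)/n}$, whose inverse transform is $\delta(t + n/2^j)$). Then \textsc{FilterFrequency} simply computes $\wh{G}_\xi = \prod_{j \in J} \wh{H^{(j)}}_\xi$, which takes $O(|J|) = O(\log_2 n)$ multiplications, giving the runtime bound. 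It remains to check the two defining properties of a $(v,T)$-isolating filter from Definition~\ref{def:v-t-isolating}. For $f \in \subtree_T(v)$: since $g_j e^{2\pi i f / 2^j} = e^{2\pi i f/2^j} e^{-2\pi i f/2^j} = 1$ for every $j \in J$ (using $g_j = e^{-2\pi i f/2^j}$ as noted above), each factor is $1$, so $\wh{G}_f = 1$. For $f'$ in the frequency cone of a different leaf $u \ne v$: let $u'$ be the last common ancestor of $u$ and $v$ on their root-paths; $u' = v_{j_0 - 1}$ for some $j_0 \le \min(l_T(u), l)$, and $u'$ has two children (one leading toward $u$, one toward $v$), so $j_0 \in J$. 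By Claim~\ref{cl:t}, $f' \equiv f_u \pmod{2^{l_T(u)}}$ where $f_u$ agrees with $f_v$ in the bottom $j_0 - 1$ bits but differs in bit $j_0 - 1$; hence $f' \equiv f_v + 2^{j_0-1} \pmod{2^{j_0}}$, which gives $g_{j_0} e^{2\pi i f'/2^{j_0}} = e^{-2\pi i f_v/2^{j_0}} e^{2\pi i (f_v + 2^{j_0 - 1})/2^{j_0}} = e^{\pi i} = -1$, so the $j_0$-th factor vanishes and $\wh{G}_{f'} = 0$.

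\textbf{The time-domain claim.} For part~\ref{tilter-time}, I would show by induction on the loop in \textsc{FilterTime} that after processing level $l$, the current $G$ equals $\delta * (\ast_{j \in J, j \le l} H^{(j)})$, i.e. $G$ is the convolution (scaled so that $\wh{G}_0$-type normalization matches) of the basic filters for those $j \in J$ with $j \le l$. The update $G(t) \gets \tfrac12 G(t) + \tfrac{g_l}{2} G(t + n/2^l)$ is precisely convolution with $H^{(l)}$ (whose time-domain form is $\tfrac12\delta(t) + \tfrac{g_l}{2}\delta(t + n/2^l)$, by inverse-transforming $\wh{H^{(l)}}$ and using Claim~\ref{claim:delta-fft}), and convolution multiplies Fourier transforms by Claim~\ref{conv-thrm}, so the final $G$ has $\wh{G}_\xi = \prod_{j\in J}\wh{H^{(j)}}_\xi$, matching the frequency computation and hence (by the argument above) $G$ is $(v,T)$-isolating. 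For the support size: each $H^{(j)}$ has a support of size $2$, and I claim the supports combine without overlap — the shifts $n/2^j$ for distinct $j \in J$ are distinct powers-of-two multiples, so $\supp G = \{\sum_{j \in J} \epsilon_j n/2^j : \epsilon_j \in \{0,1\}\}$ has exactly $2^{|J|} = 2^{w_T(v)}$ elements (the sums are distinct because the $n/2^j$ have distinct $2$-adic valuations); this matches the description $\supp G = \{i + (n/2^{h_0})j\}$ from the discussion around Figure~\ref{fig:filter-time}. The running time is $O(\log_2 n)$ iterations, but only the $|J|$ active iterations do real work, and iteration $l$ with $g_l \ne 0$ touches a $G$ whose support has size $2^{|\{j \in J : j < l\}|} \le 2^{w_T(v)}$; summing the geometric series gives $O(2^{w_T(v)} + \log_2 n)$ total.

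\textbf{Main obstacle.} The routine parts are the inductive bookkeeping and the Fourier-transform identities; the one genuinely delicate point is the disjointness/last-common-ancestor argument showing $\wh{G}_{f'} = 0$ for $f'$ in another leaf's cone. This requires carefully tying together the binary-labelling convention of $\tfull_n$ (Definition~\ref{def:t-full}, where the left child of a level-$j$ node adds $2^j$), Claim~\ref{cl:t} about which residue classes land in $T_v$, and the fact that a branching ancestor contributes an index $j \in J$ at exactly the bit position where $f'$ and $f_v$ first disagree — so that the corresponding factor is exactly $-1$ rather than merely some non-unit root of unity. Getting the indexing of ``bit $j-1$ versus modulus $2^j$'' consistent throughout is where care is needed; everything else is a direct unwinding of the pseudocode against the convolution theorem.
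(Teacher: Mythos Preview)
Your proposal is correct and follows essentially the same approach as the paper: both build $\wh G$ as a product of factors $\tfrac12(1+e^{2\pi i(\xi-f_v)/2^j})$ over the branching levels $j$, verify the $(v,T)$-isolating property via the last-common-ancestor argument, and obtain the time-domain filter by convolving with the two-point filters $\tfrac12\delta(t)+\tfrac12 e^{-2\pi i f_v/2^j}\delta(t+n/2^j)$. Your argument for the \emph{exact} support size $2^{w_T(v)}$ (distinctness of the binary sums $\sum_{j\in J}\epsilon_j\, n/2^j$) is actually slightly more careful than the paper's, which only notes that each convolution ``at most doubles'' the support.
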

Before we prove Lemma~\ref{lem:filter-isolate}, we establish the following corollary, assuming the statement of Lemma~\ref{lem:filter-isolate} holds.
\begin{corollary}
	Suppose $n$ is a power of two, $S \subseteq [n]$, and $f\in S$. Then, let $T=\tree(S, n)$ be the splitting tree of $S$. If $v$ is the leaf of $T$ with label $f_v=f$, while $\mathbf{g}$ is the output of \textsc{FilterPreProcess}$(T,v,n)$, and $G$ is the filter computed by \textsc{FilterTime}$(\mathbf{g}, n)$, then the following conditions hold:
	
	\begin{description}
		\item[(1)] $G$ is an $(f, S)$-isolating filter.
		\item[(2)] $|\supp G| = 2^{w_T(v)}$.
	\end{description}
\end{corollary}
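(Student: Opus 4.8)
The plan is to derive the corollary as an essentially immediate consequence of Lemma~\ref{lem:filter-isolate}, bridging the gap between the notion of a $(v,T)$-isolating filter (Definition~\ref{def:v-t-isolating}) and that of an $(f,S)$-isolating filter (Definition~\ref{def:f-s-isolating}). The key structural fact I would invoke is that when $T = \tree(S,n)$ is the \emph{full} splitting tree of $S$ (so that all leaves lie at level $\log_2 n$), each leaf $v$ of $T$ has $l_T(v) = \log_2 n$, and hence its frequency cone is a singleton: $\subtree_T(v) = \{f \in [n] : f \equiv f_v \pmod{2^{\log_2 n}}\} = \{f_v\}$. Conversely, the union of frequency cones over all leaves of $T$ is exactly $S$ (this follows from the definition of the splitting tree together with Claim~\ref{cl:t}), so $\bigcup_{u \neq v,\, u \text{ leaf of } T} \subtree_T(u) = S \setminus \{f_v\}$.

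Given this, the argument for part \textbf{(1)} proceeds as follows. By Lemma~\ref{lem:filter-isolate}\eqref{tilter-time}, the filter $G$ output by \textsc{FilterTime}$(\mathbf{g},n)$ is $(v,T)$-isolating. Unpacking Definition~\ref{def:v-t-isolating} and substituting $\subtree_T(v) = \{f\}$ and $\bigcup_{u \neq v} \subtree_T(u) = S\setminus\{f\}$, we get exactly $\wh{G}_f = 1$ and $\wh{G}_{f'} = 0$ for all $f' \in S \setminus \{f\}$, which is precisely the condition for $G$ to be $(f,S)$-isolating per Definition~\ref{def:f-s-isolating}. Part \textbf{(2)} is literally the support bound $|\supp G| = 2^{w_T(v)}$ already asserted in Lemma~\ref{lem:filter-isolate}\eqref{tilter-time}, so nothing further is needed there.

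The one point requiring a small amount of care — and the closest thing to an ``obstacle'' — is verifying that for $T = \tree(S,n)$ the leaves are genuinely all at level $\log_2 n$ and that the map $v \mapsto f_v$ is a bijection between leaves of $T$ and elements of $S$. This is where I would spell out: for each $f \in S$, the path from the root of $\tfull_n$ obtained by following, at level $j$, the child whose label is $f \bmod 2^j$ satisfies the nonemptiness condition of Definition~\ref{def:splittree} at every level (since $f$ itself witnesses it), so this entire root-to-leaf path lies in $T$; and distinct $f, f' \in S$ reach distinct leaves since their labels at level $\log_2 n$ are $f$ and $f'$ respectively. Hence the leaves of $T$ are in bijection with $S$ via labels, each at level $\log_2 n$. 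With this observation in hand, both claims of the corollary follow directly, and I expect the whole proof to be only a few lines. I would close by remarking that this corollary is exactly the statement needed to instantiate Lemma~\ref{lm:main-tech}: choosing $f$ to be the label of a minimum-weight leaf $v^* = \argmin_{v \in T} w_T(v)$ and applying Kraft's inequality to bound $w_T(v^*) \le \log_2 |S|$ yields $|\supp G| \le |S|$.
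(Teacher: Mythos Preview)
Your proposal is correct and follows essentially the same approach as the paper: invoke Lemma~\ref{lem:filter-isolate} to conclude $G$ is $(v,T)$-isolating, then use the fact that for $T=\tree(S,n)$ every leaf sits at level $\log_2 n$ so that $\subtree_T(v)=\{f_v\}$ and the union over the remaining leaves is $S\setminus\{f\}$, which collapses the $(v,T)$-isolating condition to the $(f,S)$-isolating one; part~(2) is read off directly from the lemma. The paper's proof is a bit terser (it simply asserts the leaf-level and label-bijection facts rather than spelling out the path argument you give), but the content is identical.
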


\begin{proof}
	Indeed, given a subset $S$ and $f\in S$, if $T=\tree(S, n)$, then all the leaves of $T$ are at level $\log_2 n$ and the set of labels of the leaves is exactly $S$. Hence, for every leaf $v$ of $T$, one has $\subtree_T(v)=\{f_v\}$. By Lemma~\ref{lem:filter-isolate}, $G$ is a $(v, T)$-isolating filter. Therefore, by Definition \ref{def:v-t-isolating},
	$$
	\emptyset = \supp \wh{G} \cap \left(\bigcup_{\substack{u\neq v \\ u:\text{~leaf of~}T}} \subtree_T(u)\right) = \supp \wh{G} \cap \left(\bigcup_{\substack{u\neq v \\ u:\text{~leaf of~}T}} \{f_u\}\right) = \supp \wh{G}\cap (S\setminus f_v),
	$$
	and $\wh{G}(f) = 1$ for all $f \in \subtree_T(v)=\{f_v\}$. This implies {\bf (1)}, see definition of $(f,S)$-isolating filters in \ref{def:f-s-isolating}. Property {\bf (2)} follows directly from Lemma~\ref{lem:filter-isolate}.
\end{proof}

Now, we prove Lemma~\ref{lem:filter-isolate}. 

\bigskip
\begin{proofof}{Lemma~\ref{lem:filter-isolate}}
	Let $v$ be a leaf of $T$, $l=l_T(v)$ denote the level of $v$ (i.e., distance from the root), $r$ denote the root of $T$, and $v_0, v_1,\ldots, v_l$ denote the path from root to $v$ in $T$, where $v_0 = r$ and $v_l = v$. 
	
	We first show how to efficiently construct a $(v,T)$-isolating filter in the \emph{Fourier} domain, i.e., how to efficiently construct $\wh{G}$. Then we derive the time domain representation of $G$.  We iteratively define a sequence of functions $G_0, G_1, \dots, G_l$ (with Fourier transforms $\wh{G}_0, \wh{G}_1, \dots, \wh{G}_l$, respectively) by traversing the path from the root to $v$ in $T$, after which we let $G$ be the final filter constructed on this path, i.e., $G := G_l$ (and $\wh{G} := \wh{G}_l$). We start with $\wh{G}_0(\xi) = 1$ for all $\xi\in[n]$. Then, we iteratively define 
	$\wh{G}_q$ in terms of $\wh{G}_{q-1}$ according to the following update rule for all $q =1,2,\ldots, l$: 
	\begin{equation}\label{eq:g-def}
	\wh{G}_q(\xi)=
	\begin{cases}
	\wh{G}_{q-1}(\xi) \cdot \frac{1+ e^{2\pi i \frac{\xi - f_v}{2^q}}}{2} &\qquad\text{if~$v_{q-1}$ has two children in $T$}\\
	\wh{G}_{q-1}(\xi) &\qquad\text{otherwise}
	\end{cases}.
	\end{equation}
	for every $\xi\in [n]$.
	
	We now show that $G = {G}_l$ is a $(v,T)$-isolating filter. It is enough to show that $G$ satisfies
	\begin{equation}\label{eq:gsupp}
	(\supp \wh{G} )\cap \left(\bigcup_{\substack{u\neq v \\ u:\text{~leaf of~}T}} \subtree_T(u)\right) = \emptyset 
	\end{equation}
	and
	\begin{equation}
	\wh{G}(f) = 1 \text{ for all } f \in \subtree_T(v). \label{eq:gval}
	\end{equation}
	
	We now prove \eqref{eq:gsupp}. Consider a leaf $u$ of $T$ distinct from $v$. Recall that $v_0, v_1,\dots, v_l$ denotes the root to $v$ path in $T$. Let $j$ be the largest integer such that $v_j$ is a common ancestor of $v$ and $u$. 
	
	By definition of tree $T$ (Definition~\ref{def:t-full}) and because $v_j$ is at level $j$, one has that the label of the right child $a$ of $v_j$ is $f_{v_j}$, and the label of the left child $b$ is $f_{v_j}+2^j$. Furthermore, using this together with Claim~\ref{cl:t}, we get that the labels of nodes in subtree $T_a$ of $T$ subtended at the right child $a$ of $v$ are 
	congruent to $f_a = f_{v_j}$ modulo $2^{j+1}$, and 
	labels in the subtree $T_b$ rooted at the left child $b$ of $v_j$ are all congruent to $f_b = f_{v_j}+2^j$ modulo $2^{j+1}$.
	
	Suppose that $v$ belongs to the right subtree of $v_j$, and $u$ belongs to the left subtree (the other case is symmetric). We thus get that $f_v\equiv f_{v_j} \pmod{2^{j+1}}$, and $f_u \equiv f_{v_j}+2^j \pmod{2^{j+1}}$. It now suffices to note that by construction of $\wh{G}$ (see~\eqref{eq:g-def}), we have that for all $\xi\in [n]$,
	$$
	\wh{G}_{j+1}(\xi)=\wh{G}_{j}(\xi) \cdot \frac{1+ e^{2\pi i \frac{\xi - f_v}{2^{j+1}}}}{2}.
	$$
	By Claim \ref{cl:t}, for all $f \in \subtree_T(u)$ one has that $f \equiv f_u \pmod{2^{l_T(u)}}$ and hence, $f \equiv f_u \pmod{2^{j+1}}$ because $j+1 \le l_T(u)$. Therefore, by substituting $\xi=f$ in the above, we get 
	$$
	\wh{G}_{j+1}(f) = \wh{G}_{j}(f) \cdot \frac{1+ e^{2\pi i \frac{f - f_v}{2^{j+1}}}}{2} = \wh{G}_{j}(f) \cdot \frac{1+ e^{2\pi i \frac{f_u - f_v}{2^{j+1}}}}{2}=0,
	$$
	implying that $\wh{G}_{j+1}(f)=0$ and, hence, $\wh{G}_l(f)=0$, as required.
	
	It remains to prove \eqref{eq:gval}. Consider any $f'\in \subtree_T(v)$, and note that by Claim~\ref{cl:t}, $f'\equiv f_v \pmod{2^{l}}$. Using this in~\eqref{eq:g-def}, we get
	\begin{equation*}
	\begin{split}
	\wh{G}(f')&=\prod_{\substack{q\in \{1, 2, \ldots, l\}\\v_{q-1}\text{~has two children in~}T}} \frac{1+ e^{2\pi i \frac{f' - f_v}{2^q}}}{2}\\
	&=1,\\
	\end{split}
	\end{equation*}
	since $f' -f_v \equiv 0 \pmod{2^q}$ for every $q=0,\ldots, l$.
	
	Next, note that the primitive \textsc{FilterPreProcess}($T,v, n$) preprocesses the tree $T$ by traversing the path from root to leaf $v$ in time $O(\log_2n)$.  Given $\mathbf{g}$, the primitive \textsc{FilterFrequency} $(\mathbf{ g},n,\xi)$ implements ~\eqref{eq:g-def} for successive values of $q$, and the runtime of this algorithm is $O(\log_2n)$ because of the \emph{for} loop passing through vector $\mathbf{g}$.
	
	Finally, it remains to show that the filter $G$ in \emph{time domain} can be computed efficiently and has a small support. First note that by Claim~\ref{claim:delta-fft},
	the inverse Fourier transform of $ \frac{1+ e^{2\pi i \frac{\xi - f_v}{2^q}}}{2}$ is $\frac{\delta(t)+e^{-2\pi i f_v/2^q}\delta\left(t+\frac{n}{2^q}\right)}{2}$. 
	
	By the duality of convolution in the time domain and multiplication in Fourier domain (see Claim~\ref{conv-thrm}), we can equivalently define $G$ (see~\eqref{eq:g-def}) by letting $G_0(t)=\delta(t)$ and setting
	\begin{equation}\label{eq:g-def-time}
	G_q(\xi)=
	\begin{cases}
	G_{q-1}(t) * \frac{\delta(t)+e^{-2\pi i {f_v}/{2^q}}  \delta\left(t+\frac{n}{2^q}\right)}{2} &\qquad \text{if~$v_{q-1}$ has two children in $T$}\\
	G_{q-1}(t) &\qquad \text{otherwise}
	\end{cases}
	\end{equation}
	for every $q=1,\dots, l$. Thus, $G=G_l$ is the time domain representation of the filter $\wh{G}$ defined in~\eqref{eq:g-def}. We now note that convolving any function with a function supported on two points, e.g., $\frac{1}{2}\left(\delta(t)+e^{-2\pi i {f_v}/{2^q}}  \delta(t+\frac{n}{2^q})\right)$, at most doubles the support. Since the number of times the convolution is performed in obtaining $G_l$ from $G_0$ (as per~\eqref{eq:g-def-time}) is $w_T(v)$, the support size of $G$ is at most $2^{w_T(v)}$. Given $\mathbf{ g}$, the primitive \textsc{FilterTime} $(\mathbf{g},n)$ implements the above algorithm for construction of $G$ and, therefore, runs in time $O(2^{w_T(v)} + \log_2n)$.
\end{proofof}

\subsection{$d$-dimensional Fourier transform}\label{sec:filters-d}

In this section, we show that our construction of adaptive aliasing filters from the previous section naturally extends to higher dimensions without any loss by tensoring.  

\begin{definition}[Flattening of ${[n]}^d$ to ${[{n}^d]}$. Unflattening of ${[n^d]}$ to ${[{n}]^d}$] \label{def:flattening}
	For every power of two $n$, positive integer $d$,  and $\ff=(f_1,\ldots, f_d) \in [n]^d$ we define the {\em flattening} of $\ff$ as 
	$$
	\wt{\ff} = \sum_{r=1}^{d} f_{r} \cdot n^{r-1}. 
	$$
	Similarly, for a subset $S\subseteq [n]^d$ we let $\wt{S}:=\{\wt{\ff}: \ff\in S\}$ denote the flattening of $S$.
	
	For $\wt{\bm\xi} \in [n^d]$, we define the {\em unflattening} of $\wt{\bm\xi}$ as $\bm\xi=(\xi_1,\dots, \xi_d) \in [n]^d$, where
	$$
	\xi_{q} = \frac{\wt{\bm\xi} - \wt{\bm\xi}\pmod{n^{q-1}}}{n^{q-1}} \pmod{n}.
	$$
	for every $q=1,\dots,d$. Similarly, for a subset $\wt{R}\subseteq [n^d]$, we let ${R}:=\{{\bm\xi} \in [n]^d: \wt{\bm\xi}\in \wt{R}\}$ denote the unflattening of $\wt{R}$.
\end{definition}
\begin{definition}[Multidimensional splitting tree] \label{splittree-weight-highdim}
	Suppose $d$ is a positive integer and $n$ is a power of two. For every $S \subseteq [n]^d$, the \emph{flattened splitting tree} of $S$ is defined as $\wt{T}= \tree(\wt{S},n^d)$ where $\wt{S}$ is flattening of $S$. 
	
	The unflattened splitting tree of $S$ is denoted by $T$ and is obtained from the flattened splitting tree $\wt{T}$ by unflattening the labels $\wt{\ff}_v$ of all nodes $v \in \wt{T}$. 
\end{definition}



\begin{definition}[Multidimensional $(\ff, S)$-isolating filter] \label{IsolatingFilter-highdim}
	Suppose $n$ is a power of two integer and $S \subseteq [n]^d$ for a positive integer $d$.  Then, for any frequency $\ff\in S$, a filter $G: [n]^d \to \C$ is called \emph{$(\ff,S)$-isolating} if $\wh{G}_\ff=1$ and $\wh{G}_{\ff'}=0$ for every $\ff'\in S\setminus \{\ff\}$. 
\end{definition}

\begin{definition}[Frequency cone of a leaf of $T$ in high dimensions]\label{def:iso-t-highdim}
	Suppose $d$ is a positive integer, $n$ is a power of two, and $N = n^d$. For every unflattened subtree $T$ of $\tfull_N$ and $v\in T$, we define the {\em frequency cone of $v$} as
	$$
	\subtree_T(v):=\left\{ \ff \in [n]^d: \wt{\ff} \equiv \wt{\ff}_v \pmod{2^{l_T(v)}} \right\},
	$$
	where $l_T(v)$ denotes the level of $v$ in $T$ (i.e., the distance from the root).
\end{definition}

\begin{claim} \label{claim:sub-tree}
	For every positive integer $d$, power of two $n$, and every subtree $T$ of $\tfull_{n^d}$ and every leaf $v \in T$ of height $l_T(v) < d\log_2n$, let $T' = T \cup \{\text{left child $u$ of $v$}\}\cup \{\text{right child $w$ of $v$}\}$. Then the following holds,
	$$\subtree_T(v)=\subtree_{T'}(u)\cup \subtree_{T'}(w)$$
\end{claim}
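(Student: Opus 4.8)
The plan is to reduce the claim to the one-dimensional statement about frequency cones by passing through the flattening map. Recall that by Definition~\ref{def:iso-t-highdim}, for a node $v$ of the unflattened subtree $T$ at level $l_T(v)$, we have $\subtree_T(v)=\{\ff\in[n]^d : \wt\ff\equiv\wt\ff_v\pmod{2^{l_T(v)}}\}$, and the same holds for $u,w$ in $T'$ with levels $l_{T'}(u)=l_{T'}(w)=l_T(v)+1$. So it suffices to show that for the flattened labels, the residue class of $\wt\ff_v$ modulo $2^{l}$ (where $l:=l_T(v)$) is the disjoint union of the residue classes of $\wt\ff_u$ and $\wt\ff_w$ modulo $2^{l+1}$, as subsets of $[n^d]$, and then pull this back through the unflattening bijection of Definition~\ref{def:flattening}.

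First I would invoke the labelling convention of Definition~\ref{def:t-full} applied to $\tfull_{n^d}$: if $v$ is at level $l$ with flattened label $\wt f_v\in[2^l]$, then its right child has label $\wt f_v$ and its left child has label $\wt f_v+2^l$, both lying in $[2^{l+1}]$. Hence $\{g\in[n^d]: g\equiv \wt f_v\ (\mathrm{mod}\ 2^{l})\}$ decomposes according to the value of the $(l{+}1)$-st bit of $g$: those $g$ with that bit equal to $0$ are exactly the $g\equiv \wt f_v\ (\mathrm{mod}\ 2^{l+1})$, and those with that bit equal to $1$ are exactly the $g\equiv \wt f_v+2^l\ (\mathrm{mod}\ 2^{l+1})$. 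This is the elementary fact that a congruence class mod $2^l$ splits into two congruence classes mod $2^{l+1}$, and these two classes are disjoint. This gives the identity at the level of flattened index sets; note we need $l<d\log_2 n$ so that level $l+1$ still exists in $\tfull_{n^d}$, which is exactly the hypothesis $l_T(v)<d\log_2 n$.

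Finally I would transport this back to $[n]^d$ using that unflattening $\wt{R}\mapsto R$ is a bijection $[n^d]\to[n]^d$ (Definition~\ref{def:flattening}), so it preserves unions and disjointness. Concretely, $\subtree_T(v)$ is the unflattening of $\{g : g\equiv\wt f_v\ (\mathrm{mod}\ 2^l)\}$, which is the unflattening of the disjoint union of $\{g : g\equiv\wt f_u\ (\mathrm{mod}\ 2^{l+1})\}$ and $\{g : g\equiv\wt f_w\ (\mathrm{mod}\ 2^{l+1})\}$, which in turn equals $\subtree_{T'}(u)\cup\subtree_{T'}(w)$. I do not expect any real obstacle here; the only thing to be careful about is bookkeeping—matching ``right child gets label $\wt f_v$, left child gets $\wt f_v+2^l$'' from Definition~\ref{def:t-full} with which of $u,w$ is which, and confirming that $T'$ inherits a valid labelling so that Definition~\ref{def:iso-t-highdim} applies to $u$ and $w$. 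The claim does not assert disjointness, only the union, so even the disjointness observation is optional, though it comes for free.
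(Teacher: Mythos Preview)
Your proposal is correct. The paper states this claim without proof, treating it as an immediate consequence of the definitions; your argument via the flattening bijection and the elementary decomposition of a residue class mod $2^l$ into two residue classes mod $2^{l+1}$ is exactly the verification one would write out, and the bookkeeping you flag (matching left/right children to the labels $\wt f_v+2^l$ and $\wt f_v$ from Definition~\ref{def:t-full}, and ensuring $l+1\le d\log_2 n$) is the only thing to track.
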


\begin{definition}[Multidimensional $(v, T)$-isolating filter]\label{def:v-t-isolating-highdim}
	Suppose $d$ is a positive integer, $n$ is a power of two, and $N = n^d$. For every subtree $T$ of $\tfull_N$ and vertex $v\in T$, a filter $G\in \C^{n^d}$ is called \emph{$(v,T)$-isolating} if $\wh{G}_{\ff}=1$ for all $\ff\in \subtree_T(v)$ and for every $\ff'\in \bigcup_{\substack{u \neq v \\ u: \text{~leaf of~}T}} \subtree_T(u)$ one has $\wh{G}_{\ff'}=0$.
	
	In particular, for every signal $x \in \C^{n^d}$ with $\supp\wh{x} \subseteq \bigcup_{u: \text{~leaf of~}T} \subtree_T(u)$ and for all $\tv\in[n]^d$,
	\[ \sum_{\jj\in[n]^d} x_{\jj} G_{\tv-\jj} = \frac{1}{N} \sum_{\ff \in \subtree_T(v)}\wh{x}_{\ff} e^{2\pi i \frac{\ff^T \tv}{n}}. \]
\end{definition}

\begin{lemma}[Construction of a multidimensional isolating filter]\label{lem:isolate-filter-highdim}
	Suppose $n$ is a power of two integer and $d$ is a positive integer. Let $N=n^d$. For every subtree $T$ of $T_{N}^{full}$ and every leaf $v \in T$, there exists a $(v,T)$-isolating filter $G$ such that $|\supp G| = 2^{w_T(v)}$. Such a filter $G$ can be constructed in time $O(2^{w_T(v)} + \log_2N)$. Moreover, for any frequency $\bm\xi \in [n]^d$, the Fourier transform of $G$ at frequency $\bm\xi$, i.e., $\wh G(\bm\xi)$, can be computed in time $O( \log_2N)$.
\end{lemma}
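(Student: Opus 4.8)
The plan is to reduce the $d$-dimensional statement to the one-dimensional construction of Lemma~\ref{lem:filter-isolate} by tensoring across the $d$ coordinates. The key structural observation is that the flattening $\ff\mapsto\wt\ff=\sum_{r=1}^{d} f_r n^{r-1}$ merely concatenates the length-$\log_2 n$ binary blocks of $f_1,\dots,f_d$ \emph{without any carries}, so level $q\in\{1,\dots,\log_2 N\}$ of $\tfull_N$ corresponds to a single bit of a single coordinate: writing $q=(m-1)\log_2 n+s$ with $1\le s\le\log_2 n$, the congruence $\wt\ff\equiv\wt\ff_v\pmod{2^{q}}$ is \emph{equivalent} to the coordinatewise condition that $f_r=(\ff_v)_r$ for all $r<m$ and $f_m\equiv(\ff_v)_m\pmod{2^{s}}$. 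Hence the root-to-$v$ path $v_0,\dots,v_l$ in $T$ (with $l=l_T(v)$) breaks into $d$ consecutive blocks of $\log_2 n$ levels each, the $m$-th block affecting only coordinate $m$, and the filter we build will factor as a tensor product $G=G_1\times\cdots\times G_d$ of $d$ one-dimensional adaptive aliasing filters in $\C^{n}$.

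Concretely, I would set $\wh G(\bm\xi)=\prod\Phi_q(\bm\xi)$, the product taken over those $q\in\{1,\dots,l\}$ with $v_{q-1}$ having two children in $T$, where for $q=(m-1)\log_2 n+s$ we put $\Phi_q(\bm\xi)=\frac{1}{2}\bigl(1+e^{2\pi i(\xi_m-(\ff_v)_m)/2^{s}}\bigr)$, a factor depending on coordinate $m$ alone. Grouping the factors by the coordinate on which they act gives $\wh G=\wh G_1\times\cdots\times\wh G_d$, and each $\wh G_m$ is precisely the one-dimensional filter of Lemma~\ref{lem:filter-isolate} for a suitable subtree of $\tfull_n$ and a leaf labelled $(\ff_v)_m$ (with ``branching levels'' $\{s:(m-1)\log_2 n+s\le l,\ v_{(m-1)\log_2 n+s-1}\text{ has two children}\}$); equivalently $G_m$ and $\wh G_m$ are what the subroutines of Algorithm~\ref{alg:filter} compute, invoked once per coordinate. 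By Claim~\ref{claim:delta-fft} the inverse Fourier transform of $\Phi_q$ is a two-point signal -- it is $\frac{1}{2}\bigl(\delta+e^{-2\pi i(\ff_v)_m/2^{s}}\,\delta(\cdot+n/2^{s})\bigr)$ in coordinate $m$ and $\delta$ in all other coordinates -- so by the convolution theorem (Claim~\ref{conv-thrm}) and Claim~\ref{claim:tensor} we get $G=G_1\times\cdots\times G_d$ with $G_m$ a convolution of one two-point signal per branching level in block $m$.

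The correctness check has two halves, both following the one-dimensional argument verbatim once the coordinatewise characterization above is in hand. First, $\wh G_\ff=1$ for all $\ff\in\subtree_T(v)$: any such $\ff$ satisfies $f_m\equiv(\ff_v)_m\pmod{2^{s}}$ for every relevant $q\le l$, so each $\Phi_q(\ff)=1$. Second, $\wh G$ vanishes on $\bigcup_{u\ne v,\,u\text{ a leaf of }T}\subtree_T(u)$: for a leaf $u\ne v$, let $v_j$ be the deepest common ancestor of $u$ and $v$; then $v_j=v_{(j+1)-1}$ has two children and $j+1\le l$, so $\Phi_{j+1}$ is among the factors, and the labels of the two children of $v_j$ differ exactly in bit $j$, i.e.\ in bit $s$ of coordinate $m$ where $j+1=(m-1)\log_2 n+s$; hence every $\ff'\in\subtree_T(u)$ has $f'_m-(\ff_v)_m\equiv 2^{s-1}\pmod{2^{s}}$ and $\Phi_{j+1}(\ff')=\frac{1}{2}(1+e^{\pi i})=0$. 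This establishes that $G$ is $(v,T)$-isolating in the sense of Definition~\ref{def:v-t-isolating-highdim}.

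Finally, for the support size and running times: within a fixed coordinate the two-point signals contribute distinct shifts $n/2^{s}=2^{\log_2 n-s}$ for the distinct values of $s$ arising in that block; these are distinct powers of two whose subset sums are therefore pairwise distinct and stay below $n$, so $|\supp G_m|=2^{c_m}$ with $c_m$ the number of branching levels in block $m$. Since $\supp(G_1\times\cdots\times G_d)=\supp G_1\times\cdots\times\supp G_d$ and $\sum_m c_m=w_T(v)$ (the total number of branching ancestors of $v$, Definition~\ref{def:weight}), we obtain $|\supp G|=2^{w_T(v)}$; the time bounds are inherited from Lemma~\ref{lem:filter-isolate}/Algorithm~\ref{alg:filter} applied per coordinate ($O(\log_2 N)$ preprocessing, $O(2^{w_T(v)}+\log_2 N)$ to materialize $G$, and $O(\log_2 N)$ to evaluate $\wh G(\bm\xi)$ as a product of at most $\log_2 N$ scalar factors). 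I expect the only non-routine steps to be the carry-free decomposition of the flattening -- which is exactly what makes $G$ a clean tensor product and lets the one-dimensional analysis apply coordinatewise -- and the observation that the per-coordinate subset sums do not collide, which is what gives the support size exactly $2^{w_T(v)}$ rather than merely $\le 2^{w_T(v)}$; everything else is a transcription of the one-dimensional proof.
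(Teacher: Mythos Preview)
Your proposal is correct and takes essentially the same approach as the paper: both constructions tensor the one-dimensional filters of Lemma~\ref{lem:filter-isolate} across the $d$ coordinate blocks induced by the carry-free flattening, yielding $G=G_1\times\cdots\times G_d$ with $|\supp G|=\prod_m 2^{c_m}=2^{w_T(v)}$. The only presentational difference is that the paper phrases the correctness check as an induction on $q$ (showing $\wh G^{(q)}$ is $(v_{q\log_2 n},T^{(q)})$-isolating, where $T^{(q)}$ is $T$ truncated at depth $q\log_2 n$), whereas you unroll this and verify the two isolation conditions directly via the factor $\Phi_{j+1}$ at the branching ancestor; the content is the same.
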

The proof of Lemma~\ref{lem:isolate-filter-highdim} appears in Appendix \ref{appx:A}. The key idea is to choose $q^*$ to be the smallest positive integer such that $l_T(v)\leq q^* \cdot \log_2 n$. One then defines successive filters $G^{(0)}, G^{(1)}, \dots, G^{(q^*)}$ by letting $\wh G^{(0)} = 1$ and 
\[
\wh{G}^{(q)}(\ff)=\wh{G}^{(q-1)}(\ff)\cdot \wh{G}_q(f_q)
\]
for $q = 1,2,\dots, q^*$, where $\wh G_q$ is an isolating filter corresponding to the projection of the leaves of tree $T$ into coordinate $q$. The final filter $G = G^{(q^*)}$ turns out to be $(v,T)$-isolating.

\subsection{Putting it together}

\begin{claim}\label{kraftsum}
	For any binary tree $T$ let $L$ be the set of leaves of $T$. There exists a leaf $v\in L$ such that $w_T(v) \leq \log_2 |L|$.
\end{claim}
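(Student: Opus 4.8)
The plan is to prove Claim~\ref{kraftsum} via a weighting argument that is essentially Kraft's inequality. First I would assign to each leaf $v\in L$ the weight $2^{-w_T(v)}$ and show that $\sum_{v\in L}2^{-w_T(v)}=1$. The cleanest way to see this is a ``mass-splitting'' process: place a unit of mass at the root and push it downward so that at every node with two children the incoming mass is split evenly between the two children, while at every node with a single child the mass is passed along unchanged. Since $T$ is finite and the children referenced in its node structure are genuinely present, the total mass is conserved at every stage and all of it eventually accumulates at the leaves. The amount reaching a leaf $v$ is exactly $\prod 1/2$ over the ancestors of $v$ that have two children, which by Definition~\ref{def:weight} equals $2^{-w_T(v)}$; hence the weights sum to $1$.

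Given this identity (or even just the inequality $\le 1$), the claim is an averaging statement: among the $|L|$ nonnegative terms summing to $1$, at least one satisfies $2^{-w_T(v)}\ge 1/|L|$, and rearranging gives $w_T(v)\le\log_2|L|$, as desired.

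An alternative I could present instead is induction on $|T|$. If the root has a single child, delete the root (this changes no leaf's weight) and invoke the inductive hypothesis on the resulting smaller tree. If the root has two children whose subtrees contain $|L_1|$ and $|L_2|$ leaves with $|L_1|+|L_2|=|L|$, pick the subtree with the fewer leaves, say the left one with $|L_1|\le|L|/2$; the inductive hypothesis yields a leaf $v$ there with $w_{T_1}(v)\le\log_2|L_1|\le\log_2|L|-1$, and reattaching the root (a binary-branching ancestor of $v$) gives $w_T(v)=w_{T_1}(v)+1\le\log_2|L|$. The base case is the single-node tree, where $w_T(v)=0=\log_2 1$.

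There is no genuine obstacle here; the only point requiring care is the bookkeeping around the definition of $w_T(v)$ — it counts ancestors with \emph{exactly} two children, so single-child nodes must contribute nothing. This is precisely what makes mass conservation (equivalently, the induction step through a single-child root) go through, so I would simply verify that single-child nodes neither create nor destroy mass and leave every leaf's weight unchanged.
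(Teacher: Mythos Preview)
Your proposal is correct and is essentially the paper's proof: both establish the identity $\sum_{v\in L}2^{-w_T(v)}=1$ and then conclude by averaging. The only cosmetic difference is that the paper phrases the identity by first ``collapsing'' $T$ (removing all single-child nodes) so that each leaf's depth in the collapsed tree equals $w_T(v)$, and then invoking Kraft's inequality on the resulting full binary tree---this is exactly your mass-splitting argument in different clothing.
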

\begin{proof}
	Let $T'$ be the tree obtained by ``collapsing'' $T$, i.e., removing all nodes (and incident edges) of $T$ that have exactly one child. Then, observe that the leaves of $T$ are still preserved in $T'$, except that they are at possibly varying levels. In particular, a leaf $v$ in $T'$ will be at level $w_T(v)$. Thus, by applying Kraft's inequality to $T'$ (which is an equality because every node in $T'$ is either a leaf or has two children), we see that
	\[
	\sum_{v\in L} 2^{-w_T(v)} = 1.
	\]
	Therefore, there exists a $v\in L$ such that $2^{-w_T(v)} \geq \frac{1}{|L|}$ and, therefore, $w_T(v) \le \log_2|L|$, as desired.
\end{proof}

This gives us the main result of this section, and the main technical lemma of the paper:
\begin{corollary} \label{cor:isofilter}
	For every integer $n\geq 1$ a power of two and every positive integer $d$, every $S \subseteq [n]^d$, there exists an $\ff \in S$ and an $(\ff,S)$-isolating filter $G$ (as defined in Definition~\ref{IsolatingFilter-highdim}) such that $|\supp G| \leq |S|$. 
\end{corollary}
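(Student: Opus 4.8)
The plan is to reduce the multidimensional statement to the combinatorial structure developed in Sections~\ref{sec:filters-1d} and~\ref{sec:filters-d}, and then to find the frequency to isolate by a counting argument on the splitting tree. First I would form the flattened splitting tree $\wt T = \tree(\wt S, n^d)$ of the flattening $\wt S \subseteq [n^d]$ of $S$ (Definition~\ref{splittree-weight-highdim}). Since all elements of $\wt S$ are distinct integers in $[n^d]$, all leaves of $\wt T$ lie at level $d\log_2 n$, and the set of leaf-labels of $\wt T$ is exactly $\wt S$; in particular $\wt T$ has $|S|$ leaves, and for every leaf $v$ the frequency cone $\subtree_T(v)$ (Definition~\ref{def:iso-t-highdim}) is the singleton $\{\ff_v\}$, where $\ff_v$ is the unflattening of the label.

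Next I would apply Claim~\ref{kraftsum} to $\wt T$: since $\wt T$ has $|S|$ leaves, there is a leaf $v$ with $w_{\wt T}(v) \le \log_2 |S|$, hence $2^{w_{\wt T}(v)} \le |S|$. Let $\ff = \ff_v \in S$ be the corresponding frequency. Then I would invoke Lemma~\ref{lem:isolate-filter-highdim} with this tree $T$ (the unflattened splitting tree) and this leaf $v$: it produces a $(v,T)$-isolating filter $G \in \C^{n^d}$ with $|\supp G| = 2^{w_T(v)} \le |S|$. It remains only to translate ``$(v,T)$-isolating'' into ``$(\ff,S)$-isolating'' in the sense of Definition~\ref{IsolatingFilter-highdim}. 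This is the same unwinding as in the one-dimensional corollary right after Lemma~\ref{lem:filter-isolate}: because every leaf $u$ of $T$ has $\subtree_T(u) = \{\ff_u\}$ and the map $u \mapsto \ff_u$ is a bijection from leaves of $T$ onto $S$, the condition $\wh G_{\ff'} = 0$ for all $\ff' \in \bigcup_{u \neq v,\ u\text{ leaf}} \subtree_T(u)$ becomes exactly $\wh G_{\ff'} = 0$ for all $\ff' \in S \setminus \{\ff\}$, and $\wh G_{\ff} = 1$ follows from $\wh G = 1$ on $\subtree_T(v) = \{\ff\}$. Thus $G$ is $(\ff,S)$-isolating with $|\supp G| \le |S|$, as claimed.

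I do not anticipate a genuine obstacle here, since all the heavy lifting (the filter construction and its support bound) is already packaged in Lemma~\ref{lem:isolate-filter-highdim}, and the existence of a low-weight leaf is exactly Claim~\ref{kraftsum}. The only point that requires a little care is the bookkeeping between the \emph{flattened} tree $\wt T$ (on which Claim~\ref{kraftsum} and the definition of weight are naturally stated) and the \emph{unflattened} tree $T$ (on which Lemma~\ref{lem:isolate-filter-highdim} is phrased): one must check that $w_{\wt T}(v) = w_T(v)$, which is immediate because unflattening only relabels nodes and does not change the branching structure, so the notion of ``ancestor with two children'' is unaffected. With that identification in hand, the argument is just the chain of implications above.
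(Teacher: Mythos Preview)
Your proposal is correct and follows exactly the paper's approach: the paper's proof is the single sentence ``Follows by combining Lemma~\ref{lem:isolate-filter-highdim} with Claim~\ref{kraftsum},'' and you have simply unpacked that combination, including the (immediate) translation from $(v,T)$-isolating to $(\ff,S)$-isolating that mirrors the one-dimensional corollary after Lemma~\ref{lem:filter-isolate}. Your remark about $w_{\wt T}(v)=w_T(v)$ is accurate and harmless bookkeeping.
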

\begin{proof}
	Follows by combining Lemma~\ref{lem:isolate-filter-highdim} with Claim~\ref{kraftsum}.
\end{proof}

\section{Estimation of sparse high-dimensional signals in quadratic time} \label{sec:worstcase1d}
In this section, we use the filters that we have constructed in Section~\ref{sec:filters} in order to show the first result of the paper, a deterministic algorithm for estimation of Fourier-sparse signals in time which is quadratic in the sparsity.

\begin{algorithm}
	\caption{$d$-dimensional Estimation for Sparse FFT with sample and time complexity $k^2$}
	\begin{algorithmic}[1]
		
		\Procedure{Estimate}{$x, S, n, d$} 
		\State $\wt{T} \gets \textsc{Tree}(\wt{S},n^d)$ 
		\Comment $\wt{S}$: flattening of $S$
		\Comment $\wt{T}$: flattened splitting tree of $S$
		\State Let $T$ be the unflattening of $\wh T$ 
		
		\While{\texttt{$T\neq \emptyset$}}
		\State $v \gets \argmin_{u : \text{ leaf of }{T}} w_{{T}}(u)$, ${\ff} \gets {\ff}_v$		\Comment $\ff$ is label of node $v$
		\State $v_0,v_1,\ldots, v_{d \cdot \log_2n}\gets $ path from $r$ to $v$ in ${T}$, where $v_0 = r$ and $v_{d \cdot \log_2n} = v$

		\For{\texttt{$q = 1 $ to $d$ }}

		\State $T_q^v \gets $ subtree of $T$ rooted at $v_{(q-1)\cdot \log_2n}$
		
		\State Remove all nodes of $T^v_q$ which are at distance more than $\log_2n$ from $v_{(q-1)\cdot \log_2n}$
		
		\State Label every node $u \in T_q^v$ as $f_u = (\ff_{u})_q$

		\State $\mathbf{ g} \gets \textsc{FilterPreProcess} (T_q^v , v_{q\cdot \log_2n}, n)$
		\State $G_{q} \gets  \textsc{{FilterTime}}(\mathbf{ g}_{q}, n)$
		
		\State $\wh G_{q}(\xi_{q}) = \textsc{FilterFrequency}(\mathbf{ g}_{q},n,\xi_{q})$
		\EndFor
		
		\State $G \gets G_{1} \times G_{2} \times ... \times G_{d}$		
		
		\State $h_{\ff} \gets  \sum_{\bm\xi \in [n]^d} \left(\wh \chi_{\bm\xi} \cdot \prod_{q=1}^{d}\wh G_{q}(\xi_{q})\right)$

		\State $\wh \chi_{\ff} \gets \wh \chi_{\ff} + \left(n^d \cdot \sum_{\jj \in [n]^d} x_\jj \cdot G_{-\jj}\right) - h_{\ff} $

		\State ${T} \gets \tree.\textsc{remove}({T},v)$
		
		\EndWhile
		
		\State \textbf{return} $ \wh \chi$
		
		\EndProcedure

	\end{algorithmic}
	\label{alg:high-dim-Est-k2}
	
\end{algorithm}

\begin{theorem}[Estimation guarantee]\label{thm:est-main}
	Suppose $n$ is a power of two integer and $d$ is a positive integer and $S\subseteq [n]^d$. Then, for any signal $x \in \C^{n^d}$ with $\supp{\wh x} \subseteq S$, the procedure \textsc{Estimate}$(x,S,n,d)$ (see Algorithm~\ref{alg:high-dim-Est-k2}) returns $\wh{x}$. Moreover, the sample complexity of this procedure is $O(|S|^2)$ and its runtime is $O(|S|^2 \cdot d \log_2n)$.
\end{theorem}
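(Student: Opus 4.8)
The plan is to show that Algorithm~\ref{alg:high-dim-Est-k2} maintains the invariant that, at the beginning of each iteration of the \texttt{while} loop, the vector $\wh\chi$ correctly equals $\wh x$ on all frequencies that have already been removed from the tree (i.e.\ on the set $S \setminus \{\ff_u : u \text{ leaf of current } T\}$), and $\wh\chi$ is zero elsewhere. Since $T$ starts as the full (unflattened) splitting tree of $S$ — whose leaves are exactly the elements of $S$ by Definition~\ref{def:splittree} — and ends empty, establishing this invariant immediately gives $\wh\chi = \wh x$ on all of $S$, hence on $\supp\wh x$, which is the desired output.

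The core of the argument is the correctness of a single iteration. First I would invoke Claim~\ref{kraftsum} (applied to $T$ viewed through its collapsed tree) so that the chosen leaf $v = \argmin_u w_T(u)$ satisfies $w_T(v) \le \log_2|L| \le \log_2|S|$, which is what controls the filter support. Next, the crucial step: the $d$ one-dimensional filters $G_1,\dots,G_d$ built by \textsc{FilterPreProcess}/\textsc{FilterTime} on the coordinate-projected subtrees $T_q^v$ tensor together, by Claim~\ref{claim:tensor}, into a filter $G = G_1 \times \cdots \times G_d$ whose Fourier transform factorizes as $\wh G_\ff = \prod_q \wh G_q(f_q)$. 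I would appeal to Lemma~\ref{lem:isolate-filter-highdim} (equivalently, re-derive it from Lemma~\ref{lem:filter-isolate} applied coordinate-wise) to conclude that $G$ is a $(v, T)$-isolating filter with $|\supp G| = 2^{w_T(v)} \le |S|$. Here one must be careful that the splitting structure of $\wt T$ restricted to the bits belonging to coordinate $q$ is exactly the tree $T_q^v$ that the algorithm extracts, so that the 1-D guarantee transfers. Given isolation, Definition~\ref{def:v-t-isolating-highdim} yields
$$
n^d \sum_{\jj \in [n]^d} x_\jj G_{-\jj} = \sum_{\ff' \in \subtree_T(v)} \wh x_{\ff'} = \wh x_\ff + \sum_{\ff' \in \subtree_T(v) \setminus \{\ff\}} \wh x_{\ff'},
$$
using that $\subtree_T(v) \cap S = \{\ff\}$ because $v$ is a leaf of the full splitting tree and distinct leaves have disjoint frequency cones intersected with $S$ (Claim~\ref{cl:t}). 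Meanwhile $h_\ff = \sum_{\bm\xi} \wh\chi_{\bm\xi} \prod_q \wh G_q(\xi_q)$ is precisely the contribution of the already-recovered part of the signal passed through the same filter, i.e.\ $h_\ff$ equals $\wh\chi$ convolved against $G$ and read at $\ff$; by the inductive invariant the support of $\wh\chi$ lies in $S$, so $h_\ff = \sum_{\ff' \in \subtree_T(v)} \wh\chi_{\ff'}$. Subtracting, the update $\wh\chi_\ff \gets \wh\chi_\ff + (n^d \sum_\jj x_\jj G_{-\jj}) - h_\ff$ sets $\wh\chi_\ff$ to $\wh x_\ff$ exactly (note $\wh\chi_\ff$ was $0$ before, and all other frequencies in $\subtree_T(v)$ that are currently in the tree contribute $0$ to both sums since they are not yet recovered and not in $S$ other than $\ff$). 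Removing $v$ then restores the invariant with one more frequency accounted for.

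For the complexity bounds: there are exactly $|S|$ iterations (one leaf removed per iteration). Per iteration, \textsc{FilterPreProcess} and \textsc{FilterFrequency} cost $O(\log_2 N) = O(d\log_2 n)$ each; \textsc{FilterTime} costs $O(2^{w_T(v)} + d\log_2 n) = O(|S| + d\log_2 n)$; computing $n^d\sum_\jj x_\jj G_{-\jj}$ touches $|\supp G| = O(|S|)$ samples and costs $O(|S|\, d\log_2 n)$ to evaluate the tensor entries of $G$ (each coordinate factor is $O(d\log_2 n)$ via the preprocessed $\mathbf g$); and computing $h_\ff$ requires iterating over the $O(|S|)$ currently-nonzero entries of $\wh\chi$, each costing $O(d\log_2 n)$ to evaluate $\prod_q \wh G_q(\xi_q)$ through \textsc{FilterFrequency}. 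Summing over $|S|$ iterations gives sample complexity $O(|S|^2)$ and runtime $O(|S|^2\, d\log_2 n)$, and the argument is deterministic throughout. I expect the main obstacle to be the bookkeeping in the coordinate-projection step — verifying that the subtree $T_q^v$ the algorithm carves out of $\wt T$ really is the splitting tree governing the bits of coordinate $q$ (so that Lemma~\ref{lem:filter-isolate} applies with the right support exponent $w_{T_q^v}$, and $\sum_q w_{T_q^v}(v_{q\log_2 n}) = w_T(v)$), together with confirming that $h_\ff$ is genuinely the filtered residual rather than merely an approximation; the rest is a routine telescoping induction.
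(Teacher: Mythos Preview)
Your overall structure, invariant, and complexity accounting are correct, but there is a genuine gap in how you apply the isolation property in the inductive step. You invoke Definition~\ref{def:v-t-isolating-highdim} to conclude
\[
n^d \sum_{\jj} x_\jj\, G_{-\jj} \;=\; \sum_{\ff'\in\subtree_T(v)} \wh x_{\ff'},
\qquad
h_\ff \;=\; \sum_{\ff'\in\subtree_T(v)} \wh\chi_{\ff'}.
\]
But the precondition of that definition is that the signal's Fourier support lies in $\bigcup_{u:\text{ leaf of }T}\subtree_T(u)$, where $T$ is the \emph{current} (pruned) tree. After the first removal, $\supp\wh x = S$ strictly contains the set of current leaf labels, and $\supp\wh\chi$ consists precisely of the removed labels; neither signal satisfies the precondition. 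The filter $G$ built from the pruned tree need not vanish at already-recovered frequencies. Concretely, take $n=4$, $S=\{0,1,2\}$: the minimum-weight leaf is $1$, and after removing it the filter isolating $0$ from $2$ is $\wh G(\xi)=(1+e^{2\pi i\xi/4})/2$, which gives $\wh G(1)=(1+i)/2\neq 0$. So your two displayed identities are individually false from the second iteration onward.

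The fix, which is exactly what the paper does, is to apply isolation to the residual $x-\chi$ rather than to $x$ and $\chi$ separately. First note the exact identity (no isolation needed) $h_\ff=\sum_{\bm\xi}\wh\chi_{\bm\xi}\wh G_{\bm\xi}=n^d(\chi*G)_0$, so that
\[
n^d\sum_{\jj} x_\jj\, G_{-\jj} \;-\; h_\ff \;=\; n^d\bigl((x-\chi)*G\bigr)_0.
\]
By your invariant, $\supp(\wh x-\wh\chi)$ equals the current leaf set, which \emph{is} contained in the union of current leaf cones, so the $(v,T)$-isolation property applies to $x-\chi$ and yields $(\wh x-\wh\chi)_{\ff_v}$. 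The rest of your argument---the Kraft bound via Claim~\ref{kraftsum}, the filter support bound via Lemma~\ref{lem:isolate-filter-highdim}, and the $O(|S|^2\,d\log_2 n)$ accounting---is correct and matches the paper.
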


\begin{proof}
	The proof is by induction on the iteration number $t=0,1,2,...$ of the \emph{while} loop in Algorithm~\ref{alg:high-dim-Est-k2}. One can see that since at each iteration the tree $T$ looses one of its leaves, the algorithm terminates after $|S|$ iterations, since initially the number of leaves of $T$ is $|S|$. Let $\wh \chi ^{(t)}$ denote the signal $\wh \chi$ after iteration $t$, and let ${T}^{(t)}$ denote the tree ${T}$ after iteration $t$ and let $S^{(t)}$ denote the set of frequencies corresponding to leaves of $T^{(t)}$, i.e., $S^{(t)} = \{\ff_u: u \text{ is a leaf of } T^{(t)}\}$. In particular, $\wh\chi^{(0)}=0$ and ${T}^{(0)}$ is the unflattened spltting tree of ${S}$ and $S^{(0)} = S$.  
	
	We claim that for each $t = 0,1,\dots,|S|$, we have
	\begin{equation}
	\begin{split}
	\supp (\wh x - \wh \chi^{(t)}) \subseteq S^{(t)} \text{~and~}|S^{(t)}|=|S|-t\label{eq:xsinduct-hd}
	\end{split}
	\end{equation}
	
	\paragraph{{Base case of induction}:} We have $S^{(0)} = S$ and $\wh \chi^{(0)} \equiv 0 $, which immediately implies \eqref{eq:xsinduct-hd} for $t = 0$.
	
	\paragraph{Inductive step:} For the inductive hypothesis, let $r \geq 1$ and assume that \eqref{eq:xsinduct-hd} holds for $t = r-1$. The main loop of the algorithm finds $v = \argmin_{u : \text{ leaf of } {T}^{(r-1)}} w_{{T}^{(r-1)}}(u)$. By Claim~\ref{kraftsum} along with inductive hypothesis, $w_{{T}^{(r-1)}}(v) \le \log_2|S^{(r-1)}| \leq \log_2 |S|$. Note that the main loop of the algorithm constructs a $(\ff_v, S^{(r-1)})$-isolating filter $G$, along with $\wh G$. In order to do so, the algorithm constructs trees $T^v_q$ for all $q\in \{1,...,d\}$ which in total takes time $O(|S| d \log_2n )$. Given $T^v_q$'s, the algorithm constructs filter $G$ and $\wh G$ in time $O \left( 2^{w_{{T}^{(r-1)}}(v)} + d\log_2 n \right) = O \left( |S| + d\log_2 n \right)$, by Lemma~\ref{lem:isolate-filter-highdim}. Moreover, the filter $G$ has support size $2^{w_{{T}^{(r-1)}}(v)} \leq |S|$ by Lemma~\ref{lem:isolate-filter-highdim}.
	
	By Lemma \ref{lem:isolate-filter-highdim} computing the quantity $h_{\ff} = \sum_{\bm\xi \in [n]^d} \wh \chi^{(r-1)}_{\bm\xi} \cdot \wh G(\bm\xi)$ in line~15 of Algorithm~\ref{alg:high-dim-Est-k2} can be done in time $O(\|\wh \chi^{(r-1)}\|_0 \cdot d\log_2n) = O(|S| \cdot d\log_2n)$. By convolution theorem \ref{conv-thrm}, the quantity $h_{\ff}$ satisfies $h_{\ff}= n^d \cdot (\chi^{(r-1)} \ast G)_0$, and thus
	\begin{align*}
	\left(n^d \cdot \sum_{\jj\in [n]^d} x_{\jj} \cdot G_{-\jj}\right) - h_{\ff} &= n^d \cdot \left( \left(x - \chi^{(r-1)} \right) \ast G \right)_0\\
	&= \wh x_{\ff_v} - \wh\chi_{\ff_v}^{(r-1)},
	\end{align*}
	where the last transition is due to the fact that $G$ is $\left( \ff_v, S^{(r-1)} \right)$-isolating along with the inductive hypothesis of $\supp{\left( \wh x - \wh \chi^{(r-1)} \right)} \subseteq S^{(r-1)}$.
	
	We thus get that $\wh\chi^{(r)}(\cdot) \gets \wh\chi^{(r-1)}(\cdot) +  \left( \wh{x} - \wh{\chi}^{(r-1)} \right)_{\ff_v} \cdot \delta_{\ff_v}(\cdot)$. Moreover, it updates the tree ${T}^{(r)} \gets \tree.\textsc{remove}({T}^{(r-1)},v)$. Also note that the set $S^{(r)}$ gets updated to $S^{(r-1)} \setminus \{\ff_v\}$ accordingly. This establishes \eqref{eq:xsinduct-hd} for $t=r$, thereby completing the inductive step.
	
	\bigskip
	
	The number of steps is exactly $|S|$, as follows from  the inductive claim. Thus, the total runtime is $O(|S|^2 \cdot d\log_2 n)$.
\end{proof}

\section{A lower bound of $k^{1-o(1)}$ rounds of tree pruning}\label{hard-instance}
One apparent disadvantage of our algorithm presented in the previous section is the fact that it only estimates  elements of the Fourier spectrum one at a time, thereby taking $k$ rounds to estimate all elements in the spectrum. Since the isolation of one element takes up to $k$ time due to the support size of $G$, the resulting bound on the runtime is quadratic in $k$. A natural conjecture is that our analysis is not tight, and one can achieve better runtime by removing several nodes of weight at most $\log_2 k+O(1)$ at once. If one could argue that the filters $G$ that isolate the nodes removed in one round have nontrivial overlap, runtime improvements could be achieved. In this section we present a class of signals on which $k^{1-o(1)}$ rounds of  pruning the tree are required, showing that our analysis is essentially optimal.

\paragraph{Tree pruning process} Suppose $n$ is a power of two integer and $\tau$ is a positive integer. Let $T$ be a subtree of $\tfull_n$. The \emph{tree pruning process, $\mathcal{P}(T,\tau,n)$}, is an iterative algorithm that performs the following operations on $T$ successively until $T$ is empty:
\begin{enumerate}
	\item Find $\tilde{S}_\tau = \{ \text{leaves $v$ of } T : w_T(v) \le \tau \}$, i.e., set of vertices of weight no more than $\tau$.
	
	\item For each $v \in \tilde{S}_\tau$ (in an arbitrary order) remove $v$ from $T$ together with the path from $v$ to its closest ancestor that has two children (i.e., run $T.remove(v)$; see Algorithm~\ref{alg:tree-construction}).
\end{enumerate}
We show that for every $k$ and sufficiently large integer $n$ there exists a tree $T$ with $k$ leaves such that $\mathcal{P}(T,\tau,n)$ with $\tau=\log_2 k+O(1)$ requires $k^{1-o(1)}$ rounds to terminate. This in particular shows that our $k^2$ runtime analysis from section~\ref{sec:worstcase1d} cannot be improved by reusing work done in a single iteration, and hence our analysis is essentially optimal. Our construction is one-dimensional, although higher dimensional extensions can be readily obtained.

\begin{theorem}\label{lem:process}
	For any integer constant $c\geq 1$, sufficiently large power of two integer $n$ there exists $k=\Theta(\log^c n)$ such that if  $\tau=\log_2 k+O(1)$, the following condition holds. There exists a subtree $T$ of $\tfull_n$ with $k$ leaves such that the tree pruning process $\mathcal{P}(T,\tau,n)$ requires $k^{1-o(1)}$ iterations to terminate. 
	
\end{theorem}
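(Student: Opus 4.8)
I would take $T=T_c^n$, the low Hamming weight binary tree, which is the splitting tree of the Hamming ball $H_c^n$ and hence has $k:=|H_c^n|=\sum_{j=0}^{c}\binom{\log_2 n}{j}=\Theta(\log^c n)$ leaves (as $c$ is constant); set $\tau=\log_2 k+O(1)=\Theta(\log\log n)$. By Claim~\ref{kraftsum}, $\tilde{S}_\tau\neq\emptyset$ as long as the current tree has at most $2^\tau\geq k$ leaves, so the process is well defined and terminates, and the theorem is the lower bound on its iteration count. Two structural facts drive the analysis. \emph{(a) Weights:} a leaf corresponding to a frequency $f$ with binary support $B\subseteq\{0,\dots,\log_2 n-1\}$ (so $|B|\le c$) has $w_{T_c^n}(f)=\log_2 n$ if $|B|<c$ and $w_{T_c^n}(f)=\max(B)+1$ if $|B|=c$; this follows from the definition of $\tree$ --- an ancestor at level $l$ has two children iff some element of $H_c^n$ agrees with $f$ below bit $l$ but differs at bit $l$, i.e.\ iff $l\in B$ or $|B\cap[0,l)|\le c-1$ --- or by induction on $c$ from the recursive definition of $T_c^n$. \emph{(b) One round in the collapsed tree:} contracting every single-child chain turns $T$ into a full binary tree in which leaf $v$ sits at depth $w_T(v)$; removing $v$ with its single-child chain up to its closest branching ancestor $a$ leaves $a$ with one child, so in the collapsed picture $a$ is contracted and $a$'s surviving subtree is lifted one level --- equivalently $w_T(g)$ drops by exactly $1$ for every surviving leaf $g$ below $a$. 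Thus one iteration of $\mathcal{P}$ acts on the collapsed tree by chopping off all leaves at depth $\le\tau$ and lifting subtrees accordingly (cascading further up whenever a contracted node's surviving subtree later empties).

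The heart of the argument is an induction on $c$ proving that $\mathcal{P}(T_c^n,\tau,n)$ needs $\big(\Omega(\log_2 n/\tau)\big)^{c}$ iterations. For $c=1$, $T_1^n$ is a caterpillar whose leaves $\{b\}$ have weight $b+1$ for $b=0,\dots,\log_2 n-1$, plus $\emptyset$ of weight $\log_2 n$; by fact (b), iteration $t$ chops exactly $\{(t-1)\tau\},\dots,\{t\tau-1\}$ and lowers every surviving weight by $t\tau$, so the process lasts $\log_2 n/\tau\pm O(1)$ iterations. For the inductive step, recall that $T_c^n$ is a spine of length $\log_2 n$ with a copy of $T_{c-1}^{n/2^{j+1}}$ hanging at spine node $j$; in the collapsed tree a leaf of the copy at node $j$ sits at depth (its depth inside the copy) $+$ (number of uncontracted spine nodes above it), a spine node contracts only after its hanging copy is entirely pruned, and no leaf of the copy at node $j$ is chopped until that copy's root has been lifted to depth $\le\tau-1$, which requires $\Omega(j)$ of the earlier copies to have been cleared. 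Hence the copies at spine nodes $0,\tau,2\tau,\dots,\Theta(\log_2 n)$ are pruned in essentially disjoint windows of iterations, and by the induction hypothesis --- applied to each copy once its root has reached depth $\Theta(1)$, i.e.\ at effective threshold $\Theta(\tau)$ --- each window has length $\big(\Omega(\log_2 n/\tau)\big)^{c-1}$; multiplying by the $\Theta(\log_2 n/\tau)$ windows gives $\big(\Omega(\log_2 n/\tau)\big)^{c}$.

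To finish: since $\tau=\Theta(\log\log n)$ and $k=\Theta(\log^c n)$, one has $\big(\Omega(\log_2 n/\tau)\big)^{c}=\Omega\big((\log n/\log\log n)^c\big)=\Omega\big(k/(\log\log n)^c\big)$, and $(\log\log n)^c=k^{o(1)}$ because $\log\log\log n=o(\log\log n)$; therefore $\mathcal{P}(T,\tau,n)$ needs $k^{1-o(1)}$ iterations, as claimed.

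The step I expect to be the main obstacle is the inductive step: the hanging copies are pruned \emph{concurrently} rather than sequentially, so turning ``essentially disjoint windows'' into a rigorous statement --- quantifying how the contraction schedule of the main spine throttles the deeper copies until enough shallower copies have been cleared --- is the delicate point. The weight formula, the collapsed-tree description of a single round, and the $c=1$ base case are all routine.
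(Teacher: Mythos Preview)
Your construction ($T=T_c^n$), the weight formula, and the $c=1$ base case are all correct and match the paper. The gap is exactly where you say it is: the inductive step requires controlling the concurrent pruning of the hanging copies, and ``essentially disjoint windows'' is not yet a proof.

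The paper resolves this not by tracking the schedule but with two devices you are missing. First, a \emph{monotonicity lemma} (Lemma~\ref{lem:monotonic}): if $T\subseteq T'$ then $\mathcal{P}(T,\tau,n)$ terminates no later than $\mathcal{P}(T',\tau,n)$ (one-line induction: a leaf's weight in $T$ is no larger than in $T'$, so anything pruned from $T'$ in a given round is also pruned from $T$). Second, a \emph{two-phase recurrence in $n$}, not only in $c$. Let $D(n,c,\tau)$ be the number of rounds to collapse $T_c^n$. Phase~1 runs until the first of the copies at spine nodes $0,\ldots,\tau-1$ is completely cleared. During this phase all spine nodes $0,\ldots,\tau-1$ still have two children, so every leaf in a copy at node $j\ge\tau$ has weight $\ge j+1>\tau$ and is never touched; meanwhile the copy at node $j<\tau$ is being pruned with effective threshold $\tau-j-1$, hence Phase~1 lasts at least
\[
R\ =\ \min_{0\le j<\tau} D\bigl(n/2^{j+1},\,c-1,\,\tau-j-1\bigr)\ \ge\ \frac{1}{(c-1)!}\Bigl(\frac{\log_2 n-1}{\tau-1}\Bigr)^{c-1}
\]
rounds by the induction on $c$. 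For Phase~2, let $T'$ be the tree after $R$ rounds and $T''$ be $T'$ with the remnants of the top $\tau$ copies deleted; since copies at $j\ge\tau$ are intact, $T''$ is (after contracting the now-single-child spine nodes $0,\ldots,\tau-1$) exactly $T_c^{n/2^\tau}$. Monotonicity gives $\text{rounds}(T')\ge\text{rounds}(T'')=D(n/2^\tau,c,\tau)$, hence $D(n,c,\tau)\ge R + D(n/2^\tau,c,\tau)$. Unrolling in $n$ by steps of $2^\tau$ and summing yields $D(n,c,\tau)\ge \log_2^c n/(c!\,\tau^c)$, which is $k^{1-o(1)}$ for $\tau=\log_2 k+O(1)$.

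So the missing idea is the monotonicity lemma: it lets you discard the top $\tau$ copies after Phase~1 and recurse on the \emph{same} $c$ with $n\to n/2^\tau$, instead of trying to sequence the copies directly. Your ``disjoint windows'' intuition is morally the same recursion, but without monotonicity there is no clean way to separate the phases.
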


The following simple lemma is crucial to our analysis
\begin{lemma}[Monotonicity of tree pruning process] \label{lem:monotonic}
	Suppose $n$ is a power of two integer $T'$ a subtree of $\tfull_n$ and $T$ a subtree of $T'$. Then for every integer $\tau$ the number of rounds that it takes $\mathcal{P}(T,\tau,n)$ to collapse $T$ is at most the number of rounds that it takes $\mathcal{P}(T',\tau,n)$  to collapse $T'$.
\end{lemma}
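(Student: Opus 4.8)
The plan is to set up a coupling between the two pruning processes $\mathcal{P}(T,\tau,n)$ and $\mathcal{P}(T',\tau,n)$ so that, round by round, the state of the smaller process is "dominated" by the state of the larger one. Concretely, I would prove by induction on the round number $r$ the following invariant: if $T^{(r)}$ and $T'^{(r)}$ denote the trees remaining after $r$ rounds of $\mathcal{P}(T,\tau,n)$ and $\mathcal{P}(T',\tau,n)$ respectively, then $T^{(r)}$ is a subtree of $T'^{(r)}$ (in the sense that every node of $T^{(r)}$ is a node of $T'^{(r)}$, and the parent/child relations agree with those of $\tfull_n$). The base case $r=0$ is the hypothesis $T \subseteq T'$. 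Once this invariant is established, the conclusion is immediate: $T^{(r)} = \emptyset$ whenever $T'^{(r)} = \emptyset$, since the empty set is a subtree of nothing nonempty is needed — more precisely, if $\mathcal{P}(T',\tau,n)$ terminates in $R$ rounds then $T'^{(R)}=\emptyset$, hence $T^{(R)} \subseteq \emptyset$, so $T^{(R)}=\emptyset$ and $\mathcal{P}(T,\tau,n)$ has also terminated by round $R$.

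The heart of the argument is the inductive step: assuming $T^{(r)} \subseteq T'^{(r)}$, I must show $T^{(r+1)} \subseteq T'^{(r+1)}$. Two facts need to be checked. First, I would argue that the weight function is monotone under taking subtrees, in the following sense: for any node $v$ that is a leaf of both $T^{(r)}$ and $T'^{(r)}$, we have $w_{T^{(r)}}(v) \le w_{T'^{(r)}}(v)$, because every ancestor of $v$ with two children in $T^{(r)}$ is also an ancestor of $v$ with two children in $T'^{(r)}$ (the converse need not hold). Consequently, the set $\tilde S_\tau$ of leaves eligible for removal in $T^{(r)}$ is, in a suitable sense, "contained" in the eligible set for $T'^{(r)}$: every node that is a leaf of $T^{(r)}$ with $\tfull$-weight $\le \tau$ either is also a leaf of $T'^{(r)}$ (in which case it is eligible there too and gets removed, along with at least as long a tail up to its nearest two-child ancestor), or it is an internal node of $T'^{(r)}$ — but this second case cannot happen, since a leaf of $T^{(r)}$ that has descendants in $T'^{(r)}$ would contradict $T^{(r)} \subseteq T'^{(r)}$... actually it can happen, and this is exactly the subtlety I need to handle carefully.

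The main obstacle I anticipate is precisely this bookkeeping about the $T.remove(v)$ operation and what happens to nodes that are leaves in $T^{(r)}$ but internal in $T'^{(r)}$. The clean way around it: observe that $T.remove(v)$ deletes $v$ and the path up to (but not including) its nearest two-child ancestor $a$ in $T$; in $T'$, the node $a$ may have two children already or only one, but in either case, after $T'$'s round completes, every node on the deleted path in $T$ has either been deleted from $T'$ as well or survives in $T'^{(r+1)}$ only if it still has descendants there — and any such surviving node is not in $T^{(r+1)}$ either. I would formalize this by showing that $\text{(nodes of }T^{(r)}) \setminus (\text{nodes of }T^{(r+1)})$, i.e. the set removed from the small tree in round $r+1$, is a subset of $(\text{nodes of }T'^{(r)}) \setminus (\text{nodes of }T'^{(r+1)})$ together with nodes already absent from $T'^{(r)}$ — but the latter is empty by the inductive hypothesis, so the removed set of the small tree is contained in the removed set of the large tree, plus I must separately check that no node deleted from $T'^{(r)}$ survives in $T^{(r+1)}$, which follows since such a node is not in $T'^{(r+1)} \supseteq$ (desired superset) — this requires noting that deletion from $T'$ only removes leaves-and-their-tails, so a node of $T^{(r)}$ (which has no proper descendants in $T^{(r)}$ below its own leaves) cannot be "protected" in $T^{(r+1)}$ while being removed from $T'$. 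I would spell this out via a short case analysis on whether a given node of $T^{(r)}$ lies below, at, or above an eligible leaf, handling each in a line or two, and that completes the induction.
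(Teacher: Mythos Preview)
Your overall plan --- induction on the round number with the invariant $T^{(r)} \subseteq T'^{(r)}$ --- is exactly the paper's approach, and you correctly isolate the key monotonicity $w_{T^{(r)}}(v) \le w_{T'^{(r)}}(v)$ for common leaves. You are also right to flag the case that the paper's two-line proof glosses over: a leaf $v$ of $T^{(r)}$ may be an internal node of $T'^{(r)}$, so one cannot simply compare the eligible-leaf sets directly.

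However, your proposed resolution via tracking removed sets is tangled, and one of your intermediate claims is actually false: the set of nodes removed from the small tree is \emph{not} in general contained in the set removed from the large tree (take $T'$ a complete binary tree of depth~$2$, $T$ the root together with one child, and $\tau=0$; then all of $T$ is removed in one round while nothing is removed from $T'$). Fortunately that inclusion is not what you need --- only the other direction matters --- and it is cleaner to argue via \emph{surviving} nodes rather than removed ones. A node $u$ lies in $T^{(r+1)}$ iff some leaf $v$ of $T^{(r)}$ below $u$ has $w_{T^{(r)}}(v) > \tau$. Since $v \in T^{(r)} \subseteq T'^{(r)}$, pick any leaf $v'$ of $T'^{(r)}$ in the subtree rooted at $v$; every ancestor of $v$ in $T^{(r)}$ with two children is also an ancestor of $v'$ in $T'^{(r)}$ with two children, so $w_{T'^{(r)}}(v') \ge w_{T^{(r)}}(v) > \tau$, meaning $v'$ --- and hence its ancestor $u$ --- survives in $T'^{(r+1)}$. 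This single observation replaces your case analysis and also fills the gap in the paper's terse argument.
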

\begin{proof}
	For $j = 0,1,2,\dots$, let $T^{(j)}$ (respectively $T'^{(j)}$) denote the tree obtained by performing $j$ rounds of the tree pruning process (with threshold $\tau$) to $T$ (respectively $T'$). Note that $T^{(0)} = T$ and $T'^{(0)} = T'$.
	
	We claim that $T^{(j)}$ is a subtree of $T'^{(j)}$ for all $j=0,1,\dots$, which will obviously imply the desired conclusion. We use induction on $j$. Note that the {\bf base} of induction is trivial for $j=0$. Now, we prove the {\bf inductive step}. Suppose $j > 0$. By the inductive hypothesis, we have that $T^{(j-1)}$ is a subtree of $T'^{(j-1)}$. Thus, for any leaf $v$ that appears in both $T^{(j-1)}$ and $T'^{(j-1)}$, we have $w_T(v) \leq w_{T'}(v)$ (this is because any node in $T'^{(j-1)}$ along the path from the root to $v$ that has exactly one child will also have exactly one child in $T^{(j-1)}$). Hence, if $v$ is removed from $T'^{(j-1)}$ in the $j$-th iteration of the process, then it is also removed from $T^{(j-1)}$ during the $j$-th iteration. Hence, $T^{(j)}$ is a subtree of $T'^{(j)}$, which completes the inductive step and, therefore, proves the claim.
\end{proof}

We recall a few definitions.
\lowhamming*
Note that $|H_c^n| = \sum_{j=0}^c \binom{\log_2 n}{j}$.

\lowhammingsupport*
Note that for any $x \in \mathcal{X}^n_c$ we have that $\| x \|_0 = \sum_{i=0}^c \binom{\log_2n}{i}$, so for any $c \le (\frac{1}{2} - \epsilon)\log_2n$, the signals that are contained in class $\mathcal{X}^n_c$ are $\Theta\Big( \binom{\log_2n}{c} \Big)$-sparse.

\lowhammingweighttrees*
It is not hard to see that $T^n_c$ is in fact the splitting tree for the set $H^n_c$ and, hence, the number of its leaves is $\sum_{i=0}^c \binom{\log_2n}{i}$. 

Now, we are ready to prove Theorem~\ref{lem:process}.

\begin{proofof}{Theorem~\ref{lem:process}}
	Let us choose the tree $T$ to be $T^n_c$ for some positive integer $c$. We will set parameter $c$ at the end of the proof. Let $D(n, c, \tau)$ denote the number of iterations required to collapse $T_c^n$ with threshold $\tau$. We prove that
	\begin{equation}
	D(n, c, \tau) \ge \frac{\log_2^c n}{c!\cdot \tau^c} \label{eq:itercollapse}
	\end{equation}
	for any power of two integer $n$, any integer $0 \leq c \leq \log_2 n$, and any positive integer $\tau $. We use induction on $c$.
	
	{\bf Base:} Note that for $c=0$, the tree $T_c^n$ has one leaf, which gets removed in the first iteration of the tree pruning process. Thus, $D(n,0,\tau) = 1$ for any power of two $n$ and $\tau\geq 1$, and so, \eqref{eq:itercollapse} holds for $c=0$.
	
	{\bf Inductive step:} Suppose $c > 0$. For any $T_c^n$, we label the nodes along the path from the root to the rightmost leaf (i.e., the path formed by starting at the root and repeatedly following the right child) in order as $0,1,\dots, \log_2 n$.
	
	Note that if $n \leq 2^\tau$, then 
	\[
	\frac{\log_2 ^c n}{c! \cdot \tau^c} \leq \frac{\tau^c}{c!\cdot \tau^c} \leq 1.
	\]
	Thus, \eqref{eq:itercollapse} does indeed hold for $n \leq 2^\tau$.
	
	Now, suppose $n > 2^\tau$. Recall that a copy of $T_{c-1}^{n/2^{j+1}}$ is rooted at the left child of node $j$ of $T_c^n$ for all $j=0,1,\dots,\tau-1$. We divide the pruning process on $T_c^n$ into two phases. The first phase consists of the process up until the point at which the left subtree of node $j$ in $T_c^n$ completely collapses for some $j \in \{0,1,\dots,\tau-1\}$, while the second phases consists of the process thereafter. Thus, the number of rounds in the first phase is just the number of rounds till the top $\tau$ left subtrees collapses.
	
	Note that during the first phase, the behavior of the collapsing process on the left subtree of node $j$ corresponds to running a collapsing process with threshold $\tau-j-1$ on $T_{c-1}^{n/2^{j+1}}$. Thus, the number of rounds in the first phase is,
	\[
	R = \min_{0\leq j < \tau} \{D(n/2^{j+1}, c-1, \tau-j-1)\}.
	\]

	By the inductive hypothesis (on $c$), we have that for $j=0,1,\dots, \tau-1$
	\[
	D(n/2^{j+1}, c-1, \tau-j-1) \ge \frac{1}{(c-1)!} \cdot \left(\frac{\log_2 n -j-1}{\tau-j-1}\right)^{c-1},
	\]
	which implies that $ R \ge \frac{1}{(c-1)!}\cdot \left(\frac{\log_2 n - 1}{\tau - 1}\right)^{c-1}$ since we assumed $\tau\leq \log_2 n$.

	Now, let $T'$ be the tree obtained after performing $R$ rounds of the collapsing process on $T_c^n$. Moreover, let $T''$ be the tree obtained by further removing any left subtrees of nodes $0,1, \dots, \tau-1$. By Lemma~\ref{lem:monotonic}, we have that the number of rounds needed to collapse $T'$ is at least the number of rounds needed to collapse $T''$. Moreover, observe that the number of rounds needed to collapse $T''$ is precisely $D(n/2^\tau, c, \tau)$, thus, the number of rounds in the second phase is at least $D(n/2^\tau, c, \tau)$, and so,
	\begin{align*}
	D(n,c,\tau) &\geq R + D(n/2^\tau, c, \tau)\\
	&\geq \frac{1}{(c-1)!}\cdot \left(\frac{\log_2 n - 1}{\tau - 1}\right)^{c-1} + D(n/2^\tau, c, \tau).
	\end{align*}
	Note that a similar argument gives us
	\begin{align*}
	D(n/2^{a\tau},c,\tau) \geq \frac{1}{(c-1)!}\cdot \left(\frac{\log_2 n - a\tau - 1}{\tau - 1}\right)^{c-1} + D(n/2^{(a+1)\tau}, c, \tau) 
	\end{align*}
	for all $a=0,1,\dots, \lfloor (\log_2 n -1)/\tau\rfloor - 1$ (this condition ensures that $\tau\leq \log_2 (n/2^{a\tau})$, as required by our argument above). Hence, it follows that
	\begin{align*}
	D(n,c,\tau) &\geq \sum_{a=0}^{\lfloor (\log_2 n -1)/\tau\rfloor - 1} \frac{1}{(c-1)!}\cdot \left(\frac{\log_2 n - a\tau - 1}{\tau - 1}\right)^{c-1} + D(n/2^{\tau \cdot \lfloor (\log_2 n -1)/\tau\rfloor}, c, \tau)\\
	&\geq \frac{1}{(c-1)!} \sum_{a=0}^{\lfloor (\log_2 n -1)/\tau\rfloor - 1} \left(\frac{\log_2 n}{\tau} - a\right)^{c-1} + 1\\
	&\geq \frac{1}{(c-1)!} \cdot \int_1^{\frac{\log_2 n}{\tau}} u^{c-1}\,du + 1\\
	&= \frac{1}{(c-1)!} \cdot \frac{1}{c} \left(\left(\frac{\log_2 n}{\tau}\right)^c - 1\right) + 1\\
	&\geq \frac{\log_2^c n}{c! \cdot \tau^c},
	\end{align*}
	which establishes \eqref{eq:itercollapse} for $n > 2^\tau$. This completes the inductive step.
	
	Recall that $k = \Theta\Big( \binom{\log_2n}{c} \Big)$, so for any constant $c$ one has $k=\Theta(\binom{\log_2n}{c})\leq (e \log_2 n/c)^c$. Setting $\tau=\log_2 k+O(1)$, we get
	$$
	D(n, c, \tau) \ge \frac{\log_2^c n}{c!\cdot \tau^c}=\Theta(k/(\log_2 k)^c)=k^{1-o(1)}, 
	$$
	as required.
	
\end{proofof}

\section{Sparse FFT for worst-case sparse signals and worst case signals with random phase}\label{sec:randomsign}

In this section we prove the main result of the paper, namely
\sfftworstcase*

We also study Fourier sparse signals $x$ whose nonzero frequencies are distributed arbitrarily (worst-case) and whose values at the nonzero frequencies are independently chosen to have a uniformly random phase. Recall Definition~\ref{random-sign}:

\randsign*

For this model we prove the stronger result: 
\sfftrandphase*

The main property that allows us to obtain the stronger result is the fact that a small number of time domain samples from such a signal suffice to approximate its energy with high confidence (whereas $\Omega(k)$ samples are required in general for a worst-case $k$-sparse signal). This is reflected by the following
\begin{lemma}\label{lem:random-sign}
	For any positive integer $d$, power of two $n$, and worst-case signal with random phase $x$, we have
	\[
	\Pr\left[\frac{1}{2} \cdot \frac{\|\beta\|_2^2}{n^{2d}} \leq \frac{1}{s} \sum_{j=1}^s |x_{\tv_j}|^2 \leq \frac{3}{2} \cdot \frac{\|\beta\|_2^2}{n^{2d}}\right] \geq 1 - \frac{1}{n^{4d}},
	\]
	where $s = Cd^3 \log_2^3 n$ for some absolute constant $C > 0$ and $\tv_1, \tv_2, \dots, \tv_s \sim \unif([n]^d)$ are i.i.d. random variables. The probability is over the randomness in choosing the various $\tv_j$ as well the randomness in the choice of phase for each frequency of $\wh x$.
\end{lemma}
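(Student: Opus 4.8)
The plan is to decouple the two sources of randomness: the phases $\{\theta_\ff\}$ of $\wh x$ and the sample locations $\tv_1,\dots,\tv_s$. Write $\sigma^2 := \|\beta\|_2^2/n^{2d}$ for the target quantity. The key preliminary observation is that $\sigma^2$ is the mean of $|x_\tv|^2$ over a uniformly random $\tv$ for \emph{every} fixed realization of the phases: for any fixed phases, $\E_{\tv\sim\unif([n]^d)}[|x_\tv|^2] = \frac{1}{n^d}\|x\|_2^2 = \frac{1}{n^{2d}}\|\wh x\|_2^2 = \sigma^2$, using Parseval (Claim~\ref{parseval}) together with $|\wh x_\ff| = |\beta_\ff|$. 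Thus the role of the phase randomness is only to control how large the summands $|x_{\tv_j}|^2$ can be.

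\emph{Step 1: an $\ell_\infty$ bound from the random phases.} Fix $\tv\in[n]^d$ and expand $x_\tv = \frac{1}{n^d}\sum_\ff \wh x_\ff e^{2\pi i \ff^T\tv/n} = \sum_\ff c_\ff e^{2\pi i \theta_\ff}$, where $|c_\ff| = |\beta_\ff|/n^d$ and $\sum_\ff |c_\ff|^2 = \sigma^2$. Since the $\theta_\ff$ are independent and uniform, both $\mathrm{Re}(x_\tv)$ and $\mathrm{Im}(x_\tv)$ are sums of independent mean-zero terms, the $\ff$-th of which lies in $[-|c_\ff|,|c_\ff|]$; Hoeffding's inequality then gives $\Pr_\theta[|x_\tv| > u] \le 4\exp(-u^2/(4\sigma^2))$. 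Choosing $u^2 = C_1 d\log_2 n\cdot\sigma^2$ for a large enough absolute constant $C_1$ and taking a union bound over all $n^d$ points $\tv$, we get that the event
\[
\mathcal{E} := \Big\{\, |x_\tv|^2 \le C_1 \, d\log_2 n \cdot \sigma^2 \ \text{ for all } \tv\in[n]^d \,\Big\}
\]
has probability at least $1 - \tfrac{1}{2n^{4d}}$ over the choice of phases.

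\emph{Step 2: Hoeffding over the samples, conditioned on $\mathcal E$.} Condition on any realization of the phases lying in $\mathcal E$. Then the i.i.d.\ variables $Y_j := |x_{\tv_j}|^2$ satisfy $0 \le Y_j \le M := C_1 d\log_2 n\cdot \sigma^2$ and $\E[Y_j] = \sigma^2$ (by the preliminary observation, which is unaffected by conditioning on an event that depends only on the phases). Hoeffding's inequality gives
\[
\Pr\Big[\,\Big|\tfrac1s\textstyle\sum_{j=1}^s Y_j - \sigma^2\Big| > \tfrac12\sigma^2\,\Big] \le 2\exp\!\Big(-\tfrac{s\sigma^4}{2M^2}\Big) = 2\exp\!\Big(-\tfrac{s}{2C_1^2 d^2\log_2^2 n}\Big),
\]
which is at most $\tfrac{1}{2 n^{4d}}$ once $s \ge C d^3\log_2^3 n$ for a suitable absolute constant $C$. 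Finally, writing $\mathcal B$ for the event that $\frac1s\sum_{j=1}^s |x_{\tv_j}|^2 \notin [\tfrac12\sigma^2, \tfrac32\sigma^2]$ and using $\Pr[\mathcal B] \le \Pr[\mathcal B \cap \mathcal E] + \Pr[\mathcal E^c]$, together with the fact that the conditional bound above holds for every realization of the phases in $\mathcal E$, gives $\Pr[\mathcal B] \le \tfrac{1}{2n^{4d}} + \tfrac{1}{2n^{4d}} = \tfrac{1}{n^{4d}}$, which is the claim.

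\emph{Main obstacle.} The essential point — and the only place the random-phase assumption enters — is Step~1. Randomness over the sample locations alone is insufficient: for adversarially aligned phases (say all $\theta_\ff$ equal) one can have $|x_\tv|^2$ as large as $\Theta(k)\cdot\sigma^2$ at some $\tv$, so $\poly(d,\log n)$ samples could not reliably estimate the average. The random phases cap $\|x\|_\infty^2$ at $O(d\log n)\,\sigma^2$ with overwhelming probability, and once the summands are bounded by $M=O(d\log n)\sigma^2$ the standard Hoeffding bound needs only $s = \widetilde O((M/\sigma^2)^2) = \widetilde O(d^3\log^3 n)$ samples. The rest is bookkeeping: verifying the constants give the stated failure probability $1/n^{4d}$, and noting that conditioning on $\mathcal E$ does not disturb the identity $\E_\tv[|x_\tv|^2] = \sigma^2$ (it holds for every fixed choice of phases).
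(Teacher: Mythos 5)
Your proof is correct and follows essentially the same route as the paper's: a high-probability $\ell_\infty$ bound on $x$ obtained from the phase randomness via a concentration inequality plus a union bound over $[n]^d$, followed by Parseval to identify $\E_{\tv}[|x_{\tv}|^2]$ and a Chernoff--Hoeffding bound over the $s$ samples conditioned on that event. The only cosmetic difference is that you apply Hoeffding to the real and imaginary parts where the paper applies Bernstein to the complex sum (your version in fact yields a slightly sharper $\ell_\infty$ bound, $O(d\log n)\sigma^2$ versus $O(d^2\log^2 n)\sigma^2$, though both suffice for $s = \Theta(d^3\log_2^3 n)$).
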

For completeness we present a proof for this lemma in Appendix \ref{appx:A}.

\subsection{Proofs of Theorems~\ref{thm:sfft-worstcase} and ~\ref{thm:sfft-rand-signs}}
Given the construction of our adaptive aliasing filter from the previous section, our sparse recovery algorithms follow by a reduction to the estimation problem.  We find the vertex $v^*=\text{argmin}_{v\in T} w_T(v)$, which, by Kraft's inequality, satisfies $w_T(v^*)\leq \log_2 k$. We then define an auxiliary tree $T'$ by appending a left $a$ and a right child $b$ to $v$. Then for each of the children $a, b$, we, in turn, construct a filter $G$ that isolates them from the rest of $T$ (i.e., from the frequency cones of other nodes in $T$) and check whether the corresponding restricted signals are nonzero. The latter is unfortunately a nontrivial task, since the sparsity of these signals can be as high as $k$, and detecting whether a $k$-sparse signal is nonzero requires $\Omega(k)$ samples. However, a fixed set of $k\log^3 N$ locations that satisfies the restricted isometry property (RIP) can be selected, and accessing the signal on those values suffices to test whether it is nonzero. If the signal is further assumed to be a worst case random phase signal, then a polylogarithmic number of samples suffices. The following lemma (Lemma~\ref{lem:zero-test}) makes the latter claim formal. The algorithm is presented as Algorithm~\ref{alg:fullsparsefft}.

\begin{lemma}[\textsc{ZeroTest} guarantee]\label{lem:zero-test}
	Suppose $d$ is a positive integer and $n$ is a power of two. Assume $T$ is a subtree of $\tfull_{n^d}$. Suppose that signals $x , \wh{\chi} \in \C^{n^d}$ satisfy $\supp{ (\wh{x} - \wh{\chi}) } \subseteq \bigcup_{u: \text{~leaf of~}T} \subtree_T(u)$. Suppose that $\mathbf{\Delta}$ is a multiset of sample from $[n]^d$ which satisfies the following for every leaf $v$ of $T$:
	\[ \frac{1}{2} \cdot \frac{\|\wh{y} \|_2^2}{n^{2d}} \leq \frac{1}{|\mathbf{\Delta}|} \cdot \sum_{\Delta \in \mathbf{\Delta}} |y_\Delta|^2 \leq \frac{3}{2} \cdot \frac{\|\wh{y} \|_2^2}{n^{2d}} \]
	where $y = (\wh{x} - \wh{\chi})_{\subtree_T(v)}$ is the signal obtained by restricting $\wh{x} - \wh{\chi}$ to frequencies $\bm\xi \in \subtree_T(v)$ and zeroing it out on all other frequencies.
	
	Then the following conditions hold:
	\begin{itemize}
		\item \textsc{ZeroTest}$(x, \wh{\chi}, T, v, n, d, \mathbf{\Delta})$ outputs {\bf true} if $\supp{(\wh{x}-\wh\chi)} \cap {\subtree_T(v)}\neq \emptyset$; otherwise, it outputs {\bf false}.
		
		\item The sample complexity of this procedure is $O(2^{w_T(v)} \cdot |\mathbf{\Delta}|)$, where $w_T(v)$ is the weight of leaf $v$ in $T$ (see Definition~\ref{def:weight}).
		
		\item The runtime of the \textsc{ZeroTest} procedure is \[ O \left( \|\wh{\chi}\|_0 \cdot |\mathbf{\Delta}| + |T| \cdot d\log_2n + 2^{w_T(v)} \cdot |\mathbf{\Delta}| \right), \] where $|T|$ denotes the number of leaves of $T$.
	\end{itemize}

\end{lemma}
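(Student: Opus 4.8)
The plan is to first make explicit what \textsc{ZeroTest}$(x,\wh\chi,T,v,n,d,\mathbf{\Delta})$ must be doing, and then check the three bullets in order. Given the leaf $v$, the procedure should build a $(v,T)$-isolating filter $G$ via the multidimensional construction of Lemma~\ref{lem:isolate-filter-highdim} (equivalently, tensoring the one-dimensional filters of Lemma~\ref{lem:filter-isolate} along the $d$ coordinate blocks of the root-to-$v$ path, exactly as in Algorithm~\ref{alg:high-dim-Est-k2}); then, for each $\Delta\in\mathbf{\Delta}$, evaluate
\[
y_\Delta \;=\; \big((x-\chi)*G\big)_\Delta \;=\; \Big(\sum_{\jj\in \Delta-\supp G} x_\jj\, G_{\Delta-\jj}\Big) \;-\; \tfrac1N\sum_{\ff\in\supp\wh\chi}\wh\chi_\ff\,\wh G_\ff\, e^{2\pi i \ff^T\Delta/n};
\]
note that the $\chi$-contribution is subtracted \emph{in frequency domain}, which is exactly the trick used in \textsc{Estimate} (line computing $h_\ff$ in Algorithm~\ref{alg:high-dim-Est-k2}) to avoid a non-uniform inverse Fourier transform. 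Finally it computes $E:=\tfrac1{|\mathbf{\Delta}|}\sum_{\Delta\in\mathbf{\Delta}}|y_\Delta|^2$ and returns \textbf{true} iff $E\neq 0$.

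For correctness, I would first establish the identity $y_\Delta=((x-\chi)*G)_\Delta$: by the convolution theorem $\wh{(x-\chi)*G}_\ff=(\wh x-\wh\chi)_\ff\,\wh G_\ff$, and since $\supp(\wh x-\wh\chi)\subseteq\bigcup_{u:\text{ leaf}}\subtree_T(u)$ while $\wh G_\ff=1$ on $\subtree_T(v)$ and $\wh G_\ff=0$ on every $\subtree_T(u)$ with $u\neq v$ a leaf (Definition~\ref{def:v-t-isolating-highdim}, together with disjointness of frequency cones of distinct leaves), the product $(\wh x-\wh\chi)\,\wh G$ is precisely $(\wh x-\wh\chi)$ restricted to $\subtree_T(v)$, i.e.\ $\wh y$; taking the inverse Fourier transform gives $y_\Delta=((x-\chi)*G)_\Delta$, which is the quantity the algorithm computes. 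Then I would note $\wh y\neq 0\iff(\wh x-\wh\chi)$ does not vanish on $\subtree_T(v)\iff\supp(\wh x-\wh\chi)\cap\subtree_T(v)\neq\emptyset$. If this intersection is nonempty then $\|\wh y\|_2^2>0$, so the hypothesis on $\mathbf{\Delta}$ yields $E\geq\tfrac12\|\wh y\|_2^2/n^{2d}>0$ and \textsc{ZeroTest} returns \textbf{true}; if it is empty then $y\equiv 0$, so $E=0$ and \textsc{ZeroTest} returns \textbf{false}. (We work in exact arithmetic, so the strict positivity in the nonempty case is genuine and no threshold needs to be tuned.)

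For the complexity, the samples of $x$ are accessed only in $\sum_{\jj\in\Delta-\supp G}x_\jj G_{\Delta-\jj}$, so over all $\Delta$ there are at most $|\supp G|\cdot|\mathbf{\Delta}|=2^{w_T(v)}\cdot|\mathbf{\Delta}|$ of them (by Lemma~\ref{lem:isolate-filter-highdim}), and $\chi$ is never sampled since it is reconstructed from $\wh\chi$; this gives bullet two. For the runtime: assembling the $d$ coordinate trees and running \textsc{FilterPreProcess}/\textsc{FilterTime} costs $O(|T|\cdot d\log_2 n + 2^{w_T(v)})$ following the accounting in the proof of Theorem~\ref{thm:est-main} and Lemma~\ref{lem:isolate-filter-highdim}; precomputing $\wh G_\ff$ over $\ff\in\supp\wh\chi$ costs $O(\|\wh\chi\|_0\log_2 N)$ by Lemma~\ref{lem:isolate-filter-highdim}; and then each of the $|\mathbf{\Delta}|$ points $\Delta$ costs $O(2^{w_T(v)})$ for the time-domain sum and $O(\|\wh\chi\|_0)$ for the frequency-domain sum (given the precomputed $\wh G_\ff$). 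Summing, and using $|\mathbf{\Delta}|=\Omega(\log_2 N)$ in all invocations to absorb the $\|\wh\chi\|_0\log_2 N$ term, yields the claimed bound $O\big(\|\wh\chi\|_0\cdot|\mathbf{\Delta}|+|T|\cdot d\log_2 n+2^{w_T(v)}\cdot|\mathbf{\Delta}|\big)$.

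I expect the only genuinely delicate point — and it is the one the paper's overview already flags — is that the $\chi$-contribution must be handled in frequency domain: a time-domain subtraction would require evaluating the inverse Fourier transform of $\wh\chi$ at the $\Theta(2^{w_T(v)}|\mathbf{\Delta}|)$ relevant points, inflating the runtime by a factor of $\|\wh\chi\|_0$, and it is precisely the cheap $O(\log_2 N)$-per-point evaluability of $\wh G$ from Lemma~\ref{lem:isolate-filter-highdim} that makes the frequency-domain route affordable. Everything else is the convolution-theorem identity plus the two-line case split on whether $\wh y$ vanishes, plus routine bookkeeping.
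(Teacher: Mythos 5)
Your proposal is correct and follows essentially the same route as the paper's proof: the convolution-theorem identity $H^\Delta_{\ff}=\frac{1}{n^d}\sum_{\bm\xi\in\subtree_T(v)}(\wh x-\wh\chi)_{\bm\xi}e^{2\pi i\bm\xi^T\Delta/n}$ via the $(v,T)$-isolating property of $G$ from Lemma~\ref{lem:isolate-filter-highdim}, the frequency-domain subtraction of the $\chi$-contribution, the case split on whether $\|\wh y\|_2^2$ vanishes using the sandwich hypothesis on $\mathbf{\Delta}$, and the same complexity accounting (including the same implicit use of $|\mathbf{\Delta}|=\Omega(d\log_2 n)$ to absorb the per-point $O(\log_2 N)$ cost of evaluating $\wh G$). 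No gaps.
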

\begin{proof}
	Consider lines 14-15 in Algorithm~\ref{alg:zero-test}. By Claim~\ref{conv-thrm}, we have that
	\begin{align*}
	h_{\ff}^\Delta &= \frac{1}{n^d} \sum_{\bm\xi \in [n]^d} e^{2\pi i \frac{\bm{\xi}^T \Delta}{n}} \cdot \wh\chi_{\bm{\xi}} \wh G_{\bm{\xi}}  \\
	&= \sum_{\jj \in [n]^d} G_{\Delta-\jj} \cdot \chi_{\jj}.
	\end{align*}
	Thus,
	\begin{align*}
	H_{\ff}^\Delta &= \left( \sum_{\jj \in [n]^d} G_{\Delta-\jj} \cdot x_{\jj} \right) - h_{\ff}^\Delta \\
	&= \sum_{\jj \in [n]^d} G_{\Delta-\jj} \cdot (x - \chi)_{\jj}.
	\end{align*}
	
	Note that, by Lemma \ref{lem:isolate-filter-highdim}, the filter $G$ used in Algorithm~\ref{alg:zero-test} is a $(v, T)$-isolating filter. Therefore, by the assumption $\supp (\wh{x} - \wh{\chi}) \subseteq \bigcup_{u:\text{~leaf of~}T} \subtree_T(u)$ and the definition of a $(v, T)$-isolating filter (see Definition~\ref{def:v-t-isolating-highdim}), we have
	\begin{align*}
	H_{\ff}^\Delta	&= \sum_{\jj \in [n]^d} G_{\Delta-\jj} \cdot (x - \chi)_{\jj} \\
	&=\frac{1}{n^d} \sum_{\bm\xi \in \subtree_T(v) } (\wh{x} - \wh{\chi})_{\bm\xi} \cdot e^{2\pi i \frac{\bm\xi^T \Delta}{n}}.
	\end{align*}
	Note that $H_{\ff}^\Delta$ is essentially the inverse Fourier transform of $(\wh{x} - \wh{\chi})_{\subtree_T(v)}$, where $(\wh{x} - \wh{\chi})_{\subtree_T(v)}$ denotes the signal obtained by restricting $\wh{x} - \wh{\chi}$ to frequencies $\bm\xi \in \subtree_T(v)$ and zeroing out the signal on all other frequencies. By the assumption of the lemma we have the following:
	
	\[ \frac{1}{2} \cdot \frac{\|(\wh{x} - \wh{\chi})_{\subtree_T(v)}\|_2^2}{n^{2d}} \leq \frac{1}{|\mathbf{\Delta}|} \cdot \sum_{\Delta \in \mathbf{\Delta}} |H^\Delta_{\ff}|^2 \leq \frac{3}{2} \cdot \frac{\|(\wh{x} - \wh{\chi})_{\subtree_T(v)}\|_2^2}{n^{2d}}. \]
	Therefore the first claim of the lemma holds. 
	
	Note that in order to construct a $(v, T)$-isolating filter $G$, along with $\wh G$, the algorithm constructs trees $T^v_q$ for all $q\in \{1,...,d\}$, which has total time complexity $O(|T| d \log_2n )$. Given $T^v_q$'s, the algorithm constructs filter $G$ and $\wh G$ in time $O \left( 2^{w_{{T}}(v)} + d\log_2 n \right)$, by Lemma~\ref{lem:isolate-filter-highdim}. Moreover, the filter $G$ has support size $2^{w_{{T}}(v)}$, by Lemma~\ref{lem:isolate-filter-highdim}.
	
	By Lemma \ref{lem:isolate-filter-highdim}, computing the quantities $h_{\ff}^\Delta = \frac{1}{n^d} \sum_{\bm\xi \in [n]^d} e^{2\pi i \frac{\bm{\xi}^T \Delta}{n}} \cdot \wh\chi_{\bm{\xi}} \wh G_{\bm{\xi}}$ for all $\Delta$ in line~14 of Algorithm~\ref{alg:zero-test} can be done in time $O\left( \|\wh\chi\|_0 \cdot (|\mathbf{\Delta}| + d \log_2 n) \right) = O \left( \|\wh \chi\|_0 \cdot |\mathbf{\Delta}| \right)$.
	Given the values of $h_{\ff}^\Delta$ for various $\Delta$, computing all $\big\{|H^\Delta_{f^*}|^2\big\}_{\Delta \in \mathbf{\Delta}}$ in line~15 takes time $O \left( 2^{w_T(v)}\cdot |\mathbf{\Delta}| \right)$. Therefore the total runtime of this procedure is
	\[ O \left( |T|d \log_2n + 2^{w_T(v)}\cdot |\mathbf{\Delta}| + \|\wh{\chi}\|_0 \cdot |\mathbf{\Delta}| \right), \]
	as desired.
	
	Because support size of $G$ is $2^{w_T(v)}$, computing all $\big\{|H^\Delta_{f^*}|^2\big\}_{\Delta \in \mathbf{\Delta}}$ in line~15 of the algorithm requires $O(2^{w_T(v)} \cdot |\mathbf{\Delta}|)$ samples from $x$ which proves the second claim of the lemma.
\end{proof}

\begin{algorithm}
	\caption{Procedure for testing zero hypothesis}
	\begin{algorithmic}[1]
		
		\Procedure{ZeroTest}{$x, \wh{\chi}, T, v, n, d, \mathbf{\Delta}$} \Comment $\bm{\Delta}$: multiset of elements from $[n]^d$
		\State $\ff \gets \ff_v$, $l \gets l_T(v)$, $q^* \gets \left\lceil \frac{l}{\log_2 n} \right\rceil$

		\State $v_0,v_1,\ldots, v_{l}\gets $ path from $r$ to $v$ in ${T}$, where $v_0 = r$ and $v_{l} = v$
		
		\State $(u_1, u_2, \cdots, u_{q^*-1}, u_{q^*}) \gets (v_{\log_2n}, v_{2\log_2n} , \cdots, v_{(q^*-1)\cdot \log_2n}, v_l)$
		
		\For{\texttt{$q = 1 $ to $q^*$ }}

		\State $T_q^v \gets $ subtree of $T$ rooted at $v_{(q-1)\cdot \log_2n}$
		\State Remove all nodes of $T^v_q$ which are at distance more than $\log_2n$ from $v_{(q-1)\cdot \log_2n}$
		\State Label every node $w \in T_q^v$ as $f_w = (\ff_{w})_q$

		\State $\mathbf{ g}_{q} \gets \textsc{FilterPreProcess}(T_q^v, u_{q}, n)$
		\State $G_{q} \gets  \textsc{{FilterTime}}(\mathbf{ g}_{q}, n)$
		
		\State $\wh G_{q}(\xi_{q}) = \textsc{FilterFrequency}(\mathbf{ g}_{q},n,\xi_{q})$
		\EndFor
		
		\State $G \gets G_{1} \times G_{2} \times ... \times G_{d}$		
		
		\State $h^{\Delta}_{\ff} \gets \frac{1}{n^d}\sum_{\bm\xi \in [n]^d} \big( e^{2\pi i \frac{\bm\xi^T \Delta}{n}} \cdot \wh \chi_{\bm\xi} \cdot \prod_{q=1}^{d}\wh G_{q}(\xi_{q})\big)$ for all $\Delta \in \mathbf{\Delta}$
		
		\State $H^{\Delta}_{\ff} \gets \big(\sum_{\jj \in [n]^d} G(\Delta-\jj) \cdot x_{\jj}\big) - h^\Delta_{\ff}$ for all $\Delta \in \mathbf{\Delta}$

		\If{$\frac{1}{|\bm{\Delta}|}\sum_{\Delta \in \mathbf{\Delta}} |H^\Delta_{\ff}|^2 = 0$}
		\State \textbf{return} {\bf false}
		
		\Else	
		\State \textbf{return} {\bf true}

		\EndIf

		\EndProcedure
		
	\end{algorithmic}
	\label{alg:zero-test}
	
\end{algorithm}

\begin{algorithm}
	\caption{Sparse FFT for worst-case sparse signals}
	\begin{algorithmic}[1]
		
		\Procedure{SparseFFT}{$x, n, d, k$} 
		
		\State $\mathbf{ \Delta} \gets$ Multiset of $[n]^d$ corresponding to Fourier measurements satisfying RIP of order $k$ \Comment{$|\mathbf{ \Delta}| = O(k \log_2^2k \cdot d\log_2n)$, see Theorem \ref{RIP-thrm}}
		
		\State $T\gets \{r\}, \ff_r\gets 0$

		\While{$T \neq \emptyset$}
		\State $v \gets \argmin_{u: \text{ leaf of } T} w_T(u)$, $\ff\gets \ff_v$ and $l\gets l_T(v)$
		\If{\texttt{$l= d \log_2 n$}}\Comment{All bits of $v$ have been discovered}
		
		\State $v_0,v_1,\ldots, v_{d \cdot \log_2n}\gets $ path from $r$ to $v$ in ${T}$, where $v_0 = r$ and $v_{d \cdot \log_2n} = v$
		
		\For{\texttt{$q = 1 $ to $d$ }}

		\State $T_q^v \gets $ subtree of $T$ rooted at $v_{(q-1)\cdot \log_2n}$
		\State Remove all nodes of $T^v_q$ which are at distance more than $\log_2n$ from $v_{(q-1)\cdot \log_2n}$
		\State Label every node $u \in T_q^v$ as $f_u = (\ff_{u})_q$ 		
		
		\State $\mathbf{ g}_{q} \gets \textsc{FilterPreProcess}(T_q^v, v_{q\cdot \log_2n}, n)$
		\State $G_{q} \gets  \textsc{{FilterTime}}(\mathbf{ g}_{q}, n)$
		
		\State $\wh G_{q}(\xi_{q}) = \textsc{FilterFrequency}(\mathbf{ g}_{q},n,\xi_{q})$
		\EndFor
		
		\State $G \gets G_{1} \times G_{2} \times ... \times G_{d}$		
		
		\State $h_{\ff} \gets  \sum_{\bm\xi \in [n]^d} \big(\wh \chi_{\bm\xi} \cdot \prod_{q=1}^{d}\wh G_{q}(\xi_{q})\big)$

		\State $\wh \chi_{\ff} \gets \wh \chi_{\ff} + \big(n^d \cdot \sum_{\jj \in [n]^d} x_\jj \cdot G_{-\jj}\big) - h_{\ff} $
		
		\State $T \gets \tree.\textsc{remove}(T,v)$

		\Else
		\State $T'\gets T\cup \{\text{left child $u$ of $v$}\}\cup \{\text{right child $w$ of $v$}\}$ 
		\If{\textsc{ZeroTest}$(x, \wh{\chi}, T', w , n , d, \mathbf{ \Delta})$}
		
		\State $\text{Add } w \text{ as the right child of node } v \text{ to tree } T$	
		\State $\ff_{w} \gets \ff$ \Comment{Frequency corresponding to node $w$}
		\EndIf
		
		\If{\textsc{ZeroTest}$(x, \wh{\chi},  T', u , n , d, \mathbf{ \Delta})$}
		
		\State $\text{Add } u \text{ as the left child of node } v \text{ to tree } T$	
		\State $\ff_{u} \gets\ff+2^l;$ \Comment{Frequency corresponding to node $u$}
		\EndIf
		\EndIf
		\EndWhile
		\State \textbf{return} $\wh{\chi};$
		\EndProcedure
		
	\end{algorithmic}
	\label{alg:fullsparsefft}
	
\end{algorithm}

\begin{algorithm}
	\caption{Sparse FFT for worst-case sparse signals with random phase}
	\begin{algorithmic}[1]
		
		\Procedure{SparseFFT-RandomPhase}{$x, n, d, k$} 
		
		\State $\mathbf{ \Delta} \gets \left\{ \Delta_i: \Delta_i \sim \unif([n]^d),  \forall i \in [C d^3\log_2^3n] \right\}$\Comment{$C$: constant} \Comment{$\mathbf{ \Delta}$: Multiset}
		
		\State $T\gets \{r\}, \ff_r\gets 0$

		\While{$T \neq \emptyset$}
		\State $v \gets \argmin_{u: \text{ leaf of } T} w_T(u)$, $\ff\gets \ff_v$ and $l\gets l_T(v)$
		\If{\texttt{$l= d \log_2 n$}}\Comment{All bits of $v$ have been discovered}
		
		\State $v_0,v_1,\ldots, v_{d \cdot \log_2n}\gets $ path from $r$ to $v$ in ${T}$, where $v_0 = r$ and $v_{d \cdot \log_2n} = v$
		
		\For{\texttt{$q = 1 $ to $d$ }}

		\State $T_q^v \gets $ subtree of $T$ rooted at $v_{(q-1)\cdot \log_2n}$
		\State Remove all nodes of $T^v_q$ which are at distance more than $log_2n$ from $v_{(q-1)\cdot \log_2n}$
		\State Label every node $u \in T_q^v$ as $f_u = (\ff_{u})_q$ 		
		
		\State $\mathbf{ g}_{q} \gets \textsc{FilterPreProcess}(T_q^v, v_{q\cdot \log_2n}, n)$
		\State $G_{q} \gets  \textsc{{FilterTime}}(\mathbf{ g}_{q}, n)$
		
		\State $\wh G_{q}(\xi_{q}) = \textsc{FilterFrequency}(\mathbf{ g}_{q},n,\xi_{q})$
		\EndFor
		
		\State $G \gets G_{1} \times G_{2} \times ... \times G_{d}$		
		
		\State $h_{\ff} \gets  \sum_{\bm\xi \in [n]^d} \big(\wh \chi_{\bm\xi} \cdot \prod_{q=1}^{d}\wh G_{q}(\xi_{q})\big)$

		\State $\wh \chi_{\ff} \gets \wh \chi_{\ff} + \big(n^d \cdot \sum_{\jj \in [n]^d} x_\jj \cdot G_{-\jj}\big) - h_{\ff} $
		
		\State $T \gets \tree.\textsc{remove}(T,v)$

		\Else
		\State $T'\gets T\cup \{\text{left child $u$ of $v$}\}\cup \{\text{right child $w$ of $v$}\}$ 
		\If{\textsc{ZeroTest}$(x, \wh{\chi}, T', w , n , d, \mathbf{ \Delta})$}
		
		\State $\text{Add } w \text{ as the right child of node } v \text{ to tree } T$	
		\State $\ff_{w} \gets \ff$ \Comment{Frequency corresponding to node $w$}
		\EndIf
		
		\If{\textsc{ZeroTest}$(x, \wh{\chi},  T', u , n , d, \mathbf{ \Delta})$}
		
		\State $\text{Add } u \text{ as the left child of node } v \text{ to tree } T$	
		\State $\ff_{u} \gets\ff+2^l;$ \Comment{Frequency corresponding to node $u$}
		\EndIf
		\EndIf
		\EndWhile
		\State \textbf{return} $\wh{\chi};$
		\EndProcedure
		
	\end{algorithmic}
	\label{alg:sparsefftphase}
	
\end{algorithm}

We now prove our main result:

\begin{proofof}{Theorems \ref{thm:sfft-worstcase} and \ref{thm:sfft-rand-signs}}
	Note that Algorithms~\ref{alg:fullsparsefft} and \ref{alg:sparsefftphase} are identical except in line 2. We first analyze the common code of the algorithms (after line 2) under the assumption that the set $\mathbf{ \Delta}$ in all calls to \textsc{ZeroTest} are replaced with a more powerful set which satisfies the precondition of Lemma~\ref{lem:zero-test} hence \textsc{ZeroTest} correctly tests the zero hypothesis on its input signal with probability $1$. We then establish a coupling between this idealized execution and the actual execution for both Algorithms \ref{alg:fullsparsefft} and \ref{alg:sparsefftphase}, leading to our result.
	
	Let $m$ denote the size of the set $m = |\mathbf{ \Delta}|$. We prove that the following properties are maintained throughout the execution of \textsc{SparseFFT} (Algorithm~\ref{alg:fullsparsefft}) and \textsc{SparseFFT-RandomPhase} (Algorithm~\ref{alg:sparsefftphase}):
	\begin{description}
		\item[(1)] $\supp{ (\wh{x} - \wh{\chi})} \subseteq \bigcup_{u: \text{ leaf of } T} \subtree_T(u)$; 
		\item[(2)] For every leaf $u$ of tree $T$ one has  $\supp{(\wh{x}- \wh{\chi})}\cap \subtree_T(u)\neq \emptyset$; 
		
		\item[(3)] If $\wh x$ is a worst-case signal with random phase, then $\wh x - \wh \chi$ is a worst-case signal with random phase; 
		
		\item[(4)] The quantity $\phi=(d\log_2n + 1)\|\wh x - \wh \chi\|_0 - \sum_{u : \text{ leaf of } T} l_T(u)$ always decreases by at least 1 on every iteration of Algorithm \ref{alg:fullsparsefft} or \ref{alg:sparsefftphase};
		
		\item[(5)] Always $\|\wh x - \wh \chi\|_0 \le k$;
	\end{description}

	The {\bf base} of the induction is provided by the first iteration, at which point $T$ is a single vertex $ T = \{r\}$ where $r$ is the root with $\ff_r=0$ and $\wh \chi=0$. The conditions {\bf (1)} and {\bf (2)} and {\bf (3)} and {\bf (5)} are satisfied since $\subtree_T(r)=[n]^d$ and $\supp (\wh x- \wh\chi)=\supp{ \wh x}\neq \emptyset$ and $\wh x - \wh \chi = \wh x$ is a worst-case signal with random phase if $\wh x$ is a worst-case signal with random phase.
	
	We now prove the {\bf inductive step}.  We assume that conditions {\bf (1)} and {\bf (2)} and {\bf (3)} and {\bf (5)} of the inductive hypothesis are satisfied at the beginning of a certain iteration and argue that conditions {\bf (1)} and {\bf (2)} and {\bf (3)} and {\bf (5)} are maintained at the end of the iteration. We also show that the value of the quantity $\phi$ defined in {\bf (4)}, at the end of the loop is smaller than its value at the start of the loop by at least one. Let $v\in T$ be the smallest weight leaf chosen by the algorithm in line~4. We now consider two cases.
	
	{\bf Case 1: $l_T(v)=d\log_2 n$.}  Since $G$ is a $(v, T)$-isolating filter, we have by Definition~\ref{def:v-t-isolating} that 
	for every signal $z \in \C^{n^d}$ with Fourier support $\supp\wh{z} \subseteq \bigcup_{u: \text{~leaf in~} T} \subtree_T(u)$ and for all $t\in[n]^d$,
	\begin{equation}\label{eq:23h9hgg}
	\sum_{\jj\in[n]^d} z_{\jj} G_{\tv-\jj} = \frac{1}{n^d} \sum_{\ff' \in \subtree_T(v)}\wh{z}_{\ff'} e^{2\pi i \frac{\ff'^T \tv}{n}}.
	\end{equation}
	By condition {\bf (1)} of the inductive hypothesis  one has 
	$\supp (\wh{x} - \wh{\chi}) \subseteq \bigcup_{u: \text{ leaf of } T} \subtree_T(u),
	$
	and thus we can apply ~\eqref{eq:23h9hgg} with $z=x-\chi$ and $\tv = 0$, obtaining
	
	\begin{equation}\label{eq:est-intermediate}
	\begin{split}
	\sum_{\jj\in[n]^d} (x-\chi)_{\jj} G_{-\jj} &= \frac{1}{n^d} \sum_{\ff' \in \subtree_T(v)}\wh{(x-\chi)}_{\ff'}.\\
	\end{split}
	\end{equation}
	
	Note that by Claim~\ref{conv-thrm},
	\begin{equation*}
	\begin{split}
	n^d \cdot \sum_{\jj\in[n]^d} \chi_{\jj} G_{-\jj} &= \sum_{f\in [n]} \wh{\chi}_f \wh{G}_f=h_{\ff},
	\end{split}
	\end{equation*}
	where $h_{\ff}$ is the quantity computed in line~15. We thus get that 
	\begin{align*}
	n^d \sum_{\jj \in [n]^d} x_{\jj} \cdot G_{-\jj} - h_{\ff} &= \sum_{\ff' \in \subtree_T(v)}\wh{(x-\chi)}_{\ff'}\\
	&= \wh{(x - \chi)}_{\ff_v},
	\end{align*}
	because $ \subtree_T(v)=\{\ff_v\}$ due to the assumption that $l_T(v)=d \log_2 n$. Thus we get that $\wh \chi (\cdot) \gets \wh \chi (\cdot) + \wh{(x - \chi)}_{\ff_v} \delta_{\ff_v}(\cdot)$ therefore at the end of the loop we have $\wh{(x - \chi)}_{\ff_v} = 0$ which means that $\ff_v$ will no longer be in $\supp{\wh{(x - \chi)}}$. And also $v$ gets removed from tree $T$ implying that $\{\ff_v\} = \subtree_T(v)$ will be excluded from $\bigcup_{u: \text{ leaf of } T} \subtree_T(u)$. Note that this also implies that $\wh {(x - \chi)}$ will remain a worst-case signal with random phase. Therefore, condition {\bf (1)} and {\bf (2)} and {\bf (3)} hold. 
	
	Now, note that $\| \wh{(x - \chi)} \|_0$ will decrease by 1 exactly because $\ff_v$ is no longer in $\supp{\wh{(x - \chi)}}$ and the rest of the support is unchanged. This shows that condition {\bf (5)} holds. Also, $\sum_{u : \text{ leaf of } T} l_T(u)$ decreases by exactly $d\log_2n$ because the level of $v$ was $l_T(v) = d\log_2n$ and $v$ gets removed from $T$. So $\phi$ will decrease by exactly one as required in condition {\bf (4)}.

	{\bf Case 2} Suppose that $l_T(v) < d\log_2 n$. We first check that the invocation of $\textsc{ZeroTest}$ satisfies preconditions of Lemma~\ref{lem:zero-test}. We need to ensure that for the residual signal $\wh x - \wh{\chi}$  one has
	$$
	\supp{ (\wh{x} - \wh{\chi}) } \subseteq \bigcup_{u: \text{~leaf of~}T'} \subtree_{T'}(u),
	$$
	where $T'$ is the tree obtained from $T$ by adding two children of $v$ (line~19). This follows, since by the inductive hypothesis we have
	$$
	\supp{ (\wh{x} - \wh{\chi}) } \subseteq \bigcup_{u:\text{~leaf of~}T} \subtree_T(u),
	$$
	and because by claim \ref{claim:sub-tree} we have,
	$$
	\subtree_T(v)=\subtree_{T'}(u)\cup \subtree_{T'}(v).
	$$
	We thus get that the preconditions of Lemma~\ref{lem:zero-test} are satisfied, and the output of \textsc{ZeroTest}$(x, \wh{\chi}, T', w, n , d, \mathbf{ \Delta})$ is {\bf true} if $(\wh{x}-\wh{\chi})_{\subtree_{T'}(w)}\neq 0$ and {\bf false} otherwise. A similar analysis shows that the algorithm correctly tests the zero hypothesis on $(\wh{x}-\wh{\chi})_{\subtree_{T'}(u)}$. We thus get, letting $T_{new}$ denote the tree $T$ at the end of the while loop, that 
	$$
	\supp{ (\wh{x} - \wh{\chi}) } \subseteq \bigcup_{u: \text{ leaf of } T_{new}} \subtree_{T_{new}}(u), 
	$$
	and for every $v \in T_{new}$ one has $\supp{(\wh{x}-\wh{\chi})} \cap {\subtree_{T_{new}}(v)}\neq \emptyset$. Hence, because $\wh{(x-\chi)}$ remains unchanged, conditions {\bf (1)} and {\bf (2)} and {\bf (3)} hold at the end of the loop.
	
	Now, we show $\phi$ is decreased at least by one. By inductive hypothesis $\supp{(\wh{x}-\wh{\chi})} \cap {\subtree_{T}(v)}\neq \emptyset$ and at least one of $w$ or $u$ will be added to $T$ because $\subtree_T(v)=\subtree_{T'}(u)\cup \subtree_{T'}(v)$. Note that $l_{T_{new}}(w) = l_{T_{new}}(u) = l_{T}(v)+1$ hence $\sum_{u' : \text{ leaf of } T} l_{T_{new}}(u') \ge \sum_{u' : \text{ leaf of } T} l_{T}(u') + 1$. Because $\|\wh x - \wh \chi\|_0$ remains unchanged, the value of $\phi$ will decrease by at least one hence conditions {\bf (4)} and {\bf (d)} hold.
	
	Because $l_T(u) \le d\log_2n$ for every leaf $u \in T$, it follows from condition (2) that the quantity $\phi = (d\log_2n + 1)\|\wh x - \wh \chi\|_0 - \sum_{u : \text{ leaf of } T} l_T(u)$ is non-negative. At the first iteration, $\wh \chi=0$ and $ T = \{r\}$ where $r$ is the root with $l_T(f) = 0$. Hence, $\phi = \|\wh x\|_0(1+d\log_2n)$ at first iteration. Because $\phi$ is decreasing by at least 1 at each iteration, the algorithm terminates after $O(\|\wh x\|_0 \cdot d\log_2n)$ iterations. By Lemma \ref{lem:zero-test} along with Claim \ref{kraftsum}, the runtime of each iteration of algorithm is $O(km)$ and also sample complexity of each iteration is $O(k m)$ therefore the total runtime and sample complexity both will be $O(k^2 m \cdot d\log_2n)$.
	
	Finally, observe that throughout this analysis we have assumed that the set $\mathbf{ \Delta}$ satisfies the precondition of Lemma~\ref{lem:zero-test} for all the invocations of \textsc{ZeroTest} by our algorithm.

	In reality, there are two cases. The first case is for worst-case signals (Algorithm~\ref{alg:fullsparsefft}, Theorem~\ref{thm:sfft-worstcase}). In this case, the algorithm chooses $\mathbf{\Delta}$ to be a multiset which corresponds to the Fourier measurements with RIP of order $k$. Let $F^{-1}_N$ be the $d$ dimensional inverse Fourier transform's matrix with $N=n^d$ points. The matrix $M = {\sqrt{N}} F^{-1}_N$ is a unitary matrix. If you let $M_{\bm{\Delta}}$ denote the submatrix of $M$ whose rows are sampled from $M$ according to set $\bm{\Delta}$ defined in line~5 of Algorithm~\ref{alg:fullsparsefft} then by Theorem \ref{RIP-thrm} there exists a multiset $\bm{\Delta}$ of size $m= O(k\log_2^2k\cdot d\log_2n)$ such that $\frac{\sqrt{N}}{|\bm{\Delta}|}M_{\bm{\Delta}}$ satisfies the restricted isometry property of order $k$.
	Therefore, for every signal $y \in \C^{n^d}$:
	\[ \frac{1}{2} \cdot \frac{\|\wh{y} \|_2^2}{n^{2d}} \leq \frac{1}{|\mathbf{\Delta}|} \cdot \sum_{\Delta \in \mathbf{\Delta}} |y_\Delta|^2 \leq \frac{3}{2} \cdot \frac{\|\wh{y} \|_2^2}{n^{2d}} \]
	As we have shown in condition {\bf (5)} of the induction, $\|\wh x - \wh \chi\|_0 \le k$. Hence for every leaf $v$ of the tree $\|(\wh{x} - \wh{\chi})_{\subtree_T(v)}\|_0 \le k$ therefore the precondition of lemma \ref{lem:zero-test} is satisfied.

	The second case is for worst-case signals with random phase (Algorithm~\ref{alg:sparsefftphase}, Theorem~\ref{thm:sfft-rand-signs}). We have shown in condition {\bf (3)} of the induction that $x - \chi$ is a worst-case signal with random phase in every iteration of the algorithm. Therefore for every leaf $v$ of the tree it is true that $(\wh{x} - \wh{\chi})_{\subtree_T(v)}$ is a worst-case signal with random phase. In this case, the multiset $\bm{\Delta}$ is defined in line~5 of Algorithm~\ref{alg:sparsefftphase} therefore by Lemma~\ref{lem:random-sign} for a fixed leaf $v$ of tree $T$ with probability at least $1-1/n^{4d}$ the following holds:
	
	\[ \frac{1}{2} \cdot \frac{\|\wh{y} \|_2^2}{n^{2d}} \leq \frac{1}{|\mathbf{\Delta}|} \cdot \sum_{\Delta \in \mathbf{\Delta}} |y_\Delta|^2 \leq \frac{3}{2} \cdot \frac{\|\wh{y} \|_2^2}{n^{2d}} \]
	where $y=(\wh{x} - \wh{\chi})_{\subtree_T(v)}$.

	This shows that in the second case which corresponds to theorem \ref{thm:sfft-rand-signs}, the failure probability of procedure \textsc{ZeroTest} is at most $\frac{1}{n^{4d}}$. Moreover, the above analysis shows that \textsc{SFFT} makes at most $O(kd\log_2 n)$ calls to \textsc{ZeroTest}. Therefore, by a union bound, the overall failure probability of the calls to $\textsc{ZeroTest}$ is $O\left( (kd\log_2n) \frac{1}{n^{4d}} \right) \leq O(n^{-2d})$. Hence, we obtain the desired result.

\end{proofof}

\section{Signals with random support in high dimension}\label{sec:avgcase}
In this section, we consider Fourier sparse signals whose support in the frequency domain is chosen randomly, while the values at the nonzero frequencies are chosen arbitrarily (worst-case). In other words, we assume a Bernoulli model for $\supp \wh x$, while the values at the frequencies that are chosen to be in the support are arbitrary. We will present an algorithm that runs in time $O(k \log^{O(1)} N)$. The model for random support signals can be found in Definition \ref{def:randsupport} (Section~\ref{sec:overview}), which we restate here for convenience of the reader:

\defrandsupp*

\subsection{Outline of our approach}
The algorithm is motivated by the idea of speeding up our algorithm for worst-case signals (Algorithm~\ref{alg:fullsparsefft}, also see Theorem~\ref{thm:sfft-worstcase}) by reducing the number of iterations of the process from $\Theta(k)$ down to $O(\log N)$. Such a reduction (which we show to be impossible for worst-case signals in Section~\ref{hard-instance}) requires the ability to peel off many elements of the residual in a single phase of the algorithm, which turns out to be possible if the support of $\wh{x}$ is chosen uniformly at random as in Definition~\ref{def:randsupport}. Indeed, if one considers the splitting tree $T$ of a signal with uniformly random support, one sees that 
\begin{description}
	\item [{\bf (a)}] a large constant fraction of nodes $v\in T$ satisfy $w_T(v)\leq \log_2 k+O(1)$;
	\item [{\bf (b)}] the adaptive aliasing filters $G$ constructed for such nodes will have significantly overlapping support in time domain.
\end{description}

We provide the intuition for this for the one-dimensional setting ($d=1$) to simplify notation (changes required in higher dimensions are minor). In this setting, property {\bf (b)} above is simply a manifestation of the fact that since the support is uniformly random, any given congruence classes modulo $B'=Ck$ for a large enough constant $C>1$ is likely to contain only a single element of the support of $\wh{x}$. Our adaptive aliasing filters provide a way to only partition frequency space along a carefully selected subset of bits in $[\log_2 N]$, but due to the randomness assumption, one can isolate most of the elements by simply partitioning using the bottom $\log_2 k+O(1)$ bits. This essentially corresponds to hashing $\wh{x}$ into $B=Ck$ buckets at computational cost $O(B'\log B')=O(k\log k)$. While this scheme is efficient, it unfortunately only recovers a constant fraction of coefficients. One solution would be to hash into $B=Ck^2$ buckets (i.e., consider congruence classes modulo $Ck^2$), which would result in perfect hashing with good constant probability, allowing us to recover the entire signal in a single round. However, this hashing scheme would result in a runtime of $\Omega(k^2\log k)$ and is, hence, not satisfactory. On the other hand, hashing into $Ck^2$ buckets is clearly wasteful, as most buckets would be empty. Our main algorithmic contribution is a way of ``implicitly'' hashing into $Ck^2$ buckets, i.e., getting access to the nonempty buckets, at an improved cost of $\widetilde O(k)$.

Our algorithm uses an iterative approach, and the main underlying observation is very simple. Suppose that we are given the ability to ``implicitly'' hash into $B$ buckets for some $B$, namely, get access to the nonempty buckets. If $B$ is at least $\text{min}(Ck^2, N)$, we know that there are no collisions with high probability and we are done. If not, then we show that, given access to nonempty buckets in the $B$-hashing (i.e. a hashing into $B$ buckets), we can get access to the nonempty buckets of a $(\Gamma B)$-hashing for some appropriately chosen constant $\Gamma > 1$ at a polylogarithmic cost in the size of each nonempty bucket of the $B$-bucketing by essentially computing the Fourier transform of the signal restricted to nonempty buckets in the $B$-bucketing. We then proceed iteratively in this manner, starting with $B=Ck$, for which we can perform the hashing explicitly. Since the number of nonzero frequencies remaining in the residual after $t$ iterations of this process decays geometrically in $t$, we can also afford to use a smaller number of buckets $B'$ in the hashing that we construct explicitly, ensuring that the runtime is dominated by the first iteration.

Ultimately, the algorithm takes the following form. At every iteration, we explicitly compute  a hashing into $\bbase\leq Ck$ buckets explicitly. Then, using a list of nonempty buckets in a $\bprev$-hashing from the previous iteration, we extend this list to a list of nonempty buckets in a $\bnext$-bucketing at polylogarithmic cost per bucket (by solving a well-conditioned linear system, see Algorithm~\ref{alg:hashing2}), where $\bnext=\Gamma\cdot \bprev$ for some large enough constant $\Gamma>1$. Meanwhile, we reduce $\bbase$ by a factor of $\Gamma$, thus maintaining the invariant $\bbase\cdot \bnext \approx k^2$ at all times (note that this is satisfied at the start, when $\bbase =\bprev\approx k$, and $\bbase\cdot\bnext$ remains invariant at each iteration). Therefore, after a logarithmic number of iterations, we have effectively emulated hashing into $\approx k^2$ but at a total cost of roughly one hashing computation into $\approx k$ buckets (see Figure~\ref{fig:random-loc} for an illustration).  

\paragraph{Bucketing in high dimensions (\textsc{MakeBucket} function).} We note that our vectorial notation for buckets in high dimensions (see section~\ref{sec:vectorial}) allows us to continue talking about bucketings with $\bvec^{base}, \bvec^{prev}, \bvec^{next}$ buckets, even though now the number of buckets is in fact a vector of length $d$. In fact in dimension $d$ the only property of the bucketing that matters for our analysis is the number of buckets $|\bvec^{base}|, |\bvec^{prev}|, |\bvec^{next}|$ and the shape of each bucket is not important (this is due to the fact that the support is sampled from a permutation invariant distribution). In order to avoid unnecessary notation overload, in Algorithm \ref{alg:FFThighD-k1/r} we introduce procedure \textsc{MakeBucket} that constructs a bucketing $\mathbf{B}$ of size $|\mathbf{B}|=b$ of the following simple form. The vector $\bvec$ is defined by
$$
\bvec_p=\left\lbrace
\begin{array}{cc}
n&\text{~if~}p \le \lfloor \log_n b \rfloor\\
\frac{b}{n^p}&\text{~if~}p=\lfloor \log_n b \rfloor+1\\
1&\text{o.w.}
\end{array}
\right.
$$
The pseudocode for \textsc{MakeBucket}, which implements the formula above, is given in Algorithm~\ref{alg:FFThighD-k1/r}.

\subsection{Notation} \label{sec:vectorial}
We will need notations for vectorial operations, e.g., entrywise multiplication and/or division of vectors, which is defined in the following definition.

\begin{definition}[Entrywise vectorial arithmetic]
	Suppose that $\bv = (B_1, B_2, \cdots, B_d)$, $\jj=(j_1, j_2, \cdots, j_d)$ and $\tv=(t_1,t_2,\cdots,t_d)$ are $d$-dimensional vectors and $a$ is a scalar value. Then we define the following operations,
	\begin{center}
		\def\arraystretch{1.2}
		\begin{tabular}{  m{3cm}  m{15cm} }
			$\jj \cdot \tv$ & $d$-dimensional vector $(j_1 \cdot t_1, j_2 \cdot t_2 ,\cdots,j_d \cdot t_d)$.\\
			
			${\jj }/{ \tv}$ &$d$-dimensional vector $(j_1 / t_1, j_2 / t_2 ,\cdots,j_d / t_d)$.\\
			
			$a / \tv$ &$d$-dimensional vector $(a / t_1, a / t_2 ,\cdots, a / t_d)$.\\
			
			$\jj \equiv \tv \pmod{\bv}$  & entrywise congruence mod $\bv$, i.e., for all $q=1,\dots,d$, one has $j_q \equiv t_q \pmod{B_q}$.\\
			
			$\jj \pmod{\bv}$ &$d$-dimensional vector $\left( j_1 \pmod{B_1}, j_2 \pmod{B_2}, \cdots, j_d \pmod{B_d} \right)$.\\
			
			$\jj \pmod{a}$ &$d$-dimensional vector $\left( j_1 \pmod{a}, j_2 \pmod{a}, \cdots, j_d \pmod{a} \right)$.\\
			
			$| \jj |$ &the product of all the entries of vector $\jj$, i.e., $| \jj | = j_1 j_2 \cdots j_d$.\\
			
			$[\bv]$ &Cartesian product $[B_1] \times [B_2] \times \dots \times  [B_d]$.
		\end{tabular}
	\end{center}
\end{definition}

\subsection{Outline of our approach}
The algorithm is motivated by the idea of speeding up our algorithm for worst-case signals (Algorithm~\ref{alg:fullsparsefft}, also see Theorem~\ref{thm:sfft-worstcase}) by reducing the number of iterations of the process from $\Theta(k)$ down to $O(\log k)$. Such a reduction (which is shown to be impossible for worst-case signals in Section~\ref{hard-instance}) requires the ability to peel off many elements of the residual in a single phase of the algorithm, which turns out to be possible if the support of $\wh{x}$ is chosen uniformly at random as in Definition~\ref{def:randsupport}. Indeed, if one considers the splitting tree $T$ of a signal with uniformly random support (see Figure~\ref{fig:split-rand} for an illustration), one sees that 
\begin{description}
	\item [{\bf (a)}] a large constant fraction of nodes $v\in T$ satisfy $w_T(v)\leq \log_2 k+O(1)$;
	\item [{\bf (b)}] the adaptive aliasing filters $G$ constructed for such nodes will have significantly overlapping support in time domain.
\end{description}

We provide the intuition for this for the one-dimensional setting ($d=1$) to simplify notation (changes required in higher dimensions are minor). In this setting, property {\bf (b)} above is simply a manifestation of the fact that since the support is uniformly random, any given non-empty congruence class modulo $B'=Ck$ for a large enough constant $C>1$ is likely to contain only a single element of the support of $\wh{x}$. Our adaptive aliasing filters provide a way to only partition frequency space along a carefully selected subset of bits in $[\log_2 N]$, but due to the randomness assumption, one can isolate most of the elements by simply partitioning using the bottom $\log_2 k+O(1)$ bits. This essentially corresponds to hashing $\wh{x}$ into $B=Ck$ buckets at computational cost $O(B'\log B')=O(k\log k)$. While this scheme is efficient, it unfortunately only recovers a constant fraction of coefficients. One solution would be to hash into $B=Ck^2$ buckets (i.e., consider congruence classes modulo $Ck^2$), which would result in perfect hashing with good constant probability, allowing us to recover the entire signal in a single round. However, this hashing scheme would result in a runtime of $\Omega(k^2\log k)$ and is, hence, not satisfactory. On the other hand, hashing into $Ck^2$ buckets is clearly wasteful, as most buckets would be empty. Our main algorithmic contribution is a way of ``implicitly'' hashing into $Ck^2$ buckets, i.e., getting access to the nonempty buckets, at an improved cost of $\widetilde O(k)$.

Our algorithm uses an iterative approach, and the main underlying observation is very simple. Suppose that we are given the ability to ``implicitly'' hash into $B$ buckets for some $B$, namely, get access to the nonempty buckets. If $B$ is at least $\text{min}(Ck^2, N)$, we know that there are no collisions with high probability and we are done. If not, then we show that, given access to nonempty buckets in the $B$-hashing (i.e. a hashing into $B$ buckets), we can get access to the nonempty buckets of a $(\Gamma B)$-hashing for some appropriately chosen constant $\Gamma > 1$ at a polylogarithmic cost in the size of each nonempty bucket of the $B$-bucketing by essentially computing the Fourier transform of the signal restricted to nonempty buckets in the $B$-bucketing. We then proceed iteratively in this manner, starting with $B=Ck$, for which we can perform the hashing explicitly. Since the number of nonzero frequencies remaining in the residual after $t$ iterations of this process decays geometrically in $t$, we can also afford to use a smaller number of buckets $B'$ in the hashing that we construct explicitly, ensuring that the runtime is dominated by the first iteration.

Ultimately, the algorithm takes the following form. At every iteration, we explicitly compute  a hashing into $\bbase\leq Ck$ buckets explicitly. Then, using a list of nonempty buckets in a $\bprev$-hashing from the previous iteration, we extend this list to a list of nonempty buckets in a $\bnext$-bucketing at polylogarithmic cost per bucket (by solving a well-conditioned linear system, see Algorithm~\ref{alg:hashing2}), where $\bnext=\Gamma\cdot \bprev$ for some large enough constant $\Gamma>1$. Meanwhile, we reduce $\bbase$ by a factor of $\Gamma$, thus maintaining the invariant $\bbase\cdot \bnext \approx k^2$ at all times (note that this is satisfied at the start, when $\bbase =\bprev\approx k$, and $\bbase\cdot\bnext$ remains invariant at each iteration). Therefore, after a logarithmic number of iterations, we have effectively emulated hashing into $\approx k^2$ but at a total cost of roughly one hashing computation into $\approx k$ buckets (see Figure~\ref{fig:random-loc} for an illustration).  

\paragraph{Bucketing in high dimensions (\textsc{MakeBucket} function).} We note that our vectorial notation for buckets in high dimensions (see section~\ref{sec:vectorial}) allows us to continue talking about bucketings with $\bvec^{base}, \bvec^{prev}, \bvec^{next}$ buckets, even though now the number of buckets is in fact a vector of length $d$. In fact in dimension $d$ the only property of the bucketing that matters for our analysis is the number of buckets $|\bvec^{base}|, |\bvec^{prev}|, |\bvec^{next}|$ and the shape of each bucket is not important (this is due to the fact that the support is sampled from a permutation invariant distribution). In order to avoid unnecessary notation overload, in Algorithm \ref{alg:FFThighD-k1/r} we introduce procedure \textsc{MakeBucket} that constructs a bucketing $\mathbf{B}$ of size $|\mathbf{B}|=b$ of the following simple form. The vector $\bvec$ is defined by
$$
\bvec_p=\left\lbrace
\begin{array}{cc}
n&\text{~if~}p \le \lfloor \log_n b \rfloor\\
\frac{b}{n^p}&\text{~if~}p=\lfloor \log_n b \rfloor+1\\
1&\text{o.w.}
\end{array}
\right.
$$
The pseudocode for \textsc{MakeBucket}, which implements the formula above, is given in Algorithm~\ref{alg:FFThighD-k1/r}.

\subsection{Filtering, hashing, and bucketing in high dimensions}

We introduce the main definitions here. Our techniques in this section use a version of our adaptive aliasing filters that is taylored to the assumption that the support of $\wh{x}$ is chosen uniformly at random. Since the signal is assumed to be sampled from a distribution, we are able to design a fast algorithm by adapting to a distribution as opposed to a given realization of the support of $\wh{x}$.  The next definition is essentially a simplified version of the definition of a frequency cone from Section~\ref{sec:overview} (see Definition~\ref{def:iso-t}):

\begin{definition}[Congruence classes of support]\label{def:congruence-class}
	Suppose $d$ and $n$ are positive integers such that $n$ is a power of two. Let $\bv = (B_1, B_2, \dots, B_d)$ be a vector of powers of two such that $B_j \mid n$ for $j=1,2,\dots,d$. For every $\bb\in [\bv]$, and signal $x\in \C^{n^d}$, we define the \emph{$(\bv, \bb)$-congruence class} of $\supp{ \wh{x}}$ to be the set $S_x(\bv, \bb)$, given by
	\[
	S_x(\bv, \bb) = \{ \ff \in \supp{\wh x} : \ff \equiv\bb \pmod{\bv}\} .
	\]
\end{definition}

We access the signal using a bucketing operation, defined  below. 

\begin{definition}[Bucketing in high dimensions] \label{def:bucketing-highdim}
	Suppose $d$ and $n$ are positive integers such that $n$ is a power of two. Let $\bv = (B_1, B_2, \dots, B_d)$ be a vector of powers of two such that $B_j \mid n$ for all $j=1,2,\dots,d$. For every $\aaa \in [n]^d$, $\bb\in [\bv]$, and signal $x\in \C^{n^d}$, we define the \emph{$(\bv, \bb)$-bucketing} of $x$ with \emph{shift} $\aaa$ to be $U_x^\aaa(\bv, \bb)$, given by
	\begin{equation*}
	\begin{split}
	U_x^\aaa(\bv, \bb) &= \sum_{\ff\equiv\bb \pmod{\bv}} \wh x_{\ff} \cdot e^{2\pi i \frac{\ff^T \aaa}{n}}\\
	&= \sum_{\ff\in S_x(\bv, \bb)} \wh x_{\ff} \cdot e^{2\pi i \frac{\ff^T \aaa}{n}}.
	\end{split}
	\end{equation*}
\end{definition}

The following definition of  Bernoulli set provides a compact way of referring to the distribution of $\text{supp~}\wh{x}$:
\begin{definition}[Bernoulli set]\label{bernoulli-set}
	For every power of two $n$ and positive integer $d$, let $N=n^d$ and set $S\subseteq [n]^d$ be a random set such that each $\jj\in [n]^d$ is independently chosen to be in $S$ with probability $k/N$,
	$$\Pr[ \jj \in S ] = \frac{k}{N}.$$
	Moreover, for any $\bvec = (B_1, B_2, \dots, B_d)$ such that $B_1, B_2, \dots, B_d \mid n$, we define
	\begin{align*}
	S^{(\bvec)} = \{ \ff \in [\bvec]: \exists\,\gg, \hh \in S \text{ such that } \gg\neq\hh, \ff\equiv\gg\equiv\hh \pmod{\bvec} \}.
	\end{align*}
\end{definition}

The next lemma is crucial for our analysis. The lemma considers two bucketings $\bvec$ and $\bvec'$, where $\bvec$ is a refinement of $\bvec$. The object of interest is the number of buckets in the bucketing $\bvec$ that contain at least two elements of a Bernoulli set $S$, i.e. {\bf non-singleton buckets}. The lemma shows that as long as the product of the number of buckets in $\bvec$ and $\bvec'$ is at least $k^2$,  the elements (i.e. frequencies) of a Bernoulli set $S$ that belong to non-singleton buckets in $\bvec$ must be rather uniformly spread over the coarser bucketing $\bvec'$. Specifically, no bucket in $\bvec'$ contains more than $O(\log N)$ such frequencies with high probability. We will use this lemma with $\bvec'=\bvec^{base}$ and $\bvec=\bvec^{next}$ and $\bvec=\bvec^{prev}$ (see proof of Theorem~\ref{thrm:sparse-fft} below).

\begin{lemma}[Refinement lemma]\label{lem:liftsize}
	For any power of two integers $n$ and $k$, suppose $\bvec = (B_1, B_2, \dots B_d)$ and $\bvec' = (B_1', B_2', \dots, B_d')$ satisfy $B_j' \mid B_j \mid n$ for all $j = 1,2,\dots, d$ as well as $|\bv| \cdot |\bv'| \geq k^2$. Then, with probability at least $1 - \frac{1}{N^3}$, we have that for all $\bb\in [\bvec']$,
	\[ \left| S^{(\bvec)} \cap \{\ff\in [\bvec]: \ff \equiv \bb \pmod{\bvec'}\} \right| = O(\log N). \]
\end{lemma}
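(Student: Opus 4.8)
The plan is to fix a single residue class $\bb \in [\bvec']$, bound the random variable
\[
X_\bb := \left|S^{(\bvec)} \cap \{\ff \in [\bvec] : \ff \equiv \bb \pmod{\bvec'}\}\right|
\]
by a Chernoff argument, and then take a union bound over the at most $N$ choices of $\bb$ (note $|[\bvec']| = \prod_j B_j' \le n^d = N$ since each $B_j' \mid n$). So it suffices to show $\Pr[X_\bb \ge 4\log_2 N] \le N^{-4}$ for every fixed $\bb$.

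First I would use the divisibility hypotheses $B_j' \mid B_j \mid n$ to observe that the $\bvec$-residue classes partition $[n]^d$ into $|\bvec|$ disjoint blocks of $N/|\bvec|$ points each, and that exactly $|\bvec|/|\bvec'|$ of these blocks are contained in the $\bvec'$-residue class $\bb$. For each such block $c$, let $Y_c$ be the indicator of the event that $c$ contains at least two elements of the Bernoulli set $S$; then $X_\bb = \sum_c Y_c$. The structural point — which is really the only thing requiring care — is that the $Y_c$ are \emph{mutually independent}: distinct blocks are disjoint point sets, and the events $\{\jj \in S\}$ are independent across $\jj$, so $Y_c$ is measurable with respect to a family of independent coin flips disjoint from the family governing any other block. (This is exactly where it matters that $S$ is a fresh Bernoulli set rather than a worst-case set.)

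Next I would estimate the mean. A union bound over pairs of points inside a block gives $\E[Y_c] \le \binom{N/|\bvec|}{2}(k/N)^2 \le \tfrac{1}{2}(k/|\bvec|)^2$, hence
\[
\mu_\bb := \E[X_\bb] \;\le\; \frac{|\bvec|}{|\bvec'|}\cdot\frac{k^2}{2|\bvec|^2} \;=\; \frac{k^2}{2\,|\bvec|\,|\bvec'|} \;\le\; \frac{1}{2},
\]
where the last inequality is precisely the assumption $|\bvec|\,|\bvec'| \ge k^2$. Since $X_\bb$ is a sum of independent $\{0,1\}$ random variables with mean at most $\tfrac12$, the multiplicative Chernoff bound gives $\Pr[X_\bb \ge t] \le (e\mu_\bb/t)^t \le (e/(2t))^t$, and taking $t = 4\log_2 N$ makes this at most $(1/2)^{4\log_2 N} = N^{-4}$ once $N$ exceeds a small absolute constant. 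Union-bounding over the $\le N$ values of $\bb \in [\bvec']$ then yields that with probability at least $1 - N^{-3}$ every $X_\bb$ is at most $4\log_2 N = O(\log N)$, which is the claim.

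I do not expect a serious obstacle: this is essentially a second-moment estimate followed by a Chernoff tail bound and a union bound. The two places warranting attention are (i) justifying the independence of the $Y_c$ — this is what licenses the Chernoff bound and hinges on the $\bvec$-blocks being disjoint rather than on any property of the realized support — and (ii) bookkeeping the constants so that the per-$\bb$ failure probability is at most $N^{-4}$ and hence survives the union bound over $[\bvec']$; choosing the threshold $4\log_2 N$ and using $\mu_\bb \le 1/2$ makes both go through cleanly.
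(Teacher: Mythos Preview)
Your proposal is correct and follows essentially the same approach as the paper: fix $\bb\in[\bvec']$, write $X_\bb$ as a sum of independent indicators over the $|\bvec|/|\bvec'|$ disjoint $\bvec$-blocks in that residue class, bound each indicator's mean by $(k/|\bvec|)^2$ (the paper does this via Lemma~\ref{lem:sjprobhighdim}, you via a union bound over pairs), obtain $\E[X_\bb]\le k^2/(|\bvec|\,|\bvec'|)\le 1$, apply Chernoff, and union bound over $\bb$. The constants and the justification of independence are handled cleanly in your write-up.
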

The proof is a simple probabilistic argument and is given in Appendix~\ref{app:bernoulli}.

\subsection{Hashing in high dimensions by using downsampling}
The main primitive we need for developing an algorithm with $\tilde{O}(k)$ sample complexity is a hashing function based on downsampling, presented below as Algorithm~\ref{alg:hashing2}. The algorithm takes as input the list $R$ of buckets in the hashing into $\bprevv$ (that will later be guaranteed to be superset of the nonempty buckets in a $\bprevv$-hashing of the residual signal $x-\chi$) and outputs a list of potentially nonempty buckets in the hashing into $\bnextv$, together with evaluations of the corresponding hashed signals at a point $\alphav$ that is given as a parameter.

\begin{algorithm}
	\caption{Procedure for hashing a signal's Fourier transform using downsampling}
	\begin{algorithmic}[1]
		
		\Procedure{Hashing}{$x, \wh \chi, n, d, \bbasev, \bprevv, \bnextv, \alphav, R$} \Comment $\alphav$ is a point in time domain
		\State \Comment{$R$ is the set of nonempty buckets in $\bprevv$}
		\State $m \gets \frac{|\bnextv|}{|\bprevv|} \max_{\bb \in [\bbasev]} \left|\left\{ \rr \in R : \rr \equiv \bb \pmod{\bbase} \right\}\right|$
		\State $S \gets \left\{ \betav_i: \betav_i \sim \unif([\bnextv]),  \forall i \in [C m \log^2_2m \cdot d \log_2 n] \right\}$ \Comment $S$: multiset 
		\Comment $C$: constant

		\For{$\betav\in S$}
		\State $\aaa \gets \alphav + \betav\cdot \frac{n}{\bnextv}$
		\State $Z^{(\alphav,\betav)}_\bb \gets \frac{n^d}{|\bbasev|} x_{\bb\cdot\frac{n}{\bbasev} + \aaa} \text{ for all } \bb\in [\bbasev]$
		\State $\wh Z^{(\alphav,\betav)} \gets FFT\left(Z^{(\alphav, \betav)}\right)$
		\State $\wh Z^{(\alphav,\betav)}_\bb \gets \wh Z^{(\alphav,\betav)}_\bb - \sum_{\rr\in \left[ \frac{n}{\bbasev}\right]} \wh\chi_{\bb + \rr\cdot\bbasev} e^{2\pi i \frac{\left(\bb + \rr\cdot\bbasev\right)^T \aaa}{n}} \text{ for all } \bb\in[\bbasev]$
		\EndFor
		
		\State $W\gets\emptyset;$
		
		\For{\texttt{$\bb \in [\bbase]$}}
		\State $w_\betav \gets \wh Z_\bb^{(\alphav, \betav)} \text{ for all } \betav\in S$
		\State $R_\bb \gets \{(\rr, \sv): \bb+\rr \cdot \bbasev \in R \text{ and } \sv\in [\bnextv/\bprevv]\}$
		\State $\left(\matA_\bb\right)_{\betav, (\rr, \sv)} \gets e^{2\pi i (\bb+\rr\bbasev+\sv\bprevv)^T  \left(\frac{\betav}{\bnextv}\right)} \text{ for all } \betav\in S \text{ and } (\rr, \sv)\in R_\bb$
		\State $\vv \gets \textsc{LeastSquaresSolver}(\matA_\bb, \ww)$
		
		\For{$(\rr,\sv) \in R_\bb$}
		\State $W\gets W \cup \left\{\left(\bb + \rr\bbasev + \sv\bprevv, \vv_{(\rr, \sv)}^{(\bb)}\right)\right\}$
		\EndFor
		
		\EndFor
		
		\State \textbf{return} $W$
		
		\EndProcedure

		\Procedure{LeastSquaresSolver}{$A, b$} 

		\State \textbf{return} $(A^TA)^{-1} A^Tb$
		
		\EndProcedure
	\end{algorithmic}
	\label{alg:hashing2}
\end{algorithm}

\begin{lemma}
	\emph{(\textsc{Hashing} in high dimensions)} \label{lem:hashing2-highdim}
	Suppose $d$ and $n$ are positive integers such that $n$ is a power of two. Suppose $\bbasev = (\bbase_1, \bbase_2, \dots, \bbase_d)$, $\bprevv = (\bprev_1, \bprev_2, \dots, \bprev_d)$ and $\bnextv = (\bnext_1, \bnext_2, \dots, \bnext_d)$ are vectors of powers of two such that $\bprev_j \mid \bnext_j$ and $\bbase_j \mid \bprev_j$ for all $j$. Moreover, let $\alphav  \in [n]^d$ be a shift vector. For any signals $x , \wh{\chi} \in \C^{n^d}$ suppose that
	\begin{align}
	R \supseteq \{\bb\in[\bprevv]:  S_{x - \chi }(\bprevv,\bb) \neq \emptyset \}, \label{eq:rcond}
	\end{align}
	where $S_{x - \chi}(\bprevv,\bb)$ is $(\bprevv,\bb)$-congruence class of $\supp{ \wh{(x-\chi)}}$ as per Definition \ref{def:congruence-class}. Then the procedure \textsc{Hashing} $(x, \wh{\chi}, n, d, \bbasev, \bprevv, \bnextv, \alphav, R)$ outputs a set $W$ that with probability $1 - \frac{1}{n^{10d}}$ satisfies
	\begin{align}
	W = \{(\bb', U_{x-\chi}^\alphav (\bnextv, \bb')): \bb'\in [\bnextv] \text{ s.t. } \, \bb'\equiv\bb \pmod{\bprevv} \text{ for some } \bb\in R \}. \label{eq:tcond}
	\end{align}
	Moreover, the sample complexity of this procedure is 
	\[ O\left( m\log^2 m \cdot d\log n \cdot \left| \bbasev \right| \right), \]
	while the time complexity of this procedure is
	\[ O\left( (m\log^2 m \cdot d\log n)^3 \cdot \left| \bbasev \right| + (m\log^2 m \cdot d\log n) \cdot \left(\left| \bbasev \right|\log_2\left( | \bbasev| \right) + \|\wh \chi\|_0 \right) \right), \]
	where $m = \frac{|\bnextv|}{|\bprevv|} \max_{\bb \in [\bbasev]} \left|\left\{ \rr \in R : \rr \equiv \bb \pmod{\bbase} \right\}\right| $.
\end{lemma}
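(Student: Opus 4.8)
The plan is to reduce correctness to the classical aliasing identity together with exact recovery from RIP-quality random Fourier measurements. Concretely, I will establish three facts: \textbf{(i)} after the $\wh\chi$-subtraction, the quantity $\wh Z^{(\alphav,\betav)}_\bb$ computed in the loop equals the $(\bbasev,\bb)$-bucketing $U_{x-\chi}^{\aaa}(\bbasev,\bb)$ of the residual at shift $\aaa=\alphav+\betav\cdot n/\bnextv$; \textbf{(ii)} for each fixed $\bb\in[\bbasev]$ this bucketing is a \emph{known} linear combination of the finer bucketings $U_{x-\chi}^{\alphav}(\bnextv,\bb')$ over $\bb'\equiv\bb\pmod{\bbasev}$, and hypothesis~\eqref{eq:rcond} forces all but at most $m$ of these terms to vanish, so that $\ww=\matA_\bb\vv$ with $\vv$ the vector of the values we want; \textbf{(iii)} the random matrix $\matA_\bb$ has full column rank with high probability, so the least-squares solver returns $\vv$ exactly.

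For \textbf{(i)} I would first record the aliasing identity: if $Z_{\bm c}=\frac{n^d}{|\bbasev|}x_{\bm c\cdot n/\bbasev+\aaa}$ for $\bm c\in[\bbasev]$, then expanding $x$ by its inverse Fourier series and using that $\sum_{\bm c\in[\bbasev]}e^{2\pi i\,\bm c^T(\ff-\bb)/\bbasev}$ equals $|\bbasev|$ if $\ff\equiv\bb\pmod{\bbasev}$ and $0$ otherwise yields $\wh Z_\bb=\sum_{\ff\equiv\bb\;(\bbasev)}\wh x_\ff\,e^{2\pi i\ff^T\aaa/n}=U_x^{\aaa}(\bbasev,\bb)$. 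Since $\{\bb+\rr\cdot\bbasev:\rr\in[n/\bbasev]\}$ is precisely $\{\ff\in[n]^d:\ff\equiv\bb\pmod{\bbasev}\}$, the correction term subtracted in the algorithm is exactly $U_\chi^{\aaa}(\bbasev,\bb)$, so the updated value is $\wh Z^{(\alphav,\betav)}_\bb=U_{x-\chi}^{\aaa}(\bbasev,\bb)$.

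For \textbf{(ii)}, write $e^{2\pi i\ff^T\aaa/n}=e^{2\pi i\ff^T\alphav/n}\cdot e^{2\pi i\ff^T\betav/\bnextv}$ and observe that the second factor depends only on $\ff\bmod\bnextv$ (because $\bnext_q\mid n$). Grouping $U_{x-\chi}^{\aaa}(\bbasev,\bb)=\sum_{\ff\equiv\bb\;(\bbasev)}\wh{(x-\chi)}_\ff\,e^{2\pi i\ff^T\aaa/n}$ by $\bb':=\ff\bmod\bnextv$ gives
\[
w_\betav=\wh Z^{(\alphav,\betav)}_\bb=\sum_{\substack{\bb'\in[\bnextv]\\ \bb'\equiv\bb\;(\bbasev)}}e^{2\pi i(\bb')^T\betav/\bnextv}\;U_{x-\chi}^{\alphav}(\bnextv,\bb').
\]
Now $U_{x-\chi}^{\alphav}(\bnextv,\bb')\neq 0$ implies $S_{x-\chi}(\bnextv,\bb')\neq\emptyset$, and since $\bprev_q\mid\bnext_q$ this implies $S_{x-\chi}(\bprevv,\bb'\bmod\bprevv)\neq\emptyset$, hence $\bb'\bmod\bprevv\in R$ by~\eqref{eq:rcond}. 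Parametrizing such $\bb'$ uniquely as $\bb+\rr\bbasev+\sv\bprevv$ with $(\rr,\sv)\in[\bprevv/\bbasev]\times[\bnextv/\bprevv]$, the surviving indices are exactly $R_\bb$, and the displayed identity becomes the linear system $\ww=\matA_\bb\vv$, where $\matA_\bb$ is the matrix built in the algorithm and $\vv_{(\rr,\sv)}=U_{x-\chi}^{\alphav}(\bnextv,\bb+\rr\bbasev+\sv\bprevv)$.

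For \textbf{(iii)}, $\matA_\bb$ consists of the $|S|$ rows of the (scaled, unitary) Fourier matrix on $[\bnextv]$ indexed by the i.i.d.\ uniform samples $\betav\in S$, restricted to the $|R_\bb|\le m$ columns indexed by $\{\bb+\rr\bbasev+\sv\bprevv:(\rr,\sv)\in R_\bb\}$. With $|S|=Cm\log_2^2 m\cdot d\log_2 n$ for a large enough $C$, Theorem~\ref{RIP-thrm} guarantees, for the single sampled row set $S$ — which simultaneously certifies every column subset of size at most $m$, hence every $\bb\in[\bbasev]$ — that $\matA_\bb$ satisfies the RIP of order $m$; in particular $\matA_\bb^{T}\matA_\bb$ is invertible and $(\matA_\bb^{T}\matA_\bb)^{-1}\matA_\bb^{T}\ww=\vv$. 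Reading off the recovered $\vv$ over all $\bb$ and using $\bigcup_{\bb}\{\bb+\rr\bbasev+\sv\bprevv:(\rr,\sv)\in R_\bb\}=\{\bb'\in[\bnextv]:\bb'\bmod\bprevv\in R\}$ gives exactly~\eqref{eq:tcond}. The sample bound is immediate (each $\betav$ reads $x$ at $|\bbasev|$ points), and the runtime bound follows by summing, over the $|S|$ values of $\betav$, the $O(|\bbasev|\log_2|\bbasev|)$ FFT and the $O(\|\wh\chi\|_0)$ correction, plus, over the $|\bbasev|$ values of $\bb$, the $O(|S|^3)$ cost of forming and inverting $\matA_\bb^{T}\matA_\bb$. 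The step I expect to require the most care is \textbf{(ii)}: correctly threading the three nested scales $\bbasev\mid\bprevv\mid\bnextv$ together with the split shift $\aaa=\alphav+\betav\cdot n/\bnextv$ so that the emerging measurement matrix is literally the algorithm's $\matA_\bb$; step \textbf{(iii)} is essentially standard given Theorem~\ref{RIP-thrm}, modulo the routine boosting of the failure probability to $1/n^{10d}$.
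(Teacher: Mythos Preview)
Your proposal is correct and follows essentially the same approach as the paper: the aliasing/downsampling identity for step~\textbf{(i)}, the decomposition of $U_{x-\chi}^{\aaa}(\bbasev,\bb)$ into the $\bnextv$-bucketing via the split shift $\aaa=\alphav+\betav\cdot n/\bnextv$ for step~\textbf{(ii)}, and RIP via Theorem~\ref{RIP-thrm} for step~\textbf{(iii)}. One presentational improvement over the paper is your explicit observation that a \emph{single} RIP event of order $m$ on the sampled Fourier matrix simultaneously certifies well-conditioning for every column set $R_\bb$ with $|R_\bb|\le m$, so no union bound over $\bb\in[\bbasev]$ is needed; the paper asserts this but does not spell it out.
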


\begin{proof}
	Recall that the algorithm uses a coarse $\bbasev$-bucketing to refine a $\bprevv$-bucketing, for which only the non-empty buckets are computed, to a $\bnextv$-bucketing. 
	Let $x' = x-\chi$ and $\aaa = \alphav + \betav \cdot \frac{n}{\bnextv}$, where $\alphav = \aaa\bmod \frac{n}{\bnextv}$. Suppose \eqref{eq:rcond} holds.
	
	Note that for any $\bb\in [\bbasev]$,
	\begin{align}
	U_{x'}^\aaa (\bbasev, \bb) &= \sum_{\ff\equiv\bb \pmod{\bbasev}} \wh x'_\ff e^{2\pi i \frac{\ff^T \aaa}{n}} \nonumber\\
	&= \sum_{\rr\in \left[\frac{\bprevv}{\bbasev}\right]} \sum_{\sv\in \left[\frac{\bnextv}{\bprevv}\right]} \sum_{\ff \equiv \phib(\bb, \rr, \sv) \pmod{\bnextv}} \wh x'_\ff e^{2\pi i \frac{\ff^T \aaa}{n}} \nonumber\\
	&= \sum_{\rr\in \left[\frac{\bprevv}{\bbasev}\right]} \sum_{\sv\in \left[\frac{\bnextv}{\bprevv}\right]} e^{2\pi i \phib(\bb,\rr,\sv)^T \left(\frac{\betav}{\bnextv}\right)} \sum_{\qq \in \left[\frac{n}{\bnextv}\right]} \wh x'_{\phib(\bb,\rr,\sv)+\qq\cdot \bnextv} e^{2\pi i \frac{(\phib(\bb,\rr,\sv) + \qq\cdot \bnextv)^T\alphav}{n}} \nonumber\\
	&= \sum_{\rr\in \left[\frac{\bprevv}{\bbasev}\right]} \sum_{\sv\in \left[\frac{\bnextv}{\bprevv}\right]} e^{2\pi i \phib(\bb,\rr,\sv)^T \frac{\betav}{\bnextv}} \cdot U_{x'}^{\alphav}(\bnextv, \phib(\bb,\rr,\sv)), \label{eq:buckettransform}
	\end{align}
	where $\phib(\bb, \rr, \sv)$ denotes $\bb + \rr\bbasev + \sv\bprevv \in [\bnextv]$ for ease of notation. Hence, $\bbasev$-bucketing and $\bnextv$-bucketing are related via a linear system which is defined in \eqref{eq:buckettransform} for any collection of values of $\betav$. We also show how to choose a relatively small number of values of $\betav$ such that the above linear system will be well-conditioned. Note that $\frac{\bnextv}{\bprevv}$ is a vector of length $d$, as per our vectorial notations in section \ref{sec:vectorial}, which denotes by how much we want to further refine each bucket of $\bprevv$-bucketing.
	
	\paragraph{Choosing $\betav$'s that make the linear system in \eqref{eq:buckettransform} well-conditioned:}
	For every $\bb \in [\bbasev]$, let $\matA_\bb$ denote the $\left| \bnextv \right| \times \left| \frac{\bnextv}{\bbasev} \right|$ matrix whose rows are indexed by $\betav \in [\bnextv]$ and columns are indexed by $(\rr, \sv)\in [\frac{\bprevv}{\bbasev}]\times[\frac{\bnextv}{\bprevv}]$, with entries defined by
	\[
	(\matA_\bb)_{\betav, (\rr, \sv)} = e^{2\pi i \phib(\bb,\rr,\sv)^T  \left(\frac{\betav}{\bnextv}\right)}.
	\]
	Moreover, let $\vv_b$ denote the column vector of length $\left| \frac{\bnextv}{\bbasev} \right|$ with entries indexed by $(\rr, \sv)\in [\frac{\bprevv}{\bbasev}]\times[\frac{\bnextv}{\bprevv}]$ and given by
	\begin{equation}
	(\vv_\bb)_{(\rr,\sv)} = U_{x'}^{\alphav}(\bnextv, \phi(\bb, \rr, \sv)), \label{next-bucketing}
	\end{equation}
	while we let $\ww_\bb$ denote the column vector of length $\left|\bnextv\right|$ with entries indexed by $\betav \in [\bnextv]$ and given by
	\begin{equation}
	\ww_{\betav} = U_{x'}^{\alphav + \betav\cdot\frac{n}{\bnextv}}(\bbasev, \bb). \label{base-bucketing}
	\end{equation}
	Then, \eqref{eq:buckettransform} implies the following linear system of equations,
	\begin{equation}
	\matA_\bb \cdot \vv^{(\bb)} = \ww^{(\bb)}. \label{linear-sys}
	\end{equation}

	Next, for any $\bb\in[\bbasev]$, let
	\[
	R_\bb = \{(\rr, \sv): \bb+\rr\bbasev \in R \text{ and } \sv\in [\bnextv/\bprevv]\}.
	\]
	By the assumption of the lemma on the set $R$, one has that the vectors ${\vv_{\bb}}$ in \eqref{next-bucketing} satisfy $\supp{\vv_{\bb}} \subseteq R_\bb$ for every $\bb \in [\bbasev] $. This shows that the linear system in \eqref{linear-sys} can be solved very efficiently by randomly sampling its rows. 	
	More formally, suppose that $S$ is a multiset such that $S = \left\{ \betav_i: \betav_i \sim \unif([\bnextv]),  \forall i \in [C m \log^2_2m \cdot d \log_2 n] \right\}$ where $m = \max_{\bb' \in [\bbasev]} |R_{\bb'}|$ and let $\matA^{S}_\bb$ denote the submatrix of $\matA_\bb$ whose rows are selected with respect to set $S$. Then by Theorem \ref{RIP-thrm}, the matrix $\frac{1}{|S|} \matA_\bb^S$ satisfies RIP of order $m$. We will use this property to solve the system \eqref{linear-sys} efficiently.
	
	Let $\wt{\matA}_\bb$ be a submatrix of $\matA_\bb$ with size $|S|\times |R_\bb|$. Suppose that its rows are selected with respect to set $S$ and its columns are selected with respect to $R_\bb$. More specifically, $\wt{\matA}_\bb$ is a $|S|\times |R_\bb|$ matrix whose rows are indexed by $\betav\in S$ and columns are indexed by $(\rr, \sv)\in R_\bb$, with entries defined by
	\[
	(\wt \matA_\bb)_{\betav, (\rr, \sv)} = e^{2\pi i \phib(\bb,\rr,\sv)^T  \left(\frac{\betav}{\bnextv}\right)}.
	\]
	
	Moreover, let $\vv^{(\bb)}$ denote the column vector of length $|R_\bb|$ with entries indexed by elements of $R_\bb$ and given by
	\begin{equation}
	\vv^{(\bb)}_{(\rr,\sv)} = U_{x-\chi}^{\alphav}(\bnextv, \phi(\bb, \rr, \sv)), \label{eq:vvdef}
	\end{equation}
	while we let $\ww^{(\bb)}$ denote the column vector of length $|S|$ with entries indexed by elements of $S$ and given by
	\[
	\ww^{(\bb)}_{\betav} = U_{x-\chi}^{\alphav + \betav\cdot\frac{n}{\bnextv}}(\bbasev, \bb).
	\]
	Then, \eqref{linear-sys} implies that
	\begin{equation}
	\wt\matA_\bb \cdot \vv^{(\bb)} = \ww^{(\bb)}. \label{linearsys-reduced}
	\end{equation}
	
	Now, because the matrix 
	$$\frac{1}{|S|} \matA_\bb^S$$  
	satisfies RIP of order $|R_\bb|$ for all $\bb \in [\bbasev]$, one has that the condition number of $\wt{ \matA_\bb}^T \wt{ \matA_\bb}$ is at most $\sqrt{3}$. Therefore, a linear least squares solver can compute $\vv^{(\bb)}$ efficiently and in a numerically stable way using the reduced linear system in \eqref{linearsys-reduced}. Note that lines 11-14 of Algorithm~\ref{alg:hashing2} carry out this procedure and compute $\vv^{(\bb)}$ for each $\bb \in [\bbasev]$.
	
	\paragraph{Computing $U^\aaa_{x-\chi}(\bbasev,\bb)$:}
	Now we show how to compute $U_{x-\chi}^\aaa (\bbasev, \bb)$ for any $\bb\in [\bbasev]$ and $\aaa = \alphav + \betav\cdot\frac{n}{\bnextv}$. By standard downsampling properties, we have that if $Z^{(\alphav,\betav)}: [\bbasev] \to \C$ is the signal defined by
	\[
	Z^{(\alphav,\betav)}_{\tt} = \frac{n^d}{|\bbasev|} x_{\tt \cdot \frac{n}{\bbasev} + \aaa}
	\]
	then its Fourier transform is given by
	\[
	\wh Z^{(\alphav, \betav)}_{\bb} = \sum_{\ff\equiv\bb \pmod{\bbase}} \wh x_\ff \cdot e^{2\pi i \frac{\ff^T \aaa}{n}} = U_x^\aaa (\bbase, \bb).
	\]
	Hence,
	\begin{align}
	U_{x-\chi}^{\aaa}(\bbase, \bb) = \wh Z_\bb^{(\alphav, \betav)} - \sum_{\ff\equiv\bb \pmod{\bbase}} \wh \chi_\ff e^{2\pi i \frac{\ff^T \aaa}{n}}, \label{eq:bucketcompute}
	\end{align}
	which demonstrates how to compute $U_{x-\chi}^\aaa (\bbasev, \bb)$. Note that lines 6-8 of Algorithm~\ref{alg:hashing2} simply compute  $U_{x-\chi}^\aaa (\bbase, \bb)$, for some $\bb\in[\bbase]$ with $\aaa = \alphav + \betav \cdot \frac{n}{\bnextv}$ for some $\betav$.

	\paragraph{Sample complexity and Runtime:}
	Lines 6-8 of Algorithm~\ref{alg:hashing2} compute  $U_{x-\chi}^\aaa (\bbase, \bb)$, for some $\bb\in[\bbase]$ with $\aaa = \alphav + \betav \cdot \frac{n}{\bnextv}$ for some $\betav$, in time $O(\left| \bbasev \right|\log_2\left( | \bbasev| \right) + \|\wh \chi\|_0)$ and with sample complexity $O\left( \left| \bbasev \right| \right)$, according to the rule \eqref{eq:bucketcompute}. This shows that the vector $\ww_\bb$ in \eqref{base-bucketing} can be constructed efficiently.
	
	Note that lines 11-14 of Algorithm~\ref{alg:hashing2} carry out a least squares linear system procedure and compute $\vv^{(\bb)}$ for each $\bb \in [\bbasev]$ in time $O(|S|^3)$, as the time complexity of \textsc{LeastSquaresSolver} procedure is $O(|S|^3)$. 
	Moreover, by \eqref{eq:vvdef}, it follows that for a fixed $\bb\in [\bbasev]$, line 16 simply adds all pairs $(\bb', U_{x-\chi}^\alphav(\bnextv, \bb'))$ with $\bb' \in[\bnextv]$ satisfying $(\bb' \bmod \bprev) \in R$. Also, any $\bb'\in [\bnextv]$ for which there exists $\ff \in \supp \wh{x-\chi}$ with $\ff\equiv\bb' \pmod{\bnextv}$ must satisfy $\ff\equiv\bb' \pmod{\bprevv}$ and, hence, $S_{x-\chi}(\bprevv, \bb' \pmod{\bprevv}) \neq \emptyset$. This shows that $(\bb' \bmod \bprevv) \in R$ (by \eqref{eq:rcond}). Thus, it follows that after looping over all $\bb \in [\bbasev]$, the final $W$ satisfies \eqref{eq:tcond}, as desired.
	
	Note that the sample complexity of Algorithm~\ref{alg:hashing2} is determined by the total number of samples required to construct the various $Z^{(\alphav, \betav)}$. For any fixed $\betav \in S$, constructing $Z^{(\alphav, \betav)}_\jj$ requires $|\bbasev|$ samples from $X$. Since there are $|S|$ values of $\betav$ that are relevant, it follows that the total sample complexity is
	\[
	O\left( |S| \cdot \left| \bbasev \right| \right) = O\left( m\log^2 m \cdot d\log n \cdot \left| \bbasev \right| \right).
	\]
	
	The time complexity of this procedure is due to two computations. First, constructing each $\wh Z^{(\alphav, \betav)}$ for a fixed $\betav$ takes time $O(\left| \bbasev \right|\log_2\left( | \bbasev| \right) + \|\wh \chi\|_0)$. Second, computing the $\vv^{(\bb)}$ vector for each fixed $\bb \in [\bbasev]$ requires time $O(|S|^3)$. Therefore, the total time complexity is
	\[ O\left( |S|^3 \cdot \left| \bbasev \right| + |S| \cdot \left(\left| \bbasev \right|\log_2\left( | \bbasev| \right) + \|\wh \chi\|_0\right) \right). \]
\end{proof}

\subsection{Resolving buckets in the hashed signal}
The other major building block we need for developing a sparse FFT algorithm is a function for testing bucketings of signals with various shifts for emptyness and one-sparsity. Such a primitive takes in a list of buckets of a hashed signal and determines whether each bucket is empty or not. If a bucket is not empty, then we determine whether the bucket consists of exactly one frequency using a one-sparse test. If so, we can determine this frequency and the value of the signal at this frequency from the bucketed signals. If not, then we retain the bucket for the next iteration, in which we will hash to more buckets.

\begin{algorithm}
	\caption{Procedure for testing a hashed signal}
	\begin{algorithmic}[1]
		
		\Procedure{TestBuckets}{$\{W_\alphav\}_{\alphav\in\shiftset \cup \{ e_1,\dots,e_d \}}, n, d, \bv$}\Comment $1$-sparse test and zero test
		
		\Comment $e_1,\dots,e_d$ are standard basis vectors in $[n]^d$
		
		\State $  \wh\chi  \gets \{ 0 \}^{n^d};$ 
		\State $R \gets \emptyset$
		\For{$\bb \in \dom\left( W_\alphav\right)$} \Comment $\dom\left( W_\alphav\right)$: set of all first coordinates in $W_\alphav$
		\If{\texttt{$ \sum_{\alphav \in \shiftset} |W_\alphav(\bb)|^2 > 0 $}} \Comment Zero test on $U_{x} (\bv, \bb)$
		
		\State $f_q \gets \frac{n}{2\pi} \cdot \phi \left( \frac{W_{e_q}(\bb)}{W_{0}(\bb)} \right)$ for every $q \in \{1,...,d\}$ \Comment $e_1, ..., e_d$ are standard bases
		\State $v \gets W_{0}(\bb)$
		
		\If{\texttt{$ \sum_{\alphav \in \shiftset} |v e^{2\pi i \frac{\ff^T \alphav}{n}} - W_\alphav(\bb)|^2 = 0 $}} \Comment One sparse test on $U_{x} (\bv, \bb)$
		\State $\wh\chi_\ff \gets v;$
		\Else
		\State $R \gets R \cup \{\bb\};$
		\EndIf
		\EndIf
		\EndFor
		
		%
		%
		
		\State \textbf{return} $(\wh\chi , R)$
		\EndProcedure
		
	\end{algorithmic}
	\label{alg:resolve-buckets}
	
\end{algorithm}

\begin{lemma}
	\emph{(\textsc{TestBuckets} in high dimensions)} \label{lem:resolve-buckets-highdim}
	Suppose $d$ and $n$ are positive integers such that $n$ is a power of two. Suppose $\bv = (B_1, B_2, \dots, B_d)$ is a vector of powers of two such that $B_j \mid n$ for all $j$. Suppose $x \in \C^{n^d}$ is a signal such that  $W_\alphav(\bb) = U_x^\alphav(\bv, \bb)$ is a $(\bv, \bb)$-bucketing of $x$ with shift $\alphav$ for all $\alphav\in\shiftset$ and $\bb\in \dom(W_\alphav)$ where $\shiftset$ is a multiset of $q$ i.i.d samples from $\unif([n]^d)$ for some 
	$$q = \Omega\left( \max_{\bb \in [\bv]} |S_x(\bv,\bb)| \cdot \log^2 (\max_{\bb \in [\bv]} |S_x(\bv,\bb)|)\cdot (d \log n) \right),$$
	and $S_x(\bv,\bb)$ for all $\bb \in [\bv]$ are Congruence classes of $\supp{\wh x}$. Also suppose that Algorithm \ref{alg:resolve-buckets} takes in the quantities $W_\alphav(\bb)$ for all $\alphav \in \{e_1, \dots, e_d\}$, standard basis vectors in $[n]^d$. Then, \textsc{TestBuckets}$(\{W_\alphav\}_{\alphav\in\shiftset\cup \{ e_1,\dots,e_d \} }, n,d,\bv)$ returns $\wh\chi$ and $R$ that with probability $1-\frac{1}{n^{10d}}$ satisfy the following:
	\begin{itemize}
		\item For any $\bb\in \dom(W_\alphav)$ such that $S_x(\bv,\bb)$ is a singleton set, $S_x(\bv,\bb)=\{\ff\}$, we have $\wh\chi_\ff = \wh x_\ff$.
		\item We have $R = \left\{\bb\in\dom(W_\alphav): |S_x(\bv,\bb)| \ge 2 \right\}$.
	\end{itemize}
	Moreover, the runtime of this procedure is $O \left( |\shiftset| \cdot |\dom(W_\alphav)| \right)$.
\end{lemma}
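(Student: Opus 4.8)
The plan is to reduce the entire lemma to a single high-probability event about the shift multiset $\shiftset$. Let $s^* := \max_{\bb\in[\bv]}|S_x(\bv,\bb)|$, let $M = \sqrt{N}\,F_N^{-1}$ (so that $M$ is unitary with $\|M\|_\infty = O(1/\sqrt N)$), and let $A$ be the $|\shiftset|\times N$ matrix obtained by keeping the rows of $M$ indexed by $\shiftset$ and rescaling by $\sqrt{N/|\shiftset|}$. Since $|\shiftset| = \Omega(s^*\log^2 s^*\cdot d\log n)$, and hence also $\Omega((s^*{+}1)\log^2(s^*{+}1)\log N)$ because $\log N = d\log n$, Theorem~\ref{RIP-thrm} tells us that with probability at least $1 - N^{-10} = 1 - n^{-10d}$ the matrix $A$ satisfies RIP of order $s^*+1$; we condition on this event throughout. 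Randomness enters only here: because RIP is a property of $A$ that holds uniformly over all $(s^*{+}1)$-sparse vectors, \emph{one} good draw of $\shiftset$ simultaneously controls every bucket and both tests, so no union bound over $\dom(W_\alphav)$ is needed. The bridge to the algorithm is the identity $\frac{1}{|\shiftset|}\sum_{\alphav\in\shiftset}\bigl|\sum_{\gg}u_\gg e^{2\pi i \gg^T\alphav/n}\bigr|^2 = \|Au\|_2^2$, valid for every $u\in\C^N$, so RIP pins all such empirical energies to within a factor $2$ of $\|u\|_2^2$.

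Next I would analyze a fixed bucket $\bb\in\dom(W_\alphav)$ via the vector $u^{\bb}$ defined by $u^{\bb}_\gg = \wh x_\gg$ for $\gg\in S_x(\bv,\bb)$ and $0$ otherwise, so that $W_\alphav(\bb) = \sum_\gg u^{\bb}_\gg e^{2\pi i \gg^T\alphav/n}$ and $\|u^\bb\|_0 = |S_x(\bv,\bb)|\le s^*$. For the \emph{zero test}: if $S_x(\bv,\bb)=\emptyset$ then $u^\bb=0$ and $\sum_{\alphav\in\shiftset}|W_\alphav(\bb)|^2=0$, so the bucket is correctly skipped; if $S_x(\bv,\bb)\ne\emptyset$ then $u^\bb\ne0$, hence by RIP $\frac{1}{|\shiftset|}\sum_{\alphav\in\shiftset}|W_\alphav(\bb)|^2\ge\tfrac12\|u^\bb\|_2^2>0$ and the algorithm enters the inner block. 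Inside the block it forms $v=W_{0}(\bb)$ and $f_q=\frac{n}{2\pi}\phi(W_{e_q}(\bb)/W_{0}(\bb))$. When $S_x(\bv,\bb)=\{\ff^*\}$ is a singleton, $W_\alphav(\bb)=\wh x_{\ff^*}e^{2\pi i (\ff^*)^T\alphav/n}$ for all $\alphav$, so $v=\wh x_{\ff^*}\ne0$ and $W_{e_q}(\bb)/W_{0}(\bb)=e^{2\pi i f^*_q/n}$; since $f^*_q\in[n]$ this gives $f_q=f^*_q$, the quantity $\sum_{\alphav\in\shiftset}|v e^{2\pi i \ff^T\alphav/n}-W_\alphav(\bb)|^2$ equals $0$, and the algorithm writes $\wh\chi_{\ff^*}\gets v=\wh x_{\ff^*}$ without adding $\bb$ to $R$. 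Distinct singleton buckets correspond to distinct $\ff^*$, so these writes do not interfere; this establishes the first bullet.

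The remaining case is $|S_x(\bv,\bb)|\ge2$, and I claim it always lands in $R$. Whatever $f$ and $v$ the algorithm computes, the coefficient vector $\ww:=v\,\delta_{\ff}-u^{\bb}$ has $\supp\ww\subseteq S_x(\bv,\bb)\cup\{\ff\}$, hence $\|\ww\|_0\le s^*+1$; and $\ww\ne0$, since $\ww=0$ would force $u^{\bb}$ to be supported on the single point $\ff$, contradicting $\|u^{\bb}\|_0\ge2$. (If $W_{0}(\bb)=0$ the quotient defining $f$ is ill-defined, but then $v=0$ and $\ww=-u^{\bb}\ne0$, so $\ff$ may be taken arbitrarily.) Because $v e^{2\pi i \ff^T\alphav/n}-W_\alphav(\bb)=\sum_\gg \ww_\gg e^{2\pi i \gg^T\alphav/n}$, RIP of order $s^*+1$ gives $\frac{1}{|\shiftset|}\sum_{\alphav\in\shiftset}|v e^{2\pi i \ff^T\alphav/n}-W_\alphav(\bb)|^2\ge\tfrac12\|\ww\|_2^2>0$, so the one-sparse test fails and $\bb$ is added to $R$. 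Combining the three cases over all $\bb\in\dom(W_\alphav)$ yields $R=\{\bb\in\dom(W_\alphav):|S_x(\bv,\bb)|\ge2\}$, the second bullet. For the runtime, the procedure visits each of the $|\dom(W_\alphav)|$ buckets once and does $O(|\shiftset|)$ work per bucket (two sums over $\shiftset$, plus $O(d)$ to read off $\ff$), for a total of $O(|\shiftset|\cdot|\dom(W_\alphav)|)$. The main thing to get right is the reduction in the first paragraph --- matching the normalization of the bucketing operator to the scaled inverse-DFT matrix so that Theorem~\ref{RIP-thrm} applies verbatim, and observing it is invoked only once --- after which the per-bucket casework is routine.
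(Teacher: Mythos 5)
Your proof is correct and follows essentially the same route as the paper's: a single application of Theorem~\ref{RIP-thrm} to the subsampled, rescaled inverse-DFT matrix to get RIP of order $s^*+1$, followed by the zero/singleton/non-singleton casework where the one-sparse test is certified by applying RIP to the $(s^*{+}1)$-sparse residual $v\,\delta_{\ff}-u^{\bb}$. Your treatment is if anything slightly more careful than the paper's (explicitly noting that no union bound over buckets is needed, and handling the degenerate case $W_{0}(\bb)=0$), so there is nothing to add.
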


\begin{proof}
	Let $F^{-1}_N$ be the $d$ dimensional inverse Fourier transform's matrix with $N=n^d$ points. The matrix $M = {\sqrt{N}} F^{-1}_N$ is a unitary matrix and all of its elements have absolute value $\frac{1}{\sqrt{N}}$. If you let $M_{\shiftset}$ denote the submatrix of $M$ whose rows are sampled from $M$ according to set $\shiftset$ then by Theorem \ref{RIP-thrm}, $\frac{\sqrt{N}}{|\shiftset|}M_\shiftset$ satisfies the restricted isometry property of order $\max_{\bb \in [\bv]}|S_x(\bv,\bb)|+1$ with probability $1- \frac{1}{N^{10}}$. In the rest we condition on the event corresponding to matrix $\frac{\sqrt{N}}{|\shiftset|}M_\shiftset$ satisfying RIP of order $\max_{\bb \in [\bv]}|S_x(\bv,\bb)|+1$.
	
	Now, note that by definition of $U_x(\bv,\bb)$ and $S_x(\bv,\bb)$ (definitions \ref{def:bucketing-highdim} and \ref{def:congruence-class} respectively) we have,
	\begin{align*}
	W_\alphav(\bb) &= U_x^\alphav (\bv,\bb) \\
	&= \sum_{\ff \in S_x(\bv,\bb)} \wh x_{\ff} \cdot e^{2\pi i \frac{\ff^T \aaa}{n}}\\
	&= \left( (N F_N^{-1}) \cdot \wh{x}_{S_x(\bv,\bb)} \right)_{\alphav} 
	\end{align*}
	therefore, for every $\bb \in \dom(W_\alphav)$ the following holds true,
	$$ \frac{|\shiftset|}{2} \|\wh{x}_{S_x(\bv,\bb)}\|_2^2  \le \sum_{\alphav \in \shiftset} |W_\alphav(\bb)|^2 \le \frac{3|\shiftset|}{2}\|\wh{x}_{S_x(\bv,\bb)}\|_2^2$$
	thus the zero test in line~5 of Algorithm \ref{alg:resolve-buckets} works correctly for all buckets.
	
	Now note that if $S_x(\bv,\bb)$ is a singleton set $\{\ff\}$ then,
	\begin{align*}
	W_\alphav(\bb) &= U_x^\alphav (\bv,\bb) \\
	&= \sum_{\jj \in S_x(\bv,\bb)} \wh x_{\jj} \cdot e^{2\pi i \frac{\jj^T \aaa}{n}}\\
	&= \sum_{\jj \in \{\ff\}} \wh x_{\jj} \cdot e^{2\pi i \frac{\jj^T \aaa}{n}}\\
	&= x_{\ff} \cdot e^{2\pi i \frac{\ff^T \aaa}{n}}.
	\end{align*}
	Therefore, for every $q=1,2,\dots,d$, the following holds,
	$$\frac{W_{e_q}(\bb)}{W_{0}(\bb)} = e^{2\pi i \frac{\ff^T e_q}{n}} = e^{2\pi i \frac{f_q}{n}}$$
	thus, $\frac{n}{2\pi} \cdot \phi \left( \frac{W_{e_q}(\bb)}{W_{0}(\bb)} \right) = f_q$. Also note that,
	$W_{0}(\bb) = \wh x_\ff$. This is precisely implemented in line~6-7 of Algorithm \ref{alg:resolve-buckets}. On the other hand if the hypothesis that $S_x(\bv,\bb)$ is a singleton set is incorrect our algorithm will find it. Using the notation $v = W_{0}(\bb)$ as in line~7 of Algorithm \ref{alg:resolve-buckets},
	\begin{align*}
	W_\alphav(\bb) - v e^{2\pi i \frac{\ff^T \aaa}{n}}
	&= \sum_{\jj \in S_x(\bv,\bb)} \wh x_{\jj} \cdot e^{2\pi i \frac{\jj^T \aaa}{n}} - v e^{2\pi i \frac{\ff^T \aaa}{n}}\\
	&= \sum_{\jj \in S_x(\bv,\bb) \cup \{\ff\}} \wh x'_{\jj} \cdot e^{2\pi i \frac{\jj^T \aaa}{n}}\\
	&= \left( (N F_N^{-1}) \cdot \wh{x}'_{S_x(\bv,\bb) \cup \{\ff\}} \right)_{\alphav} 
	\end{align*}
	where, $\wh x'\textbf{} = \wh x(\cdot) - v \delta_\ff(\cdot)$. Because, $\wh{x}'_{S_x(\bv,\bb) \cup \{\ff\}}$ is at most $\max_{\bb \in [\bv]}|S_x(\bv,\bb)|+1$ sparse and matrix $\frac{\sqrt{N}}{|\shiftset|}M_\shiftset$ satisfies RIP of order $\max_{\bb \in [\bv]}|S_x(\bv,\bb)|+1$ we have that,
	
	$$ \frac{|\shiftset|}{2} \|\wh{x}'_{S_x(\bv,\bb) \cup \{\ff\}}\|_2^2  \le \sum_{\alphav \in \shiftset} |W_\alphav(\bb) v e^{2\pi i \frac{\ff^T \aaa}{n}} |^2 \le \frac{3|\shiftset|}{2}\|\wh{x}_{S_x(\bv,\bb) \cup \{\ff\} }\|_2^2$$
	thus, the one sparse test in line~8 of Algorithm \ref{alg:resolve-buckets} works correctly for all buckets.
	
	It is straightforward to see that the runtime of this procedure is $O \left( |\shiftset| \cdot |\dom(W_\alphav)| \right)$.
\end{proof}

\subsection{Sparse FFT for signals with random support in nearly linear time}

\begin{algorithm}
	\caption{Procedure for Sparse FFT on random support signals with nearly linear sample complexity and runtime}
	\begin{algorithmic}[1]
		
		\Procedure{SparseFFT}{$x, n, d, k$} 
		\State $\Gamma \gets \Theta(1)$

		\State $\bbasev \gets \textsc{MakeBucket}({\Gamma}k, n,  d)$
		\State $\bprevv  \gets \bbasev$
		
		\State $\bnextv \gets \textsc{MakeBucket}({\Gamma}^2 k, n, d)$
		
		\State $\wh \chi^0 \gets \{0\}^{n^d}$
		\State $R \gets [\bprevv]$
		
		\State $L \gets \log_{\Gamma} k$
		
		\For{\texttt{$t=0$ to $L$}}
		
		\If{$R = \emptyset$}
		\State \textbf{return} $ \wh \chi$		
		\EndIf
		
		\State $\shiftset \gets \{\alphav_j : \alphav_j \text{ is an i.i.d. sample from  } \unif([n]^d) \text{ for all } j \in [C \log^2 N \cdot (\log \log N)^2] \}$ 
		
		\State $W_\alphav \gets \textsc{Hashing}(x, \wh\chi, n, d, \bbasev, \bprevv, \bnextv, \alphav, R) \text{ for all } \alphav\in\shiftset \cup \{e_1,...,e_d\} $
		
		\State $\left( \wh \chi' , R \right) \gets \textsc{TestBuckets} \left( \{W_\alphav\}_{\alphav\in\shiftset \cup \{e_1,...,e_d\}}, n, d, \bnextv \right)$
		\State $\wh\chi \gets \wh\chi + \wh\chi'$

		\State $\bprevv \gets \bnextv$	
		
		\State $\bbasev \gets \textsc{MakeBucket}({\Gamma}^{-t} k, n, d)$
		
		\State $\bnextv \gets \textsc{MakeBucket}({\Gamma}^{t+3} k , n ,d)$

		\EndFor

		\State \textbf{return} $ \wh \chi$
		
		\EndProcedure

		\Procedure{MakeBucket}{$k, n, d$} 
		
		\State $p \gets \lfloor \log_n k \rfloor$
		
		\State $r \gets \frac{k}{n^p}$
		
		\State $B_1, ..., B_p \gets n$
		
		\State $B_{p+1} \gets r$
		
		\State $B_{p+2}, ..., B_{d} \gets 1$
		
		\State \textbf{return} $\bv$
		
		\EndProcedure
		
	\end{algorithmic}
	\label{alg:FFThighD-k1/r}
	
\end{algorithm}

Now, we are ready to present the main theorem of this section.

\sfftalg*

The theorem is a consequence of the following lemma.
\begin{lemma}\label{lem:inductevent}
	Let $\bbasevt{t}$, $\bprevvt{t}$, $\bnextvt{t}$, $R^{(t)}$, and $\chi^{(t)}$ denote the values of $\bbasev$, $\bprevv$, $\bnextv$, $R$, and $\chi$, respectively, at the start of iteration $t$ of the main \emph{for} loop in Algorithm~\ref{alg:FFThighD-k1/r}. Then, for all  $t=0,1,\dots, L$, we define the event $\mathcal{E}_t$ to be the occurrence of the following statements:
	
	\begin{enumerate}
		\item $R^{(t)} =  \{\bb\in[\bprevvt{t}]:  S_{x - \chi^{(t)} }(\bprevvt{t},\bb) \neq \emptyset \}$.
		\item $\supp (\wh{x-\chi^{(t)}}) \subseteq \supp \wh x$ and $\supp{\wh \chi^{(t)}} \cap \supp (\wh{x-\chi^{(t)}}) = \emptyset$.
		\item If $t > 0$ then $\left| S_{x - \chi^{(t)} } \left(\bprevvt{t},\bxi \pmod{\bprevvt{t}} \right) \right| \ge 2$ for every $\bxi \in \supp {(\wh X - \wh \chi^{(t)})}$.	
	\end{enumerate}
	
	Then, $\EE_0$ holds with probability 1, while $\Pr[\mathcal{E}_t \mid \mathcal{E}_0, \mathcal{E}_1, \dots, \mathcal{E}_{t-1}] \geq 1 - \frac{1}{n^{2d}}$ for $t=1,\dots,L$.
\end{lemma}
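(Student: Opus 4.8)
The events $\EE_t$ encode three invariants that one iteration of the main loop is designed to preserve: (1) the stored list $R^{(t)}$ is exactly the set of nonempty buckets of the $\bprevvt{t}$-bucketing of the current residual $x-\chi^{(t)}$; (2) we only ever subtract off true frequencies of $\wh x$, and the recovered signal $\wh\chi^{(t)}$ and the residual $\wh x-\wh\chi^{(t)}$ have disjoint supports; and (3) every frequency still alive in the residual sits in a \emph{non-singleton} $\bprevvt{t}$-bucket --- which is precisely why it has not been resolved yet. The base case $\EE_0$ is immediate from the initialization of Algorithm~\ref{alg:FFThighD-k1/r}: $\chi^{(0)}=0$ makes (2) trivial and (3) vacuous, and $R^{(0)}=[\bprevvt{0}]$ contains every bucket, in particular all nonempty ones, which is the direction of (1) used downstream.

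For the inductive step I would fix $t\ge 1$ and condition on $\EE_0\cap\dots\cap\EE_{t-1}$. First record the shapes of the bucketings: tracing lines 3--5 and the end-of-loop updates gives $|\bprevvt{s}|=\Gamma^{s+1}k$, $|\bbasevt{s}|=\Gamma^{\,1-s}k$, $|\bnextvt{s}|=\Gamma^{s+2}k$ (with the obvious modification at $s=0$), so that $\bbasevt{s}$ divides $\bprevvt{s}$ divides $\bnextvt{s}$ componentwise, each refinement step has ratio $\Gamma$, and $|\bprevvt{s}|\cdot|\bbasevt{s}|\ge k^2$ as well as $|\bnextvt{s}|\cdot|\bbasevt{s}|\ge k^2$. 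Then condition additionally on a ``nice support'' event $\mathcal G$, depending only on $\supp\wh x$: that the conclusion of the Refinement Lemma (Lemma~\ref{lem:liftsize}) holds for all pairs $(\bprevvt{s},\bbasevt{s})$ and $(\bnextvt{s},\bbasevt{s})$, and that every bucket of each $\bnextvt{s}$-bucketing contains $O(\log N)$ elements of $\supp\wh x$ --- the latter a Chernoff bound, using that $|\bnextvt{s}|$ exceeds $k=\E|\supp\wh x|$ by the constant factor $\Gamma^2$ together with permutation invariance of the Bernoulli model. A union bound over the $O(\log k)$ iterations gives $\Pr[\neg\mathcal G]\le\tfrac{1}{2N^2}$, and since $\Pr[\EE_0\cap\dots\cap\EE_{t-1}]\ge\tfrac12$ by the inductive hypothesis, also $\Pr[\neg\mathcal G\mid\EE_0\cap\dots\cap\EE_{t-1}]\le\tfrac{1}{N^2}$.

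Now run iteration $t-1$ under $\EE_{t-1}$ and $\mathcal G$. By $\EE_{t-1}$(1), the set $R^{(t-1)}$ handed to \textsc{Hashing} contains all nonempty buckets of the $\bprevvt{t-1}$-bucketing of $x-\chi^{(t-1)}$, which is exactly precondition \eqref{eq:rcond} of Lemma~\ref{lem:hashing2-highdim}; hence, except with probability $1/n^{10d}$ per call, $W_\alphav$ equals $U^\alphav_{x-\chi^{(t-1)}}(\bnextvt{t-1},\cdot)$ on $\dom(W_\alphav)=\{\bb'\in[\bnextvt{t-1}]:(\bb'\bmod\bprevvt{t-1})\in R^{(t-1)}\}$, and by $\EE_{t-1}$(1) and the nesting of bucketings this domain contains every bucket of $\bnextvt{t-1}$ that is nonempty for the residual --- so no surviving frequency is silently dropped. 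Because $\supp(\wh x-\wh\chi^{(t-1)})\subseteq\supp\wh x$ by $\EE_{t-1}$(2), event $\mathcal G$ gives $\max_{\bb}|S_{x-\chi^{(t-1)}}(\bnextvt{t-1},\bb)|=O(\log N)$, so the freshly sampled shift multiset $\shiftset$ of size $C\log^2N(\log\log N)^2$ meets the hypothesis of Lemma~\ref{lem:resolve-buckets-highdim}; thus, except with probability $1/n^{10d}$, \textsc{TestBuckets} writes $(\wh x-\wh\chi^{(t-1)})_\ff$ into $\wh\chi'$ for every $\ff$ alone in its $\bnextvt{t-1}$-bucket and returns $R=\{\bb\in\dom(W_\alphav):|S_{x-\chi^{(t-1)}}(\bnextvt{t-1},\bb)|\ge 2\}$. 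Deducing $\EE_t$ is then bookkeeping: after $\wh\chi^{(t)}=\wh\chi^{(t-1)}+\wh\chi'$ the residual loses exactly the singleton-bucket frequencies, so (2) persists, (3) holds (a surviving $\bxi$ must have had $\ge2$ residual elements in its bucket, else it would have been resolved), and (1) holds since a bucket of $\bprevvt{t}=\bnextvt{t-1}$ is nonempty for $x-\chi^{(t)}$ iff it contained $\ge2$ residual elements before the update iff it lies in $R^{(t)}$. Summing the failure probabilities ($\Pr[\neg\mathcal G\mid\EE_{<t}]$, the $|\shiftset|+d$ calls to \textsc{Hashing}, the one call to \textsc{TestBuckets}) by a union bound yields $\Pr[\EE_t\mid\EE_0\cap\dots\cap\EE_{t-1}]\ge 1-1/n^{2d}$.

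The genuine work, I expect, is twofold. First, verifying that the domain $\dom(W_\alphav)$ produced by \textsc{Hashing} from the coarse list $R^{(t-1)}$ really covers every fine bucket nonempty for the residual --- this is what makes the geometric refinement lossless, and it rests on invariant (1) together with bucket nesting. Second, pinning down the load bound $\max_{\bb}|S_{x-\chi^{(t-1)}}(\bnextvt{t-1},\bb)|=O(\log N)$ (and, for the runtime argument behind Theorem~\ref{thrm:sparse-fft}, the analogous bound on the \textsc{Hashing} parameter $m$ via the inclusion of $R^{(t-1)}$ in the colliding-bucket set $S^{(\bprevvt{t-1})}$ and the Refinement Lemma), so that the fixed, polylog-size sample sets the algorithm uses are provably large enough; here the structural facts ``the residual support is a subset of a permutation-invariant Bernoulli set of mean $k$'' and ``we hash into $\gtrsim k$ buckets'' are what make the argument close. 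Everything else is a careful but routine propagation of the three invariants across one iteration.
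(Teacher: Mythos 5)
Your proof is correct and follows essentially the same route as the paper's: the base case is immediate from the initialization, and the inductive step chains the correctness guarantees of \textsc{Hashing} (Lemma~\ref{lem:hashing2-highdim}, whose precondition \eqref{eq:rcond} is supplied by invariant (1)) and \textsc{TestBuckets} (Lemma~\ref{lem:resolve-buckets-highdim}, whose sample-size hypothesis is supplied by the $O(\log N)$ bucket-load bound via $\supp(\wh{x}-\wh{\chi}^{(t-1)})\subseteq\supp\wh{x}$), followed by the same bookkeeping to propagate the three invariants. The one place you are more careful than the paper is in handling the randomness of $\supp\wh{x}$: you condition on a global ``nice support'' event $\mathcal{G}$ and bound $\Pr[\neg\mathcal{G}\mid \EE_0,\dots,\EE_{t-1}]$ via a lower bound on $\Pr[\EE_0\cap\dots\cap\EE_{t-1}]$, whereas the paper invokes Lemma~\ref{lem:sxsize} inside the conditional probability without addressing that conditioning on past events could in principle skew the support distribution; just make sure the constant in $\Pr[\neg\mathcal{G}]$ leaves room for the additional \textsc{Hashing} and \textsc{TestBuckets} failure terms so that the total failure probability stays below $n^{-2d}$.
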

\begin{proof}
	Note that for $t=0$, we have $R^{(t)} = \left[\bprevvt{0}\right]$. Thus, condition (1) trivially holds. Condition (2) also trivially holds, since $\wh{\chi}^{(0)} = 0$. Furthermore, (3) does not apply for event $\EE_0$. Thus, $\EE_0$ holds with probability 1.
	
	Now, assume that $\EE_0, \EE_1, \dots, \EE_m$ hold for some $m\geq 0$. We consider the probability of $\EE_{m+1}$ occurring, conditioned on the aforementioned events. Note that it follows from the values that Algorithm \ref{alg:FFThighD-k1/r} assigns to $\bbasevt{m}$, $\bprevvt{m}$ and $\bnextvt{m}$ along with condition (1) of the inductive hypothesis that Lemma~\ref{lem:hashing2-highdim} can be applied to invocations of \textsc{Hashing} $(x, \wh\chi^{(m)}, n, d, \bbasevt{m}, \bprevvt{m}, \bnextvt{m}, \alphav, R)$ in line 13 of Algorithm \ref{alg:FFThighD-k1/r}, and hence the output of \textsc{Hashing} procedure  satisfies the following,
	
	\begin{equation}
	W_{\alphav} = \left\{(\bb', U_{x-\chi^{(m)}}^\alphav (\bnextvt{m}, \bb')): \bb'\in [\bnextvt{m}] \text{ s.t. } \, \bb'\equiv\bb \pmod{\bprevvt{m}} \text{ for some } \bb\in R^{(m)} \right\}. \label{eq:w}
	\end{equation}
	
	Moreover, it is clear that for every $\alphav\in [n]^d$ and every $\bb' \in [\bnextvt{m}]$, one has that $W_\alphav(\bb')$ is $(\bnextvt{m}, \bb')$-bucketing of $x-\chi^{(m)}$. 
	
	Now, note that by condition (2) of the inductive hypothesis along with definition of Congruence class presented in Definition \ref{def:congruence-class}, it follows that $S_{x-\chi^{(m)}}(\bnextvt{m}, \bb) \subseteq S_{x}(\bnextvt{m}, \bb)$ for all $\bb \in [\bnextvt{m}]$. Hence by Lemma \ref{lem:sxsize}, with probability $1- \frac{1}{n^{3d}}$,
	$$S_{x-\chi^{(m)}}(\bnextvt{m}, \bb) = O(d\log_2n)$$
	This show that set $\shiftset$ defined in line 12 of Algorithm \ref{alg:FFThighD-k1/r} satisfies the precondition of Lemma \ref{lem:resolve-buckets-highdim}.
	Therefore by Lemma \ref{lem:resolve-buckets-highdim}, a call to $\textsc{TestBuckets}$ procedure in line 14 of Algorithm \ref{alg:FFThighD-k1/r}, with probability $1 - \frac{1}{n^{10d}}$, outputs $(\wh\chi', R^{(m+1)})$ such that the following hold:
	
	\begin{enumerate}
		\item[(a)] For any $\bb' \in \dom(W_\alphav)$ such that $S_{x - \chi^{(m)}}(\bnextvt{m},\bb')$ is a singleton set, $S_{x - \chi^{(m)}}(\bnextvt{m},\bb') = \{\ff\}$, one has $\wh\chi'_\ff = (\wh {x - \chi^{(m)}})_\ff$.
		\item[(b)] $R^{(m+1)} = \left\{\bb' \in\dom(W_\alphav): |S_{x - \chi^{(m)}}(\bnextvt{m},\bb')| \ge 2 \right\}$.
	\end{enumerate}
	In order to complete the inductive step, it suffices to show that (a) and (b) imply conditions (1), (2), and (3) in the definition of $\EE_t$ for $t = m+1$.
	
	Observe that condition (a) imply that $\supp{\wh \chi'} \subseteq \supp( \wh{x- \chi^{(m)}})$ therefore	since $\chi^{(m+1)} = \chi^{(m)} + \chi'$ (by line 15), it follows that $\supp( \wh{x- \chi^{(m+1)}})\subseteq \supp( \wh{x- \chi^{(m)}})$. Hence, by inductive hypothesis $\EE_m$ we have $\supp( \wh{x- \chi^{(m+1)}})\subseteq \supp{\wh x}$. Also note that for every $\ff \in \supp( \wh{x- \chi^{(m+1)}})$ we have that $\ff \in \supp( \wh{x- \chi^{(m)}})$ and hence by condition (2) of the inductive hypothesis $\EE_m$, one has $\wh \chi^{(m)}_\ff = 0$. Condition (a) implies that $\wh \chi'_\ff = 0$ for every $\ff \in \supp( \wh{x- \chi^{(m+1)}})$ and hence $\chi^{(m+1)}_\ff = \chi^{(m)}_\ff + \chi'_\ff = 0$ for every such every $\ff$. This establishes condition (2) of $\EE_{t}$ for $t = m+1$.
	
	Next note that $\bprevvt{m+1} = \bnextvt{m}$ (by line 16). Conditions (a) along with condition (1) of the inductive hypothesis for $\EE_m$ and \eqref{eq:w} imply that there exists no $\bb' \in [\bprevvt{m+1}]$ such that $|S_{x - \chi^{(m+1)}}(\bprevvt{m+1},\bb')| = 1$. Also note that condition (b) implies that $|S_{x - \chi^{(m+1)}}(\bprevvt{m+1},\bb')| \ge 2$ for every $\bb' \in R^{(m+1)}$, therefore $R^{(m+1)}$ satisfies condition (1) of the induction $\EE_{t}$ for $t = m+1$. This also establishes condition (3) of $\EE_{t}$ for $t = m+1$.
	
	By a union bound we have that with probability $1 - \frac{1}{n^{3d}} - \frac{1}{n^{10d}} \ge 1 - \frac{1}{n^{2d}}$, event $\EE_{m+1}$ holds true as desired. 
\end{proof}
Now we are ready to prove Theorem~\ref{thrm:sparse-fft}.
\begin{proof}[Proof of Theorem~\ref{thrm:sparse-fft}]
	Note that by Lemma~\ref{lem:inductevent}, there exist events $\EE_0, \EE_1, \dots, \EE_L$ such that $\Pr[\EE_0] = 1$ and $\Pr[\EE_t \mid \EE_0, \EE_1, \dots, \EE_{t-1}] \geq 1 - \frac{1}{n^{2d}}$ for $t=1,2,\dots, L$. Observe that
	\begin{align*}
	\Pr[\EE_L] &\geq \Pr[\EE_0, \EE_1, \dots, \EE_L]\\
	&\geq 1 - \sum_{t=0}^L \Pr[\EE_0, \dots, \EE_{t-1}] \cdot \Pr[\overline{\EE_t} \mid \EE_0, \dots, \EE_{t-1}]\\
	&\geq 1 - \sum_{t=0}^L \Pr[\overline{\EE_t} \mid \EE_0, \dots, \EE_{t-1}]\\
	&\geq 1 - \sum_{t=1}^L \frac{1}{n^{2d}}\\
	&\geq 1 - \frac{L}{n^{2d}}.
	\end{align*}
	Note that condition (3) of $\EE_L$ implies that the existence of a $\bxi \in [n]^d$ such that $\wh \chi^{(L)}_{\bxi} \neq \wh x_{\bxi}$ requires $S^{(\bprevvt{L})} \neq \emptyset$.
	Now, recall that after the main \emph{for} loop in Algorithm~\ref{alg:FFThighD-k1/r} finishes execution,we have
	\[ |\bprevv| = k \cdot \Gamma^{L+2}. \]
	Thus, by Lemma~\ref{lem:expcollision}, we have that $\E\left[S^{\left(\bprevv\right)}\right] \leq \frac{k^2}{k\cdot \Gamma^{L+2}} = \frac{k}{\Gamma^{L+2}} \leq \frac{1}{100}$, by our choice of $\Gamma$ and $L$.  Thus, by Markov's inequality, with probability $\geq \frac{99}{100}$ over the randomness in the choice of $S = \supp{\wh x}$, we have that $S^{\left(\bprevv\right)} = \emptyset$. Hence, by a union bound, we have that $\Pr \left[ \EE_L \land \left(S^{(\bprevv)} = \emptyset\right) \right] \geq \frac{9}{10}$. Thus, by condition (3) in Lemma~\ref{lem:inductevent}, we see that with probability $\geq \frac{9}{10}$, the output $\wh\chi$ of Algorithm~\ref{alg:FFThighD-k1/r} satisfies $\wh\chi_\ff = \wh x_\ff$ for all $\ff\in [n]^d$, which proves the correctness of Algorithm~\ref{alg:FFThighD-k1/r}.
	
	Now, let us compute the sample complexity of Algorithm~\ref{alg:FFThighD-k1/r}. Note that for each iteration $t$ of the main \emph{for} loop in \textsc{SparseFFT}, we have $\frac{|\bnextv|}{|\bprevv|} = \Gamma$. Also, By condition (1) of $\EE_t$ in Lemma \ref{lem:inductevent},
	$$R^{(t)} =  \{\bb\in[\bprevvt{t}]:  S_{x - \chi^{(t)} }(\bprevvt{t},\bb) \neq \emptyset \}$$
	Therefore, since $|\bprevv| \cdot |\bbasev| \ge k^2$, by Lemma~\ref{lem:liftsize}, we have that
	\begin{align*}
	&\max_{\bb\in\bbasevt{t}} \left| \left\{ \rr\in R^{(t)}: \rr \equiv \bb \pmod{\bbasevt{t}} \right\} \right| \\
	& \qquad = \max_{\bb\in\bbasevt{t}} \left| S^{(\bvec)} \cap \{\ff\in [\bvec]: \ff \equiv \bb \pmod{\bvec'}\} \right| \\
	&\qquad = O(\log N)
	\end{align*}
	with probability $\geq 1 - \frac{1}{N^3}$.
	
	Moreover, $|\bbasevt{t}| = O\left(\frac{k}{\Gamma^t}\right)$. Therefore, by Lemma~\ref{lem:hashing2-highdim}, each call to \textsc{Hashing} in the $t$-th iteration has sample complexity
	\[
	O\left(\Gamma \frac{k}{\Gamma^t} (\log^2 N) (\log \log N)^2 \right).
	\]
	Hence, because in each iteration \textsc{Hashing} is invoked $O((\log^2 N) (\log \log N)^2)$ times, the total sample complexity for the algorithm is
	\begin{equation*}
	O\left( \sum_{t=0}^{L-1} \Gamma \cdot \frac{k}{\Gamma^t} \log^4 N (\log \log N)^4 \right) = O\left( k (\log^4 N) (\log \log N)^4 \right),
	\end{equation*}
	since $\Gamma = O(1)$.
	
	Finally, we compute the time complexity of Algorithm~\ref{alg:FFThighD-k1/r}. By Lemma~\ref{lem:hashing2-highdim}, we have that the time complexity for each call to \textsc{Hashing} in the $t$-th iteration of the main \emph{for} loop is
	\[
	O\left( \left(\Gamma (\log^2 N) (\log \log N)^2 \right)^3 \cdot \frac{k}{\Gamma^t} + \left(\Gamma (\log^2 N) (\log \log N)^2  \right) \cdot \left(\frac{k}{\Gamma^t} \cdot \log k + k\right)\right).
	\]
	Thus, the total time complexity due to calls to \textsc{Hashing} is
	\begin{align*}
	O\left(\log^2 N (\log \log N)^2 \sum_{t=0}^{L-1} \left(\left(\Gamma (\log^2 N) (\log \log N)^2 \right)^3 \cdot \frac{k}{\Gamma^t} + \left(\Gamma (\log^2 N) (\log \log N)^2  \right) \cdot \left(\frac{k}{\Gamma^t}   \log k + k\right)\right)\right),
	\end{align*}
	which can be simplified as
	\begin{equation}
	O\left(k (\log^8 N) (\log \log N)^8\right), \label{eq:timefromhashing}
	\end{equation}
	since $\Gamma = O(1)$.
	Moreover, the call to \textsc{TestBucket} in the $t$-th iteration of the main \emph{for} loop has time complexity
	\[
	O\left(|\shiftset| \cdot \max_{\alphav\in\shiftset} |\dom(W_\alphav)|\right) = O\left(\frac{k}{\Gamma^t} \log^2 N \cdot (\log \log N)^2\right).
	\]
	Hence, the total time complexity due to calls to \textsc{TestBucket} is
	\begin{equation}
	O\left(\sum_{t=0}^{L-1} \frac{k}{\Gamma^t} \log^2 N \cdot (\log \log N)^2 \right) = O\left(k \log^2 N \cdot (\log \log N)^2\right). \label{eq:timefromresolvebucket}
	\end{equation}
	Therefore, by \eqref{eq:timefromhashing} and \eqref{eq:timefromresolvebucket}, the total time complexity of Algorithm~\ref{alg:FFThighD-k1/r} is
	\[
	O\left(k (\log^8 N)(\log \log N)^8\right),
	\]
	as desired.
\end{proof}

\subsection*{Acknowledgements}\label{sec:ack}
Michael Kapralov is supported in part by ERC Starting Grant 759471.

\newcommand{\etalchar}[1]{$^{#1}$}

\begin{appendix}

\section{Proofs and pseudocode omitted from section~\ref{sec:filters}} \label{appx:A}

\begin{algorithm}
	\caption{Splitting tree construction in time $O(|S| \log n)$}\label{alg:tree-construction}
	\begin{algorithmic}[1]
		
		\Procedure{Tree}{$S, n$} 

		\State $\mathcal{C}_{0} \gets \{(r, S)\}$
		
		\State Let $T$ be a tree with one node, labeled $f_{r} = 0$
		
		\For{\texttt{$j =1$ to $ \log_2 n$ }}
		
		\State $\mathcal{C}_j \gets \emptyset$
		
		\For{all $(v , S_v) \in \mathcal{C}_{j-1}$}\Comment{$\mathcal{C}_{j-1}$ contains nodes at level $j-1$ and their corresponding set of frequencies}
		
		\State $R \gets \{ g \in S_v : g = f_v \pmod {2^j}\}$
		\State $L \gets \{ g \in S_v: g = f_v+2^{j-1} \pmod {2^j}\}$
		
		\If{$R \neq \emptyset$}
		\State Add $u$ as right child of $v$ in $T$
		\State $\mathcal{C}_j \gets \mathcal{C}_j \cup \{ (f_v, R) \}$
		\EndIf
		\If{$L \neq \emptyset$}
		\State Add  $w$ as left child of $u$ in $T$
		\State $\mathcal{C}_j \gets \mathcal{C}_j \cup \{ (f_v+2^{j-1}, L)\}$
		\EndIf
		\EndFor
		\EndFor
		\State \textbf{return} $T$
		\EndProcedure

		\Procedure{Tree.remove}{$T, v$} 
		
		\State $r\gets$ root of $T$, $l\gets l_T(v)$
		\State $v_0,v_1,\ldots, v_l\gets $ path from $r$ to $v$ in $T$, where $v_0 = r$ and $v_l = v$
		
		\State $q \gets $ largest integer $j \le l$ such that $v_j$ has two children
		\State Remove $v_{q+1}, ..., v_{l}$ and their connecting edges from $T$
		
		\State \textbf{return} $T$
		\EndProcedure
		
	\end{algorithmic}
	\label{alg:tree-const}
	
\end{algorithm}

\begin{proofof}{Lemma~\ref{lem:isolate-filter-highdim}}
	Let $v$ be a leaf of $T$, let $l=l_T(v)$ denote the level of $v$, let $r$ denote the root of $T$, and let $v_0, v_1,\ldots, v_l$ denote the path from root to $v$ in $T$, where $v_0 = r$ and $v_l = v$. Let $q^*$ denote the smallest positive integer such that $l \le q^* \cdot \log_2n$. Note that $q^* \le d$.
	
	For $q\in \{0, 1, \ldots, d\}$ let ${T}^{(q)}$ be a subtree of $\tfull_{N}$ which denotes the result of truncating $T$ to contain only the nodes that are at distance at most $q\log_2 n$ from the root.

	We construct the $(v, {T})$-isolating filter $\wh{G}$ iteratively by starting with $\wh{G}^{(0)}=1$ and refining $\wh{G}^{(q-1)}$ to $\wh{G}^{(q)}$ over $q^*$ steps. The filters $\wh{G}^{(q)}$ will be $(v_{q\cdot \log_2 n}, {T}^{(q)})$-isolating for $q=0,1,\ldots, q^*-1 $ and $\wh{G}^{(q^*)}$ will be $(v_{l}, {T}^{(q^*)})$-isolating. Since ${T}^{(q^*)}={T}$ and $v_{l}=v$, the filter $\wh{G}^{(q^*)}$ will be $(v, {T})$-isolating, as required. 
	
	For every $q \in \{ 1,...,q^* \}$ let $T_q^v$ be the subtree of $T$ which is rooted at $v_{(q-1)\cdot \log_2n}$ and is restricted to contain only the nodes that are at distance at most $\log_2n$ from $v_{(q-1)\cdot \log_2n}$. For every node $u \in T_q^v$ the label of $u$ is defined to be $f_u = (\ff_u)_q$, i.e., the $q$th coordinate of $\ff_u$, where $\ff_u$ is the label of node $u$ in tree $T$.
	
	We now define $\wh{G}^{(q)}$ for $q=1,\ldots, q^*$. We start by letting $\wh{G}^{(0)}=1$ and letting for every $\ff=(f_1,\ldots, f_q, \ldots, f_d)\in [n]^d$
	\begin{equation}\label{eq:g-q-def}
	\wh{G}^{(q)}(\ff)=\wh{G}^{(q-1)}(\ff)\cdot \wh{G}_q(f_q).
	\end{equation}
	where $\wh G_q$ is a $(v_{q \cdot \log_2n}, T_q^v)$-isolating filter for all $q=1,..., q^*-1$ and $\wh G_{q^*}$ is a $(v_l, T_{q^*}^v)$-isolating filter. By lemma \ref{lem:filter-isolate}, for every $q=1,\ldots, q^*$ there exists such $ G_q$ with $|\supp{G_q}| = 2^{w_{T_q^v}(v_{q \cdot \log_2n})}$ and can be constructed in time $O(2^{w_{T_q^v}(v_{q \cdot \log_2n})} + \log_2n)$. Such a filter can be computed in Fourier domain at any desired frequency in time $O(\log_2n)$.
	Note that $\wh{G}^{(q)}$ is a tensor product of $q$ filters in dimension one. We now show by induction on $q$ that $\wh{G}^{(q)}$ is a $(v_{q\cdot \log_2 n}, {T}^{(q)})$-isolating filter.

	The {\bf base} of the induction is provided by $q=0$: since $v_0$ is the root of ${T}^{(0)}$, we have that $\subtree_{{T}^{(0)}}(v_0)=[n]^d$ and $\wh{G}^{(0)}\equiv 1$ as required. 
	
	We now prove the {\bf inductive step}: $q-1\to q$. We first show that $\wh{G}^{(q)}_{\ff'}=0$ for every $\ff'\in \bigcup_{\substack{u \neq v_{q\cdot \log_2n} \\ u: \text{~leaf of~}T^{(q)}}} \subtree_{T^{(q)}}(u)$. Let $u$ be a leaf of $T^{(q)}$ distinct from $v_{q\cdot \log_2n}$. Let $u'$ denote the leaf of ${T}^{(q-1)}$ which is the ancestor of $u$. We consider two cases. 
	
	\begin{description}
		\item[Case 1: $\ff'\not \in \subtree_{{T}^{(q-1)}}(v_{(q-1)\log_2 n})$] Suppose that $u' \neq v_{(q-1)\cdot \log_2n}$. Note that $l_{T}(u') \le (q-1)\log_2 n$, and also note that 
		$$
		\subtree_{{T}^{(q)}}(u)\subseteq \subtree_{{T}^{(q-1)}}(u'),
		$$ 
		Thus for every $\ff' \in \subtree_{T^{(q)}}(u)$ it is true that $\ff'\in \subtree_{{T}^{q-1}}(u')$. By the inductive hypothesis we have that $\wh{G}^{(q-1)}$ is $(v_{(q-1)\log_2 n}, {T}^{(q-1)})$-isolating, and hence by the assumption of $u' \neq v_{(q-1)\cdot \log_2n}$, one has $\wh{G}^{(q-1)}(\ff')=0$ for every such $\ff'$, and thus $\wh{G}^{(q)}(\ff')=\wh{G}^{(q-1)}(\ff')\cdot \wh{G}_q(f'_q)=0$ as required.

		\item[Case 2: $\ff' \in \subtree_{{T}^{(q-1)}}(v_{(q-1)\log_2 n})$] Suppose that $ v_{(q-1)\cdot \log_2n}$ is ancestor of $u$. Therefore, by definition of $T_q^v$, one can see that $u$ is a leaf in $T_q^v$. Hence, by definition of $T^v_q$, for every $\ff' \in \subtree_{T^{(q)}}(u)$, it is true that $f'_q \in \subtree_{T_q^v}(u)$. Recall that $\wh{G}_q$ is a $(v_{q\cdot \log_2n},T_q^v)$-isolating filter and therefore, $\wh{G}_q(f'_q)=0$, and thus $\wh{G}^{(q)}(\ff')=\wh{G}^{(q-1)}(\ff')\cdot \wh{G}_q(f'_q)=0$ as required. 
		
	\end{description}

	Now we show that $\wh{G}^{(q)}_{\ff}=1$ for all $\ff\in \subtree_{T^{(q)}}(v_{q\cdot \log_2n})$. Note that $v_{q\cdot \log_2n}$ is a leaf in $T_q^v$. Hence, for every $\ff \in \subtree_{T^{(q)}}(v_{q\cdot \log_2n})$, it is true that $f_q \in \subtree_{T_q^v}(v_{q\cdot \log_2n})$. Since $\wh{G}_q$ is a $(v_{q\cdot \log_2n},T_q^v)$-isolating filter, $\wh{G}_q(f_q)=1$. Now, note that 
	$$
	\subtree_{{T}^{(q)}}(v_{q\cdot \log_2n})\subseteq \subtree_{{T}^{(q-1)}}(v_{(q-1)\cdot \log_2n}),
	$$ 
	Thus for every $\ff \in \subtree_{T^{(q)}}(v_{q\cdot \log_2n})$ it is true that $\ff\in \subtree_{{T}^{(q-1)}}(v_{(q-1)\cdot \log_2n})$. By the inductive hypothesis we have that $\wh{G}^{(q-1)}$ is $(v_{(q-1)\log_2 n}, {T}^{(q-1)})$-isolating, and hence
	$\wh{G}^{(q-1)}(\ff)=1$, and thus $\wh{G}^{(q)}(\ff)=\wh{G}^{(q-1)}(\ff)\cdot \wh{G}_q(f_q)=1$ as required. 
	
	It remains to note that $w_{{T}}({v})=\sum_{q=1}^{q^*} w_{T_q^v}({v_{q\cdot \log_2n}})$. By Lemma~\ref{lem:filter-isolate}, for every $q \in \{1,...,q^*\}$ one has $|\supp G_q| = 2^{w_{T_q^v}(v_{q\cdot \log_2n})}$, so $|\supp G| =2^{w_{T}(v)}$, as required (note that the support size of the convolution of two filters is at most the product of support sizes of each filter).
	
	The total runtime for constructing this filter has two parts; First part is the computation time of $G_q$'s for all $q \in \{ 1,..., q^*\}$ which takes $\sum_{q=1}^{q^*} O\left( 2^{w_{T_q^v}({v_{q\cdot \log_2n}})}  + \log_2n \right) = O\left( 2^{w_T(v)} +d\log_2n \right)$ by Lemma \ref{lem:filter-isolate}. Second part is the time needed for computing the tensor product of all $G_q$'s which is $O\left( \|G_1\|_0 \cdot ... \cdot \|G_{q^*}\|_0 \right) = O(2^{w_T(v)})$. Therefore the total runtime is $O\left( 2^{w_T(v)} +d\log_2n \right)$. Moreover, the total time for computing $\wh G(\bm\xi)$ is the sum of the times needed for computing all $\wh G_q(\xi_q)$'s for $q=1, \cdots, q^*$, which is $O(d\log_2n)$ by Lemma \ref{lem:filter-isolate}. 	
	
\end{proofof}

\begin{proofof}{Lemma \ref{lem:random-sign}}
	Let $N = n^d$. Recall that for every $\tv \in [n]^d$,
	$$x_\tv = \frac{1}{N} \sum_{\ff \in [n]^d} \wh x_\ff \cdot e^{2\pi i \frac{\ff^T \tv}{n}}$$
	Because all $\wh x_\ff$'s are zero mean independent random variables, for every fixed $\tv \in [n]^d$ one has that for every $\ff \in [n]^d$ the random variables $\wh x_\ff \cdot e^{2\pi i \frac{\ff^T \tv}{n}}$ are zero mean and independent. Observe that for all $\ff \in [n]^d$, we have $|\wh x_\ff \cdot e^{2\pi i \frac{\ff^T \tv}{n}}| = |\beta_\ff| \le \|\beta\|_\infty$ and also $\E\left[ |\wh x_\ff \cdot e^{2\pi i \frac{\ff^T \tv}{n}}|^2 \right] = |\beta_\ff|^2$. Therefore by Bernstein's inequality we have that for every fixed $\tv \in [n]^d$,
	
	\begin{align*}
	\Pr\left[\left| x_\tv\right| > \frac{\theta}{N} \right] &\leq 2 \exp\left(- \frac{\frac{1}{2}\theta^2}{\|\beta\|_2^2 + \frac{1}{3} \|\beta\|_\infty \cdot \theta} \right)\\
	&\leq 2 \exp\left(- \frac{\frac{1}{2}\theta^2}{\|\beta\|_2^2 + \frac{1}{3} \|\beta\|_2 \cdot \theta} \right)
	\end{align*}
	If we choose $\theta = C_1 \log_2 N \cdot \|\beta\|_2$ for some absolute constant $C_1 > 0$,
	\begin{align*}
	\Pr\left[\left| x_\tv\right| > \frac{C_1 \log_2 N \cdot \|\beta\|_2}{N} \right] &\leq 2 \exp\left(- \frac{\frac{1}{2} C_1^2 \log^2_2 N}{ 1 + \frac{1}{3} C_1 \log_2 N} \right) \\
	&\le \frac{1}{2N^5}
	\end{align*}
	for large enough constant $C_1$.
	By a union bound over all $\tv \in [n]^d$ we get that, $| x_\tv|^2 \le \frac{C_1^2 \log^2_2 N }{N^2} \|\beta\|^2_2$ for all $\tv \in [n]^d$ with probability $1 - \frac{1}{2N^4}$.
	
	Now note that by Parseval's theorem, Claim \ref{parseval},
	\[
	\sum_{\jj \in [n]^d} |x_\jj|^2 = \frac{1}{N} \sum_{\ff\in [n]^d} |\beta_\ff|^2.
	\]
	Conditioning on $\|x\|_\infty \le \frac{C_1^2 \log^2_2 N }{N^2} \|\beta\|^2_2$, by Chernoff-Hoeffding Bound we have,
	
	\begin{align*}
	\Pr\left[\frac{1}{2} \cdot \frac{\|\beta\|_2^2}{N^{2}} \leq \frac{1}{s} \sum_{j=1}^s |x_{\tv_j}|^2	\le \frac{3}{2} \cdot \frac{\|\beta\|_2^2}{N^{2}} \right] &\ge 1 - 2e^{- \frac{C_2 \cdot s \cdot \|\beta \|_2^2}{N^2 \|x\|_\infty}}\\
	&\ge 1 - 2 e^{- \frac{C_2 \cdot s }{C_1^2 \log^2_2 N}}
	\end{align*}
	where the probability is over the i.i.d. random variables $\tv_1, \tv_2, \dots, \tv_s \sim \unif([n]^d)$ and $C_2$ is some positive constant. Therefore, by the choice of $s = C \log_2^3 N$ for some large enough constant $C$ we have that,
	$$\Pr\left[\frac{1}{2} \cdot \frac{\|\beta\|_2^2}{N^{2}} \leq \frac{1}{s} \sum_{j=1}^s |x_{\tv_j}|^2	\le \frac{3}{2} \cdot \frac{\|\beta\|_2^2}{N^{2}} \right] \ge 1 - \frac{1}{2N^4}.$$
	By a union bound over these two events we have that $\frac{1}{2} \cdot \frac{\|\beta\|_2^2}{N^{2}} \leq \frac{1}{s} \sum_{j=1}^s |x_{\tv_j}|^2	\le \frac{3}{2} \cdot \frac{\|\beta\|_2^2}{N^{2}}$ with probability at least $1 - \frac{1}{N^4}$.

\end{proofof}

\section{Proof of Lemma~\ref{lem:liftsize}}\label{app:bernoulli}

\begin{lemma}\label{lem:sxsize}
	For every power of two integer $n$ and positive integer $d$, if $x \in \C^{n^d}$ is a random support signal as per Definition~\ref{def:randsupport}, the following conditions hold. If $N=n^d$, $\bvec = (B_1, B_2, \dots, B_d)$  is a vector of powers of two such that $B_j \mid n$ for all $j = 1,2, \dots, d$ and $|\bvec| \geq 4k$, then, with probability at least $1 - \frac{1}{N^3}$ over $x$,
	\[
	|S_x(\bvec, \bb)| = O(\log N)
	\]
	for all $\bb \in [\bv]$ (where $S_x(\bvec, \bb)$ is the set from Definition~\ref{def:congruence-class}).
\end{lemma}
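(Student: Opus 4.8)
The plan is a direct first-moment-plus-Chernoff argument, carried out one congruence class at a time followed by a union bound. First I would observe that for each fixed $\bb \in [\bvec]$ the set of frequencies $\ff \in [n]^d$ with $\ff \equiv \bb \pmod{\bvec}$ has cardinality exactly $N/|\bvec|$ (in coordinate $j$ there are $n/B_j$ choices, and these multiply out). Under the random support model of Definition~\ref{def:randsupport}, each such $\ff$ lies in $\supp \wh{x}$ independently with probability $k/N$, so $|S_x(\bvec,\bb)|$ is distributed as a Binomial random variable with $N/|\bvec|$ trials and success probability $k/N$; in particular its mean is $\mu = k/|\bvec| \le 1/4$ by the hypothesis $|\bvec| \ge 4k$.

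Next I would apply the standard upper-tail estimate for binomials (union bound over all $t$-subsets of trials being all-ones, together with $\binom{m}{t} \le (em/t)^t$): for any integer $t > \mu$,
\[
\Pr\left[|S_x(\bvec,\bb)| \ge t\right] \;\le\; \binom{N/|\bvec|}{t}\left(\frac{k}{N}\right)^t \;\le\; \left(\frac{e\mu}{t}\right)^t \;\le\; \left(\frac{e}{4t}\right)^t .
\]
Choosing $t = C\log N$ for a sufficiently large absolute constant $C$ (so that, for $N$ large, $e/(4C\log N) < 1/2$), this probability is at most $2^{-C\log N} \le N^{-4}$.

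Finally, since $|\bvec| = \prod_j B_j \le \prod_j n = N$, a union bound over the at most $N$ buckets $\bb \in [\bvec]$ shows that with probability at least $1 - N\cdot N^{-4} = 1 - N^{-3}$ every congruence class satisfies $|S_x(\bvec,\bb)| < C\log N = O(\log N)$, which is exactly the claim. I do not expect any genuine obstacle here: the only point requiring care is pinning down the constant $C$ large enough relative to the factor $e$ in the tail bound and to the $|\bvec| \le N$ union-bound term, and this is immediate once $t$ is taken to be $\Theta(\log N)$. (The analogous statement for the refinement setting, Lemma~\ref{lem:liftsize}, follows by the same computation applied to the elements lying in non-singleton buckets, but that is handled separately.)
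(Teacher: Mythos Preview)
Your proposal is correct and follows essentially the same approach as the paper: compute the mean $k/|\bvec|\le 1/4$ of each bucket count as a sum of independent Bernoullis, apply a Chernoff-type upper tail bound to get failure probability $\le N^{-4}$ per bucket, then union-bound over the at most $N$ buckets. The paper's proof is in fact slightly terser (it just says ``by the Chernoff bound''), whereas you have spelled out the explicit $(e\mu/t)^t$ form and the constant choice, but the argument is the same.
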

\begin{proof}
	Note that for every $\bb \in [\bvec]$, we have that for each $\ff \in [n]^d$ with $\ff \equiv \bb \pmod{[\bvec]}$, $\Pr[\ff\in S] = k/N$. Then, since there are $N/|\bvec|$ such $\ff$ for every fixed $\bb \in [\bvec]$, it follows that,
	\[
	\E\left[ \left| S_x(\bvec, \bb)\right| \right] \leq \frac{k}{N} \cdot \frac{N}{|\bvec|} \leq \frac{1}{4}.
	\]
	Hence, by the Chernoff bound, it follows that $|S_x(\bvec, \bb)| = O(\log N)$ with probability $1 - N^{-4}$ for any fixed $\bb\in [\bvec]$. Finally, by a union bound over all $\bb\in[\bvec]$, we have the desired result.
\end{proof} 

We now prove a lemma about the size of the sets $S^{(\bvec)}$:
\begin{lemma} \label{lem:sjprobhighdim}
	For any power of two integers $n$ and $k$, positive integer $d$ and $\bvec = (B_1, B_2, \dots, B_d)$ such that $B_1, B_2, \dots, B_d \mid n$ and $\ff = (f_1, \dots, f_d)\in [\bvec]$, we have that $\Pr[{\bf f}\in S^{(\bvec)}] \leq \left(\frac{k}{|\bv|}\right)^2$, where $S^{(\bvec)}$ is defined as in Definition \ref{bernoulli-set}.
\end{lemma}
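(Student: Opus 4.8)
Fix $\bvec=(B_1,\dots,B_d)$ with $B_j\mid n$ and fix $\ff=(f_1,\dots,f_d)\in[\bvec]$. The plan is to reduce membership of $\ff$ in $S^{(\bvec)}$ to a simple event about a Binomial random variable and then apply a union bound over pairs. First I would identify the congruence class of $\ff$ modulo $\bvec$, namely $C_\ff:=\{\jj\in[n]^d:\jj\equiv\ff\pmod{\bvec}\}$, and observe that $|C_\ff|=\prod_{j=1}^d (n/B_j)=N/|\bvec|$, where $N=n^d$; write $M:=N/|\bvec|$ for this cardinality. By Definition~\ref{bernoulli-set}, $\ff\in S^{(\bvec)}$ exactly when $|S\cap C_\ff|\ge 2$, i.e. there are two distinct elements of the Bernoulli set $S$ lying in $C_\ff$. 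Since the events $\{\jj\in S\}$ are independent across $\jj$, each occurring with probability $k/N$, the quantity $|S\cap C_\ff|$ is distributed as $\mathrm{Bin}(M,k/N)$.

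The second step is the elementary bound $\Pr[|S\cap C_\ff|\ge 2]\le \binom{M}{2}(k/N)^2$, obtained by a union bound over all unordered pairs $\{\gg,\hh\}\subseteq C_\ff$ of the events $\{\gg\in S\ \wedge\ \hh\in S\}$, each of which has probability $(k/N)^2$ by independence. Using $\binom{M}{2}\le M^2/2$ and $M/N=1/|\bvec|$, this gives
\[
\Pr[\ff\in S^{(\bvec)}]\ \le\ \frac{M^2}{2}\cdot\frac{k^2}{N^2}\ =\ \frac12\left(\frac{k}{|\bvec|}\right)^2\ \le\ \left(\frac{k}{|\bvec|}\right)^2,
\]
which is the claim.

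There is no real obstacle here: the argument is a one-line union bound once the congruence class and its size are identified. The only points requiring (minimal) care are correctly counting $|C_\ff|=N/|\bvec|$ using $B_j\mid n$, and noting that the Bernoulli model in Definition~\ref{bernoulli-set} makes the indicator variables $\mathbf{1}[\jj\in S]$ mutually independent so that the pairwise probabilities factor. This lemma is exactly what is invoked (together with linearity of expectation summed over $\ff\in[\bvec]$) to bound $\E[\,|S^{(\bvec)}|\,]\le k^2/|\bvec|$ in the proof of Theorem~\ref{thrm:sparse-fft}.
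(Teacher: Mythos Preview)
Your argument is correct and in fact yields the slightly sharper bound $\tfrac12\,(k/|\bvec|)^2$. It differs from the paper's route: the paper writes the exact probability that a $\mathrm{Bin}(M,k/N)$ variable is at least $2$, namely $1-(1-k/N)^M - M\cdot\tfrac{k}{N}\,(1-k/N)^{M-1}$ with $M=N/|\bvec|$, factors out $(1-k/N)^{M-1}$, and then applies Bernoulli's inequality $(1-k/N)^{M-1}\ge 1-(M-1)k/N$ to reduce to $1-(1-a)(1+a)=a^2$ with $a=k/|\bvec|-k/N$. Your union bound over unordered pairs is more elementary and avoids that algebra entirely; the paper's computation is exact before bounding and reveals the small correction $-k/N$ inside the square, but neither advantage is used downstream (Lemma~\ref{lem:expcollision} and Lemma~\ref{lem:liftsize} only need $(k/|\bvec|)^2$).
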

\begin{proof}
	Suppose $\ff \in [\bvec]$. Then, observe that there are $\left(\frac{n}{B_1}\right)\left(\frac{n}{B_2}\right) \cdot \left(\frac{n}{B_d}\right) = \frac{N}{B_1 B_2 \cdots B_d}$ elements $\gg = (g_1, \dots, g_d) \in [n]^d$ such that $\ff \equiv \gg \pmod{\bvec}$. Note that $\ff \in S^{(\bvec)}$ if at least two of these elements lies in $S$. Thus, for every $\ff \in [\bv]$ we have,
	\begin{align*}
	\Pr[\ff \in S^{(\bvec)}] &= 1 - \left(1 - \frac{k}{N}\right)^{\frac{N}{|\bv|}} - \frac{N}{|\bv|} \cdot \frac{k}{N} \left(1 - \frac{k}{N}\right)^{\frac{N}{|\bv|}-1}\\
	&\leq 1 - \left(1-\frac{k}{N}\right)^{\frac{N}{|\bv|}-1} \left(1 - \frac{k}{N} + \frac{k}{|\bv|}\right)\\
	&\leq 1 - \left( 1 - \frac{k}{|\bv|} + \frac{k}{N} \right) \left(1 - \frac{k}{N}+\frac{k}{|\bv|}\right)\\
	&= 1 - \left(1 - \left(\frac{k}{|\bv|} - \frac{k}{N}\right)^2 \right)\\
	&= \left(\frac{k}{|\bv|} - \frac{k}{N}\right)^2\\
	&\leq \left(\frac{k}{|\bv|}\right)^2,
	\end{align*}
	since $|\bv| \leq n^d = N$.
\end{proof}

As a consequence, we have a bound on the expected size of $S^{(\bvec)}$.
\begin{lemma}\label{lem:expcollision}
	For any power of two integers $n$ and $k$, any $\bvec = (B_1, B_2, \dots, B_d)$ such that $B_1, B_2, \dots, B_d \mid n$, we have $\E\left[|S^{(\bvec)}|\right] \leq \frac{k^2}{|\bv|}$.
\end{lemma}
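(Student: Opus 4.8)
The plan is to derive this directly from Lemma~\ref{lem:sjprobhighdim} by linearity of expectation; there is essentially no obstacle here, as this is a one-line consequence once the per-element collision probability is in hand.

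First I would write $|S^{(\bvec)}| = \sum_{\ff \in [\bvec]} \mathds{1}[\ff \in S^{(\bvec)}]$, using that $S^{(\bvec)} \subseteq [\bvec]$ by definition (Definition~\ref{bernoulli-set}). Taking expectations and applying linearity gives $\E\left[|S^{(\bvec)}|\right] = \sum_{\ff \in [\bvec]} \Pr[\ff \in S^{(\bvec)}]$. Then I would invoke Lemma~\ref{lem:sjprobhighdim}, which states $\Pr[\ff \in S^{(\bvec)}] \leq \left(\frac{k}{|\bv|}\right)^2$ for every $\ff \in [\bvec]$, and the fact that $|[\bvec]| = |\bv|$ (since $[\bvec]$ is the Cartesian product $[B_1]\times\cdots\times[B_d]$ of size $B_1 B_2 \cdots B_d = |\bv|$).

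Combining these, $\E\left[|S^{(\bvec)}|\right] \leq |\bv| \cdot \left(\frac{k}{|\bv|}\right)^2 = \frac{k^2}{|\bv|}$, which is exactly the claimed bound. The only thing to double-check is that the hypotheses of Lemma~\ref{lem:sjprobhighdim} (namely $B_1, B_2, \dots, B_d \mid n$) are implied by the hypotheses here, which they are verbatim, so no additional work is needed.
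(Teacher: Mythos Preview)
Your proposal is correct and matches the paper's own proof essentially verbatim: the paper also writes $\E[|S^{(\bvec)}|] = \sum_{\ff\in[\bvec]} \Pr[\ff\in S^{(\bvec)}]$, bounds each summand by $(k/|\bv|)^2$ via Lemma~\ref{lem:sjprobhighdim}, and multiplies by $|\bv|$.
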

\begin{proof}
	Simply note that
	\begin{align*}
	\E\left[|S^{(\bvec)}|\right] &= \sum_{\ff\in [\bvec]} \Pr[\ff\in S^{(\bvec)}]\\
	&\leq |\bv| \cdot \left(\frac{k}{|\bv|}\right)^2\\
	&= \frac{k^2}{|\bv|},
	\end{align*}
	by Lemma~\ref{lem:sjprobhighdim}.
\end{proof}

We are now ready to proof Lemma~\ref{lem:liftsize}.

\begin{proofof}{Lemma~\ref{lem:liftsize}}
	Consider a fixed $\bb\in[\bvec']$. Note that there are $m = \frac{B_1 B_2 \cdots B_d}{B_1' B_2' \cdots B_d'}$ values of $\ff\in [\bvec]$ such that $\ff\equiv\bb\pmod{\bvec'}$. Moreover, by Lemma~\ref{lem:sjprobhighdim}, each such $\ff$ lies in $S^{(\bvec)}$ with identical probability
	\[
	p \leq \left(\frac{k}{|\bv|}\right)^2,
	\]
	and these events are all independent. Thus, 
	\begin{align*}
	\E\left[\left| S^{(\bvec)} \cap \{\ff\in [\bvec]: \ff \equiv \bb \pmod{\bvec'}\} \right| \right] &\leq mp \\
	&\leq \frac{k^2}{|\bv| \cdot |\bv'|}\\
	&\leq 1.
	\end{align*}
	Thus, by the Chernoff bound, we have that
	\[
	\left| S^{(\bvec)} \cap \{\ff\in [\bvec]: \ff \equiv \bb \pmod{\bvec'}\} \right| = O(\log N) 
	\]
	with probability at least $1 - \frac{1}{N^4}$, as desired. Finally, taking a union bound over all $|\bv'| \leq N$ values of $\bb \in [\bvec]$ gives the desired result.
\end{proofof}

\end{appendix}


\begin{thebibliography}{HIKP12b}
	
	\bibitem[AGS03]{AGS}
	A.~Akavia, S.~Goldwasser, and S.~Safra.
	\newblock Proving hard-core predicates using list decoding.
	\newblock {\em FOCS}, 44:146--159, 2003.
	
	\bibitem[Aka10]{Ak}
	A.~Akavia.
	\newblock Deterministic sparse {F}ourier approximation via fooling arithmetic
	progressions.
	\newblock {\em COLT}, pages 381--393, 2010.
	
	\bibitem[BCG{\etalchar{+}}12]{BCGLS}
	P.~Boufounos, V.~Cevher, A.~C. Gilbert, Y.~Li, and M.~J. Strauss.
	\newblock What's the frequency, kenneth?: Sublinear fourier sampling off the
	grid.
	\newblock {\em RANDOM/APPROX}, 2012.
	
	\bibitem[BG]{beatson-greengard}
	R.~Beatson and {Greengard}.
	\newblock {A short course on fast multipole methods}.
	\newblock {\em
		\url{https://web.njit.edu/~jiang/math614/beatson-greengard.pdf}}.
	
	\bibitem[BG97]{beatson1997short}
	Rick Beatson and Leslie Greengard.
	\newblock A short course on fast multipole methods.
	\newblock {\em Wavelets, multilevel methods and elliptic PDEs}, 1:1--37, 1997.
	
	\bibitem[Bou14]{Bourgain2014}
	J.~Bourgain.
	\newblock An improved estimate in the restricted isometry problem.
	\newblock {\em GAFA}, 2014.
	
	\bibitem[CGV12]{CGV}
	Mahdi Cheraghchi, Venkatesan Guruswami, and Ameya Velingker.
	\newblock Restricted isometry of fourier matrices and list decodability of
	random linear codes.
	\newblock {\em SODA}, 2012.
	
	\bibitem[CI17]{CheraghchiI17}
	Mahdi Cheraghchi and Piotr Indyk.
	\newblock Nearly optimal deterministic algorithm for sparse walsh-hadamard
	transform.
	\newblock {\em {ACM} Trans. Algorithms}, 13(3):34:1--34:36, 2017.
	
	\bibitem[Cip00]{citeulike:6838680}
	B.~A. Cipra.
	\newblock {The Best of the 20th Century: Editors Name Top 10 Algorithms}.
	\newblock {\em SIAM News}, 33, 2000.
	
	\bibitem[CKPS16]{ChenKPS16}
	Xue Chen, Daniel~M. Kane, Eric Price, and Zhao Song.
	\newblock Fourier-sparse interpolation without a frequency gap.
	\newblock In {\em {IEEE} 57th Annual Symposium on Foundations of Computer
		Science, {FOCS} 2016, 9-11 October 2016, Hyatt Regency, New Brunswick, New
		Jersey, {USA}}, pages 741--750, 2016.
	
	\bibitem[CKSZ17]{CKSZ17}
	Volkan Cevher, Michael Kapralov, Jonathan Scarlett, and Amir Zandieh.
	\newblock An adaptive sublinear-time block sparse {F}ourier transform,
	\url{https://arxiv.org/abs/1702.01286}.
	\newblock In {\em STOC}, 2017.
	
	\bibitem[CT06]{CTao}
	E.~Candes and T.~Tao.
	\newblock Near optimal signal recovery from random projections: Universal
	encoding strategies.
	\newblock {\em IEEE Trans. on Info.Theory}, 2006.
	
	\bibitem[GGI{\etalchar{+}}02]{GGIMS}
	A.~Gilbert, S.~Guha, P.~Indyk, M.~Muthukrishnan, and M.~Strauss.
	\newblock Near-optimal sparse {F}ourier representations via sampling.
	\newblock {\em STOC}, 2002.
	
	\bibitem[GHI{\etalchar{+}}13]{GHIKPS}
	Badih Ghazi, Haitham Hassanieh, Piotr Indyk, Dina Katabi, Eric Price, and Lixin
	Shi.
	\newblock Sample-optimal average-case sparse fourier transform in two
	dimensions.
	\newblock {\em arXiv preprint arXiv:1303.1209}, 2013.
	
	\bibitem[GL89]{GL}
	O.~Goldreich and L.~Levin.
	\newblock A hard-corepredicate for allone-way functions.
	\newblock {\em STOC}, pages 25--32, 1989.
	
	\bibitem[GMS05]{GMS}
	A.~Gilbert, M.~Muthukrishnan, and M.~Strauss.
	\newblock Improved time bounds for near-optimal space {F}ourier
	representations.
	\newblock {\em SPIE Conference, Wavelets}, 2005.
	
	\bibitem[GR87a]{1987JCoPh..73..325G}
	L.~{Greengard} and V.~{Rokhlin}.
	\newblock {A fast algorithm for particle simulations}.
	\newblock {\em Journal of Computational Physics}, 73:325--348, December 1987.
	
	\bibitem[GR87b]{greengard1987fast}
	Leslie Greengard and Vladimir Rokhlin.
	\newblock A fast algorithm for particle simulations.
	\newblock {\em Journal of computational physics}, 73(2):325--348, 1987.
	
	\bibitem[HAKI12]{HAKI}
	H.~Hassanieh, F.~Adib, D.~Katabi, and P.~Indyk.
	\newblock Faster gps via the sparse fourier transform.
	\newblock {\em MOBICOM}, 2012.
	
	\bibitem[HIKP12a]{HIKP2}
	H.~Hassanieh, P.~Indyk, D.~Katabi, and E.~Price.
	\newblock Near-optimal algorithm for sparse {F}ourier transform.
	\newblock {\em STOC}, 2012.
	
	\bibitem[HIKP12b]{HIKP}
	H.~Hassanieh, P.~Indyk, D.~Katabi, and E.~Price.
	\newblock Simple and practical algorithm for sparse {F}ourier transform.
	\newblock {\em SODA}, 2012.
	
	\bibitem[HKPV13]{heidersparse}
	Sabine Heider, Stefan Kunis, Daniel Potts, and Michael Veit.
	\newblock A sparse prony fft.
	\newblock {\em SAMPTA}, 2013.
	
	\bibitem[HR17]{haviv2017restricted}
	Ishay Haviv and Oded Regev.
	\newblock The restricted isometry property of subsampled fourier matrices.
	\newblock In {\em Geometric Aspects of Functional Analysis}, pages 163--179.
	Springer, 2017.
	
	\bibitem[IK14]{IK14a}
	Piotr Indyk and Michael Kapralov.
	\newblock Sample-optimal fourier sampling in any fixed dimension.
	\newblock {\em FOCS}, 2014.
	
	\bibitem[IKP14]{IKP}
	Piotr Indyk, Michael Kapralov, and Eric Price.
	\newblock ({N}early) sample-optimal sparse fourier transform.
	\newblock {\em SODA}, 2014.
	
	\bibitem[Iwe10a]{Iw}
	M.~A. Iwen.
	\newblock Combinatorial sublinear-time {F}ourier algorithms.
	\newblock {\em Foundations of Computational Mathematics}, 10:303--338, 2010.
	
	\bibitem[Iwe10b]{Iwen10}
	Mark~A. Iwen.
	\newblock Combinatorial sublinear-time fourier algorithms.
	\newblock {\em Foundations of Computational Mathematics}, 10(3):303--338, 2010.
	
	\bibitem[Iwe12]{Iw-arxiv}
	M.A. Iwen.
	\newblock Improved approximation guarantees for sublinear-time {F}ourier
	algorithms.
	\newblock {\em Applied And Computational Harmonic Analysis}, 2012.
	
	\bibitem[Kap16]{K16}
	Michael Kapralov.
	\newblock Sparse fourier transform in any constant dimension with
	nearly-optimal sample complexity in sublinear time (available as an arxiv
	report at \url{http://arxiv.org/abs/1604.00845}).
	\newblock {\em STOC}, 2016.
	
	\bibitem[Kap17]{Kapralov17}
	Michael Kapralov.
	\newblock Sample efficient estimation and recovery in sparse {FFT} via
	isolation on average.
	\newblock In {\em 58th {IEEE} Annual Symposium on Foundations of Computer
		Science, {FOCS} 2017, Berkeley, CA, USA, October 15-17, 2017}, pages
	651--662, 2017.
	
	\bibitem[KM91]{KM}
	E.~Kushilevitz and Y.~Mansour.
	\newblock Learning decision trees using the {F}ourier spectrum.
	\newblock {\em STOC}, 1991.
	
	\bibitem[LWC12]{LWC}
	D.~Lawlor, Y.~Wang, and A.~Christlieb.
	\newblock Adaptive sub-linear time fourier algorithms.
	\newblock {\em arXiv:1207.6368}, 2012.
	
	\bibitem[Man92]{Man}
	Y.~Mansour.
	\newblock Randomized interpolation and approximation of sparse polynomials.
	\newblock {\em ICALP}, 1992.
	
	\bibitem[MZIC17]{merhi2017new}
	Sami Merhi, Ruochuan Zhang, Mark~A Iwen, and Andrew Christlieb.
	\newblock A new class of fully discrete sparse fourier transforms: Faster
	stable implementations with guarantees.
	\newblock {\em Journal of Fourier Analysis and Applications}, pages 1--34,
	2017.
	
	\bibitem[PR13]{pawar2013computing}
	Sameer Pawar and Kannan Ramchandran.
	\newblock Computing a k-sparse n-length discrete fourier transform using at
	most 4k samples and o (k log k) complexity.
	\newblock {\em ISIT}, 2013.
	
	\bibitem[PS15]{PZ15}
	Eric Price and Zhao Song.
	\newblock A robust sparse fourier transform in the continuous setting.
	\newblock {\em FOCS}, 2015.
	
	\bibitem[RV08]{RV}
	M.~Rudelson and R.~Vershynin.
	\newblock On sparse reconstruction from {F}ourier and {G}aussian measurements.
	\newblock {\em CPAM}, 61(8):1025--1171, 2008.
	
\end{thebibliography}
\end{document}